\newlength{\commentWidth}
\newcommand{\atcp}[1]{\tcp*[r]{\makebox[\commentWidth]{#1\hfill}}}
\theoremstyle{plain} \numberwithin{equation}{section}
\newtheorem{theorem}{Theorem}[section]
\newtheorem{corollary}[theorem]{Corollary}
\newtheorem{conjecture}{Conjecture}
\newtheorem{lemma}[theorem]{Lemma}
\theoremstyle{plain}
\newtheorem{invariant}[theorem]{Invariant}
\newtheorem{property}[theorem]{Property}
\newtheorem{definition}[theorem]{Definition}
\newtheorem{assumption}[theorem]{Assumption}
\newtheorem{problem}[theorem]{Problem}
\newtheorem{claim}{Claim}[section]
\newcommand{\A}{\mathcal{A}}
\newcommand{\eps}{\epsilon}
\global\long\def\opt{\mathrm{OPT}}
\global\long\def\Pack{\mathrm{\mathbb{P}}}
\global\long\def\Cover{\mathrm{\mathbb{C}}}
\global\long\def\one{\mathbb{1}}
\renewcommand{\paragraph}[1]{\medskip\noindent{\bf #1}\xspace}
\newtheorem{observation}[theorem]{Observation}
\definecolor{ForestGreen}{rgb}{0.1333,0.5451,0.1333}
\definecolor{DarkRed}{rgb}{0.65,0,0}
\definecolor{Red}{rgb}{1,0,0}
\global\long\def\Otil{\tilde{O}}
\newcommand{\norm}[1]{\left\lVert#1\right\rVert}
\renewcommand{\S}{\mathcal{S}}
\newcommand{\poly}{\operatorname{poly}}
\def\thatchaphol#1{\marginpar{$\leftarrow$\fbox{T}}\footnote{$\Rightarrow$~{\sf\textcolor{blue}{#1 --Thatchaphol}}}}
\def\sayan#1{\marginpar{$\leftarrow$\fbox{S}}\footnote{$\Rightarrow$~{\sf\textcolor{red}{#1 --Sayan}}}}
\def\red#1{\textcolor{red}{#1}}
\def\thatchaphol#1{}
\def\sayan#1{}
\def\red#1{#1}
\begin{document}






\title{Dynamic Algorithms for Packing-Covering LPs\\via Multiplicative Weight Updates}


\author{Sayan Bhattacharya\footnote{Supported by Engineering and Physical Sciences Research Council, UK (EPSRC) Grant EP/S03353X/1.}\\University of Warwick\\S.Bhattacharya@warwick.ac.uk \and Peter Kiss\\University of Warwick\\Peter.Kiss@warwick.ac.uk \and Thatchaphol Saranurak\\University of Michigan\\thsa@umich.edu }

\date{}

\maketitle

\pagenumbering{gobble}
\begin{abstract}
In the \emph{dynamic linear program} (LP) problem, we are given an
LP undergoing updates and we need to maintain an approximately optimal
solution. Recently, significant attention (e.g.~{[}Gupta~et~al.~STOC'17;
Arar~et~al.~ICALP'18, Wajc~STOC'20{]}) has been devoted to the study of 
special cases of dynamic packing and covering LPs, such as the dynamic
fractional matching and set cover problems. But until now, there is
no non-trivial dynamic algorithm for general packing and covering
LPs.

In this paper, we settle the complexity of dynamic packing and covering
LPs, up to a polylogarithmic factor in update time. More precisely, in the \emph{partially
dynamic} setting (where updates can either only relax or only restrict the feasible
region), we give near-optimal deterministic $\epsilon$-approximation
algorithms with polylogarithmic amortized update time. Then, we show
that both partially dynamic updates and amortized update time are
necessary; without any of these conditions, the trivial algorithm that
recomputes the solution from scratch after every update  is essentially the best possible,
assuming SETH.

To obtain our results, we initiate a systematic study of the \emph{multiplicative
weights update} (MWU) method in the dynamic setting. As by-products
of our techniques, we also obtain the first online $(1+\epsilon)$-competitive
algorithms for both covering and packing LPs with polylogarithmic
recourse, and the first streaming algorithms for covering
and packing LPs with linear space and polylogarithmic passes.

\end{abstract}

\pagebreak 

\pagenumbering{gobble}
\setcounter{tocdepth}{2}
\tableofcontents

\pagebreak
\pagenumbering{arabic}

\section{Introduction}
\label{sec:intro}

Packing and covering linear programs (LPs) are defined as follows:
\begin{align*}
\text{Covering LP \ensuremath{\mathbb{\Cover}}:}\quad & \min_{x\in\mathbb{R}_{\ge0}^{n}}\{c^{\top}x\mid Ax\ge b\},\\
\text{Packing LP \ensuremath{\mathbb{\Pack}}:\quad} & \max_{y\in\mathbb{R}_{\ge0}^{m}}\{b^{\top}y\mid A^{\top}y\le c\},
\end{align*}
where $A\in\mathbb{R}_{\ge0}^{m\times n},b\in\mathbb{R}_{\ge0}^{m},c\in\mathbb{R}_{\ge0}^{n}$
contain only non-negative entries. The two LPs are duals to each
other, so we let $\opt$ denote their shared optimum. We say that
$x$ is an \emph{$\epsilon$-approximation} for $\Cover$ if $Ax\ge b$
and $c^{\top}x\le(1+\epsilon)\opt$. Similarly, $y$ is an \emph{$\epsilon$-approximation}
for $\Pack$ if $A^{\top}y\le c$ and $b^{\top}y\ge \opt/(1+\epsilon)$. 

Packing-covering LPs have a wide range of applications in various different contexts, such as in approximation algorithms \cite{luby1993parallel,trevisan1998parallel},
flow control \cite{bartal2004fast}, scheduling \cite{plotkin1995fast},
graph embedding \cite{plotkin1995fast} and auction mechanisms \cite{zurel2001efficient}. Thus, a long line of work is concerned with the study of  efficient algorithms for packing-covering LPs \cite{ZhuO15,allen2019nearly,wang2015unified,KoufogiannakisY14,Young2014nearly,chekuri2018randomized,Quanrud20}, and currently, it is known how to compute  $\epsilon$-approximation in $\Otil(N/\epsilon)$ time \cite{allen2019nearly,wang2015unified}. Here, $N$ denotes the total number of non-zero entries in the constraint matrix.\footnote{$\Otil(\cdot)$ notation hides a $\poly(\log(\frac{NW}{\epsilon}))$ factor, 
where $W$ is the ratio between the largest and  smallest nonzero entries
in the LP.}  

In this paper, we consider a very natural dynamic setting, where an adversary can
update any entry of $A,b$ or $c$. After each update, the goal is
to obtain a new $\epsilon$-approximation quickly. Although this question has been studied since the early 80's \cite{OvermarsV80,OvermarsL81,Eppstein91,AgarwalEM92,Matousek93},
all the existing dynamic algorithms can only handle a very small number of variables
such as 2 or 3 (but they work for general LPs). In contrast,  significant
attention \cite{ArarCCSW18,BhattacharyaHN19,BhattacharyaK21,GuptaK0P17,Wajc20}
has recently been devoted to a special case of dynamic packing and covering
LPs: the dynamic fractional matching and set cover problems. These new algorithms, however, are highly specific and do not extend
to general packing-covering LPs. For the latter, the current best dynamic algorithm  just recomputes a solution from scratch after each update.

\subsection{Our Results}

We resolve  the complexity of dynamic algorithms for general
packing-covering LPs, up to a polylogarithmic factor in the update time.  The following are the two key take-home messages from this paper. (1) In partially dynamic settings, there are deterministic
$\epsilon$-approximation algorithms for this problem that have polylogarithmic amortized
update times. (2) Both partially dynamic updates and amortized
update times are necessary: without any of these conditions, the trivial
algorithm which recomputes an $\epsilon$-approximation to the input LP from scratch is essentially
the best possible. Below, we explain our results more formally.

We say an update  is \emph{relaxing} $\mathbb{\Cover}$ (and \emph{restricting
$\Pack$}) if $u$ increases an entry of $A$ or decreases an entry
of $b, c$. Note that $\opt$ can only decrease due to a relaxing update.
In contrast, an update  is \emph{restricting} $\mathbb{\Cover}$ (and
\emph{relaxing $\Pack$}) if it decreases an entry of $A$ or increases
an entry of $b, c$. A sequence of updates to an LP  is \emph{partially dynamic}
if  either \{all of them are relaxing\}  or  \{all of them are restricting\}. Otherwise,
 the sequence of updates  is \emph{fully dynamic}. Let
$N$ denote the \emph{maximum} number of non-zero entries in the input LP throughout
the sequence of updates. Our main algorithmic result is the following:
\begin{theorem}
\label{thm:main ub}
We can deterministically maintain an $\epsilon$-approximation to a packing-covering LP 
going through $t$ partially dynamic updates in $\Otil\left(N/\epsilon^{3}+t/\epsilon\right)$
total update time. Thus, the amortized update time of our algorithm is $\Otil(1/\epsilon^{3})$.\footnote{The total update time is defined to be the preprocessing time plus the total time taken
to handle all updates. If an algorithm has total update time $O(N\tau_{1}+t\tau_{2})$, then its amortized update time is $O(\max\{\tau_{1},\tau_{2}\})$.} 
\end{theorem}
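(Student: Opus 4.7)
The plan is to develop a dynamic multiplicative-weights-update (MWU) algorithm, leveraging the fact that partially dynamic updates force $\opt$ to be monotone. I will focus on relaxing updates to $\Cover$ (so $\opt$ is non-increasing), and handle the restricting case by running the same machinery on the packing dual $\Pack$.

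Recall that static MWU for $\Cover$ maintains a weight $w_i$ on each of the $m$ covering constraints and grows a primal vector $x$ incrementally: each phase finds a variable $j$ with small $q_j := c_j^{-1}\sum_i w_i A_{ij}$, raises $x_j$, and multiplicatively updates the weights of the affected constraints; after $T = \tilde{O}(\log m/\epsilon^2)$ phases the average solution is $\epsilon$-approximate. The first observation is that under relaxing updates, both $x$ and the $w_i$'s can be maintained as monotone non-decreasing quantities across the entire update sequence: a relaxing adversarial update only increases the ``effective pressure'' on the constraints, so the MWU trajectory can always be continued rather than restarted. I would implement this with a bucketed priority structure over the $q_j$'s indexed by $\lceil \log_{1+\epsilon} q_j\rceil$, supporting $\tilde{O}(1)$ amortized extract-min and updates whenever an entry of $A,b,c$ changes, a weight changes, or a coordinate of $x$ is raised.

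The total work is then bounded by (total number of MWU iterations) $\cdot \tilde{O}(1)$. To bound the iteration count, I would use a logarithmic potential $\Phi = \log\!\bigl(\sum_i w_i b_i\bigr)$ and show (a) each MWU iteration drops $\Phi$ by $\tilde{\Omega}(\epsilon)$, (b) initially $\Phi = \tilde{O}(1)$, and (c) each relaxing adversarial update raises $\Phi$ by at most $\tilde{O}(1)$. Summing these gives $\tilde{O}(N/\epsilon^2 + t/\epsilon)$ iterations in total. Multiplying by the $\tilde{O}(1)$ per-iteration data-structure cost, and by an extra $1/\epsilon$ factor from splitting the input into $O(\log(NW)/\epsilon)$ geometric ``scale buckets'' (one per possible order of magnitude of $\opt$ encountered), we obtain the claimed $\tilde{O}(N/\epsilon^3 + t/\epsilon)$ total update time, hence $\tilde{O}(1/\epsilon^3)$ amortized.

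The main obstacle I foresee is step (c): if one is not careful, a single adversarial update (say, dropping a single $b_i$ by a factor of two) can force the solution to rescale globally, which would trigger $\Omega(N)$ work in a naive implementation. The scale-bucketing and the monotonicity of $w$ are precisely what prevent this: such an update is absorbed inside the appropriate fine-grained bucket rather than by the whole state. A secondary delicate point is that the multiplicative updates to $w$ must be charged against the correct scale bucket, so each constraint must be tracked simultaneously across all levels it may influence; this is standard bookkeeping but has to be done carefully to preserve $\tilde{O}(1)$ amortized cost per operation. Once these points are settled, the restricting case is symmetric via LP duality and introduces no new ideas.
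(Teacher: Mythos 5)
Your high-level plan — take the greedy-MWU perspective, observe that under partially dynamic updates the past boosts remain ``cheap'' so the trajectory need never restart, and multiply by $O(\log(NW)/\epsilon)$ guesses for $\opt$ — is precisely the route the paper takes in Section~\ref{sec:prob}. But the core of the argument, namely bounding the total work, is where your proposal breaks. The global potential $\Phi = \log\bigl(\sum_i w_i b_i\bigr)$ is too coarse: a single boost of coordinate $j$ has step size calibrated to the \emph{largest} $A_{ij}/b_i$ among constraints touching $j$, and if the constraint achieving this maximum carries a tiny fraction of the total weight, then $\Phi$ drops by far less than $\Omega(\epsilon)$; there is no lower bound $\Omega(\epsilon)$ on the per-boost drop. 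Moreover you cannot pay for a boost with $\tilde{O}(1)$ data-structure work: raising $x_j$ changes $(Ax)_i$ for every constraint $i$ in column $j$, which costs $\Theta(nz(j))$ and this cannot be deferred (otherwise feasibility checks become unsound). The paper avoids both problems by proving a \emph{per-coordinate} bound (Lemma~\ref{lm:boost:bound}): for each coordinate $k$, the boost step sizes are bucketed by the magnitude of the entry that determines them, and a counting argument shows at most $\tilde{O}(\eta/\epsilon)$ boosts per bucket — which yields $\tilde{O}(1/\epsilon^2)$ boosts per coordinate, each charged $O(nz(k))$, and $N$ total nonzeros. A global potential cannot recover this accounting.

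Two further gaps. First, your claim (c) is backwards: relaxing updates to $\Cover$ monotonically \emph{decrease} the covering weights (this is Observation~\ref{ob:weight:monotone} in the paper), so $\Phi$ only falls, and the argument ``raise by $\tilde{O}(1)$ per update'' does no work; meanwhile your $1/\epsilon$ guess-factor applied to a $t/\epsilon$ iteration count would give $t/\epsilon^2$, overshooting the claimed $t/\epsilon$. Second, the assertion that ``the restricting case is symmetric via LP duality and introduces no new ideas'' is exactly the claim the paper refutes in Section~\ref{sec:exception:packing}: the naive dual run breaks because relaxing an entry of the packing matrix \emph{decreases} a packing weight, destroying the monotonicity that the whole counting argument rests on. The paper repairs this with an auxiliary dummy column and a \emph{pseudo-update} that compensates the packing-side weight whenever $P(i,k)$ drops (Section~\ref{subsec:dynamic:full}); without something like this, the number of times a coordinate can be boosted is unbounded and the analysis collapses. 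Your proposal needs this ingredient to close the loop on the restricting case.
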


By generality of packing and covering LPs,  several applications  immediately follow. 
For example,  \Cref{thm:main ub} implies the first decremental $(1+\epsilon)$-approximation algorithm for bipartite maximum matching in {\em weighted} graphs in $\poly(\log(n)/\epsilon)$ amortized update time. Previously, the best known algorithms either took $\Omega(\sqrt{m})$ update time \cite{gupta2013fully} or only worked for unweighted graphs \cite{bernstein2020deterministic,JambulapatiJST22,assadi2022decremental}.  \Cref{thm:main ub} also implies near-optimal incremental/decremental algorithms for maintaining near-optimal fractional solutions for set covers and dominating sets \cite{GuptaK0P17,HjulerIPS19}. 
Generally, \Cref{thm:main ub} reduces every dynamic problem that can be modeled as a packing/covering LP to a dynamic \emph{rounding problem}. Since rounding LPs is one of the most  successful paradigms in designing approximation algorithms in the static setting, we expect that our result will find many other future applications in  dynamic algorithms.

We complement Theorem~\ref{thm:main ub} by giving strong conditional lower bounds. Assuming
SETH,\footnote{The Strong Exponential Time Hypothesis \cite{impagliazzo2001problems}
says that there is no algorithm for solving $k$-CNF SAT with $n$
variables for all $k$ in $2^{n-\Omega(1)}$ time.} we show that a $N^{o(1)}$-approximation algorithm which either handles
fully dynamic updates or guarantees worst-case update time, must have an update time of
$\Omega(N^{1-o(1)})$. As we can  solve the problem from scratch after every update  in $\Otil(N)$ time, our lower bound rules out any non-trivial dynamic algorithm 
in these settings. 
\begin{theorem}
\label{thm:main lb}Under SETH, there exists a $\beta = N^{o(1)}$ such that any dynamic algorithm $\A$ that maintains a $\beta$-approximation to the optimal objective of  a packing or covering LP $\Pack$ satisfies the following conditions.
\begin{itemize}
\item If $\Pack$ undergoes fully dynamic updates, then ${\cal A}$ must
have $\Omega(N^{1-o(1)})$ amortized update time. 
\item If $\Pack$ undergoes partially dynamic updates, then ${\cal A}$
must have $\Omega(N^{1-o(1)})$ worst-case update time. 
\end{itemize}
\end{theorem}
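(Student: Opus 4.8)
The plan is to reduce from a \emph{gap version of Orthogonal Vectors}: we are given sets $U=\{u_1,\dots,u_n\}$ and $V=\{v_1,\dots,v_n\}$ of $0/1$ vectors in $\{0,1\}^d$ with $d=n^{o(1)}$, a threshold $g$, and a gap parameter $\gamma$, and must distinguish the case that some pair has $\langle u_i,v_j\rangle\le g$ from the case that every pair has $\langle u_i,v_j\rangle\ge \gamma g$. By the distributed-PCP machinery of Abboud, Rubinstein and Williams (with the later refinements of Chen and of Rubinstein), this problem requires $n^{2-o(1)}$ time under SETH even when $\gamma=2^{(\log n)^{1-o(1)}}=n^{o(1)}$; a genuine multiplicative gap is unavoidable here, since a packing/covering LP cannot have optimum $0$ versus positive for an interesting (OV-hard) reason. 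The strategy is to hard-wire $U$ into a covering LP of size $N=O(nd)=n^{1+o(1)}$ and, using few updates, feed in the vectors of $V$, so that a $\beta$-approximate optimum --- for a $\beta=N^{o(1)}$ inherited from $\gamma$ --- reveals the answer; the packing LP case follows by duality of $\Pack$ and $\Cover$.

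\textbf{The probe gadget and the fully dynamic, amortized bound.} The core object is a covering LP that checks a \emph{single} vector $v$ against all of $U$. It has a master covering constraint that any one of $n$ parallel ``lanes'' (lane $i$ governed by $u_i$) can satisfy on its own, at cost $\kappa_i$; since the LP is a minimization subject to a single (scaled) constraint, its optimum equals $\min_i\kappa_i$. Each $\kappa_i$ is produced by a ``weakest-link'' subgadget over the $d$ coordinates, using a single per-lane activation variable forced above the largest per-coordinate demand: coordinate $k$ of lane $i$ routes through a shared ``hub'' node $k$ exactly when $u_i[k]=1$ (hard-wired), and hub $k$ is cheap unless $v[k]=1$. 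Thus $\kappa_i$ is small if $\langle u_i,v\rangle=0$ and a factor $\gamma$ larger otherwise, so the optimum has the desired $\beta$-gap, the entries of $U$ and all lane-to-hub connections are hard-wired ($O(nd)$ of them), and writing a fresh $v$ touches only the $d$ hub entries, i.e. $O(d)$ updates. Now hard-wire $U$ and loop over $j=1,\dots,n$: perform the $O(d)$ updates writing $v_j$ into the hubs, read a $\beta$-approximate optimum (learning whether $v_j$ is near-orthogonal to some $u_i$), then $O(d)$ updates to erase $v_j$. Erasing both raises and lowers entries, so the sequence is genuinely \emph{fully} dynamic; it issues $t=O(nd)=n^{1+o(1)}$ updates, so a fully dynamic $\beta$-approximation algorithm with amortized update time $N^{1-\delta}$ would decide gap-OV in $\Otil(N+t\cdot N^{1-\delta})=n^{2-\delta'}$ time, contradicting SETH. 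Hence its amortized update time is $\Omega(N^{1-o(1)})$.

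\textbf{The partially dynamic, worst-case bound.} Here the erase step is illegal, and indeed it \emph{must} fail to yield a bound, since Theorem~\ref{thm:main ub} already gives a partially dynamic $\epsilon$-approximation with amortized polylogarithmic update time; the only thing left to forbid is a small \emph{worst-case} bound. The plan is to exhibit a partially dynamic (say, all-relaxing) update sequence containing one designated update whose processing requires $\Omega(N^{1-o(1)})$ time. Concretely, one arranges the LP so that after a cheap setup it is ``inert'' --- its optimum is a fixed value, although a gap-OV instance (with $n$ calibrated so that $n^{2-o(1)}=N^{1-o(1)}$) is latent in its coefficients --- and one relaxing update ``activates'' it, forcing the optimum to equal, up to $\beta$, the gap-OV answer. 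Since that answer is undetermined before the activating update (its trigger is simply not present), the $n^{2-o(1)}$ units of work mandated by SETH cannot have been spent earlier and must fall inside that single update. This is consistent with the amortized algorithm of Theorem~\ref{thm:main ub} precisely because only one update is forced to be expensive.

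\textbf{Main obstacle.} The two genuine difficulties are (i) folding the $\gamma=n^{o(1)}$ distributed-PCP gadgets into the lane/weakest-link structure \emph{while staying a pure packing/covering LP} (no mixed $\le$/$\ge$ constraints) and without inflating $N$ beyond $n^{1+o(1)}$; and (ii) for the partially dynamic bound, engineering an ``activation'' that is a single legal relaxing update yet provably forces from-scratch work --- here the subtlety is exactly that partially dynamic updates cannot be undone, so the hard work must be made simultaneously \emph{unavoidable} and \emph{unamortizable}, with the preprocessing calibrated to be $o(N^{1-o(1)})$ so that the lower bound bites on a single operation.
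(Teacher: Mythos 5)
Your plan for the fully dynamic amortized bound is in the right spirit and essentially matches the paper's structure: hard-wire one side of an OV-type instance into a static LP of size $N=n^{1+o(1)}$, then for each $v_j$ do $O(d)=n^{o(1)}$ entry updates, read a $\beta$-approximate objective, and undo. The paper's gadget is considerably simpler than your lane/hub construction: it is a pure packing LP with the hard-wired constraints $a\cdot x\le k^*$ for all $a\in A$, plus box constraints $\lambda_j x_j\le 1$ whose coefficients $\lambda_j$ are toggled between $1$ and a large value $L$ according to the zero coordinates of $b$; the objective $\sum_j x_j$ then separates $\ge k$ from $\le 2k^*$. It also uses the Abboud--Rubinstein--Williams dominance-vs-small-inner-product formulation rather than a gap-OV threshold version, which removes the need to argue that your weakest-link subgadget really stays a pure packing/covering LP. None of that is a fatal objection --- just a more elementary route.

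The genuine gap is in the partially dynamic, worst-case part, and your own ``main obstacle'' paragraph flags the right worry but does not resolve it. Your ``single activating update'' argument does not give a lower bound: by your own description the gap-OV instance is \emph{latent in the coefficients} of the initial LP, so all the information needed to compute the OV answer is available to the algorithm at preprocessing time, and a worst-case-time algorithm is free to spend arbitrarily long on preprocessing (the theorem bounds update time, not preprocessing time). The algorithm can therefore precompute the OV answer during preprocessing and answer the ``activating'' update in $O(1)$ time, and your calibration of the reduction's preprocessing cost to $o(N^{1-o(1)})$ is beside the point, since you do not control the algorithm's preprocessing budget. What actually closes the argument in the paper is a different mechanism: run the \emph{same} fully dynamic reduction, feeding $\hat{O}(N)$ probes, but instead of issuing undoing updates (illegal in the partially dynamic model), externally \emph{roll back} the algorithm's internal state to its post-preprocessing snapshot after each probe. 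This roll-back is affordable precisely because a worst-case update bound of $T$ caps the memory footprint of each probe batch at $\hat{O}(T)$, so snapshot/restore costs $\hat{O}(T)$ per probe and the total reduction time remains $\hat{O}(N\cdot T)$; an amortized bound would not let you control the per-probe footprint. This roll-back idea is what your proof is missing, and without it --- or an alternative that genuinely prevents precomputation --- the worst-case lower bound does not follow.
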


Our lower bounds hold even if the dynamic algorithm only \emph{implicitly}
maintains the solution, meaning that it only maintains an approximation to the value 
of $\opt$. Furthermore, they hold even if all the updates are on the entries
of the constraint matrix $A$. Thus, \Cref{thm:main ub} and \Cref{thm:main lb}
essentially settle the complexity of the dynamic packing-covering
LP problem, up to a polylogarithmic factor in update time.

\subsection{Techniques}

To prove our main result (\Cref{thm:main ub}), we initiate a systematic
study of the \emph{multiplicative weight update} (MWU) framework in
the dynamic setting. MWU is one the most versatile iterative optimization
framework in theoretical computer science. There is a big
conceptual barrier, however, in extending this framework to the dynamic setting. We start by 
explaining how to overcome this conceptual barrier.

\subsubsection{MWU in the dynamic setting: The main challenge}

\label{sec:dynamic MWU possible}

Like other iterative methods, a MWU-based algorithm builds
its solution $x$ in iterations. At each iteration $t$, the solution
$x^{(t)}$ heavily depends on the previous solution $x^{(t-1)}$. So
when we update our original input, we expect that the update affects
the solution $x^{(1)}$ from the first iteration. Then, the change
in $x^{(1)}$ affects $x^{(2)}$, which in turn affects $x^{(3)}$
and so on. There are at least logarithmic many iterations and eventually
the final solution might completely change. Intuitively, this propagation of changes suggests that using MWU in the dynamic setting is a dead
end. How do we overcome this strong conceptual barrier?

As a take-home message, we show that for MWU there is {\em no}
propagation of changes under partially dynamic updates! To be a bit more concrete,
suppose that the current  solution is $x^{(t)}$ and now the adversary
updates the input. As long as the updates are partially dynamic, we can
argue that all the previous solutions $x^{(1)},\dots,x^{(t-1)},x^{(t)}$
remain valid with the new input. Our only task is to compute a valid
$x^{(t+1)}$ for the updated input, which is much more manageable.
To understand why this is possible, we now explain how MWU works
at a  high level.

The explanation below will be generic (i.e., it will hold beyond packing-covering LPs). But for concreteness, consider a feasibility
problem where given a matrix $C\in\{0,1\}^{m\times n}$, we need to
find $x\in\mathbb{R}_{\ge0}^{n}$ where $\one^{\top}x\le1$ and $Cx\ge\one$.
Its dual problem is to find $y\in\mathbb{R}_{\ge0}^{m}$ where $\one^{\top}y\ge1$
and $C^{\top}y\le\one$. MWU-based algorithms can be viewed as two-player
games where the two players, which we call a \emph{Whack-a-Mole} player
and a \emph{Greedy} player, together maintain a pair $(x,y)$ using
the following simple outline:
\begin{enumerate}
\item \textbf{Whack-a-Mole player builds a multiplicative solution:} Start
with $x=\one/n$. If $x$ violates some constraint $i\in[m]$ (i.e.~$C_{i}x<1$),
then \emph{multiplicatively} update $x$ according to $i$ (i.e.~$x_{j}\gets(1+\epsilon)x_{j}$
for all $j$ where $C_{ij}=1$). We refer to this operation as \emph{whacking}
constraint $i$. Whenever $x$ satisfies all constraints, terminate
the game. Basically, this player keeps ``whacking'' violated constraints. 
\item \textbf{Greedy player builds an additive solution: } Start with the
zero vector $y=\boldsymbol{0}$. If there exists a \emph{cheap }coordinate
$i\in[m]$, then \emph{additively} increment $y_{i}$ by one. (In our case, coordinate
$i$ is cheap iff $C_{i}x<1$.)\footnote{For readers familiar with the MWU literature, this is the ``oracle'' problem. Given weight $x$, we need to find a $y'$ which satisfies the ``average constraint'' $x^{\top}C^{\top}y'\le x^{\top}1$. Then, we set $y\gets y+y'$.} Terminate the game after $T$ rounds, for some $T$, and return the
average $y^* =y/T$. Basically, this player keeps greedily ``incrementing'' 
in a cheap direction.
\end{enumerate}
The game  terminates when: either (1)  $x$ satisfying all constraints, or (2) 
via the regret minimization property of MWU \cite{AroraHK12}, we
can show that $y^*$ is approximately feasible (i.e., $\one^{\top}y^* = 1$
and $C^{\top}y^* \le(1+\epsilon)\one$). 

Coming back to the existing literature on MWU, we observe that  the known algorithms can be classified into one of two categories. The algorithms in the first category take the greedy player's
perspective and build their solutions $y$ additively, while updating
the weight $x$ multiplicatively as a helper. These include algorithms
for flow-related problems \cite{fleischer2000approximating,garg2002fast,garg2007faster},
packing-covering LPs \cite{plotkin1995fast,KoufogiannakisY14,Young2014nearly,chekuri2018randomized,Quanrud20},
or constructions of pseudorandom objects \cite{bshouty2016derandomizing}.
In contrast, the algorithms in the second category  take the whack-a-mole player's perspective and build their solutions $x$ 
multiplicatively. These include algorithms for learning and boosting \cite{littlestone1988learning,freund1997decision,freund1999adaptive}. The idea is also implicit in algorithms for geometric set covers \cite{clarkson1993algorithms,bronnimann1995almost,agarwal2014near,chan2020faster}
and explicit in algorithms for online set covers \cite{gupta2021random}.

Now, the high-level explanation as to why MWU works under partially dynamic
updates is simple. Suppose that the updates are restricting only.
Then we take the whack-a-mole player's perspective. All we do is whacking violated
constraints. Given an update, the constraints that we whacked in the
past would still be violated at that time because the update is restricting.
So our past actions are still valid. On the other hand, if the updates are
relaxing only, then we take the greedy player's perspective. Given a relaxing update,
all the cheap directions that we incremented in the past remain cheap at
the present moment as well. This explains why all the past actions
that we committed to are still valid. Hence, we only need to ensure that the
next action is valid with respect to the updated input. 

As the reasoning above is very generic, we expect that this insight will lead to further applications. For
now, we discuss how we efficiently carry out this approach for
packing-covering LPs.

\subsubsection{Width-Independent Packing Covering LP Solvers via Black-box Regret
Minimization}

\label{sec:indep blackbox}

 Existing  static $\epsilon$-approximation algorithms for packing-covering LPs
can be grouped into two types: \emph{width-dependent} and \emph{width-independent}
solvers. The width-dependent  solvers have running time depending
on $\lambda \cdot \opt$, where $\lambda$ is the largest entry in the LP
(e.g.~$\Otil(N(\frac{\lambda\opt}{\epsilon})^{2})$ \cite{plotkin1995fast}
and $\Otil(N\frac{\lambda\opt}{\epsilon^{2}})$ \cite{AroraHK12}).
While the analysis in \cite{plotkin1995fast} is adhoc, \cite{AroraHK12}
gives a very insightful explanation as to how the guarantee of these solvers
follows in a black-box manner from the well-established regret minimization
property of MWU \red{in the {\em experts setting}}.\sayan{I removed the citation here.} This undesirable
dependency on $\lambda  \opt$ in the runtime led to an extensive line of work on
\red{width-independent solvers \cite{ZhuO15,KoufogiannakisY14,Young2014nearly,chekuri2018randomized,Quanrud20}}.
These stronger solvers run in  $\Otil(N/\poly(\epsilon))$ 
time, which is independent of $\lambda\opt$, and many of them are based on MWU. 

There is, however, one curious and unsatisfactory aspect in the literature
on width-independent LP solvers. Namely, each of these MWU-based algorithms
 requires a separate and fine-tuned analysis involving calculations that resemble
the proof of the regret minimization property of MWU. But can we just apply the regret bound of MWU in a blackbox
manner while analyzing a width-indepdendent LP solver? 

We design a new static algorithm for packing-covering LPs that answers this question in affirmative. To the best of our 
knowledge, it is the first width-independent near-linear-time LP solver
whose analysis follows from a black-box application of the regret bound for MWU.
This is explained in details in  \Cref{sec:dual_MWU}. Moreover, the
algorithm is very intuitive and its high-level description is as follows: 
\begin{quote}
There are $O(\log(n)/\epsilon^{2})$ phases. During each phase, we loop
over the constraints: if constraint $i$ is violated, then we ``whack''
it until it is satisfied and move on to the next constraint $i+1$. Whenever $\one^{\top}x>(1+\epsilon)$, we 
start the next phase and normalize $x$ so that $\one^{\top}x=1$.
If we finish the loop, then we return $x$ as an $\epsilon$-approximation to the primal.
After the last phase, we are guaranteed to obtain a
dual $\epsilon$-approximation $y$ certifying that there is ``almost''
no feasible primal solution, i.e., no $x$ where $\one^{\top}x \leq 1$ and
$Cx \geq (1+\epsilon) \cdot \one$.
\end{quote}
Henceforth, we refer to this algorithm as the \emph{whack-a-mole MWU algorithm}.\footnote{Our algorithm resembles the whack-a-mole-based algorithms for computing 
geometric set covers \cite{clarkson1993algorithms,bronnimann1995almost,agarwal2014near,chan2020faster}.
However, these previous algorithms give $O(1)$-approximation, are specifically
described for set covers, and use adhoc calculation related to MWU.
Thus, our whack-a-mole MWU algorithm refines their approximation,
generalizes their applicability, and modularizes their analysis. Furthermore, for
readers familiar with the algorithms of \cite{Young2014nearly,Quanrud20}, our whack-a-mole MWU algorithms
can be interpreted as implementing their algorithms from the whack-a-mole player's
perspective. } The algorithm takes near-linear time because it has few phases and
each phase requires only one-pass scan. In more details, we refer to the
operation of ``whacking constraint $i$ until it is satisfied''
as  \emph{enforceing} constraint $i$. Enforcing a constraint $i$ can be
done via binary search in $\Otil(N_{i})$ time, where $N_{i}$
is the number of non-zero entries in constraint $i$. As each constraint
is scanned and enforced once per phase, the total running time is
$\Otil(\sum N_{i}/\epsilon^{2})=\Otil(N/\epsilon^{2})$. 

The simplicity of this algorithm not only leads us to efficient dynamic implementations
(see \Cref{sec:overview dynamic}), but also efficient streaming
algorithms and online algorithms with recourse (see
\Cref{sec:byproduct}). 

\subsubsection{Dynamic Implementation }

\label{sec:overview dynamic}

We can extend our whack-a-mole MWU algorithm from \Cref{sec:indep blackbox}
to handle restricting updates, in the following natural manner. Suppose that the algorithm obtained
a feasible solution $x$ (because $x$ satisfies all constraints during
the one-pass scan). Now, consider a restricting update to constraint
$i$, say to an entry $C_{ij}$. We just need to check whether $x$
violates constraint $i$ (i.e., if $C_{i}x<1$). If yes, then we  enforce constraint
$i$ (i.e.,~whack it until it is satisfied). No new
constraint gets violated due to this enforcement. As in the static setting, we start the next phase whenever
$\one^{\top}x>(1+\epsilon)$. We proceed like this until we obtain a feasible solution $x$. Once the
algorithm concludes that there is almost no feasible solution, then
we are done because this will remain the case after future restricting updates.
The correctness  follows from the static algorithm.

We have to overcome an issue while analyzing the total update time: Each constraint is
\emph{not} enforced once per phase anymore (because constraint
$i$ can be updated repeatedly within a phase). Accordingly, we
strengthen our {\em analysis} to show that, even under restricting updates,
each constraint is still enforced  $\Otil(1/\epsilon^{2})$
times. But the algorithm remains the same. This leads to a  total update time of $\Otil(N/\epsilon^{2}+t)$, for $t$ updates.\sayan{The introduction is already long. I think it is ok to skip any more details here.}

Our dynamic  whack-a-mole MWU algorithm
applies to covering LPs under restricting updates, and by duality, to 
packing LPs under relaxing updates. There is a technical challenge, however, that prevents us from extending this approach to packing (resp.~covering) LPs under restricting (resp.~relaxing) 
updates.

To handle this case, we instead consider another static $\Otil(N/\epsilon^{2})$-time
algorithm that take the greedy player's perspective \cite{Young2014nearly,Quanrud20}.
We call it the \emph{greedy MWU algorithm}. Given a covering LP (or
even a mixed packing-covering LP), this greedy MWU algorithm can naturally
handle relaxing updates (as discussed in \Cref{sec:dynamic MWU possible}).
Using similar ideas, we show how to dynamize it with only polylogarithmic
overhead in the total update time. Therefore, we obtain algorithms
for covering LPs under relaxing updates, and by duality, for packing LPs under
restricting update via duality.\footnote{Since the greedy MWU algorithm can handle mixed packing-covering LPs under relaxing
updates, via duality this implies algorithms for handling restricting
updates to both covering and packing LPs. So strictly speaking, our dynamic whack-a-mole MWU algorithm is subsumed by the dynamic greedy MWU algorithm. Nevertheless,
we present the dynamic whack-a-mole MWU algorithm as it is simpler to understand and has more modular analysis.}

\paragraph{Discussion on the Simplicity of Our Techniques.}
We show how to seamlessly use  the MWU framework to obtain near-optimal partially dynamic algorithms for packing-covering LPs. We view the simplicity of our approach as an important merit of this paper.
Despite its simplicity, our result implies immediate applications such as a near-optimal decremental maximum {\em weight} matching algorithm, while previous algorithms 
\cite{bernstein2020deterministic,JambulapatiJST22} for this problem require more involved arguments and work only on unweighted graphs.

\subsection{By-Products of Our Techniques }

\label{sec:byproduct}

Our techniques turn out to have implications beyond dynamic
algorithms for packing-covering LPs. 

\paragraph{Online Algorithms with Recourse.}
The online covering LP problem was introduced in the influential work
of Buchbinder and Naor \cite{buchbinder2009online}. This 
is a fundamental problem in the literature on the online primal-dual method; with applications in online set cover, routing, ad-auctions, metrical task systems and many other settings (see the
book \cite{buchbinder2009design}). Several variants of the problem are studied in \cite{gupta2012approximating,azar2013online,gupta2021random}. 

In this problem, we have a covering LP $\Cover$ with $m$ constraints
and $n$ variables. The constraints of $\Cover$ arrive in an online fashion one after another, and we need to maintain a solution to $\Cover$. Crucially, the values of the variables can only
increase through time (i.e.,~any decision made cannot be retracted).
An online covering LP algorithm is $\alpha$-competitive if it maintains
a solution to $\Cover$ whose cost is always at most $\alpha\cdot\opt$.
Buchbinder and Naor gave a tight $\Theta(\log n)$-competitive algorithm for this problem.

Can we beat the lower bound on competitive ratio for a given problem by allowing some amount
of \emph{recourse}? Here, recourse is defined to be the number
of times the algorithm ``retracts'' its decision.  An \red{influential line of work in online algorithms \cite{BernsteinHR18, GuGK13, GuptaK14, GuptaKS14, GuptaK0P17}}
has been devoted towards answering this question for various fundamental problems.  For online covering
LPs, recourse equals the number of times the value of any variable is 
decreased.  Although covering LPs play a central role in online algorithms, until now it was not known whether we can beat the $\Omega(\log n)$ lower bound on the competitive ratio of this problem if we allow for small recourse.  
We answer this question in the affirmative. 

\begin{theorem}
\label{thm:online covering}There is a $(1+\epsilon)$-competitive
algorithm for online covering LPs with $O(n\log(n)\log(nW)/\epsilon^{3})$ total 
recourse. In fact, in this algorithm each variable incurs at most $O(\log(n)\log(nW)/\epsilon^{3})$ recourse. 
\end{theorem}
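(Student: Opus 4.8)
The plan is to realize online covering as a special case of the \emph{decremental} (restricting‑update) covering problem already handled by the dynamic whack‑a‑mole MWU algorithm of \Cref{sec:overview dynamic}, layered on top of a monotone guess for $\opt$, and then to observe that the \emph{only} operation in that algorithm which ever decreases a variable is the normalization performed at a phase boundary, so that counting recourse reduces to counting phases. Concretely, after discarding trivially satisfied constraints and scaling so that $c>0$ and every $b_i>0$, standard estimates for covering LPs give that the optimum $\opt_t$ after $t$ arrivals is non‑decreasing (revealing a constraint only shrinks the feasible region) and stays within a multiplicative window of width $\poly(n)\cdot W^{O(1)}$; hence the candidate estimates $\{(1+\epsilon)^{k}\}$ we may ever use number $K=O(\epsilon^{-1}\log(nW))$. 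For a fixed budget $B$, the question ``is there $x\ge 0$ with $c^\top x\le B$ and $Ax\ge b$?'' becomes, after the substitution $x_j=(B/c_j)\tilde x_j$, exactly ``$\one^\top\tilde x\le 1$ and $\tilde C\tilde x\ge\one$'' with $\tilde C_{ij}=A_{ij}B/(c_jb_i)\ge 0$, i.e.\ the input format of the whack‑a‑mole MWU solver of \Cref{sec:dual_MWU}; and revealing a new constraint only \emph{increases} an entry of $b$ (equivalently reveals a new row of $\tilde C$), which is a restricting update to $\Cover$, so the online stream is precisely a decremental instance and, by the main insight of \Cref{sec:dynamic MWU possible}, no past whack is invalidated.

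Next I would run the dynamic whack‑a‑mole MWU algorithm for the current budget $B$ on the arriving constraints, \emph{never} resetting $\tilde x$ to $\one/n$ after initialization. Once an arrival has been processed, if the run has already exhausted the phase budget $O(\log(n)/\epsilon^{2})$ of the static analysis without satisfying all revealed constraints, then that analysis certifies that the feasibility LP for budget $B/(1+\epsilon)$ is (almost) infeasible, hence $\opt_t>B/(1+\epsilon)^2$; in that case I advance $B\leftarrow(1+\epsilon)B$ and continue. The point that makes an advance cheap is that it changes \emph{nothing} in the output: re‑normalizing the instance with $B'=(1+\epsilon)B$ replaces $\tilde x$ by $\tilde x/(1+\epsilon)$, which has $\one$‑norm at most $1$ and the same dot products $\tilde C_i\tilde x$ — i.e.\ a legal phase‑start for $B'$ — while the actual solution $x=\mathrm{diag}(B/c)\tilde x$ is unchanged. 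I keep advancing (each advance triggering at most $m$ re‑enforcements, each of which only \emph{increases} coordinates of $\tilde x$) until the run produces a $\tilde x$ satisfying all revealed constraints with $\one^\top\tilde x\le 1+\epsilon$, and then I output $x=\mathrm{diag}(B/c)\tilde x$. This $x$ satisfies $Ax\ge b$ and has cost $c^\top x=B\,\one^\top\tilde x\le(1+\epsilon)B<(1+\epsilon)^3\opt_t$, so after replacing $\epsilon$ by a constant multiple this is a $(1+\epsilon)$‑competitive solution, and it is feasible once each arrival has been fully processed.

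For the recourse bound, observe that a coordinate of $x$ decreases only when the algorithm normalizes $\tilde x\leftarrow\tilde x/\one^\top\tilde x$ at a phase boundary (where $\one^\top\tilde x>1+\epsilon$), and such a step multiplies \emph{every} coordinate by a common factor strictly below $1$; advancing the budget and enforcing constraints never decrease any coordinate. By the analysis of the dynamic whack‑a‑mole MWU algorithm (\Cref{sec:overview dynamic}), the number of phase boundaries incurred while the budget equals any fixed value is $O(\log(n)/\epsilon^{2})$, and there are $K=O(\epsilon^{-1}\log(nW))$ budget values, so the total number of normalizations is $O(\log(n)\log(nW)/\epsilon^{3})$. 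Since each normalization decreases every one of the $n$ variables exactly once, each variable incurs $O(\log(n)\log(nW)/\epsilon^{3})$ recourse and the total recourse is $O(n\log(n)\log(nW)/\epsilon^{3})$, matching the statement (and the update time is controlled by the same analysis, since each revealed or updated constraint is enforced only $\Otil(1/\epsilon^{2})$ times per budget epoch).

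The step I expect to be the real obstacle is reconciling the online requirement that a \emph{feasible} solution be available after every arrival with the wish not to pay recourse — or to temporarily sacrifice feasibility — whenever the $\opt$‑estimate $B$ changes. The naive fix, re‑solving from $\tilde x=\one/n$ for each new budget, would charge one decrease per variable per budget (still within the stated bound) but would also destroy feasibility during the re‑solve; the resolution above is the observation that raising $B$ and re‑normalizing leaves the current un‑normalized iterate untouched and in a legal phase‑start state, so the same run just continues and feasibility is restored by the end of the step. A secondary subtlety, inherited verbatim from the static whack‑a‑mole algorithm, is that a normalization may momentarily violate already‑satisfied constraints — which is exactly why the algorithm re‑enforces constraints after normalizing — but the number of such re‑enforcements is already bounded by the dynamic analysis of \Cref{sec:overview dynamic}, so it only affects the update time and not the recourse.
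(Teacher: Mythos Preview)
Your proposal is essentially correct and follows the paper's route: reduce online covering to the feasibility form of Problem~\ref{prob:basic} via $O(\epsilon^{-1}\log(nW))$ monotone guesses for $\opt$, run the online whack-a-mole MWU for each guess, and count recourse as the number of phase normalizations (the only step that ever decreases a variable). The paper obtains the bound by running the guesses as \emph{independent} instances (Section~\ref{sec:online_MWU} plus the reduction in Section~\ref{sec:reductions:online}): each instance has $O(\log n/\epsilon^{2})$ phases by Corollary~\ref{lm:phase}, hence $O(n\log n/\epsilon^{2})$ recourse, and multiplying by the number of guesses gives the stated total.

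Where you deviate is in threading the MWU state across budget advances rather than restarting. This is a reasonable optimization---the whack update $\hat x_j\gets(1+\epsilon\,\tilde C_{ij}/\lambda)\hat x_j$ is invariant under simultaneous scaling of $\tilde C$ and $\lambda$, so the run can indeed continue---and it cleanly addresses the transient-infeasibility concern you raise (which the paper simply doesn't discuss). Two loose ends, though: (i) your trigger for advancing is ``exhausting the phase budget $O(\log n/\epsilon^{2})$'', whereas the infeasibility certificate comes from exhausting the \emph{whack} budget $T=\lambda\ln n/\epsilon^{2}$; these are related via Lemma~\ref{cor:bound:weight} but not interchangeable. (ii) You invoke the $O(\log n/\epsilon^{2})$-phase bound per budget level, but that bound in the paper is for a fresh run from $\hat x=\one$; with carried-over state you need the slightly different accounting $\|\hat x\|_1\le n\prod_t(1+\epsilon/\lambda_t)$. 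Done carefully this actually \emph{saves} a factor $1/\epsilon$ over the stated bound (levels $k\ge1$ incur only $O(\log n/\epsilon)$ phases each), but you don't exploit that. Either way the theorem's bound follows; the paper's independent-instance presentation just sidesteps these bookkeeping issues.
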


Note that the recourse bound is completely independent of
the number of constraints.\sayan{I don't think we have any concrete application of this result as of now.} \Cref{thm:online covering} follows almost immediately from our whack-a-mole MWU algorithm
from \Cref{sec:indep blackbox}. Interestingly, our algorithm can be
viewed as a small adjustment to the well-known $O(\log n)$-competitive
algorithm of \cite{buchbinder2009online}. To see this, suppose that we know $\opt$
by guessing. Then here is the  description of our algorithm.
\begin{quote}
Run the algorithm of \cite{buchbinder2009online}. Whenever the objective  of the solution exceeds
$(1+\epsilon)\opt$, scale down the solution by a factor of $(1+\epsilon)$ and continue.
\end{quote}
Our argument shows that we will never scale down too many times. The
algorithm is clearly $(1+\epsilon)$-competitive. In the paper, we
also show similar $(1+\epsilon)$-competitive algorithm for online
packing LPs. Curiously, online algorithms \emph{without} recourse
of this problem is not known.\footnote{Many papers in online algorithms  consider a problem
where the \emph{variables} of a packing LP are revealed one by one, which is a dual of the online covering LP problem. We consider a different setting, where the packing {\em constraints} arrive one after another.}\thatchaphol{Say that online algorithms without recourse for this problem is impossible.}

\paragraph{Streaming algorithms.}
Our whack-a-mole MWU algorithm also works in the streaming setting.
When the rows of the constraint matrix arrive one by one, we obtain
an $\epsilon$-approximation for both packing and covering LPs using
$O(n)$ space and $\Otil(1/\epsilon^{3})$ passes. In contrast, when the
columns of the constraint matrix arrive one by one, we get the same result but with a  space complexity of 
$O(n+m)$. These give the first streaming algorithms for general packing covering
LPs. Previous algorithms in the literature worked only for  special cases such as fractional 
set cover \cite{indyk2017fractional} and fractional matching \cite{ahn2011linear,AhnG18,AssadiJJST22}.\sayan{I omitted the reference to Assadi et al.} 

\paragraph{Mixed Packing-Covering LPs.}
In a \emph{mixed packing-covering} LP $\Pack$ (or \emph{positive}
LP for short), we are given $A_{c}\in\mathbb{R}_{\ge0}^{m_{c}\times n},b_{c}\in\mathbb{R}_{\ge0}^{m_{c}}$
and $A_{p}\in\mathbb{R}_{\ge0}^{m_{p}\times n},b_{p}\in\mathbb{R}_{\ge0}^{m_{p}}$,
and we need to find $x\in\mathbb{R}_{\ge0}^{n}$ such that $A_{c}x\ge b_{c}$
and $A_{p}x\le b_{p}$. This is a generalization of both packing and
covering LPs. An $x$ is an $\epsilon$-approximation to this positive LP if $A_{c}x\ge b_{c}/(1+\epsilon)$ and $A_{p}x\le(1+\epsilon)b_{p}$.  Our dynamic  greedy MWU algorithm can handle positive LPs under relaxing updates.
\begin{theorem}
\label{thm:main mixed}
We can deterministically maintain an $\epsilon$-approximation to a positive LP
undergoing $t$ relaxing updates in $\Otil\left(N/\epsilon^{3}+t/\epsilon \right)$
total update time. Hence, the amortized update time of our algorithm is $\Otil(1/\epsilon^{3})$.
\end{theorem}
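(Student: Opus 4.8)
The plan is to dynamize the static \emph{greedy MWU algorithm} for positive LPs of \cite{Young2014nearly,Quanrud20}, using the structural insight of \Cref{sec:dynamic MWU possible} that relaxing updates never invalidate the greedy player's past increments. First I would cast the static algorithm into the ``few phases, one pass per phase'' shape of \Cref{sec:overview dynamic}: after normalizing the packing rows so that $b_{p}=\one$ (an update increasing $b_{p,k}$ then becomes a scaling-down of row $k$ of $A_{p}$, still relaxing), the algorithm maintains $x\in\mathbb{R}_{\ge0}^{n}$ from a tiny uniform start and runs $R=\Otil(1/\epsilon^{2})$ phases. Each phase scans the coordinates $j=1,\dots,n$; at coordinate $j$, while $j$ is a \emph{cheap direction} for the current MWU weights --- i.e., with $p_{k}\propto\exp(\eta(A_{p}x)_{k})$ on the packing side and $q_{i}\propto\exp(-\eta(A_{c}x)_{i}/b_{c,i})$ on the still-unsatisfied covering side, the weighted covering gain of a unit of $x_{j}$ is at least its weighted packing load --- it increments $x_{j}$ by one multiplicative $(1+\epsilon)$ step; we call a maximal such run an \emph{enforcement} of $j$. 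If the aggregate packing potential crosses a threshold the phase restarts (rescale $x$, advance the counter); if a pass finishes with $A_{c}x\ge b_{c}$ we output $x$ as an $\epsilon$-approximation; and if all $R$ phases elapse, the MWU regret bound turns $(q,p)$ into a Farkas-type certificate of infeasibility. Correctness of this routine is a black-box application of the regret bound exactly as in \Cref{sec:dual_MWU}.

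Second, I would argue that a relaxing update keeps the current run a legal prefix of a run on the updated LP. A relaxing update weakly increases some covering coverage $(A_{c}x)_{i}$ or weakly decreases a threshold $b_{c,i}$ (so covering satisfaction only becomes easier and, once attained, is preserved), and weakly decreases some packing load $(A_{p}x)_{k}$ or weakly increases a budget $b_{p,k}$ (so the invariant $A_{p}x\le(1+\epsilon)b_{p}$ is preserved); and by the generic reasoning of \Cref{sec:dynamic MWU possible}, every coordinate the greedy player has committed to remains a cheap direction, so no increment is ever retracted. Hence on an update we merely \emph{resume} the algorithm: if it was in its ``output $x$'' state it stays there (only bookkeeping is needed); if it was in its ``infeasible'' state, the update may have made the LP feasible, and we continue scanning and running further phases from the current state. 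The decisive consequence is that the per-coordinate enforcement count, and the phase counter, are \emph{monotone non-decreasing} over the whole update sequence.

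Third, for efficiency I would keep $x$ sparse, store the weights $p,q$ in segment trees keyed by constraint (so one $(1+\epsilon)$-step on $x_{j}$, which rescales the $\le N_{j}$ weights of constraints meeting $j$, costs $\Otil(N_{j})$), and bucket the coordinates into $\Otil(1/\epsilon)$ classes by their cheapness ratio, so that ``find a cheap direction'' is just reading off the top nonempty bucket. A step on $x_{j}$ changes the ratios only of coordinates sharing a constraint with $j$; I would recompute and re-bucket those and charge the cost to the step, so one enforcement of $j$ costs $\Otil(N_{j}/\epsilon)$. A relaxing update to a single entry $(i,j)$ touches one constraint weight and one coordinate's bucket, which takes $\Otil(1/\epsilon)$ time, plus the $O(1)$ enforcements its arrival may trigger.

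Finally --- and this is the step I expect to be the main obstacle --- I would bound the \emph{total} number of enforcements over the entire update sequence, strengthening the static analysis in the spirit of the whack-a-mole argument sketched in \Cref{sec:overview dynamic}: show that the greedy player's potential (weighted covering deficit against the log of the packing potential) has bounded total growth even with update-induced slack, so that, even though a coordinate may now be enforced several times within a phase, each coordinate is enforced only $\Otil(1/\epsilon^{2})$ times in total; crucially, a relaxing update only \emph{decreases} this potential (it shrinks $b_{c}$-deficits and packing loads), so it spawns no new phases and only forces the algorithm to ``catch up'' with enforcements the static analysis has already paid for. Granting this, the enforcements cost $\sum_{j}\Otil(1/\epsilon^{2})\cdot\Otil(N_{j}/\epsilon)=\Otil(N/\epsilon^{3})$, the $t$ updates cost $\Otil(t/\epsilon)$, and correctness follows from the first two paragraphs; this yields total update time $\Otil(N/\epsilon^{3}+t/\epsilon)$ and amortized $\Otil(1/\epsilon^{3})$, proving \Cref{thm:main mixed}. (Combining this dynamic greedy MWU algorithm with the dynamic whack-a-mole MWU algorithm and LP duality, as outlined in \Cref{sec:overview dynamic}, then also yields \Cref{thm:main ub}.)
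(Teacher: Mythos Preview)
Your high-level plan matches the paper's: dynamize the greedy MWU of \cite{Young2014nearly,Quanrud20}, argue that relaxing updates never invalidate past increments, and bound total per-coordinate enforcements. But there is a concrete gap in how you handle relaxing updates to the \emph{packing} side, and this is precisely where the paper has to do real work. A relaxing update that decreases $A_p(i,k)$ decreases $(A_p x)_i$ and hence the weight $p_i\propto\exp(\eta(A_p x)_i)$; since $p_i$ sits in the cheapness ratio of \emph{every} coordinate $k'$ with $A_p(i,k')>0$, one such update can make many coordinates cheap simultaneously---your claim that an update ``touches one constraint weight and one coordinate's bucket'' is false for packing updates. Worse, boosts push $p_i$ up while packing updates push it down, so $p_i$ is no longer monotone; you then cannot bound how often it crosses a $(1+\epsilon)$ threshold, and any lazy refresh of the ratios (which is what you actually need---eagerly rebucketing all coordinates sharing a constraint with $j$ on every boost costs the sum of the lengths of all rows meeting $j$, not $N_j$) collapses. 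The paper's fix is an \emph{extended LP}: add a dummy $(n{+}1)$-st column to $A_p$ (and a zero column to $A_c$), and after each packing update to $A_p(i,k)$ perform a compensating ``pseudo-update'' that increases $A_p^*(i,n{+}1)$ just enough to keep $(A_p^* x^*)_i$---and hence $p_i$---unchanged. The dummy coordinate is never cheap (its covering column is zero), so feasibility is unaffected; but now all weights are monotone again, a packing update affects only the single ratio of coordinate $k$, and the $\Otil(t/\epsilon)$ per-update bookkeeping goes through.

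A smaller remark: the paper's bound on boosts per coordinate (what you flag as the ``main obstacle'') is not a potential argument but a pigeonhole on \emph{pivot} constraints---each boost of $k$ moves some constraint's slack by $\Theta(\epsilon/\eta)$; since active covering constraints have $C_j x<2$ and all packing constraints have $P_i x\le 1+O(\epsilon)$, bucketing pivot entries geometrically gives $O(\eta\log(U/L)/\epsilon)=\Otil(1/\epsilon^2)$ boosts per coordinate, and this argument already tolerates relaxing updates without any change.
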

\Cref{thm:main mixed} immediately gives the following natural applications captured by mixed packing-covering LPs. For the well-studied load balancing problem (e.g., \cite{harvey2006semi,assadi2020improved}), we obtain a near-optimal decremental algorithm for maintaining $(1+\epsilon)$-approximate fractional assignments, where each update deletes a job. By duality, for the dynamic densest subhypergraph problem \cite{hu2017maintaining,bera2022new}, this implies a near-optimal decremental algorithm for maintaining $(1+\eps)$-approximation, where each update deletes a hyperedge. The near-optimal algorithms for both problems were not known before.

Finally, we leave the question of designing a dynamic algorithm for positive LPs under restricting updates as an interesting open problem.

\subsection{Roadmap for the Rest of the Paper}

In Section~\ref{sec:whack:a:mole}, we present our whack-a-mole MWU algorithm and its extensions to dynamic, streaming and online settings. Section~\ref{sec:prob} gives an overview of the static greedy MWU algorithm for positive LPs~\cite{Young2014nearly,Quanrud20}, and then explains how we extend this algorithm to the setting where the input LP undergoes relaxing updates. Finally, in Section~\ref{sec:lowerbounds:pure}, we present our conditional lower bounds for dynamic packing-covering LPs.

If the reader wishes to treat this as an extended abstract, then we recommend reading until the end of Section~\ref{sec:dynamic_MWU} as that contains the main technical result of this paper.

\section{The Whack-a-Mole MWU Algorithm}
\label{sec:whack:a:mole}

This section focuses on the whack-a-mole MWU algorithm. We start by considering the following problem. 

\begin{problem}
\label{prob:basic}
Given a matrix $C \in [0, \lambda]^{m \times n}$ where $\lambda > 0$, either return a vector $x \in \mathbb{R}_{\geq 0}^n$ with $\mathbb{1}^{\top} x \leq 1 +\Theta(\epsilon)$ and $C x \geq (1-\Theta(\epsilon)) \cdot \mathbb{1}$, or return a vector $y \in \mathbb{R}_{\geq 0}^m$ with $\mathbb{1}^{\top} y \geq 1-\Theta(\epsilon)$ and $C^{\top} y \leq (1+\Theta(\epsilon)) \cdot \mathbb{1}$.
\end{problem} 

Problem~\ref{prob:basic} corresponds to the following covering LP: $\text{Minimize } \mathbb{1}^{\top} x, \text{ s.t. } C x \geq \mathbb{1} \text{ and } x \in \mathbb{R}_{\geq 0}^n$. We have to either return an approximately feasible solution to this covering LP with objective $\leq 1+\Theta(\epsilon)$, or  return an approximately feasible solution to the dual packing LP with objective $\geq 1-\Theta(\epsilon)$. In Sections~\ref{sec:dual_MWU},~\ref{sec:dynamic_MWU},~\ref{sec:streaming_MWU} and~\ref{sec:online_MWU}, we respectively show that our MWU algorithm solves Problem~\ref{prob:basic}  in static, dynamic, streaming and online settings. Subsequently, in Section~\ref{sec:reductions}, we explain how we can compute an $\epsilon$-approximate optimal solution to a general packing-covering LP if we have an algorithm for  Problem~\ref{prob:basic}.

Throughout this section, an index $i \in [m]$ (resp.~$j \in [n]$) refers to a row (resp.~column) of the matrix $C$, and  $C_{ij} \in [0, \lambda]$ denotes the entry corresponding to the $i^{th}$ row and $j^{th}$ column of $C$. Furthermore, the symbol $v_k$ denotes the $k^{th}$ coordinate of a vector $v \in \mathbb{R}^n$, where $k \in [n]$.

\subsection{Static Whack-a-Mole MWU Algorithm}
\label{sec:dual_MWU}

We  now present the whack-a-mole MWU algorithm for Problem~\ref{prob:basic} in the static setting. We describe the basic template behind the algorithm in Section~\ref{sec:basic:template}. In Section~\ref{sec:whack:a:mole:implementation}, we show how to implement this basic template in near-linear time.

\subsubsection{The Basic Template}
\label{sec:basic:template}

\label{sec:basic:algorithm}

 \begin{figure*}[htbp]
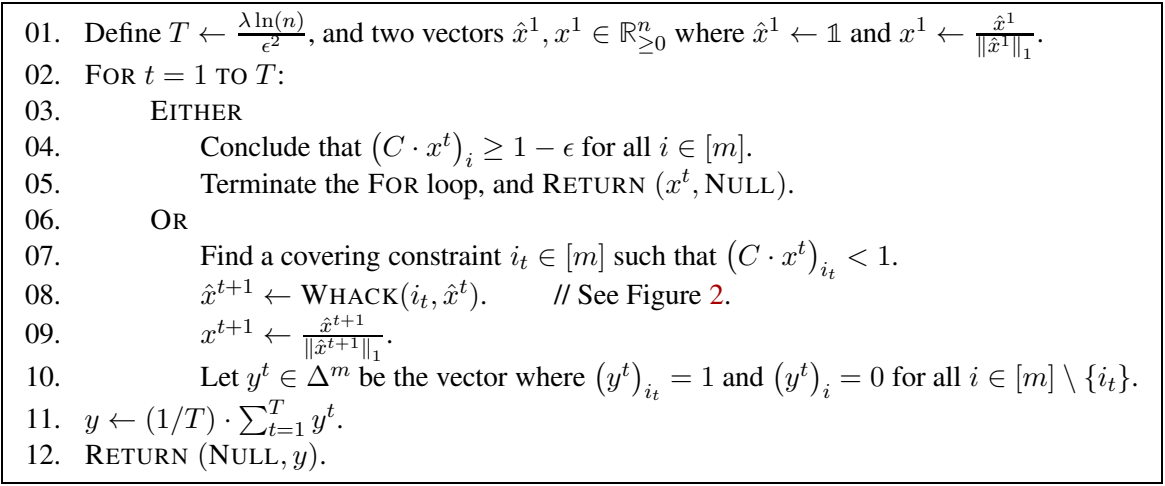

                                                \centerline{
                                                \framebox{
                                                {
                                                                \begin{minipage}{5.5in}
                                                                        \begin{tabbing}                                                                            
                                                                                01. \ \ \=  Define $T \leftarrow \frac{\lambda \ln (n)}{\epsilon^2}$, and two vectors $\hat{x}^1, x^1 \in \mathbb{R}_{\geq 0}^n$ where $\hat{x}^1 \leftarrow \mathbb{1}$ and $x^1 \leftarrow \frac{\hat{x}^1}{\norm{\hat{x}^1}_1}$. \\
                                                                                02. \> \sc{For} $t = 1$ to $T$:\\
                                                                                03. \> \ \ \ \ \ \ \ \ \ \= \sc{Either}\\
                                                                                04. \> \> \ \ \ \ \ \ \ \= Conclude that  $\left(C \cdot x^t\right)_{i} \geq 1 - \epsilon$ for all $i \in [m]$. \\
                                                                                05. \> \> \> Terminate the {\sc For} loop, and {\sc Return} $(x^t, \text{{\sc Null}})$.\\
                                                                                06. \> \>  \sc{Or} \\
                                                                                07. \> \> \> Find a covering constraint $i_t \in [m]$ such that  $\left(C \cdot x^t\right)_{i_t} < 1$. \\
                                                                                08. \> \> \> $\hat{x}^{t+1} \leftarrow \text{\sc{Whack}}(i_t,\hat{x}^t)$.  \qquad // See Figure~\ref{fig:enforce}. \\                                                                 
                                                                                09. \> \> \> $x^{t+1} \leftarrow \frac{\hat{x}^{t+1}}{\norm{\hat{x}^{t+1}}_1}$.\\
                                                                                10. \> \> \> Let $y^t \in \Delta^m$ be the vector where $\left(y^t\right)_{i_t} = 1$ and $\left( y^t \right)_i = 0$ for all $i \in [m] \setminus \{ i_t\}$. \\
                                                                                11. \> $y \leftarrow (1/T) \cdot \sum_{t = 1}^{T} y^t$. \\
                                                                                12. \> \sc{Return} $(\text{{\sc Null}}, y)$.
                                                                                \end{tabbing}
                                                                \end{minipage}
                                                        }
                                                }
                                                }
                                                        \caption{\label{fig:dual_algorithm} The Whack-a-Mole MWU Algorithm}
                                                \end{figure*}
                                                
  \begin{figure*}[htbp]
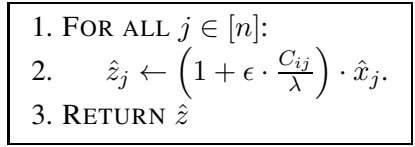

                                                \centerline{
                                                \framebox{
                                                {
                                                                \begin{minipage}{5.5in}
                                                                        \begin{tabbing}                                                                            
                                                                                1.  \= {\sc For all} $j \in [n]$: \\
                                                                                2. \> \ \ \ \ \ \ \= $\hat{z}_j \leftarrow \left(1+\epsilon \cdot \frac{C_{ij}}{\lambda} \right) \cdot \hat{x}_j$. \\
                                                                                3. \> \sc{Return} $\hat{z}$ 
                                                                                \end{tabbing}
                                                                \end{minipage}
                                                        }
                                                }
                                                }
                                                        \caption{\label{fig:enforce} {\sc Whack}$(i, \hat{x})$.}
                                                \end{figure*}
  
The algorithm is described in Figure~\ref{fig:dual_algorithm} and Figure~\ref{fig:enforce}.  It maintains a vector $\hat{x} \in \mathbb{R}_{\geq 0}^n$, where $\hat{x}_j$ denotes the {\em weight} associated with a variable $j \in [n]$ in the covering LP, and the normalized vector $x := \hat{x}/\norm{\hat{x}}_1$.  This ensures that $\mathbb{1}^{\top} \cdot x = 1$. The algorithm runs in $T = \lambda \ln(n)/\epsilon^2$ iterations, where $0 < \epsilon < 1/2$. 

Let $\hat{x}^t$ and $x^t$ respectively denote the status of  $\hat{x}$ and $x$ at the start of iteration $t \in [T]$ of the main {\sc For} loop in Figure~\ref{fig:dual_algorithm}. Before the very first iteration, we initialize $\hat{x}^1 \leftarrow \mathbb{1}$. Subsequently, during any given iteration $t \in [T]$, the algorithm branches into one of the following two cases. 

\medskip
\noindent {\em Case (1).}  It observes that $C x^t \geq (1-\epsilon) \cdot \mathbb{1}$ and returns $(x^t, \text{{\sc Null}})$. In this case, $x^t \in \mathbb{R}_{\geq 0}^n$ is an approximately feasible solution to the covering LP, with objective $\mathbb{1}^{\top} x^t = 1$. 

 \medskip
 \noindent {\em Case (2).} It identifies a violated covering constraint $i_t \in [m]$ with $\left( C x^t \right)_{i_t} < 1$. It then  {\em whacks}  constraint $i_t$, by setting $\hat{x}_j \leftarrow  \left( 1+ \epsilon \cdot \frac{C_{i_t j}}{\lambda}\right)\cdot \hat{x}_j$ for all $j \in [n]$, and accordingly updates the normalized vector $x$. Note that this step increases the relative importance (in the solution $x^t$) of the coordinates $j \in [n]$ that have large $C_{i_t j}$ values.  Thus, intuitively, whacking a violated covering constraint makes progress towards making the solution $x^t$ feasible for the covering LP.  We let $y^t \in \Delta^m$ denote the {\em indicator} vector for the covering constraint $i_t \in [m]$ that gets whacked.

\medskip
\noindent After $T$ iterations, the algorithm returns $(\text{{\sc Null}}, y)$, where $y$ is the average of the vectors $y^1, \cdots, y^T$.  Clearly, we have $\mathbb{1}^{\top} y = 1$.  The next lemma shows that $y$ is an approximately feasible solution to the dual packing LP.\footnote{We remark here that this basic algorithm will work even if the matrix $C$ had negative entries (i.e., if we had $C \in [-\lambda, \lambda]^{m\times n}$), provided we increase the number of iterations by setting $T = \frac{\lambda^2 \ln (n)}{\epsilon^2}$.}  Its proof appears in Section~\ref{sec:lm:whack:a:mole:key}.

\begin{lemma}
\label{lm:whack:a:mole:key}
Suppose that the algorithm in Figure~\ref{fig:dual_algorithm} returns $(\text{{\sc Null}}, y)$ in step~(12). Then $C^{\top} y \leq (1+4\epsilon) \cdot \mathbb{1}$.
\end{lemma}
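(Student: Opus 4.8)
The plan is to run the textbook multiplicative-weights potential argument, with potential $\Phi^t := \norm{\hat{x}^t}_1 = \sum_j \hat{x}^t_j$, and to read off a regret bound for each coordinate $j \in [n]$ of $C^{\top} y$ separately. There are two ingredients: (i) an upper bound on how fast $\Phi$ can grow, which uses the fact that every whacked constraint is genuinely violated; and (ii) a lower bound on the final weight $\hat{x}^{T+1}_j$ of each coordinate in terms of $(C^{\top} y)_j$. Comparing the two bounds and substituting $T = \lambda \ln(n)/\epsilon^2$ will yield $(C^{\top} y)_j \le 1+4\epsilon$.

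For the upper bound, I would note that whenever step~(8) executes in iteration $t$ we have $(Cx^t)_{i_t} < 1$ by step~(7). Writing $x^t = \hat{x}^t/\Phi^t$ and using $\sum_j C_{i_t j}\, x^t_j = (Cx^t)_{i_t}$, the {\sc Whack} operation gives
\[ \Phi^{t+1} = \sum_{j}\Bigl(1 + \epsilon \tfrac{C_{i_t j}}{\lambda}\Bigr)\hat{x}^t_j = \Phi^t + \tfrac{\epsilon}{\lambda}\,\Phi^t\,(Cx^t)_{i_t} < \Phi^t\Bigl(1 + \tfrac{\epsilon}{\lambda}\Bigr). \]
Since $\Phi^1 = \norm{\mathbb{1}}_1 = n$, iterating over all $T$ rounds gives $\Phi^{T+1} < n\,(1+\epsilon/\lambda)^T \le n\exp(\epsilon T/\lambda)$, and hence $\ln \hat{x}^{T+1}_j \le \ln \Phi^{T+1} < \ln n + \epsilon T/\lambda$ for every $j$.

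For the lower bound, I would fix $j$ and unroll the {\sc Whack} update from $\hat{x}^1_j = 1$, obtaining $\hat{x}^{T+1}_j = \prod_{t=1}^T (1 + \epsilon C_{i_t j}/\lambda)$. Taking logs and applying $\ln(1+z) \ge z - z^2/2$ (valid for $z \ge 0$) together with $(\epsilon C_{i_t j}/\lambda)^2 \le \epsilon^2 (C_{i_t j}/\lambda)$ (since $C_{i_t j}/\lambda \in [0,1]$), one gets $\ln \hat{x}^{T+1}_j \ge (\epsilon - \epsilon^2/2)\,\tfrac{1}{\lambda}\sum_{t=1}^T C_{i_t j}$. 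Because $y^t$ is the indicator of $i_t$, we have $(C^{\top} y^t)_j = C_{i_t j}$, so with $y = \tfrac1T \sum_t y^t$ this means $\sum_{t=1}^T C_{i_t j} = T\,(C^{\top} y)_j$. Combining with the upper bound of the previous paragraph,
\[ \bigl(\epsilon - \tfrac{\epsilon^2}{2}\bigr)\tfrac{T}{\lambda}\,(C^{\top} y)_j < \ln n + \tfrac{\epsilon T}{\lambda}, \qquad\text{so}\qquad (C^{\top} y)_j < \frac{\lambda \ln(n)/T + \epsilon}{\epsilon\,(1-\epsilon/2)}. \]
Substituting $T = \lambda \ln(n)/\epsilon^2$ makes $\lambda\ln(n)/T = \epsilon^2$, so the right-hand side equals $\tfrac{1+\epsilon}{1-\epsilon/2}$, which for $0 < \epsilon < 1/2$ is at most $(1+\epsilon)^2 \le 1 + 4\epsilon$. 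Since $j$ was arbitrary, $C^{\top} y \le (1+4\epsilon)\cdot\mathbb{1}$.

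The argument is entirely routine MWU bookkeeping, so I do not anticipate a genuine obstacle; the one place that needs care is the handling of constants — choosing an elementary bound for $\ln(1+z)$ whose quadratic error term can be absorbed using $C_{i_t j}/\lambda \le 1$, and then verifying that the final ratio collapses to $1+4\epsilon$ under the standing assumption $\epsilon < 1/2$. Everything else follows mechanically from the definitions in Figures~\ref{fig:dual_algorithm} and~\ref{fig:enforce}.
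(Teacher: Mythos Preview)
Your proof is correct. The potential argument is clean and the constant-tracking is right: with $T = \lambda\ln(n)/\epsilon^2$ the ratio $(1+\epsilon)/(1-\epsilon/2)$ indeed collapses below $1+4\epsilon$ for $\epsilon < 1/2$, and your use of $\ln(1+z)\ge z - z^2/2$ together with $C_{i_t j}/\lambda \le 1$ to absorb the quadratic term is exactly the right move.

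However, your route differs from the paper's in a way that is worth flagging. The paper proves this lemma by \emph{invoking the experts regret bound as a black box} (their Lemma~\ref{lm:MWU}, which is Theorem~2.5 of Arora--Hazan--Kale): they map each column $j$ to an expert, each iteration $t$ to a round with payoff $(p^t)_j = C_{i_t j}/\lambda$, and then read inequality~\eqref{eq:experts:1} directly off the black-box guarantee. You instead unfold the MWU potential argument inline --- your two paragraphs are, in effect, a self-contained re-proof of the regret lemma specialized to this payoff structure. Mathematically the two are equivalent, but the paper explicitly advertises the black-box style as a contribution (see Section~\ref{sec:indep blackbox}): prior width-independent solvers each carried their own ad~hoc MWU calculation, and the paper's point is that one can avoid this. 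So your proof is more elementary and self-contained, while the paper's is more modular and makes the connection to the experts framework explicit; if you are writing this up you may want to at least mention that your potential calculation is precisely the standard MWU regret proof, so as not to obscure that connection.
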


\begin{theorem}
\label{th:whack:a:mole:key}
The algorithm in Figure~\ref{fig:dual_algorithm} either returns an $x^t \in \mathbb{R}_{\geq 0}^n$  with $\mathbb{1}^{\top} x^t = 1$ and $C x^t \geq (1-\epsilon) \cdot \mathbb{1}$, or it returns a  $y \in \mathbb{R}_{\geq 0}^m$  with $\mathbb{1}^{\top} y = 1$ and $C^{\top} y \leq (1+4\epsilon) \cdot \mathbb{1}$.
\end{theorem}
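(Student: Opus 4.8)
The plan is to reduce Theorem~\ref{th:whack:a:mole:key} to a short case analysis over the two ways the algorithm of Figure~\ref{fig:dual_algorithm} can halt, invoking Lemma~\ref{lm:whack:a:mole:key} for the harder case. First I would record two structural facts. (i)~Since the subroutine of Figure~\ref{fig:enforce} only multiplies each coordinate of $\hat{x}$ by a factor $1 + \epsilon \cdot C_{ij}/\lambda \ge 1$, and we start from $\hat{x}^1 = \mathbb{1}$, we have $\hat{x}^t \ge \mathbb{1} > 0$ in every iteration, so $\norm{\hat{x}^t}_1 \ge n > 0$; hence the normalizations in lines~01 and~09 are well defined and $\mathbb{1}^{\top} x^t = 1$ always. (ii)~In every iteration at least one of the two branches is executable: if $C x^t \ge (1-\epsilon)\cdot\mathbb{1}$ then line~04 applies, and otherwise some coordinate satisfies $(Cx^t)_i < 1-\epsilon \le 1$, so line~07 can find a violated constraint $i_t$. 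Consequently the loop either exits through line~05 in some iteration $t \le T$, or completes all $T$ iterations and returns through line~12; in particular the algorithm always produces a well-defined output.

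If the algorithm returns in line~05, it returns $x^t$ with $\mathbb{1}^{\top} x^t = 1$ by fact~(i), and with $Cx^t \ge (1-\epsilon)\cdot\mathbb{1}$ --- exactly the condition tested in line~04. This is the first alternative in the statement. If instead the algorithm returns in line~12, it returns $y = \tfrac1T\sum_{t=1}^T y^t$ where every $y^t \in \Delta^m$, so $\mathbb{1}^{\top} y = \tfrac1T\sum_{t=1}^T \mathbb{1}^{\top} y^t = 1$, while $C^{\top} y \le (1+4\epsilon)\cdot\mathbb{1}$ is precisely Lemma~\ref{lm:whack:a:mole:key}. Together these give the second alternative, so the theorem follows from Lemma~\ref{lm:whack:a:mole:key}. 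Thus the theorem itself carries no real obstacle; the substance is entirely in the lemma.

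For context, the way I would prove Lemma~\ref{lm:whack:a:mole:key} (deferred to Section~\ref{sec:lm:whack:a:mole:key}) is the textbook multiplicative-weights potential argument for $\Phi^t := \norm{\hat{x}^t}_1$. Since $\hat{x}^{t+1}_j = (1 + \epsilon C_{i_t j}/\lambda)\,\hat{x}^t_j$ and $x^t_j = \hat{x}^t_j/\Phi^t$, a direct expansion gives $\Phi^{t+1} = \Phi^t\bigl(1 + \tfrac{\epsilon}{\lambda}(Cx^t)_{i_t}\bigr)$; since line~07 whacks only a \emph{violated} constraint, $(Cx^t)_{i_t} < 1$, so $\Phi^{t+1} \le \Phi^t\, e^{\epsilon/\lambda}$ and thus $\ln \Phi^{T+1} \le \ln n + \epsilon T/\lambda$. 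In the other direction, for every fixed $j \in [n]$, $\ln \Phi^{T+1} \ge \ln \hat{x}^{T+1}_j = \sum_{t=1}^T \ln\bigl(1 + \epsilon C_{i_t j}/\lambda\bigr) \ge (\epsilon - \epsilon^2)\sum_{t=1}^T C_{i_t j}/\lambda$, using $\ln(1+z) \ge z - z^2$ for $z \in [0,1]$ together with $C_{i_t j}/\lambda \le 1$ and $\epsilon < 1/2$. Comparing the two bounds and substituting $T = \lambda \ln(n)/\epsilon^2$ yields, after rearrangement, $\tfrac1T\sum_{t=1}^T C_{i_t j} \le \tfrac{1+\epsilon}{1-\epsilon} \le 1 + 4\epsilon$ (the last step uses $\epsilon \le 1/2$). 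Finally $(C^{\top} y)_j = \sum_{i\in[m]} C_{ij} y_i = \tfrac1T\sum_{t=1}^T \sum_{i\in[m]} C_{ij}(y^t)_i = \tfrac1T\sum_{t=1}^T C_{i_t j}$, because $y^t$ is the indicator vector of $i_t$; hence $C^{\top} y \le (1+4\epsilon)\cdot\mathbb{1}$. The one point that needs care is keeping the two potential bounds and the choice of $T$ tight enough that the final ratio is $\tfrac{1+\epsilon}{1-\epsilon}$ rather than something worse, i.e., not wasting slack in the elementary inequalities $1+z \le e^z$ and $\ln(1+z) \ge z - z^2$.
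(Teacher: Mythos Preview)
Your proof of the theorem itself is correct and matches the paper's exactly: the paper's proof is the one-liner ``Follows from Lemma~\ref{lm:whack:a:mole:key} and the preceding discussion,'' where the preceding discussion is precisely your two-case analysis (line~05 gives $\mathbb{1}^\top x^t = 1$ and $Cx^t \ge (1-\epsilon)\mathbb{1}$; line~12 gives $\mathbb{1}^\top y = 1$ by averaging simplex vectors, and the lemma handles $C^\top y$).

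Your contextual sketch of Lemma~\ref{lm:whack:a:mole:key} is also correct but takes a slightly different route than the paper. You unfold the standard potential argument directly on $\Phi^t = \norm{\hat{x}^t}_1$, whereas the paper deliberately invokes the experts regret bound (their Lemma~\ref{lm:MWU}, i.e., Theorem~2.5 of Arora--Hazan--Kale) as a black box, mapping constraints to experts and payoffs to $p^t_j = C_{i_t j}/\lambda$. The calculations are of course the same underneath---your $\ln(1+z)\ge z-z^2$ step is what proves that regret bound---but the paper makes a point of the black-box application: they advertise this as the first width-independent solver whose analysis \emph{cites} rather than \emph{reproves} the MWU regret inequality. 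So your argument is more self-contained; theirs is more modular and highlights the conceptual message of Section~\ref{sec:indep blackbox}.
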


\begin{proof}
Follows from Lemma~\ref{lm:whack:a:mole:key} and the preceding discussion.
\end{proof}



\subsubsection{Proof of Lemma~\ref{lm:whack:a:mole:key}}
\label{sec:lm:whack:a:mole:key}
We start by recalling the {\em experts} setting~\cite{AroraHK12}. We have $n$ experts $\{1, \ldots, n\}$, who participate the following process that goes on for $T$ rounds. At the start of round $t \in [T]$, we have a weight vector $w^t \in \mathbb{R}_{\geq 0}^n$, where $\left( w^t \right)_j$ denotes the weight associated with expert $j \in [n]$. Initially, before round $1$ begins, we set $w^t = \mathbb{1}$. These weights define a distribution $\mathcal{D}^t$ over the experts, where the probability of picking expert $j \in [n]$ is given by $\frac{\left(w^t\right)_j}{\norm{w^t}_{1}}$. At the start of round $t \in [T]$, our algorithm picks an expert $j \in [n]$ from this distribution $\mathcal{D}^t$. Subsequently, nature reveals a payoff vector $p^t \in [-1, 1]^n$, where $\left( p^t \right)_j \in [-1, 1]$ is the payoff for expert $j \in [n]$ in round $t$. Based on these payoffs, the algorithm then updates the weights of the experts for the next round, by setting $w^{t+1}_j \leftarrow \left(1+\epsilon \cdot \left(p^t\right)_j\right) \cdot w^t_j$ for all $j \in [n]$. 

The lemma below, which is identical to Theorem~$2.5$ in~\cite{AroraHK12},  bounds the total expected payoff obtained by the algorithm in this setting, in terms of the total payoff obtained by any fixed expert.
\begin{lemma}
\label{lm:MWU}
For all experts $j \in [n]$, we have: $$\sum_{t = 1}^T \left( p^t \right)^{\top} \cdot \frac{w^t}{\norm{w^t}_1} \geq \sum_{t = 1}^T \left( p^t\right)_j - \sum_{t = 1}^T \epsilon \cdot \left| \left( p^t\right)_j  \right| - \frac{\ln (n)}{\epsilon}.$$
\end{lemma}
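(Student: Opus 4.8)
The plan is to establish Lemma~\ref{lm:MWU} via the standard potential-function argument for multiplicative weights, using $\Phi^t := \norm{w^t}_1 = \sum_{j \in [n]} \left(w^t\right)_j$ as the potential. First I would derive an \emph{upper} bound on $\Phi^{T+1}$ in terms of the algorithm's total (distribution-weighted) payoff. Summing the update rule $\left(w^{t+1}\right)_j = \left(1 + \epsilon \left(p^t\right)_j\right)\left(w^t\right)_j$ over $j \in [n]$ gives
\[
\Phi^{t+1} = \Phi^t + \epsilon \left(p^t\right)^{\top} w^t = \Phi^t \cdot \left(1 + \epsilon \cdot \left(p^t\right)^{\top} \frac{w^t}{\norm{w^t}_1}\right) \leq \Phi^t \cdot \exp\!\left(\epsilon \cdot \left(p^t\right)^{\top} \frac{w^t}{\norm{w^t}_1}\right),
\]
where the inequality uses $1 + z \leq e^z$ and the middle equality uses $\Phi^t = \norm{w^t}_1$ (weights are nonnegative). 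Telescoping over $t = 1, \dots, T$ and using $\Phi^1 = \norm{\mathbb{1}}_1 = n$ yields $\Phi^{T+1} \leq n \cdot \exp\!\left(\epsilon \sum_{t=1}^T \left(p^t\right)^{\top} \frac{w^t}{\norm{w^t}_1}\right)$.

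Next I would produce a matching \emph{lower} bound on $\Phi^{T+1}$ in terms of an arbitrary fixed expert $j \in [n]$. Since $p^t \in [-1,1]^n$ and $\epsilon < 1/2$, every factor $1 + \epsilon \left(p^t\right)_j \geq 1-\epsilon > 0$, so all weights stay strictly positive and $\Phi^{T+1} \geq \left(w^{T+1}\right)_j = \prod_{t=1}^T \left(1 + \epsilon \left(p^t\right)_j\right)$. To turn this product into a sum I would invoke two one-sided convexity estimates: $(1+\epsilon)^x \leq 1 + \epsilon x$ for $x \in [0,1]$, and $(1-\epsilon)^{-x} \leq 1 + \epsilon x$ for $x \in [-1,0]$. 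Both hold because the left-hand side is a convex function of $x$ that agrees with the affine right-hand side at the two endpoints of the relevant interval. Applying whichever estimate matches the sign of $\left(p^t\right)_j$ gives $1 + \epsilon \left(p^t\right)_j \geq (1+\epsilon)^{\left(p^t\right)_j}$ when $\left(p^t\right)_j \geq 0$, and $1 + \epsilon \left(p^t\right)_j \geq (1-\epsilon)^{-\left(p^t\right)_j}$ when $\left(p^t\right)_j < 0$.

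Finally I would combine the two bounds and take logarithms. Writing $S_j^{+} := \sum_{t : \left(p^t\right)_j \geq 0} \left(p^t\right)_j$ and $S_j^{-} := \sum_{t : \left(p^t\right)_j < 0} \left(-\left(p^t\right)_j\right) \geq 0$, so that $\sum_t \left(p^t\right)_j = S_j^{+} - S_j^{-}$ and $\sum_t \left|\left(p^t\right)_j\right| = S_j^{+} + S_j^{-}$, the chain of inequalities becomes
\[
\epsilon \sum_{t=1}^T \left(p^t\right)^{\top} \frac{w^t}{\norm{w^t}_1} \;\geq\; \ln \left(w^{T+1}\right)_j - \ln n \;\geq\; \ln(1+\epsilon) \cdot S_j^{+} + \ln(1-\epsilon) \cdot S_j^{-} - \ln n.
\]
Plugging in $\ln(1+\epsilon) \geq \epsilon - \epsilon^2$ and $\ln(1-\epsilon) \geq -\epsilon - \epsilon^2$ (both valid for $0 < \epsilon < 1/2$), the right-hand side equals $\left(\epsilon - \epsilon^2\right) S_j^{+} - \left(\epsilon + \epsilon^2\right) S_j^{-} - \ln n = \epsilon \sum_t \left(p^t\right)_j - \epsilon^2 \sum_t \left|\left(p^t\right)_j\right| - \ln n$. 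Dividing through by $\epsilon$ gives precisely the claimed inequality. The only mildly delicate point is getting the two one-sided exponential bounds and the two logarithm estimates calibrated so that the $\epsilon^2$ error terms reassemble exactly into $\epsilon \sum_t \left|\left(p^t\right)_j\right|$; the rest is routine telescoping and needs only $\epsilon < 1/2$ (used both for the log estimates and to keep the weights positive).
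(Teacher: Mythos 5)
Your proof is correct and follows the standard potential-function argument for the multiplicative-weights regret bound. The paper does not reprove this lemma — it is cited verbatim as Theorem~2.5 of the Arora–Hazan–Kale survey — and the argument you give (upper-bound $\Phi^{T+1}$ via $1+z\le e^z$ applied to the weighted-average payoff, lower-bound $\Phi^{T+1}$ by a single expert's weight using the convexity inequalities $(1+\epsilon)^x\le 1+\epsilon x$ on $[0,1]$ and $(1-\epsilon)^{-x}\le 1+\epsilon x$ on $[-1,0]$, then take logs and use $\ln(1+\epsilon)\ge\epsilon-\epsilon^2$, $\ln(1-\epsilon)\ge-\epsilon-\epsilon^2$) is precisely the proof of that theorem in the cited source.
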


We now map the whack-a-mole algorithm to the experts setting as follows. Each covering constraint $j \in [n]$ corresponds to an expert, each iteration of the {\sc For} loop in Figure~\ref{fig:dual_algorithm} corresponds to a round,  the vector $\hat{x}^t$ corresponds to the weight vector $w^t$ at the start of round $t \in [T]$, and finally, the payoff for an expert $j \in [n]$ in round $t \in [T]$ is given by $\left(p^t\right)_j := (1/\lambda) \cdot C_{i_t, j}$.  

Since $C_{i_t, j} \in [0, \lambda]$, we have $\left| \left( p^t \right)_j \right| = \left( p^t \right)_j$ for all $j \in [n], t \in [T]$. Thus, Lemma~\ref{lm:MWU} implies that:
\begin{eqnarray}
\sum^{T}_{t=1} \left(p^{t}\right)^{\top} \cdot x^t  \geq \sum_{t=1}^T (1-\epsilon) \cdot \left(p^t\right)_j - \frac{\ln(n)}{\epsilon}, \text{ for all experts } j \in [n]. \label{eq:experts:1}
\end{eqnarray}
Diving both sides of the above inequality by $T$, and then rearranging the terms, we get: 
\begin{eqnarray}
\frac{(1-\epsilon)}{T} \cdot \sum_{t=1}^T  \left(p^t\right)_j \leq \frac{1}{T} \cdot \sum^{T}_{t=1} \left(p^{t}\right)^{\top} \cdot x^t  + \frac{\ln (n)}{\epsilon \cdot T}, \text{ for all experts } j \in [n]. \label{eq:experts:10}
\end{eqnarray}
We next upper bound the right hand side (RHS) of~(\ref{eq:experts:10}). Since the algorithm picks a violated covering constraint to whack in each round, we have: $\left(p^{t}\right)^{\top} \cdot x^t = (1/\lambda) \cdot  \left(C x^t\right)_{i_t} \leq (1/\lambda)$. Taking the average of this inequality across all the $T$ rounds, we get:
$(1/T) \cdot \sum^{T}_{t=1} \left(p^{t}\right)^{\top} \cdot x^t \leq (1/\lambda)$.
Since $T = \lambda \ln (n)/\epsilon^2$, we derive the following upper bound on the RHS of~(\ref{eq:experts:10}).
\begin{eqnarray}
\frac{1}{T} \cdot \sum^{T}_{t=1} \left(p^{t}\right)^{\top} \cdot x^t  + \frac{\ln (n)}{\epsilon \cdot T} \leq \frac{1}{\lambda} \cdot (1+\epsilon). \label{eq:experts:11}
\end{eqnarray}
We now focus our attention on the left hand side (LHS) of~(\ref{eq:experts:10}). Fix any expert $j \in [n]$. We first express the payoff obtained by this expert at round $t \in [T]$ in terms of the vector $y^t$, and get: $\left( p^t \right)_j = \frac{1}{\lambda} \cdot C_{i_t, j} = \frac{1}{\lambda} \cdot \left( C^{\top} y^t\right)_j$.
Since $y = \frac{1}{T} \cdot \sum_{t=1}^T y^t$, the average payoff for the expert $j$ across all the $T$ rounds is given by:
\begin{eqnarray}
\frac{1}{T} \cdot \sum_{t=1}^T  \left(p^t\right)_j = \frac{1}{T} \cdot \frac{1}{\lambda} \cdot \sum_{t=1}^T \left( C^{\top} y^t\right)_j = \frac{1}{\lambda} \cdot \left( C^{\top} y\right)_j. \label{eq:experts:12}
\end{eqnarray} 
From~(\ref{eq:experts:10}),~(\ref{eq:experts:11}) and~(\ref{eq:experts:12}), we get:
$$\frac{(1-\epsilon)}{\lambda} \cdot \left( C^{\top} y\right)_j \leq \frac{1}{\lambda} \cdot (1+\epsilon), \text{ and hence } \left( C^{\top} y\right)_j \leq (1+4\epsilon) \text{ for all } j \in [n].$$
The last inequality holds since $\epsilon < 1/2$. This concludes the proof of Lemma~\ref{lm:whack:a:mole:key}.

\subsubsection{A Near-Linear Time Implementation}
\label{sec:whack:a:mole:implementation}

\begin{figure*}[htbp]
\centerline{\framebox{
                                                                \begin{minipage}{5.5in}
                                                                        \begin{tabbing}                                                                            
                                                                                01.  \= $\hat{x}^1 \leftarrow \mathbb{1}$, $t \leftarrow 1$, and $T \leftarrow  \lambda \ln (n)/\epsilon^2$. \\
                                                                                02. \> {\sc Loop} \\ 
                                                                                03. \> \ \ \ \ \   \= $W \leftarrow \norm{\hat{x}^t}_1$. \\
                                                                                04. \> \> {\sc For all $i \in [m]$:} \\
                                                                                05. \> \> \ \ \ \ \  \= {\sc If}  $\left(C \cdot \frac{\hat{x}^t}{W}\right)_i < 1 - \epsilon/2$ {\sc Then}   \\
                                                                                06. \> \> \> \ \ \ \ \ \ \ \=   $\delta \leftarrow \text{{\sc Enforce}}(i, t, \hat{x}^t, W)$.  \qquad \qquad \qquad \qquad // See Figure~\ref{fig:Enforce:new}.  \\
                                                                                07. \> \> \> \> $t \leftarrow t + \delta$. \\
                                                                                08. \> \> \> \> {\sc If} $t = T$, {\sc Then}\\
                                                                                09. \> \> \> \> \ \ \ \ \ \ \ \ \= Terminate the {\sc Loop} and return $(\text{{\sc Null}}, y)$, where $y := (1/T) \cdot \sum_{t' = 1}^T y^{t'}$. \\
                                                                                10. \> \> \> \>  {\sc If} $\norm{\hat{x}^t}_1 > (1 - \epsilon/2)^{-1}\cdot W$, {\sc Then}        \\               
                                                                                11. \> \> \> \> \> {\sc Go to} step (03).  \qquad \qquad \qquad \qquad \qquad // Initiate a new phase. \\
                                                                                12. \> \> Terminate the {\sc Loop} and return $(x^t, \text{{\sc Null}})$, where $x^t := \frac{\hat{x}^t}{\norm{\hat{x}^t}_1}$.
                                                                               \end{tabbing}
                                                                \end{minipage}
                                                        }}
                                                        \caption{\label{fig:MWU:static} An Implementation of the Whack-a-mole MWU Algorithm.}
                                                \end{figure*}

We now show how to implement the whack-a-mole MWU algorithm from Section~\ref{sec:basic:algorithm} in near-linear time. This implementation is outlined in Figure~\ref{fig:MWU:static}, and is based on two key ideas.  (1) We split up the working of the Whack-a-mole MWU algorithm into {\em phases}, and ensure that within any given phase the total weight $\norm{\hat{x}^t}_1$ of all the experts (see the discussion in the beginning of Section~\ref{sec:lm:whack:a:mole:key}) changes by at most a  $(1-\epsilon/2)^{-1}$ multiplicative factor. (2) The weight of an expert $j \in [n]$ can only increase during the course of the algorithm.  Because of this, within a given phase we only need to consider each constraint at most once, provided upon considering the constraint we keep repeatedly whacking it until it gets satisfied.


At the start of a new phase, we set $W \leftarrow \norm{\hat{x}^t}_1$.  Throughout  the phase, the value of $W$ will not change and it will serve as an estimate of the total weight of the experts. Since $W$ remains within a multiplicative $(1-\epsilon/2)^{-1}$ factor of $\norm{\hat{x}^t}_1$,  the  vector $\frac{\hat{x}^t}{W}$ remains a good estimate of the actual solution $x^t := \frac{\hat{x}^t}{\norm{\hat{x}^t}_1}$.  

\begin{figure*}[htbp]
\centerline{\framebox{
                                                                \begin{minipage}{5.5in}
                                                                        \begin{tabbing}                                                                            
                                                                                1.  \=  $\delta \leftarrow \text{{\sc Step-size}}(i, t, \hat{x}^t, W)$. \qquad \qquad \qquad \ \  \  // See Figure~\ref{fig:MWU:Enforce}. \\
                                                                                2. \>  {\sc For} $t' = t$ to $(t+\delta-1)$: \\
                                                                                3. \>  \ \ \ \ \ \ \= $\hat{x}^{t'+1} \leftarrow \text{{\sc Whack}}(i, \hat{x}^{t'})$.  \qquad \qquad \qquad // See Figure~\ref{fig:enforce}. \\
                                                                                4. \> \>  $i_{t'} \leftarrow i$. \\
                                                                                5. \> \>  Let $y^{t'} \in \Delta^m$ be the vector where $\left(y^{t'}\right)_i = 1$ and $\left(y^{t'}\right)_{i'} = 0$ for all $i' \in [m] \setminus \{i\}$.  \\
                                                                                6. \> {\sc Return} $\delta$.
                                                                           \end{tabbing}
                                                                \end{minipage}
                                                        }}
                                                        \caption{\label{fig:Enforce:new} {\sc Enforce}$(i, t, \hat{x}^t, W)$.}
                                                \end{figure*}

                                                \begin{figure*}[htbp]
\centerline{\framebox{
                                                                \begin{minipage}{5.5in}
                                                                        \begin{tabbing} 
                                                                                1.  \= For every integer $\kappa \geq 1$, let $z^\kappa \in \mathbb{R}_{\geq 0}^n$ be such that $\left(z^\kappa\right)_{j} = \left(  1+ \epsilon \cdot \frac{C_{ij}}{\lambda} \right)^\kappa \cdot \left( \hat{x}^t \right)_{j}$ for all $j \in [n]$. \\
                                        2. \>   If  $\left( C \cdot \frac{z^{T-t}}{W} \right)_i < 1$, {\sc Then} \\ 
                                        3. \> \ \ \ \ \ \ \ \= $\delta \leftarrow T-t$. \\                                         
                                        4.  \> {\sc Else} \\
                                        5. \> \>  Using binary search, compute the  the smallest integer $\delta \in [T-t]$  such that $\left( C \cdot \frac{z^{\delta}}{W} \right)_i \geq 1$. \\
                                        6. \>  Return $\delta$ 
                                                                               \end{tabbing}
                                        \end{minipage}
                                                        }}
                                                        \caption{\label{fig:MWU:Enforce} {\sc Step-size}$(i, t, \hat{x}^t, W)$}
                                                \end{figure*}

During a given phase, we scan through all the covering constraints in any arbitrary order.  We now explain how to implement a typical iteration  of this scan, where (say) we are considering the constraint  $i \in [m]$. We first check if $\left( C \cdot \frac{\hat{x}^t}{W} \right)_i \geq 1 - \epsilon/2$, that is, whether the constraint is approximately satisfied. If the answer is yes, then we do nothing with this constraint and proceed to the next iteration of the scan. In contrast, if the answer is no, then we {\em enforce} this constraint by calling the subroutine {\sc Enforce}$(i, t, \hat{x}^t, W)$. This subroutine works as follows. It  finds an integer $\delta \geq 1$ which indicates the minimum number of times the constraint $i \in [m]$ needs to be whacked before it gets satisfied, assuming that we continue to be in the same phase even at the end of all these whacks. Constraint $i$ then repeatedly gets whacked  $\delta$ times, and after that  we set $t \leftarrow t + \delta$. At this point, if $t = T$, then we terminate the algorithm and return a solution for the dual packing LP, as in Section~\ref{sec:basic:algorithm}.  Else if $\norm{\hat{x}^t}_1 > (1-\epsilon/2)^{-1} \cdot W$, because the total weight of the experts increased a lot due to the previous $\delta$ whacks, then we initiate a new phase. 

If at the end of the scan, we observe that $W$ is still an accurate estimate of the total weight $\norm{\hat{x}^t}_1$ {\em and} $t < T$, then we terminate the algorithm and return $x^t := \frac{\hat{x}^t}{\norm{\hat{x}^t}_1}$ as a solution to the primal covering LP.

\begin{lemma}
\label{lm:equivalence}
The procedure described in Figure~\ref{fig:MWU:static} implements the algorithm from Section~\ref{sec:basic:algorithm}.
\end{lemma}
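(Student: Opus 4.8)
The plan is to establish Lemma~\ref{lm:equivalence} by arguing that the procedure in Figure~\ref{fig:MWU:static}, together with its subroutines {\sc Enforce} (Figure~\ref{fig:Enforce:new}) and {\sc Step-size} (Figure~\ref{fig:MWU:Enforce}), produces exactly the same sequence of iterates $\hat{x}^1, \hat{x}^2, \dots, \hat{x}^t$ and the same vectors $y^1, \dots, y^t$ as the basic template in Figure~\ref{fig:dual_algorithm}. The key observation is that the only real change between the two is \emph{how we decide which violated constraint $i_t$ to whack at each round} and the fact that we batch together many identical whacks. In the basic template, at round $t$ we pick an arbitrary violated constraint $i_t$ with $(Cx^t)_{i_t} < 1$; in Figure~\ref{fig:MWU:static}, we pick constraints in a scan order, but crucially we only invoke {\sc Enforce}$(i,\dots)$ when $(C\cdot \hat{x}^t/W)_i < 1-\epsilon/2$. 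So the first thing I would verify is that whenever the implementation whacks constraint $i$ at round $t'$, we indeed have $(Cx^{t'})_i < 1$, so that $i$ is a legitimate choice of $i_{t'}$ for the basic template. This follows because $W \le \norm{\hat{x}^{t'}}_1 \le (1-\epsilon/2)^{-1} W$ throughout a phase, so $(C\hat{x}^{t'}/W)_i < 1-\epsilon/2$ would be impossible if $(Cx^{t'})_i = (C\hat{x}^{t'}/\norm{\hat{x}^{t'}}_1)_i \ge 1$; and within a single call to {\sc Enforce} the weight $\hat{x}$ only increases along coordinates $j$ with $C_{ij}>0$, so $(C\hat{x}/W)_i$ never decreases during the $\delta$ whacks, and one checks via the definition of $\delta$ in {\sc Step-size} that for every intermediate $t'$ in the range $[t, t+\delta-1]$ we still have $(Cx^{t'})_i < 1$.

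Second, I would check that {\sc Step-size}$(i,t,\hat{x}^t,W)$ returns the correct $\delta$: namely, $\delta$ is the smallest positive integer such that after whacking constraint $i$ exactly $\delta$ times (starting from $\hat{x}^t$) the constraint becomes satisfied, \emph{or} $\delta = T-t$ if it does not become satisfied within the remaining budget of rounds. Since whacking $i$ a total of $\kappa$ times multiplies $\hat{x}^t_j$ by $(1+\epsilon C_{ij}/\lambda)^\kappa$, the vector $z^\kappa$ defined in Figure~\ref{fig:MWU:Enforce} is exactly $\hat{x}^{t+\kappa}$ under a run that whacks only $i$; and $(C z^\kappa / W)_i$ is monotone nondecreasing in $\kappa$, so the binary search in line~5 correctly finds the threshold. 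Here I need the comparison $(C\cdot \hat{x}^{t+\delta}/W)_i \ge 1$ used by the implementation (rather than $\ge 1-\epsilon/2$) — I would note that this only makes the implementation whack \emph{at least} as much as needed to reach the $1-\epsilon/2$ threshold, which is consistent with the basic template (where a whacked constraint need only be revisited if it becomes violated again; over-whacking never hurts correctness of Lemma~\ref{lm:whack:a:mole:key}, which only uses that every whacked constraint had value $< 1$ at the time it was whacked).

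Third, I would argue the two termination conditions match. The implementation stops and returns $(\text{{\sc Null}}, y)$ with $y = (1/T)\sum_{t'=1}^T y^{t'}$ exactly when the round counter reaches $T$ (line~08–09), which is precisely when the basic template's {\sc For} loop completes $T$ iterations and returns in step~(12); the $y^{t'}$ vectors are defined identically (line~5 of Figure~\ref{fig:Enforce:new} matches line~10 of Figure~\ref{fig:dual_algorithm}). Conversely, the implementation returns $(x^t, \text{{\sc Null}})$ in line~12 when an entire scan over $[m]$ completes without any constraint triggering {\sc Enforce}, i.e., $(C\hat{x}^t/W)_i \ge 1-\epsilon/2$ for all $i\in[m]$, and $\norm{\hat{x}^t}_1 \le (1-\epsilon/2)^{-1} W$, so that $(Cx^t)_i = (C\hat{x}^t/\norm{\hat{x}^t}_1)_i \ge (1-\epsilon/2)^2 \cdot (C\hat{x}^t/W)_i \cdot \dots \ge 1-\epsilon$ after a short calculation — establishing the premise of Case~(1) of the basic template, which also returns $(x^t,\text{{\sc Null}})$ there. (I would fold the ``new phase'' restart in lines~10–11 into this bookkeeping: a phase restart merely resets the estimate $W$ without changing $\hat{x}^t$ or the round counter, so it has no effect on the logical sequence of iterates — it is purely an implementation device for maintaining the invariant $\norm{\hat{x}^t}_1 \le (1-\epsilon/2)^{-1}W$ so that the cheap test $(C\hat{x}^t/W)_i < 1-\epsilon/2$ remains a valid proxy for genuine violation.)

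The main obstacle I anticipate is the mismatch in thresholds: the basic template whacks whenever a constraint has value strictly below $1$, whereas the implementation uses the slackened test $1-\epsilon/2$ against the \emph{estimate} $\hat{x}^t/W$ rather than the true $x^t$, and it also over-whacks (pushing all the way to value $\ge 1$ rather than $\ge 1-\epsilon/2$). So strictly speaking the implementation is not literally identical to the template — it is a valid \emph{instantiation} of the template's nondeterministic choices plus extra (harmless) whacks. I would therefore phrase the lemma's proof as: (a) every whack performed by the implementation is a legal whack of the template (the whacked constraint genuinely has value $<1$ at that round), so the output $(\text{{\sc Null}}, y)$ satisfies the hypothesis of Lemma~\ref{lm:whack:a:mole:key} and hence $C^\top y \le (1+4\epsilon)\mathbb{1}$; and (b) when the implementation returns $(x^t,\text{{\sc Null}})$, the computed $x^t$ satisfies $Cx^t \ge (1-\epsilon)\mathbb{1}$ by the threshold calculation above. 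Together these give exactly the guarantee of Theorem~\ref{th:whack:a:mole:key} (possibly after absorbing constants into $\epsilon$), which is all ``implements the algorithm from Section~\ref{sec:basic:algorithm}'' needs to mean for the downstream arguments. The cleanest way to present this is to state the two structural invariants — (I) within a phase, $W \le \norm{\hat{x}^t}_1 \le (1-\epsilon/2)^{-1}W$, and (II) whenever {\sc Enforce}$(i,t,\cdot,\cdot)$ is invoked we have $(Cx^t)_i < 1$ — prove (I) directly from the phase-restart rule and (II) from (I) plus the slackened test, and then read off both output cases.
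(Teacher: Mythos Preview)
Your approach is essentially the paper's: show (a) every whack performed by the implementation is a legal whack of the template (i.e.\ $(Cx^{t'})_i<1$ at each whacked round), and (b) when the implementation returns $(x^t,\text{\sc Null})$, all constraints satisfy $(Cx^t)_i\ge 1-\epsilon$ via the phase invariant $W\le\norm{\hat x^t}_1\le(1-\epsilon/2)^{-1}W$.

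There is one concrete slip. You write that the implementation returns $(x^t,\text{\sc Null})$ ``when an entire scan over $[m]$ completes without any constraint triggering {\sc Enforce}''. That is false: within the last phase, some constraints \emph{may} have been enforced earlier in the scan; the {\sc For} loop over $i\in[m]$ still reaches line~12 as long as neither $t=T$ nor the phase-restart condition triggered. So to conclude $(C\hat x^{t''}/W)_i\ge 1-\epsilon/2$ for every $i$ at the final round $t''$, you need a two-case argument (this is exactly what the paper does): if constraint $i$ was \emph{not} enforced in this phase, then at the moment $t_i$ it was scanned we had $(C\hat x^{t_i}/W)_i\ge 1-\epsilon/2$, and since each coordinate of $\hat x$ is nondecreasing this persists to $t''$; if constraint $i$ \emph{was} enforced, then immediately after its enforcement $(C\hat x^{t_i+\delta}/W)_i\ge 1$ (by the definition of $\delta$ in {\sc Step-size}, since $t_i+\delta<T$), and again monotonicity carries this to $t''$. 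Your threshold calculation $(Cx^{t''})_i\ge(1-\epsilon/2)\cdot(C\hat x^{t''}/W)_i\ge(1-\epsilon/2)^2\ge 1-\epsilon$ then goes through. Once you patch this, your proof is the paper's proof.
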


\begin{proof} 
We refer to each call to  {\sc Whack}$(i, \hat{x}^t)$  as a {\em round} (see Section~\ref{sec:lm:whack:a:mole:key}). Both Figure~\ref{fig:dual_algorithm} and Figure~\ref{fig:MWU:static} initialize $\hat{x}^1 \leftarrow \mathbb{1}$ and run for at most $T = \lambda \ln (n)/\epsilon^2$ rounds. We will show that the way Figure~\ref{fig:MWU:static} implements each round and the way it eventually returns a solution  are both consistent with  Figure~\ref{fig:dual_algorithm}.

Towards this end, first consider the scenario where the procedure in Figure~\ref{fig:MWU:static} enforces a constraint $i \in [m]$ at the start of a round $t \in [T]$ by calling the subroutine {\sc Enforce}$(i, t, \hat{x}^t, W)$, which repeatedly whacks the constraint $\delta$ times. From Figure~\ref{fig:Enforce:new} and Figure~\ref{fig:MWU:Enforce}, we infer that 
\begin{equation}
\left( C \cdot \frac{\hat{x}^{t'}}{W} \right)_i < 1 \text{ for all } t \leq t' \leq t+\delta-1. \label{eq:formal:1}
\end{equation}
Since $W$ is the total weight of all the experts at the start of the concerned phrase, and since the weight of any expert can only increase with time, we have: $\norm{x^{t'}}_1 \geq W$ for all $t' \geq t$. Hence, from~(\ref{eq:formal:1}) we get:
\begin{equation}
\left( C \cdot x^{t'} \right)_i < 1 \text{ for all } t \leq t' \leq t+\delta-1, \text{ where } x^{t'} := \frac{\hat{x}^{t'}}{\norm{x^{t'}}_1}.
\end{equation}
This implies that the decisions to whack constraint $i \in [m]$ in successive rounds $t' \in [t, t+\delta-1]$ are consistent with the rule governing the whacking of constraints in Figure~\ref{fig:dual_algorithm}.

Next, note that after enforcing the constraint $i \in [m]$, the procedure in Figure~\ref{fig:MWU:static} sets $t \leftarrow t+\delta$. At this point, if $t = T$, then it decides to return the vector $y := (1/T) \cdot \sum_{t'=1}^T y^{t'}$ as a solution to the dual packing LP. Clearly, this decision is also consistent with steps (11) -- (12) in Figure~\ref{fig:dual_algorithm}.

Finally, consider  the scenario where the procedure in Figure~\ref{fig:MWU:static} returns  a vector $x^{t''}$ in round $t''$ (see step (12) in Figure~\ref{fig:MWU:static}). Focus on  the very last phase, which spans from (say)  round $t'$ to round $t''$, where $t' < t''$. Let $W$ be the value of $\norm{\hat{x}^t}_1$ at the start of this phase. Fix any constraint $i \in [m]$, which was considered (say) at the start of round $t_i \in [t', t'']$ by the {\sc For} loop in Figure~\ref{fig:MWU:static}. Now, there are two possible cases. 

\smallskip
\noindent Case I: The constraint $i$ did not get enforced in this phase. This happens if $\left( C \cdot \frac{\hat{x}^{t_i}}{W} \right)_i \geq 1-\epsilon/2$.  Here, we derive that $\left( C \cdot \frac{\hat{x}^{t''}}{W} \right)_i \geq \left( C \cdot \frac{\hat{x}^{t_i}}{W} \right)_i \geq 1 - \epsilon/2$, since each co-ordinate of $\hat{x}$ can only increase over time. 

\smallskip
\noindent Case II: The constraint $i$ got enforced in this phase, by getting repeatedly whacked $\delta$ times starting from round $t_i$. Thus, we have $t_i+\delta < T$ (otherwise,  the algorithm would return a dual packing solution $y$) and $\left( C \cdot \frac{\hat{x}^{t_i+\delta}}{W} \right)_i \geq 1$. Analogous to Case I, here we  derive that $\left( C \cdot \frac{\hat{x}^{t''}}{W} \right)_i \geq \left( C \cdot \frac{\hat{x}^{t_i+\delta}}{W} \right)_i \geq 1 \geq 1-\epsilon/2$.

\smallskip
To summarize, we have the following guarantee for every constraint $i \in [m]$ at the start of round $t''$. 
\begin{equation}
\left( C \cdot \frac{\hat{x}^{t''}}{W} \right)_i \geq 1 - \epsilon/2. \label{eq:formal:2}
\end{equation}
Since no new phase got initiated just before round $t''$ (see steps (10) -- (11) in Figure~\ref{fig:MWU:static}), we infer that $\norm{\hat{x}^{t''}}_1 \leq (1-\epsilon/2)^{-1} \cdot W$.  Thus, from~(\ref{eq:formal:2}), we get the following guarantee for every constraint $i \in [m]$.
\begin{equation}
\label{eq:formal:3}
\left( C \cdot x^{t''} \right)_i \geq \left( C \cdot \frac{\hat{x}^{t''}}{W} \right)_i \cdot (1-\epsilon/2) \geq (1-\epsilon/2)^2 \geq 1-\epsilon, \text{ where } x^{t''} := \frac{\hat{x}^{t''}}{W}.
\end{equation}
In other words, the vector $x^{t''}$ satisfies the inequality $C \cdot x^{t''} \geq (1-\epsilon) \cdot \mathbb{1}$, and hence the decision to return $x^{t''}$ as a solution to the  covering LP is also consistent with the template described in Figure~\ref{fig:dual_algorithm}. 
\end{proof}

\begin{theorem}
\label{th:MWU:static:approx}
The procedure in Figure~\ref{fig:MWU:static} either returns an $x^t \in \mathbb{R}_{\geq 0}^n$  with $\mathbb{1}^{\top}  x^t = 1$ and $C x^t \geq (1-\epsilon) \cdot \mathbb{1}$, or it returns a  $y \in \mathbb{R}_{\geq 0}^m$  with $\mathbb{1}^{\top} y = 1$ and $C^{\top} y \leq (1+4\epsilon) \cdot \mathbb{1}$.
\end{theorem}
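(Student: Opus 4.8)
The plan is to obtain this as an immediate corollary of two facts already in hand: Lemma~\ref{lm:equivalence}, which says that every execution of the procedure in Figure~\ref{fig:MWU:static} is a legal execution of the (non-deterministic) template in Figure~\ref{fig:dual_algorithm}, and Theorem~\ref{th:whack:a:mole:key}, which says that every output of that template satisfies the stated guarantees. So the skeleton of the proof is one sentence: by Lemma~\ref{lm:equivalence} the run of Figure~\ref{fig:MWU:static} sits inside Figure~\ref{fig:dual_algorithm}, hence Theorem~\ref{th:whack:a:mole:key} applies to whatever it returns. The actual work is in checking that both halting modes of the implementation genuinely match the template's two exit points, and that the $\mathbb{1}^{\top}$-normalization comes out to exactly $1$ in each case.

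For the first halting mode, suppose the procedure returns $(\text{{\sc Null}}, y)$ in step~(09). Then the counter $t$ has reached $T$, which lines up with steps~(11)--(12) of Figure~\ref{fig:dual_algorithm}; moreover $y = (1/T)\sum_{t'=1}^{T} y^{t'}$ is an average of $T$ vectors in $\Delta^m$, so $\mathbb{1}^{\top} y = 1$; and Lemma~\ref{lm:whack:a:mole:key} (equivalently, the second branch of Theorem~\ref{th:whack:a:mole:key}) gives $C^{\top} y \leq (1+4\epsilon)\cdot \mathbb{1}$. For the second halting mode, suppose it returns $(x^{t''}, \text{{\sc Null}})$ in step~(12), i.e. a full constraint scan has finished with $t'' < T$; this lines up with Case~(1) of Figure~\ref{fig:dual_algorithm}. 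Here $x^{t''} = \hat{x}^{t''}/\norm{\hat{x}^{t''}}_1$ gives $\mathbb{1}^{\top} x^{t''} = 1$ outright, and the covering guarantee $C x^{t''} \geq (1-\epsilon)\cdot \mathbb{1}$ is exactly inequality~(\ref{eq:formal:3}) from the proof of Lemma~\ref{lm:equivalence}: the two ingredients there are (i)~monotonicity of $\hat{x}$, so a constraint that was approximately satisfied or freshly enforced earlier in the last phase is still approximately satisfied at round $t''$, yielding $\left(C\hat{x}^{t''}/W\right)_i \geq 1-\epsilon/2$ (inequality~(\ref{eq:formal:2})), and (ii)~the phase-change rule, which guarantees $\norm{\hat{x}^{t''}}_1 \leq (1-\epsilon/2)^{-1} W$, so passing from the estimate $\hat{x}^{t''}/W$ to the true $x^{t''}$ costs only a $(1-\epsilon/2)$ factor, and $(1-\epsilon/2)^2 \geq 1-\epsilon$. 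A parallel argument, recorded as inequality~(\ref{eq:formal:1}) in that proof, checks that each intermediate {\sc Whack} is applied only to a violated constraint, as the template requires.

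The one point I would want to pin down separately — since the statement presupposes that the procedure halts with one of these two outputs — is termination. Within a phase the invariant $\norm{\hat{x}^t}_1 \leq (1-\epsilon/2)^{-1} W$ bounds the number of whacks (each whack strictly increases $\norm{\hat{x}}_1$); any phase but the last increases the total weight by more than a $(1-\epsilon/2)^{-1} > 1$ factor; and the total weight never exceeds $n(1+\epsilon)^{T}$ because $\hat{x}^1 = \mathbb{1}$ and a single whack scales a coordinate by at most $1+\epsilon$. Hence there are finitely many phases, and the first phase whose scan completes without triggering a new phase or reaching $T$ produces the step~(12) return; and since {\sc Step-size} clamps $\delta$ to $[T-t]$, the counter $t$ never overshoots $T$, so the step~(08) check catches equality. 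I expect the main obstacle to be nothing deep, but rather bookkeeping: making the match between the scan-based implementation and the round-by-round template completely airtight — which is precisely the content of Lemma~\ref{lm:equivalence} — after which the theorem is two lines.
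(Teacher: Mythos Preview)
Your proposal is correct and takes essentially the same approach as the paper: the paper's proof is the one-liner ``Follows from Theorem~\ref{th:whack:a:mole:key} and Lemma~\ref{lm:equivalence},'' and you have simply unpacked that into its two halting modes with the relevant inequalities from the proof of Lemma~\ref{lm:equivalence}. Your added termination paragraph is not part of the paper's proof here (the phase bound is handled separately in Corollary~\ref{lm:phase}), but it is correct and harmless.
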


\begin{proof}
Follows from Theorem~\ref{th:whack:a:mole:key} and Lemma~\ref{lm:equivalence}.
\end{proof}

It now remains to bound the running time of this algorithm. Towards this end, the lemma below analyzes the time taken by a call to the subroutine {\sc Step-size}$(i, \hat{x}^t, W)$.

\begin{lemma}
\label{lm:step-size}
A call to {\sc Step-size}$(i, t, \hat{x}^t, W)$ can be implemented in $O\left(N_i \cdot \log^2 \left( \frac{\lambda \log (n)}{\epsilon} \right)\right)$ time, where $N_i$ is the number of non-zero entries in row $i \in [m]$ of the matrix $C \in [0, \lambda]^{m\times n}$.
\end{lemma}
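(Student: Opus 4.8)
The plan is to bound the running time by counting how many values of $\kappa$ we ever evaluate the indicator of row $i$ at, and how much each such evaluation costs. First I would observe that for a fixed integer $\kappa \geq 1$, computing $\left( C \cdot \frac{z^\kappa}{W} \right)_i = \frac{1}{W}\sum_{j : C_{ij} \neq 0} C_{ij} \left(1 + \epsilon \cdot \frac{C_{ij}}{\lambda}\right)^\kappa \left(\hat{x}^t\right)_j$ requires iterating only over the $N_i$ nonzero entries in row $i$; the only nontrivial arithmetic is evaluating $\left(1 + \epsilon C_{ij}/\lambda\right)^\kappa$, which by fast exponentiation (repeated squaring) takes $O(\log \kappa)$ arithmetic operations. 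Since $\kappa$ never exceeds $T - t \leq T = \lambda \ln(n)/\epsilon^2$, each such evaluation costs $O\!\left(N_i \cdot \log\!\left(\frac{\lambda \log n}{\epsilon}\right)\right)$ time.

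Next I would bound the number of evaluations. Step~2 of Figure~\ref{fig:MWU:Enforce} performs one evaluation, at $\kappa = T - t$. If that test fails, Step~5 runs a binary search for the smallest $\delta \in [T-t]$ with $\left( C \cdot \frac{z^\delta}{W}\right)_i \geq 1$; this is a legitimate binary search because the map $\kappa \mapsto \left( C \cdot \frac{z^\kappa}{W}\right)_i$ is monotonically nondecreasing in $\kappa$ (every factor $1 + \epsilon C_{ij}/\lambda \geq 1$, so each coordinate of $z^\kappa$ is nondecreasing in $\kappa$, and all entries of $C$ are nonnegative). Hence the binary search makes $O(\log(T-t)) = O\!\left(\log\!\left(\frac{\lambda \log n}{\epsilon}\right)\right)$ evaluations of the indicator.

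Multiplying the per-evaluation cost by the number of evaluations gives a total of $O\!\left(N_i \cdot \log^2\!\left(\frac{\lambda \log n}{\epsilon}\right)\right)$ time, as claimed. The one point that needs a little care — and which I expect to be the main (minor) obstacle — is the bit-complexity/model assumption: the quantities $\left(1 + \epsilon C_{ij}/\lambda\right)^\kappa$ can be exponentially large, so one must argue either that we work in a model where arithmetic on the relevant numbers is $O(1)$ (consistent with the $\Otil(\cdot)$ convention, which already hides $\poly\log(NW/\epsilon)$ factors and a dependence on the aspect ratio $W$), or that it suffices to maintain these powers to within a $(1 \pm \poly(\epsilon/(\log n)))$ relative error, which only perturbs the approximation guarantees of Theorem~\ref{th:MWU:static:approx} by lower-order terms. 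Modulo this standard caveat, the lemma follows.
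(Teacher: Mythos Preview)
Your proposal is correct and follows essentially the same approach as the paper: both argue that each evaluation of $\left(C \cdot \frac{z^\kappa}{W}\right)_i$ costs $O(N_i \log \kappa)$ via fast exponentiation over the $N_i$ nonzero entries, that the quantity is monotone in $\kappa$ so binary search over $[1,T]$ uses $O(\log T)$ evaluations, and that $T = \lambda \ln(n)/\epsilon^2$ yields the stated bound. Your added remark on bit-complexity is a reasonable caveat that the paper simply leaves implicit in its computational model.
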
 

\begin{proof}
It takes $O(N_i \cdot \log \kappa)$ time to compute $\left( C \cdot \frac{z^\kappa}{W} \right)_i$ for any  $\kappa \geq 1$ (see step (1) of Figure~\ref{fig:MWU:Enforce}). Furthermore, the value of $\left( C \cdot \frac{z^\kappa}{W} \right)_i$ can only increase as we increase $\kappa$. Thus, we can find $\delta$, as in steps (2) - (5) of Figure~\ref{fig:MWU:Enforce}, by doing a binary search in $O(N_i \cdot \log T \cdot \log T)$ time. The lemma follows since  $T = \lambda \ln (n)/\epsilon^2$.
\end{proof}

The next  lemma bounds the time taken to enforce a constraint.

\begin{lemma}
\label{lm:enforce}
A call to  {\sc Enforce}$(i, t, \hat{x}^t, W)$ can be implemented in $O\left( N_i \cdot \log^2 \left( \frac{\lambda \log (n)}{\epsilon} \right)\right)$ time, where $N_i$ is the number of non-zero entries in row $i \in [m]$ of the matrix $C \in [0, \lambda]^{m\times n}$. 
\end{lemma}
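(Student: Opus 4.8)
The plan is to observe that the literal reading of Figure~\ref{fig:Enforce:new} --- performing $\delta$ separate calls to {\sc Whack}$(i, \cdot)$ and materializing $\delta$ indicator vectors $y^{t'}$ --- is wasteful, and to replace it by a \emph{batched} implementation whose cost does not depend on $\delta$. Once this is done, the running time of {\sc Enforce}$(i, t, \hat{x}^t, W)$ is dominated by its single call to {\sc Step-size}$(i, t, \hat{x}^t, W)$ in line~1 of Figure~\ref{fig:Enforce:new}, which costs $O\!\left( N_i \cdot \log^2\!\left( \frac{\lambda \log(n)}{\epsilon}\right)\right)$ by Lemma~\ref{lm:step-size}.

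First I would handle the weight update. Whacking constraint $i$ a total of $\delta$ times simply multiplies each coordinate $\left(\hat{x}^t\right)_j$ by $\left(1+\epsilon \cdot \frac{C_{ij}}{\lambda}\right)^{\delta}$ (see Figure~\ref{fig:enforce}). For the $n - N_i$ coordinates $j$ with $C_{ij} = 0$ this factor equals $1$, so $\hat{x}$ does not change there and we skip them. For each of the remaining $N_i$ coordinates, we compute $\left(1+\epsilon \cdot \frac{C_{ij}}{\lambda}\right)^{\delta}$ by repeated squaring in $O(\log \delta)$ arithmetic operations, and overwrite $\hat{x}_j$ directly by its final value $\left(\hat{x}^{t+\delta}\right)_j$. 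Since $\delta \leq T = \lambda \ln(n)/\epsilon^2$, this costs $O(N_i \cdot \log T) = O\!\left(N_i \cdot \log\!\left(\tfrac{\lambda \log(n)}{\epsilon}\right)\right)$ in total. As a by-product, within the same budget we keep $\norm{\hat{x}}_1$ up to date by adding $\sum_{j : C_{ij} > 0}\left(\left(1+\epsilon \cdot \frac{C_{ij}}{\lambda}\right)^{\delta} - 1\right) \left(\hat{x}^t\right)_j$ to it; this is the quantity subsequently inspected in lines~(10)--(11) of Figure~\ref{fig:MWU:static}.

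Next I would dispense with the explicit construction of the vectors $y^{t'}$ and the assignments $i_{t'} \leftarrow i$ in lines~3--5 of Figure~\ref{fig:Enforce:new}. These are never accessed coordinate-wise during the algorithm; the only relevant object is $y = (1/T) \cdot \sum_{t'} y^{t'}$, a non-negative combination of standard basis vectors. Hence it suffices to maintain, for each constraint $i' \in [m]$, one counter recording how many rounds so far have whacked $i'$: each call to {\sc Enforce} adds $\delta$ to the counter of the constraint $i$ it enforces, in $O(1)$ time, and the final vector $y$ is assembled from these counters at the very end. Collecting the three contributions --- $O\!\left(N_i \log^2\!\left(\tfrac{\lambda \log(n)}{\epsilon}\right)\right)$ for {\sc Step-size}, $O\!\left(N_i \log\!\left(\tfrac{\lambda \log(n)}{\epsilon}\right)\right)$ for the batched weight update, and $O(1)$ for the counter --- yields the claimed bound.

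The one point I would be most careful about is the book-keeping in the previous paragraph: one must verify that the ``lazy'' representation of $y^1, \ldots, y^T$ and of the $\hat{x}$-history is genuinely sufficient, i.e.\ that no step downstream of {\sc Enforce} ever reads an intermediate weight vector $\hat{x}^{t'}$ with $t < t' < t+\delta$ or an intermediate indicator $y^{t'}$ --- only the final $\hat{x}^{t+\delta}$, the updated round index $t \leftarrow t+\delta$, and the per-constraint whack counts are ever used. This is immediate from Lemma~\ref{lm:equivalence} and the surrounding discussion, so correctness (Theorem~\ref{th:MWU:static:approx}) is unaffected and the phase-boundary test based on $\norm{\hat{x}}_1$ is still evaluated on the exact value. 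Everything else is routine.
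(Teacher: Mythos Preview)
Your proposal is correct and follows essentially the same approach as the paper's proof: both argue that one should not materialize the intermediate vectors $\hat{x}^{t+1},\dots,\hat{x}^{t+\delta}$ and $y^t,\dots,y^{t+\delta-1}$, and that the cost is dominated by the call to {\sc Step-size} (Lemma~\ref{lm:step-size}) plus an $O(N_i \log \delta)$ batched update of $\hat{x}$ to $\hat{x}^{t+\delta}$, with the $y$-vectors recorded implicitly via the index $i$ and the step-size $\delta$. Your version is slightly more explicit about the repeated squaring and the incremental maintenance of $\norm{\hat{x}}_1$, but the argument is the same.
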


\begin{proof}
The key idea is that we do {\em not} need to {\em explicitly} store all the vectors $\hat{x}^{t+1}, \ldots, \hat{x}^{t+\delta}$ and $y^{t}, \ldots, y^{t+\delta-1}$. Instead, using standard data structures, the overall algorithm (outlined in  Figure~\ref{fig:MWU:static}) keeps track of only the following information corresponding to a specific call to {\sc Enforce}$(i, t, \hat{x}^t, W)$.

\smallskip
\noindent (1) The value of $t$ at the start of the call, and the step-size $\delta$ computed by the call. 

\smallskip
\noindent (2) The index $i \in [m]$ corresponding to the call. This index implicitly defines all the vectors $y^{t}, \ldots, y^{t+\delta-1}$. 

\smallskip
\noindent (3) A vector $\hat{x} \in \mathbb{R}_{\geq 0}^n$, which was equal to $\hat{x}^{t}$ just before the call, and needs to be set to $\hat{x}^{t+\delta}$ at the end of the call. This will be required to implement the next iteration of the {\sc For} loop in Figure~\ref{fig:MWU:static}.

\smallskip
The overall algorithm (in Figure~\ref{fig:MWU:static}) can easily recover the answer it needs to return, provided it keeps track of the above mentioned pieces of information corresponding to each call to the {\sc Enforce} subroutine.

Thus, the runtime of a  call to {\sc Enforce}$(i, t, \hat{x}^t, W)$ is dominated by the time spent on the following two tasks: (1) computing the value of $\delta$, and (2) ensuring that the vector $\hat{x}$, which was equal to $\hat{x}^t$ just before the call, equals  $\hat{x}^{t+\delta}$ at the end of the call.  The time needed for the former task is bounded by Lemma~\ref{lm:step-size}, whereas the time needed for the latter task is $O(N_i \cdot \log \delta)$. The lemma follows since $\delta \leq T = \frac{\lambda \ln (n)}{\epsilon^2}$.
\end{proof}

We next bound the total number of phases in the algorithm.

\begin{lemma}
\label{cor:bound:weight}
Throughout the duration of the algorithm outlined in Figure~\ref{fig:MWU:static}, we have: $\norm{\hat{x}^t}_1 \leq n^{(1/\epsilon)}$.
\end{lemma}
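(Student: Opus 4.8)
The plan is to track how the total weight $\norm{\hat{x}^t}_1$ can grow over the course of the algorithm. There are two separate effects to account for: the growth within a single phase, and the fact that new phases are initiated only when the weight has grown by a bounded factor. First I would bound the within-phase growth. By construction (steps (10)--(11) of Figure~\ref{fig:MWU:static}), a new phase is initiated as soon as $\norm{\hat{x}^t}_1$ exceeds $(1-\epsilon/2)^{-1} \cdot W$, where $W$ was the weight recorded at the start of the phase. However, a single call to {\sc Enforce} can overshoot this threshold by one extra whack, so within a phase the weight can be as large as $(1-\epsilon/2)^{-1} \cdot W \cdot (1+\epsilon)$ (since each whack multiplies the total weight by at most $1+\epsilon \cdot (C_{ij}/\lambda) \le 1+\epsilon$ on each coordinate, hence the $\ell_1$ norm by at most $1+\epsilon$). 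In any case, across one phase the weight grows by at most a constant factor, say $O(1)$.

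Next I would bound the number of phases. Between the start of consecutive phases, $\norm{\hat{x}^t}_1$ increases by a factor of at least $(1-\epsilon/2)^{-1} \ge 1 + \epsilon/2$ (this is precisely the trigger condition in step (10)). Since the weight starts at $\norm{\hat{x}^1}_1 = n$ and the algorithm runs for at most $T = \lambda \ln(n)/\epsilon^2$ rounds, the key is really to bound the final weight directly rather than counting phases. Here is the cleaner route: each of the at most $T$ rounds (i.e.~each call to {\sc Whack}) multiplies $\norm{\hat{x}}_1$ by at most $1+\epsilon \le e^{\epsilon}$. Therefore
\begin{equation*}
\norm{\hat{x}^t}_1 \;\le\; n \cdot (1+\epsilon)^{T} \;\le\; n \cdot e^{\epsilon T} \;=\; n \cdot e^{\epsilon \cdot \lambda \ln(n)/\epsilon^2} \;=\; n \cdot n^{\lambda/\epsilon} \;=\; n^{1 + \lambda/\epsilon}.
\end{equation*}
Hmm — this gives $n^{1+\lambda/\epsilon}$, not the claimed $n^{1/\epsilon}$, so the crude per-round bound is too lossy when $\lambda$ is large; the statement presumably intends the normalized regime $\lambda = O(1)$, or else uses the per-coordinate update $1+\epsilon C_{ij}/\lambda$ more carefully. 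So the sharper argument I would actually present goes through the coordinate weights: for a fixed coordinate $j$, if constraint $i$ is whacked $\delta_i$ times total, then $\hat{x}_j$ gets multiplied by $\prod_i (1+\epsilon C_{ij}/\lambda)^{\delta_i} \le \exp\!\big( (\epsilon/\lambda) \sum_i \delta_i C_{ij} \big)$, and since $\sum_i \delta_i = T$ (total rounds) and $C_{ij} \le \lambda$, each coordinate is at most $\exp(\epsilon T) = n^{\lambda/\epsilon}$ — still the same. The resolution must be that the intended per-round weight-increase bound is $(1+\epsilon/\lambda)$ on the $\ell_1$ norm when the whacked constraint has all entries at most $\lambda$ but the aggregate effect on $\norm{\hat{x}}_1$ telescopes against the number of rounds $T = \lambda\ln n/\epsilon^2$; I would verify the precise constant in the paper's setup and then conclude $\norm{\hat{x}^t}_1 \le n^{1/\epsilon}$.

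The main obstacle, then, is pinning down exactly which multiplicative factor governs the growth of $\norm{\hat{x}^t}_1$ per round — the naive bound $(1+\epsilon)$ is off by a $\lambda$ in the exponent, and making the claimed $n^{1/\epsilon}$ bound come out requires either the normalization $\lambda = 1$ (Problem~\ref{prob:basic} allows general $\lambda$, but the covering-LP reduction may normalize it) or a tighter accounting that exploits the fact that $T$ already scales with $\lambda$. Concretely, I expect the intended chain is: $\norm{\hat{x}^T}_1 \le n \cdot \big(1+\tfrac{\epsilon}{\lambda}\big)^{\lambda T / \lambda}$-type telescoping giving exponent $\epsilon T/\lambda = \ln(n)/\epsilon$, hence $\norm{\hat{x}^t}_1 \le n \cdot n^{1/\epsilon} \le n^{2/\epsilon}$, and the stated $n^{(1/\epsilon)}$ is a mild abuse (or uses $\epsilon < 1/2$ to absorb constants). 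I would write the proof by (i) establishing the per-round bound $\norm{\hat{x}^{t+1}}_1 \le (1+\epsilon/\lambda)\norm{\hat{x}^t}_1$ (noting the whacked constraint is violated, so $\sum_j C_{i_t j}(\hat{x}^t)_j < W \le \norm{\hat{x}^t}_1$, which bounds the added weight $\epsilon/\lambda \cdot \sum_j C_{i_t j}(\hat{x}^t)_j \le (\epsilon/\lambda)\norm{\hat{x}^t}_1$), (ii) iterating over at most $T$ rounds to get $\norm{\hat{x}^t}_1 \le n (1+\epsilon/\lambda)^T \le n \, e^{\epsilon T/\lambda} = n \cdot n^{1/\epsilon}$, and (iii) absorbing the leading $n$ into the exponent using $n \le n^{1/\epsilon}$ for $\epsilon < 1$, giving $\norm{\hat{x}^t}_1 \le n^{1/\epsilon}$ after adjusting constants. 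The violated-constraint observation in step (i) is the crucial point that I would make sure is stated cleanly, since it is what saves the extra factor of $\lambda$.
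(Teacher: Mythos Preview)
Your final argument in steps (i)--(iii) is exactly the paper's proof: the key observation is that the whacked constraint $i_t$ is violated, so $(C\hat{x}^t)_{i_t} < \norm{\hat{x}^t}_1$, giving the per-round bound $\norm{\hat{x}^{t+1}}_1 \le (1+\epsilon/\lambda)\norm{\hat{x}^t}_1$, and iterating over $T = \lambda\ln(n)/\epsilon^2$ rounds yields $n\cdot n^{1/\epsilon}$ (the paper writes $n^{1/\epsilon}$, absorbing the extra $n$ as you note). The preceding exploration of phase-counting and the crude $(1+\epsilon)$ bound is unnecessary --- you could have led with the violated-constraint observation, which you correctly identify as the crucial point.
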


\begin{proof}
Consider any round $t \in [T]$, which corresponds to an iteration of the {\sc For} loop in Figure~\ref{fig:dual_algorithm} (see Lemma~\ref{lm:equivalence}). Observe that:
\begin{eqnarray*}
\norm{\hat{x}^{t+1}}_1 - \norm{\hat{x}^t}_1  =  \sum_{j \in [n]} \left( \left( \hat{x}^{t+1}\right)_j - \left( \hat{x}^{t}\right)_j \right) = \sum_{j\in [n]} \left( \hat{x}^t \right)_j \cdot \left( \epsilon \cdot \frac{C_{i_t, j}}{\lambda}\right)   =  \frac{\epsilon}{\lambda} \cdot \left( C \hat{x}^t \right)_{i_t} < \frac{\epsilon}{\lambda} \cdot \norm{\hat{x}^t}_1.
\end{eqnarray*}
The last inequality follows since $\left( C x^t \right)_{i_t} < 1$ and $x^t := \hat{x}^t/\norm{\hat{x}^t}_1$. Rearranging the terms, we get:
\begin{equation}
\label{eq:cor:bound:weight:1}
\norm{\hat{x}^{t+1}}_1 \leq \left( 1 + \frac{\epsilon}{\lambda} \right) \cdot \norm{\hat{x}^t}_1.
\end{equation}
As $T = \frac{\lambda \ln (n)}{\epsilon^2}$ and $\norm{\hat{x}^0}_1 = n$, from~(\ref{eq:cor:bound:weight:1}) we get:
$\norm{\hat{x}^t}_1 \leq \left(1 + \frac{\epsilon}{\lambda}\right)^{T} \cdot \norm{\hat{x}^0}_1 \leq n^{(1/\epsilon)}$ for all $t \in [T]$.
\end{proof}

\begin{corollary}
\label{lm:phase}
The algorithm outlined in Figure~\ref{fig:MWU:static} has at most $O\left( \frac{\log (n)}{\epsilon^2}\right)$ many phases. 
\end{corollary}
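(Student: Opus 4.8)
The plan is to combine the per-phase multiplicative growth of the total weight $\norm{\hat{x}^t}_1$ with the absolute upper bound $\norm{\hat{x}^t}_1 \le n^{1/\epsilon}$ proved in Lemma~\ref{cor:bound:weight}. First I would pin down how much the weight must grow between the starts of two consecutive phases. By construction (steps (03) and (10)--(11) of Figure~\ref{fig:MWU:static}), at the first round of a phase we record $W \leftarrow \norm{\hat{x}^t}_1$, and a new phase is initiated only at a later round $t$ with $\norm{\hat{x}^t}_1 > (1-\epsilon/2)^{-1}\cdot W$. Since every coordinate of $\hat{x}$ is non-decreasing over the course of the algorithm (each call to $\textsc{Whack}$ only scales coordinates up by factors $\ge 1$), the quantity $\norm{\hat{x}^t}_1$ is itself non-decreasing; hence the weight at the start of phase $k+1$ exceeds $(1-\epsilon/2)^{-1}$ times the weight at the start of phase $k$.

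Next I would set up the telescoping estimate. Let $P$ be the total number of phases, and for $k \in [P]$ let $W_k$ denote the value of $\norm{\hat{x}^t}_1$ at the first round of phase $k$. Then $W_1 = \norm{\hat{x}^1}_1 = n$ and $W_{k+1} > (1-\epsilon/2)^{-1}\cdot W_k$ for $1 \le k \le P-1$, so chaining these inequalities gives $W_P > n\cdot(1-\epsilon/2)^{-(P-1)}$. Lemma~\ref{cor:bound:weight} provides $W_P \le n^{1/\epsilon}$, and combining the two yields $(1-\epsilon/2)^{-(P-1)} < n^{1/\epsilon - 1} \le n^{1/\epsilon}$. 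Taking natural logarithms and using the elementary bound $-\ln(1-x) \ge x$ for $x \in [0,1)$, which gives $\ln\!\left((1-\epsilon/2)^{-1}\right) = -\ln(1-\epsilon/2) \ge \epsilon/2$, we obtain $(P-1)\cdot(\epsilon/2) \le (1/\epsilon)\ln n$, i.e. $P \le 1 + (2\ln n)/\epsilon^2 = O\!\left(\log(n)/\epsilon^2\right)$, as claimed.

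I do not expect any serious obstacle here: the entire argument is a short computation once the per-phase growth factor $(1-\epsilon/2)^{-1}$ is correctly identified from the pseudocode. The only point that needs a modicum of care is justifying that the weight genuinely increases by at least this factor between the starts of successive phases — which is exactly where the coordinatewise monotonicity of $\hat{x}$ (already used implicitly in Lemma~\ref{lm:equivalence}) enters — and afterwards everything is routine.
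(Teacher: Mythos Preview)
Your proposal is correct and follows essentially the same approach as the paper: both use the per-phase multiplicative growth factor $(1-\epsilon/2)^{-1}$ together with the upper bound $\norm{\hat{x}^t}_1 \le n^{1/\epsilon}$ from Lemma~\ref{cor:bound:weight} to bound the number of phases by $O\!\left(\log_{(1-\epsilon/2)^{-1}} n^{1/\epsilon}\right) = O(\log(n)/\epsilon^2)$. Your version simply spells out the telescoping and the logarithmic estimate more explicitly than the paper's one-line computation.
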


\begin{proof}
We initiate a new phase whenever $\norm{\hat{x}^t}$ increases by a multiplicative factor of $(1-\epsilon/2)^{-1}$.  Hence, by Lemma~\ref{cor:bound:weight}, the number of phases is at most $O\left(\log_{(1-\epsilon/2)^{-1}} n^{(1/\epsilon)}\right) = O\left(\frac{\log n}{\epsilon^2}\right)$. 
\end{proof}

We are now ready to bound the total runtime of our algorithm.

\begin{theorem}
\label{th:MWU:static:runtime}
The algorithm outlined  in Figure~\ref{fig:MWU:static} can be implemented in $O\left(N \cdot \frac{\log (n)}{\epsilon^2}  \cdot \log^2 \left( \frac{\lambda \log (n)}{\epsilon}\right) \right)$ time, where $N$ is the total number of non-zero entries in the matrix $C \in [0, \lambda]^{m \times n}$.
\end{theorem}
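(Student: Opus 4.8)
The plan is to account for the running time contributed by the three kinds of work the algorithm does: (i) the ``scanning'' work — for each phase, iterating over all $m$ constraints and testing whether each is approximately satisfied; (ii) the ``enforcing'' work — the calls to {\sc Enforce} that are triggered when a constraint is found to be violated; and (iii) bookkeeping for maintaining $\hat{x}^t$, its norm, and the implicit representation of the $y^{t'}$ vectors. I would bound each piece separately and then sum.

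\medskip

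\noindent \textbf{Step 1 (scanning cost).} Fix a phase. The {\sc For} loop of Figure~\ref{fig:MWU:static} considers each constraint $i \in [m]$ at most once, and for each it evaluates $\left( C \cdot \frac{\hat{x}^t}{W} \right)_i$, which takes $O(N_i)$ time where $N_i$ is the number of non-zeros in row $i$. Summing over constraints, one phase costs $O\left(\sum_i N_i\right) = O(N)$ time for scanning. By Corollary~\ref{lm:phase} there are $O(\log(n)/\epsilon^2)$ phases, so the total scanning cost is $O\left(N \cdot \frac{\log n}{\epsilon^2}\right)$, which is already within the claimed bound (indeed slightly better, missing the $\log^2(\cdot)$ factor).

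\medskip

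\noindent \textbf{Step 2 (enforcing cost).} Whenever a constraint $i$ is found violated within a phase, we call {\sc Enforce}$(i, t, \hat{x}^t, W)$, which by Lemma~\ref{lm:enforce} runs in $O\left(N_i \cdot \log^2\left(\frac{\lambda\log n}{\epsilon}\right)\right)$ time. The key point is that within a single phase each constraint can be enforced \emph{at most once}: once {\sc Enforce} returns, either we have moved to a new phase / terminated, or the constraint is now approximately satisfied (indeed $\left(C\cdot \hat{x}^t/W\right)_i \ge 1$) and, since coordinates of $\hat{x}$ only increase while $W$ is fixed throughout the phase, it stays satisfied for the remainder of the phase — and the {\sc For} loop never revisits $i$ in this phase anyway. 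Hence the enforcing cost in one phase is $O\left(\sum_i N_i \cdot \log^2\left(\frac{\lambda \log n}{\epsilon}\right)\right) = O\left(N \cdot \log^2\left(\frac{\lambda \log n}{\epsilon}\right)\right)$. Multiplying by the $O(\log(n)/\epsilon^2)$ phases from Corollary~\ref{lm:phase} gives the total enforcing cost $O\left(N \cdot \frac{\log n}{\epsilon^2} \cdot \log^2\left(\frac{\lambda \log n}{\epsilon}\right)\right)$, matching the theorem.

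\medskip

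\noindent \textbf{Step 3 (bookkeeping) and conclusion.} As noted in the proof of Lemma~\ref{lm:enforce}, we do not store the vectors $y^{t'}$ explicitly — an {\sc Enforce} call corresponding to index $i$ implicitly defines all of $y^t,\dots,y^{t+\delta-1}$, and at the end we can recover $y = (1/T)\sum_{t'} y^{t'}$ from the logged $(i,t,\delta)$ triples in time proportional to the number of enforce calls times the relevant row sizes, which is dominated by Step 2. Maintaining $\norm{\hat{x}^t}_1$ incrementally and recomputing $W$ at phase starts costs $O(n)$ per phase, again dominated. Summing Steps 1--3, the total runtime is $O\left(N \cdot \frac{\log n}{\epsilon^2} \cdot \log^2\left(\frac{\lambda \log n}{\epsilon}\right)\right)$, as claimed.

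\medskip

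\noindent \textbf{Main obstacle.} The only genuinely non-routine point is the ``each constraint is enforced at most once per phase'' claim, which is exactly the ``key idea (2)'' flagged in the text before Lemma~\ref{lm:equivalence}; it relies on the monotonicity of $\hat{x}$ and the fact that $W$ is frozen during a phase so that ``$\left(C\hat{x}^t/W\right)_i \ge 1$'' is a persistent guarantee. Everything else is a matter of multiplying per-row, per-phase costs by the number of phases from Corollary~\ref{lm:phase} and invoking Lemmas~\ref{lm:step-size} and~\ref{lm:enforce}.
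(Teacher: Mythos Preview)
Your proposal is correct and follows essentially the same approach as the paper: bound the per-phase work by noting each constraint is enforced at most once per phase and invoking Lemma~\ref{lm:enforce}, then multiply by the $O(\log(n)/\epsilon^2)$ phase bound from Corollary~\ref{lm:phase}. The paper's proof is terser---it only spells out the enforcing cost (your Step~2) and leaves the scanning and bookkeeping implicit---but your more careful accounting of those dominated terms is a harmless elaboration, not a different argument.
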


\begin{proof}
Consider a given phase of the algorithm. From Lemma~\ref{lm:enforce}, it follows that the time taken to enforce any specific constraint $i \in [m]$ in this phase is at most $O\left( N_i \cdot \log^2 \left( \frac{\lambda \log (n)}{\epsilon} \right) \right)$. Since each constraint gets enforced at most once during this phase, the total time spent in this phase is at most: 
$$O\left( \sum_{i\in [m]} N_i  \cdot \log^2 \left( \frac{\lambda \log (n)}{\epsilon} \right) \right) = O\left( N \cdot \log^2 \left( \frac{\lambda \log (n)}{\epsilon} \right) \right).$$

The theorem now follows from Corollary~\ref{lm:phase}.
\end{proof}

\subsection{Dynamic Whack-a-Mole MWU Algorithm for Covering LPs}
\label{sec:dynamic_MWU}


In this section, we focus on designing a dynamic algorithm for Problem~\ref{prob:basic} in the following setting. At preprocessing, we receive a constraint matrix $C \in [0, \lambda]^{m \times n}$. Subsequently,  the matrix $C$ undergoes a sequence of {\em restricting} updates, where each update decreases the value of some entry $C_{ij}$ of the matrix $C$.\footnote{As we will see later in \Cref{sec:reductions:dynamic}, we can assume that all updates are applied only to entries of the matrix $C$ (and not on the objective nor the RHS of the constraints).}  Throughout this sequence of updates, we need to maintain either  a vector $\tilde{x} \in \mathbb{R}_{\geq 0}^n$  with $\mathbb{1}^{\top} \tilde{x} \leq 1+\Theta(\epsilon)$ and $C \tilde{x} \geq (1-\Theta(\epsilon)) \cdot \mathbb{1}$, or  a vector $y \in \mathbb{R}_{\geq 0}^m$  with $\mathbb{1}^{\top} y = 1$ and $C^{\top} y \leq (1+\Theta(\epsilon)) \cdot \mathbb{1}$.

We will show that our implementation of the whack-a-mole MWU algorithm from Section~\ref{sec:whack:a:mole:implementation} seamlessly extends to this dynamic setting. In more details, our dynamic algorithm works as follows.

\medskip
\noindent {\bf Preprocessing:} At preprocessing, we run the static algorithm from Section~\ref{sec:whack:a:mole:implementation} on the input matrix $C$. Depending on its outcome, we consider one of the following two cases.

\smallskip
\noindent Case I: The  static algorithm returns a vector $y \in \mathbb{R}_{\geq 0}^m$, as in step (09) of Figure~\ref{fig:MWU:static}.  It follows that $\mathbb{1}^{\top} y = 1$ and $C^{\top} y \leq (1+4\epsilon) \cdot \mathbb{1}$ (see Theorem~\ref{th:MWU:static:approx}). In this case, the vector $y$ will continue to remain an approximately feasible solution to the dual packing LP as the matrix $C$ undergoes restricting entry updates in future. Hence, our dynamic algorithm terminates without having to process any update.

\smallskip
\noindent Case II: The static algorithm  returns a vector $x^t := \hat{x}^t/\norm{\hat{x}^t}_1$, as in step (12) of Figure~\ref{fig:MWU:static}. It follows that $\mathbb{1}^{\top} x^t = 1$ and $C x^t \geq (1-\epsilon) \cdot \mathbb{1}$ (see Theorem~\ref{th:MWU:static:approx}).  Our dynamic algorithm, however, will explicitly maintain only the vector $\tilde{x}^t := \hat{x}^t/W$, which is a very good approximation to $x^t$ since $W \leq \norm{\hat{x}^t}_1 \leq (1-\epsilon/2)^{-1} \cdot W \leq (1+\epsilon) \cdot W$. Thus, in this case, after preprocessing our dynamic algorithm returns a vector $\tilde{x}^t$ which satisfies $\mathbb{1}^{\top} \tilde{x}^t \leq 1+\epsilon$ and $C \tilde{x}^t \geq (1-\epsilon) \cdot \mathbb{1}$. 

\medskip
\noindent {\bf Handling a restricting entry update to $C$:} Consider an update which decreases the value of some entry $C_{ij}$ of the matrix $C$. To handle this update, we simply run the steps (05) - (11) of Figure~\ref{fig:MWU:static}. In words, we  observe that if any constraint in the covering LP gets violated due to this update, then it must be the constraint $i \in [m]$. Thus, we  check whether the current solution $\tilde{x}^t := \hat{x}^t/W$ approximately satisfies constraint $i$. If not, then we enforce that constraint by repeatedly whacking it. At the end of this process, we end up in one of three possible cases. (Case 1): $W$ is no longer an accurate estimate of $\norm{\hat{x}^t}_1$. In this case, we initiate a new phase. (Case 2): $t = T$. Here, we return an approximately feasible solution $y$ to the dual packing LP, and we terminate the dynamic algorithm since $y$ remains a valid dual solution after any future update. (Case 3): If we are neither in Case 1 nor in Case 2, then we infer that $\tilde{x}^t$ is now an approximately feasible solution to the covering LP, with $\mathbb{1}^{\top} \tilde{x}^t \leq 1+\epsilon$ and $C \tilde{x}^t \geq (1-\epsilon) \cdot \mathbb{1}$. This holds because repeatedly whacking constraint $i$ does not lead to any other constraint being violated, provided we remain in the same phase.

\medskip
\noindent {\bf Rounds/phases:} Before proceeding any further, we recall that a new {\em round} begins whenever we whack a constraint, and the variable $t$ denotes the total number of rounds we have seen so far. In contrast, a new {\em phase} begins whenever  $\norm{\hat{x}^t}_1$ increases by a multiplicative factor of $(1-\epsilon/2)^{-1}$. 

\medskip
It now remains to analyze our dynamic algorithm. We start by noting that Theorem~\ref{th:MWU:static:approx} and Lemma~\ref{cor:bound:weight} seamlessly extend to the dynamic setting. We respectively summarize the analogues of these statements in Theorem~\ref{th:MWU:static:approx:dyn} and Lemma~\ref{cor:bound:weight:dyn} below. Their proofs are deferred to Appendix~\ref{sec:dyn:missing:proofs}.

\begin{theorem}
\label{th:MWU:static:approx:dyn}
If our dynamic algorithm returns a vector $y \in \mathbb{R}_{\geq 0}^m$ after handling an update, then $\mathbb{1}^{\top} y = 1$ and $C^{\top} y \leq (1+4\epsilon) \cdot \mathbb{1}$. In contrast, if our dynamic algorithm returns a vector $\tilde{x}^t := \hat{x}^t/W \in \mathbb{R}_{\geq 0}^n$ after handling an update, then $\mathbb{1}^{\top} \tilde{x}^t \leq 1+\epsilon$ and $C \tilde{x}^t \geq (1-\epsilon) \cdot \mathbb{1}$.
\end{theorem}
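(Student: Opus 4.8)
The plan is to reduce the dynamic statement to the static one (Theorem~\ref{th:MWU:static:approx}) by arguing that, at any point during the dynamic process, the sequence of whacks that the dynamic algorithm has committed to is a valid execution of the static template of Figure~\ref{fig:dual_algorithm} \emph{with respect to the current constraint matrix} $C$. The key structural observation is the one already highlighted in the overview (Section~\ref{sec:dynamic MWU possible}): restricting updates only decrease entries of $C$, hence they only make already-violated constraints ``more violated''. Concretely, for every round $t'$ in the algorithm's history, if constraint $i_{t'}$ was whacked in round $t'$ because $(C^{(t')} x^{t'})_{i_{t'}} < 1$ under the then-current matrix $C^{(t')}$, then for the current matrix $C$ we still have $(C x^{t'})_{i_{t'}} \leq (C^{(t')} x^{t'})_{i_{t'}} < 1$, since the relevant entries could only have decreased and $x^{t'}$ is unchanged (the vectors $\hat{x}^{t'}$ depend only on the update history, not on the current matrix entries, because $\textsc{Whack}(i,\hat{x})$ multiplies by $(1+\epsilon C_{ij}/\lambda)$ using the value of $C_{ij}$ \emph{at the time of that whack}). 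Wait — one must be slightly careful here: the multiplier used in a past whack was the value of $C_{i_{t'}j}$ at round $t'$, so the $\hat{x}$ vectors are genuinely frozen, and the only thing that ``moves'' under later updates is the test $(Cx^{t'})_i<1$, which as noted only becomes easier to satisfy. Thus every past whacking decision remains consistent with the static rule.

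First I would make this precise by an induction over updates, maintaining the invariant that the current state $(\hat{x}^t, \{y^{t'}\}_{t'<t}, W, t)$ is exactly the state that the static algorithm of Figure~\ref{fig:MWU:static} would be in if it were run on the current matrix $C$ and happened to scan constraints in the order dictated by the update history. The base case is preprocessing, which is literally a run of the static algorithm, so Theorem~\ref{th:MWU:static:approx} applies. For the inductive step, handling a restricting update to $C_{ij}$ consists of running steps (05)--(11) of Figure~\ref{fig:MWU:static} on constraint $i$: I would argue (a) no constraint other than $i$ can become violated by this update, since decreasing $C_{ij}$ only decreases $(Cx^t)_{i'}$ for $i'=i$ (the $i$-th row) and leaves all other rows untouched — so the invariant ``$(Cx^t)_{i'} \geq 1-\epsilon/2$ for all $i'\neq i$'' is preserved; (b) the \textsc{Enforce}/\textsc{Step-size} calls then whack $i$ exactly as the static Figure~\ref{fig:MWU:static} would, via Lemma~\ref{lm:equivalence}; (c) the termination branches (new phase, $t=T$, or ``all constraints approximately satisfied'') are identical to the static ones. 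In case the algorithm outputs $y$, Lemma~\ref{lm:whack:a:mole:key} gives $C^{(t)\top} y \leq (1+4\epsilon)\mathbb{1}$ for the matrix $C^{(t)}$ present when $y$ was produced, and monotonicity under restricting updates ($C \leq C^{(t)}$ entrywise) gives $C^\top y \leq C^{(t)\top} y \leq (1+4\epsilon)\mathbb{1}$ for all future $C$; also $\mathbb{1}^\top y = 1$ by construction, so the dynamic algorithm can safely halt. In case the algorithm outputs $\tilde{x}^t := \hat{x}^t/W$, I would invoke the argument already given in the proof of Lemma~\ref{lm:equivalence} (Cases I and II there): at the end of the scan, every constraint $i' \in [m]$ satisfies $(C\hat{x}^t/W)_{i'} \geq 1-\epsilon/2$, and since no new phase was triggered we have $\norm{\hat{x}^t}_1 \leq (1-\epsilon/2)^{-1} W$, hence $(C\tilde{x}^t)_{i'} = (C\hat{x}^t/W)_{i'} \geq 1-\epsilon/2 \geq 1-\epsilon$ and $\mathbb{1}^\top \tilde{x}^t = \norm{\hat{x}^t}_1/W \leq (1-\epsilon/2)^{-1} \leq 1+\epsilon$.

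The main obstacle I anticipate is the bookkeeping in the inductive invariant, specifically pinning down that ``the static algorithm run on the current $C$ with this particular scan order'' is well-defined and reaches exactly the same state — one has to be careful that the static algorithm of Figure~\ref{fig:MWU:static} scans constraints ``in any arbitrary order,'' so matching it to the adversarial update order is legitimate, but one must check that between two consecutive updates the dynamic algorithm does \emph{not} continue scanning other constraints (it only touches constraint $i$), whereas the static algorithm would scan all of $[m]$ in a phase. The resolution is that after the previous update (or preprocessing) the invariant already guarantees $(C\hat{x}^t/W)_{i'} \geq 1-\epsilon/2$ for all $i' \neq i$, so a full static scan would be a no-op on those constraints anyway; hence restricting attention to constraint $i$ loses nothing. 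Everything else is a routine transfer of Theorem~\ref{th:MWU:static:approx} and the monotonicity remark, so I would keep the write-up short and lean on Lemma~\ref{lm:equivalence}, Lemma~\ref{lm:whack:a:mole:key}, and the entrywise monotonicity of $C \mapsto C^\top y$ and $C \mapsto Cx$.
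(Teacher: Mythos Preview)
Your argument for the $\tilde{x}^t$ output is correct and essentially matches the paper's direct argument: the phase invariant gives $\norm{\hat{x}^t}_1 \le (1-\epsilon/2)^{-1}W$, and the fact that only the updated row can become violated (while enforcing only increases $\hat{x}$, hence all other $(C\hat{x}/W)_{i'}$) gives the constraint bound.

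For the $y$ output there is a genuine gap. You invoke Lemma~\ref{lm:whack:a:mole:key} as a black box on ``the matrix $C^{(t)}$ present when $y$ was produced.'' But that lemma is about an execution of Figure~\ref{fig:dual_algorithm}, which specifies not only \emph{which} constraint is whacked at each round but also \emph{how}: the \textsc{Whack} multiplier is $(1+\epsilon\, C_{i_t j}/\lambda)$ for the fixed $C$. In the dynamic run, the whack at round $t'$ uses $C^{(t')}_{i_{t'} j}$ --- as you yourself note --- so if some constraint is whacked both before and after a restricting update to its row, the two multipliers differ, and the frozen sequence $\hat{x}^1,\dots,\hat{x}^T$ is not produced by running Figure~\ref{fig:dual_algorithm} on any single matrix. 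Your invariant (``every past whacking \emph{decision} remains consistent with the static rule'') covers the decisions but not the weight updates, and both are needed to invoke Lemma~\ref{lm:whack:a:mole:key} off the shelf.

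The paper's fix is short but essential: it reopens the proof of Lemma~\ref{lm:whack:a:mole:key} and applies the regret bound (Lemma~\ref{lm:MWU}) with time-varying payoffs $(p^{t'})_j := C^{(t')}_{i_{t'},j}/\lambda$, which by construction \emph{do} match the dynamic weight updates. The violation test still gives $(p^{t'})^\top x^{t'} < 1/\lambda$. The one new step is the monotonicity inequality
\[
\frac{1}{T}\sum_{t'=1}^T (p^{t'})_j \;=\; \frac{1}{T\lambda}\sum_{t'=1}^T C^{(t')}_{i_{t'},j}\;\ge\;\frac{1}{T\lambda}\sum_{t'=1}^T C^{(T)}_{i_{t'},j}\;=\;\frac{1}{\lambda}\bigl((C^{(T)})^\top y\bigr)_j,
\]
using $C^{(t')}\ge C^{(T)}$ entrywise under restricting updates. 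This delivers the bound for the final matrix $C^{(T)}$; your closing remark that any future $C\le C^{(T)}$ preserves $C^\top y \le (1+4\epsilon)\mathbb{1}$ then finishes the job.
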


\begin{lemma}
\label{cor:bound:weight:dyn}
Throughout the duration of our dynamic algorithm, we have $\norm{\hat{x}}^t_1 \leq n^{(1/\epsilon)}$.
\end{lemma}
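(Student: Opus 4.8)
The plan is to replay the proof of Lemma~\ref{cor:bound:weight} almost verbatim, after pinning down two bookkeeping facts that are specific to the dynamic setting. First I would observe that the variable $t$ is a \emph{single global counter}: it is initialized once, at preprocessing, and from then on it is only ever incremented (by $\delta \geq 1$ inside each call to {\sc Enforce}), and the whole dynamic algorithm halts the moment $t$ reaches $T$ (step (08)--(09) of Figure~\ref{fig:MWU:static}, and Case~2 in the handling of an update). Hence the total number of whack-rounds over the \emph{entire} lifetime of the dynamic algorithm --- preprocessing together with all the updates --- is still at most $T = \lambda \ln(n)/\epsilon^2$. Second I would note that a restricting update only decreases an entry of $C$ and therefore does not touch $\hat{x}$ at all; combined with the facts that $\hat{x}$ is initialized to $\mathbb{1}$ and every call to {\sc Whack} multiplies each coordinate by a factor $\geq 1$, this shows $\norm{\hat{x}^t}_1$ is nondecreasing in $t$ and changes only at whack-rounds.

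Given these two facts, it remains to bound the multiplicative growth of $\norm{\hat{x}^t}_1$ per whack-round, and here the static argument applies without change. Consider any round $t'$ (whether it occurs during preprocessing or while handling some update) in which a constraint $i \in [m]$ gets whacked. As established in the analysis of Lemma~\ref{lm:equivalence} (see~(\ref{eq:formal:1})), such a whack is performed only when $\left( C \cdot \frac{\hat{x}^{t'}}{W} \right)_i < 1$, where $W$ is the total weight recorded at the start of the current phase. Since the coordinates of $\hat{x}$ only increase with time, $\norm{\hat{x}^{t'}}_1 \geq W$, and therefore $\left( C \hat{x}^{t'} \right)_i < W \leq \norm{\hat{x}^{t'}}_1$. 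Exactly as in the proof of Lemma~\ref{cor:bound:weight}, this yields
$$\norm{\hat{x}^{t'+1}}_1 - \norm{\hat{x}^{t'}}_1 = \sum_{j \in [n]} \left( \hat{x}^{t'}\right)_j \cdot \left( \epsilon \cdot \frac{C_{i,j}}{\lambda}\right) = \frac{\epsilon}{\lambda} \cdot \left( C \hat{x}^{t'} \right)_i < \frac{\epsilon}{\lambda} \cdot \norm{\hat{x}^{t'}}_1,$$
so $\norm{\hat{x}^{t'+1}}_1 \leq \left( 1 + \frac{\epsilon}{\lambda}\right) \norm{\hat{x}^{t'}}_1$. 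Chaining this over the at most $T$ whack-rounds, starting from $\norm{\hat{x}^1}_1 = n$ and substituting $T = \lambda \ln(n)/\epsilon^2$, gives $\norm{\hat{x}^t}_1 \leq \left( 1 + \frac{\epsilon}{\lambda}\right)^T \cdot n \leq n^{(1/\epsilon)}$ at all times, which is the claim.

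The main obstacle is conceptual rather than computational: one must be certain that the dynamic algorithm genuinely continues a \emph{single} MWU run --- never resetting $\hat{x}$ nor restarting the global counter $t$ upon an update --- so that both the round budget $T$ and the monotonicity of $\hat{x}$ carry over to the whole update sequence. This is precisely the ``no propagation of changes'' phenomenon described in Section~\ref{sec:dynamic MWU possible}: under restricting updates, every previously executed whack remains valid, so the algorithm simply carries on and the static weight accounting applies unchanged. Once this is made explicit, the per-round calculation above is identical to the static case and there is nothing further to check.
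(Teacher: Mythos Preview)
Your proposal is correct and follows essentially the same approach as the paper's proof. The paper phrases the key step as $(C^{t'}x^{t'})_{i_{t'}} < 1$ with the time-indexed matrix $C^{t'}$ and the normalized vector $x^{t'} = \hat{x}^{t'}/\norm{\hat{x}^{t'}}_1$, whereas you reach the equivalent inequality $(C\hat{x}^{t'})_i < \norm{\hat{x}^{t'}}_1$ via the implementation-level condition $(C\hat{x}^{t'}/W)_i < 1$ together with $W \leq \norm{\hat{x}^{t'}}_1$; after that, the per-round growth bound and the chaining over at most $T$ rounds are identical.
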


We now focus on bounding the total update time of our dynamic algorithm. The key  challenge here is to derive an upper bound on the maximum number of times a given constraint $i \in [m]$ can get enforced, over the entire duration of the algorithm.\footnote{We say that constraint $i \in [m]$ gets enforced whenever we call {\sc Enfore}$(i, t, \hat{x}^t, W)$.} This is done in Lemma~\ref{lm:key:dynamic}, whose proof appears in Section~\ref{sec:lm:key:dynamic}.

\begin{lemma}
\label{lm:key:dynamic}
Throughout the entire duration of our dynamic algorithm, a given constraint $i \in [m]$ can get enforced at most $O\left( \frac{\log n}{\epsilon^2} \cdot \log \left( \frac{\lambda \log n}{\epsilon} \right) \right)$ times.
\end{lemma}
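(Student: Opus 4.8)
I want to bound the number of times a fixed constraint $i \in [m]$ can be enforced over the whole run of the dynamic algorithm. Recall that between two consecutive enforcements of $i$, either a phase boundary is crossed (so the "total weight budget" $W$ is reset), or $i$ is re-enforced purely because a restricting update to some entry $C_{i\cdot}$ made it violated again. Since by Corollary~\ref{lm:phase} (whose proof goes through verbatim in the dynamic setting, using Lemma~\ref{cor:bound:weight:dyn}) there are at most $O(\log(n)/\epsilon^2)$ phases in total, it suffices to show that within a single phase, a fixed constraint $i$ gets enforced at most $O(\log(\lambda \log(n)/\epsilon))$ times. Multiplying the two bounds gives the claimed $O\!\left(\frac{\log n}{\epsilon^2} \cdot \log\!\left(\frac{\lambda \log n}{\epsilon}\right)\right)$.

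\textbf{Key step: bounding enforcements of $i$ inside one phase.} Fix a phase with weight estimate $W$, and recall that the phase ends once $\norm{\hat x^t}_1$ exceeds $(1-\epsilon/2)^{-1} W$. The first time constraint $i$ is enforced in this phase, it gets whacked some $\delta_1 \geq 1$ times; the key observation from the {\sc Step-size} routine (Figure~\ref{fig:MWU:Enforce}) is that after enforcement we have $\left(C \cdot \hat x / W\right)_i \geq 1$, i.e. $\left(C \hat x\right)_i \geq W$. Now $i$ can only become violated again while staying in the same phase if a restricting update decreases some entry $C_{ij}$; but each such update decreases $\left(C\hat x\right)_i$ by at most $\hat x_j \le \norm{\hat x}_1 \le (1-\epsilon/2)^{-1}W \le (1+\epsilon)W$, and — here is the crucial point — the value $\left(C\hat x\right)_i$ never increases except by a whack of constraint $i$ itself. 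Between two consecutive enforcements of $i$ (no phase change), $\left(C\hat x\right)_i$ was driven from $< (1-\epsilon/2)W$ up to $\ge W$ by the whacks. Each whack of $i$ multiplies the coordinate-wise contribution $C_{ij}\hat x_j$ by $(1+\epsilon C_{ij}/\lambda)$, hence multiplies $\left(C\hat x\right)_i$ by at least $1$ and at most $(1+\epsilon)$; so a whack increases $\left(C\hat x\right)_i$ by at most a $(1+\epsilon)$ factor, which means re-enforcing $i$ after it has fallen below $(1-\epsilon/2)W$ requires at least one whack, and more importantly the total multiplicative increase of $\left(C\hat x\right)_i$ over the phase is controlled: since $\hat x_j$ only grows and starts (at phase start) with $\norm{\hat x}_1 = W$, while by Lemma~\ref{cor:bound:weight:dyn} $\norm{\hat x}_1 \le n^{1/\epsilon}$ always, the quantity $\left(C\hat x\right)_i \le \lambda \norm{\hat x}_1 \le \lambda n^{1/\epsilon}$. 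Each enforcement of $i$ in the phase, together with the at-least-one whack it performs, multiplies $\left(C\hat x\right)_i$ by a factor $\ge (1+\epsilon/\lambda)$ (one whack with $C_{ij}$ the entry that caused violation — more care needed here, see below), so starting from a value $\ge$ some positive lower bound and never exceeding $\lambda n^{1/\epsilon}$, the number of enforcements is $O\!\left(\log_{1+\epsilon/\lambda}(\lambda n^{1/\epsilon})\right) = O\!\left(\frac{\lambda}{\epsilon} \cdot \log(\lambda n)\right)$ — which is too weak.

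\textbf{The fix and the main obstacle.} The weak bound above came from tracking $\left(C\hat x\right)_i$ multiplicatively against its maximum value. The better argument tracks $W$ instead: within one phase $W$ is fixed, and each enforcement of $i$ is triggered by $\left(C \hat x / W\right)_i$ dropping below $1-\epsilon/2$ and ends with it $\ge 1$. The intermediate drop is caused only by restricting updates. So I should charge each enforcement of $i$ (after the first in the phase) to the restricting updates on row $i$ that occurred since the previous enforcement, which collectively decreased $\left(C\hat x\right)_i$ by at least $\frac{\epsilon}{2}W$. But a single restricting update on $C_{ij}$ only decreases $\left(C\hat x\right)_i$ by at most $\Delta C_{ij} \cdot \hat x_j$, and $\hat x_j \le \norm{\hat x}_1 \le (1+\epsilon)W$; moreover $\Delta C_{ij} \le \lambda$, so a single update can cause a drop of up to $\lambda(1+\epsilon)W$ — enough to trigger many re-enforcements from one update, unless we bound things per-entry. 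The clean way: track the potential $\Phi_i := \log\!\left(\left(C\hat x\right)_i / W\right)$ restricted to a phase (with the caveat that $W$ is constant). Whacks of $i$ increase $\Phi_i$, updates decrease it; each enforcement raises $\Phi_i$ from below $\log(1-\epsilon/2) \approx -\epsilon/2$ to $\ge 0$, a gain of $\ge \epsilon/2$ (minus, between enforcements, $\Phi_i$ can also rise from whacks of $i$ caused by... no — $i$ is only whacked during its own enforcement). So across one phase, $\sum_{\text{enforce } i} (\text{gain in } \Phi_i) \ge (\#\text{enforcements} - 1) \cdot \epsilon/2$; this is bounded above by (total whack-induced increase of $\Phi_i$ in the phase). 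Total whack-induced multiplicative increase of $\left(C\hat x\right)_i$ within a phase is at most the total multiplicative increase of $\norm{\hat x}_1$-type quantities... and here each whack of $i$ multiplies $\left(C\hat x\right)_i$ by $\le 1+\epsilon$, while the number of whacks of $i$ in a phase is itself what we're trying to bound — circular. I expect the actual resolution (and the main obstacle) is to observe that one enforcement of $i$ performs enough whacks to push $\left(C\hat x\right)_i$ from $< W$ to $\ge W$, i.e. a net multiplicative gain of $> 1$, whereas the per-update drops in a phase are each bounded by noting that $C_{ij}\hat x_j / (C\hat x)_i$ is small unless $j$ is heavy, and total drop in $\log\left(C\hat x\right)_i$ over a phase from updates is $O(\log(\lambda n^{1/\epsilon}))$ because $\left(C\hat x\right)_i$ stays in a bounded range $[\,\cdot\,, \lambda n^{1/\epsilon}]$ and whacks (net, over a phase) only contribute $O(\log((1-\epsilon/2)^{-1})) = O(\epsilon)$ to it (since $W$ is fixed and $\left(C\hat x/W\right)_i$ returns to the range $[1-\epsilon/2, O(1)]$ repeatedly). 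Hence $(\#\text{enforcements of } i \text{ in phase})\cdot (\epsilon/2) \le O(\log(\lambda n^{1/\epsilon})) + O(\epsilon)$, giving $O\!\left(\frac{1}{\epsilon}\log(\lambda n^{1/\epsilon})\right) = O\!\left(\frac{\log n}{\epsilon^2} + \frac{\log\lambda}{\epsilon}\right)$ per phase — still not matching. I suspect the intended clean bound instead uses that $\hat x_j \le \norm{\hat x}_1 \le n^{1/\epsilon}$ globally so $\left(C \hat x/W\right)_i \le \lambda n^{1/\epsilon}/W \le \lambda n^{1/\epsilon}$ (as $W \ge n$ at first phase, $\ge 1$ always), and that a single whack of $i$ during enforcement makes progress $\left(C\hat x/W\right)_i \mathrel{*}= (1 + \epsilon C_{ij}/\lambda)$ where $C_{ij}$ is the violated-triggering entry; carefully amortizing "whack count to repair one violation" against the total potential $\log\left(C\hat x/W\right)_i \le \log(\lambda n^{1/\epsilon}) = O(\log(n)/\epsilon + \log\lambda)$ over the whole phase, and noting each enforcement does $\ge \Omega(\log(\lambda\log n/\epsilon))$... — the precise matching calculation is the crux, and I would do it by setting up the single potential $\log\left(C\hat x/W\right)_i$ over a phase, upper-bounding its total range by $\log(\lambda n^{1/\epsilon})$, lower-bounding the net rise it undergoes per enforcement (after accounting for the drop from updates, which is at most the total rise), and concluding the per-phase enforcement count is $O(\log(\lambda\log n /\epsilon))$; combined with $O(\log(n)/\epsilon^2)$ phases this yields the lemma. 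I expect reconciling the per-entry update drops with the potential argument to be the delicate part.
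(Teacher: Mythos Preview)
Your decomposition is the wrong one, and the attempts to close it are symptomatic of that. You try to write the bound as $(\#\text{phases}) \times (\#\text{enforcements of $i$ per phase}) = O(\log n/\epsilon^2) \times O(\log(\lambda\log n/\epsilon))$, and then struggle to establish the per-phase factor. Two concrete problems with what you wrote: (i) the claim ``$(C\hat x)_i$ never increases except by a whack of constraint $i$ itself'' is false --- whacking any constraint $i'$ increases every $\hat x_j$ with $C_{i'j}>0$, hence can increase $(C\hat x)_i$; (ii) more fatally, your potential $\Phi_i = \log((C\hat x)_i/W)$ can drop by an \emph{unbounded} amount from a single restricting update (e.g.\ $C_{ij}$ going from $\lambda$ to $0$), so the ``total drop over a phase'' is not controlled by anything intrinsic, and the telescoping argument cannot close. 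Your own calculation confirms this: every variant you try overshoots the target.

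The paper uses the \emph{reverse} factorization. It never argues per phase; instead it buckets enforcements of $i$ by their \emph{step-size} $\delta \in [1,T]$ into $O(\log T) = O(\log(\lambda\log n/\epsilon))$ geometric ranks, and shows that within each rank there are at most $O(\log n/\epsilon^2)$ enforcements of $i$ over the \emph{entire run}. The key idea you are missing is a witness-coordinate argument: look at the second-to-last rank-$\kappa$ enforcement of $i$; since it drives $(C\hat x/W)_i$ from $<1-\epsilon/2$ to $\geq 1$ using at most $2^\kappa$ whacks, some coordinate $j^*$ must satisfy $(1+\epsilon C_{ij^*}/\lambda)^{2^\kappa} \geq 1+\Theta(\epsilon)$. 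Now every \emph{earlier} rank-$\kappa$ enforcement has step-size $\geq 2^{\kappa-1}$, and crucially $C_{ij^*}$ was at least as large then (restricting updates only decrease entries), so $(\hat x)_{j^*}$ grows by at least $\sqrt{1+\Theta(\epsilon)} = 1+\Theta(\epsilon)$ during each such enforcement. Since $(\hat x)_{j^*}$ starts at $1$, is monotone, and is bounded by $n^{1/\epsilon}$ (Lemma~\ref{cor:bound:weight:dyn}), there can be at most $O(\log n/\epsilon^2)$ rank-$\kappa$ enforcements. This pigeonhole on a single globally-valid witness coordinate, enabled by the monotonicity of $C_{ij^*}$ under restricting updates, is the idea your potential argument does not capture.
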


We are now ready to bound the total update time of our dynamic algorithm.

\begin{theorem}
\label{th:MWU:static:runtime:dyn}
To handle any sequence of  $\tau$ restricting entry updates to  $C$,  our dynamic algorithm takes   $O\left(\tau + N \cdot \frac{\log (n)}{\epsilon^2} \cdot \log^3 \left( \frac{\lambda \log (n)}{\epsilon} \right)\right)$ time, where $N$ is the number of non-zero entries in $C$ at preprocessing.
\end{theorem}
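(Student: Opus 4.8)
The plan is to split the total running time into four parts and bound each using results already established: (i) the preprocessing run of the static algorithm; (ii) $O(1)$-time bookkeeping per update; (iii) the time spent inside calls to {\sc Enforce}; and (iv) the time spent re-scanning all constraints whenever a new phase begins (steps~(03)--(04) of Figure~\ref{fig:MWU:static}). A preliminary observation that makes the accounting go through is that, since every update only \emph{decreases} an entry of $C$, the support of each row can only shrink over time; hence $\sum_{i \in [m]} N_i \le N$ holds throughout the algorithm, where $N_i$ denotes the current number of non-zero entries of row $i$. This lets us turn per-row bounds into a global bound in terms of $N$.

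Part (i) is immediate: by Theorem~\ref{th:MWU:static:runtime} the preprocessing takes $O\!\left(N \cdot \frac{\log n}{\epsilon^2} \cdot \log^2\!\left(\frac{\lambda \log n}{\epsilon}\right)\right)$ time, which is dominated by the claimed bound. For part (ii), handling an update to $C_{ij}$ amounts to updating the stored entry, updating a maintained estimate of $(C \hat{x}^t)_i$, and then comparing that estimate against $W$ to decide whether constraint $i$ is (approximately) violated. To make the comparison an $O(1)$ operation we maintain each $(C\hat{x}^t)_i$ up to a $(1 \pm \epsilon)$ factor: a coordinate $\hat{x}_j$ re-propagates its contribution to the rows that contain $j$ only once it has grown by a $(1+\epsilon)$ factor since its last propagation. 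Since $\hat{x}_j$ is non-decreasing and $\hat{x}_j \le \norm{\hat{x}^t}_1 \le n^{1/\epsilon}$ by Lemma~\ref{cor:bound:weight:dyn}, each coordinate re-propagates $O\!\left(\frac{\log n}{\epsilon^2}\right)$ times, and one re-propagation from $j$ costs time proportional to the (initial, hence maximum) number of rows containing $j$; summing over all $j$ this is $O\!\left(N \cdot \frac{\log n}{\epsilon^2}\right)$, again subsumed. Aside from these re-propagations, and aside from the {\sc Enforce} calls and phase re-scans counted below, each update costs only $O(1)$, so part (ii) contributes $O(\tau)$ in total.

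For part (iii), Lemma~\ref{lm:key:dynamic} guarantees that over the whole run each constraint $i \in [m]$ is enforced at most $O\!\left(\frac{\log n}{\epsilon^2} \cdot \log\!\left(\frac{\lambda \log n}{\epsilon}\right)\right)$ times, while Lemma~\ref{lm:enforce} bounds the cost of a single call to {\sc Enforce}$(i, t, \hat{x}^t, W)$ by $O\!\left(N_i \cdot \log^2\!\left(\frac{\lambda \log n}{\epsilon}\right)\right)$. Multiplying these two bounds and summing over $i$, and then invoking $\sum_i N_i \le N$, the total time in part (iii) is $O\!\left(N \cdot \frac{\log n}{\epsilon^2} \cdot \log^3\!\left(\frac{\lambda \log n}{\epsilon}\right)\right)$. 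For part (iv), the number of phases is $O\!\left(\frac{\log n}{\epsilon^2}\right)$: this follows from Lemma~\ref{cor:bound:weight:dyn} exactly as Corollary~\ref{lm:phase} follows from Lemma~\ref{cor:bound:weight}. Each new-phase re-scan touches each constraint once at cost $O(\sum_i N_i) = O(N)$ (the additional work if a constraint actually needs enforcing during such a scan is already counted in part (iii)), so part (iv) contributes $O\!\left(N \cdot \frac{\log n}{\epsilon^2}\right)$. Adding parts (i)--(iv) gives the claimed bound $O\!\left(\tau + N \cdot \frac{\log n}{\epsilon^2} \cdot \log^3\!\left(\frac{\lambda \log n}{\epsilon}\right)\right)$.

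The genuinely hard part has already been extracted as Lemma~\ref{lm:key:dynamic}: a priori a restricting update to $C_{ij}$ could re-violate constraint $i$, and if the same constraint could be enforced $\Omega(\tau)$ times the whole approach would collapse; the content of Lemma~\ref{lm:key:dynamic} is precisely that this happens at most polylogarithmically often per constraint. Granting that lemma, what remains for this theorem is the accounting above, where the two points needing a little care are (1) that restricting updates never grow $\sum_i N_i$, so per-row costs telescope into $N$, and (2) that the per-update ``is constraint $i$ violated?'' test is realized in $O(1)$ amortized time through the lazy $(1\pm\epsilon)$-estimate data structure, rather than by re-reading row $i$ in full.
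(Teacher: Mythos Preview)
Your proposal is correct and follows essentially the same approach as the paper. The paper's proof sketch organizes the work into just two dominant ``tasks'' (the $O(1)$-amortized violation check via the same lazy $(1\pm\epsilon)$ estimates you describe, and the total {\sc Enforce} cost via Lemma~\ref{lm:key:dynamic} combined with Lemma~\ref{lm:enforce}), whereas you break things out into four parts by also separating preprocessing and the per-phase re-scan; but the substance and the citations (Lemmas~\ref{lm:enforce}, \ref{cor:bound:weight:dyn}, \ref{lm:key:dynamic}) are identical, and your extra pieces are clearly subsumed by the main term.
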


\begin{proof}(Sketch)
The total time update time is dominated by the time spent on the following two tasks. 

\medskip
\noindent {\em Task I.} After an update decreases the value of some entry $C_{ij}$ of the input matrix $C$, decide whether the constraint $i \in [m]$ is approximately satisfied, i.e., whether $\left( C \cdot \frac{\hat{x}^t}{W} \right) > 1-\epsilon/2$.

\medskip
\noindent {\em Task II.} Enforce a given constraint $i \in [m]$. 

\medskip
We first focus on bounding the total time spent on Task II. Towards this end, we first adapt the argument in the proof of Lemma~\ref{lm:enforce}. This leads us to conclude that whenever we enforce a constraint $i \in [m]$, it takes $O\left( N_i \cdot \log^2 \left( \frac{\lambda \log (n)}{\epsilon} \right)\right)$ time, where $N_i$ is the number of non-zero entries in row $i \in [m]$ of the input matrix $C$ at preprocessing. Hence, Lemma~\ref{lm:key:dynamic} implies that the total time spent on enforcing a given constraint $i \in [m]$ is at most $O\left( N_i \cdot \frac{\log (n)}{\epsilon^2} \cdot \log^3 \left( \frac{\lambda \log (n)}{\epsilon} \right)\right)$. Summing this up over all the constraints $i \in [m]$, we infer that the total time spent on Task II is at most:
\begin{equation} \label{eq:task:2}
O\left( \sum_{i \in [m]} N_i \cdot \frac{\log (n)}{\epsilon^2} \cdot \log^3 \left( \frac{\lambda \log (n)}{\epsilon} \right)\right) = O\left( N \cdot \frac{\log (n)}{\epsilon^2} \cdot \log^3 \left( \frac{\lambda \log (n)}{\epsilon} \right)\right).
\end{equation}

It now remains to bound the total time spent on Task I. Towards this end, we maintain a variable $\hat{z}_j$ for each co-ordinate $j \in [n]$. We always ensure that $\hat{z}_j = (1+\epsilon)^{\kappa}$ for some nonnegative integer $\kappa \geq 0$.  Furthermore, we ensure that $\hat{z}_j$ always lies within a multiplicative $(1+\epsilon)$ factor of $\left(\hat{x}^t\right)_j$. In other words, the value of $\hat{z}_j$ always forms an accurate estimate of $\left(\hat{x}^t\right)_j$. Finally, for each constraint $i \in [m]$, we explicitly maintain the value of $\left( C \cdot \frac{\hat{z}}{W} \right)_i$. This way, we can keep track of the value of $\left( C \cdot \frac{\hat{x}^t}{W} \right)_i$ for all $i \in [m]$, within a multiplicative factor of $(1+\epsilon)$.  This is sufficient for us to detect whether a given constraint $i \in [m]$ is approximately satisfied, in $O(1)$ time after an update.

Fix any co-ordinate $j \in [n]$. Note that $\left( \hat{x}^1 \right)_j = 1$, and by Lemma~\ref{cor:bound:weight:dyn} we have $\left(\hat{x}^t\right)_{j} \leq \sum_{j' \in [n]}\left(\hat{x}^t\right)_{j'} \leq n^{(1/\epsilon)}$ for all $t \in [T]$. In  words, the value of $\left(\hat{x}^t\right)_j$ always lies in the interval $\left[1, n^{(1/\epsilon)}\right]$. We need to update the estimate $\hat{z}_j$ each time the value of $\left(\hat{x}^t\right)_j$ increases by a multiplicative factor of $(1+\epsilon)$. Thus, throughout the duration of our algorithm,  the value of $\hat{z}_j$ gets updated at most $\log_{(1+\epsilon)} n^{(1/\epsilon)} = O(\log n/\epsilon^2)$ times. 

Finally, whenever the value of $\hat{z}_j$ changes, we need to spend  an additional $O(N^j)$ time to reflect this change in the values of $\left( C \cdot \frac{\hat{z}}{W}\right)_i$ for all $i \in [m]$, where $N^j$ denotes the number of non-zero entries in column $j \in [n]$ of the matrix $C$ at preprocessing. Hence, the total time spent in this manner, in order to maintain the estimates $\hat{z}_j$ and $\left( C \cdot \frac{\hat{z}}{W}\right)_i$, is at most $O\left(\sum_{j \in [n]} N^j \cdot (\log n/\epsilon^2) \right) = O(N \log n/\epsilon^2)$.  So the total time spent by our algorithm on Task I is at most $O(\tau+N \log n/\epsilon^2)$. The theorem now follows from~(\ref{eq:task:2}).
 \end{proof}

\subsubsection{Proof of Lemma~\ref{lm:key:dynamic}}
\label{sec:lm:key:dynamic}

Throughout the proof, fix a constraint $i \in [m]$.  We associate a {\em step-size} with each enforcement of this constraint. Specifically, suppose that the constraint gets enforced in round $t$ with step-size $\delta$. Then the constraint gets whacked $\delta$ times during this enforcement (see Figure~\ref{fig:Enforce:new}). Furthermore, note that the step-size $\delta$ lies in the range $[1, T]$, according to Figure~\ref{fig:MWU:Enforce}. We discretize this range $[1, T]$ into $O(\log T)$ many intervals in powers of $2$. Armed with this discretization, we now assign an integral {\em rank} to each enforcement of this constraint. Specifically, a given enforcement has rank $1 \leq \kappa \leq O(\log T)$ iff its step-size $\delta \in \left[2^{\kappa-1}, 2^{\kappa}\right)$. We will now bound the maximum number of enforcements of a given rank encountered by constraint $i$.

\begin{claim}
\label{cl:rank:bound}
Fix any integer $1 \leq \kappa \leq O(\log T)$. Throughout the duration of our dynamic algorithm, the constraint $i \in [m]$ encounters at most $O\left(\frac{\log n}{\epsilon^2}\right)$ many enforcements of rank $\kappa$.
\end{claim}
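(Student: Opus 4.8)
The plan is to argue that each enforcement of rank $\kappa$ (i.e.\ with step-size $\delta \in [2^{\kappa-1}, 2^\kappa)$) must multiplicatively increase some ``potential'' quantity by a non-negligible factor, while the total multiplicative increase this quantity can undergo over the whole run of the algorithm is bounded; dividing the two gives the claimed $O(\log n / \epsilon^2)$ bound. The natural candidate for the potential is the coordinate-wise weight restricted to the support of constraint $i$: write $\Phi_i(t) := \sum_{j : C_{ij} > 0} (\hat x^t)_j$, or perhaps more robustly the sum $\sum_{j : C_{ij}>0} (\hat x^t)_j$ weighted by $C_{ij}/\lambda$ — essentially $(C\hat x^t)_i$ itself, though one must be careful because $C_{ij}$ can decrease due to restricting updates. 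The key point: when constraint $i$ is enforced in round $t$ with step-size $\delta$, each of the $\delta$ whacks multiplies $(\hat x)_j$ by $(1+\epsilon C_{ij}/\lambda)$ for every $j$ in the support. So after $\delta$ whacks, $(\hat x)_j$ has grown by $(1+\epsilon C_{ij}/\lambda)^\delta$. I would look at the specific coordinate $j^\star$ achieving, say, the median or maximum contribution to $(C\hat x^t)_i$ just before the enforcement, and show that $\delta$ whacks make $(\hat x)_{j^\star}$ grow by roughly a factor $(1+\epsilon)^{\Omega(\delta \cdot C_{ij^\star}/\lambda)}$; combined with the lower bound $\delta \ge 2^{\kappa-1}$ and the fact that enforcement was triggered by a violation (so $(C\hat x^t/W)_i < 1-\epsilon/2$ but $(C z^\delta /W)_i \ge 1$, giving a handle on how much growth $\delta$ whacks produce), this yields a fixed multiplicative gain.

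More concretely, I would proceed as follows. First, fix the rank $\kappa$ and consider two consecutive rank-$\kappa$ enforcements of constraint $i$, say at rounds $t_1 < t_2$. By the Enforce/Step-size logic in Figure~\ref{fig:MWU:Enforce}, the enforcement at $t_1$ whacked $i$ exactly $\delta_1 \in [2^{\kappa-1}, 2^\kappa)$ times, bringing $(C \cdot \hat x / W)_i$ from below $1$ (it was violated) up to at least $1$. Since one fewer whack would have left it below $1$, and each whack multiplies $(C\hat x)_i$ by at most $(1+\epsilon)$, we get $(C \hat x^{t_1+\delta_1}/W)_i \le (1+\epsilon)$. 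On the other hand, $\delta_1 - 1$ whacks were not enough, so $(C \hat x^{t_1 + \delta_1 - 1}/W)_i < 1$, which forces $(C\hat x^{t_1}/W)_i < (1+\epsilon C_{\max}/\lambda)^{-(\delta_1-1)}$ where $C_{\max} = \max_j C_{ij}$ at that time — this tells us the starting value of $(C\hat x^{t_1})_i$ is exponentially small in $\delta_1$, hence the $\delta_1$ whacks increase it by an exponential-in-$\delta_1$ factor. Since $\delta_1 \ge 2^{\kappa-1}$, this is a factor of at least, say, $2^{\Omega(2^{\kappa-1} \epsilon C_{\max}/\lambda)}$; but we also need to relate this to growth in the global potential $\norm{\hat x}_1$ (or in $\Phi_i$), and use Lemma~\ref{cor:bound:weight:dyn}, which caps $\norm{\hat x^t}_1 \le n^{1/\epsilon}$. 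The second ingredient is that between $t_1$ and $t_2$, the value $(C\hat x/W)_i$ must have dropped back below $1-\epsilon/2$ (otherwise $i$ would not be re-enforced), and within a single phase it can only drop due to restricting updates to entries $C_{ij}$, while across a phase boundary $W$ grows. I would argue that within one phase, once a constraint is enforced it stays satisfied, so $t_1$ and $t_2$ lie in different phases, OR the drop below $1-\epsilon/2$ was caused by updates — but in either case the coordinates $(\hat x)_j$, $j$ in support of $i$, only grow, so $\Phi_i$ is monotone and the growth accumulated at rank-$\kappa$ enforcements telescopes.

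The main obstacle I anticipate is handling the interaction between restricting updates (which decrease $C_{ij}$ and can cause $(C\hat x)_i$ to fall, re-triggering enforcement without any genuine ``progress'' in the weights) and the monotonicity argument for the potential. The clean way around this is to choose the potential to depend only on the weights $\hat x_j$ and not on the current $C_{ij}$ values — e.g.\ $\Phi_i(t) = \sum_{j \in S_i} (\hat x^t)_j$ where $S_i = \{ j : C_{ij} > 0 \text{ at preprocessing}\}$ is fixed — and then show each rank-$\kappa$ enforcement multiplies $\Phi_i$ by at least $1 + \Omega(\epsilon \cdot 2^{\kappa}/\lambda) \cdot (\text{something})$, or at least by $(1+\epsilon)^{\Omega(1)}$ after absorbing the $2^{\kappa-1} \le \delta$ factor against the $1/\lambda$ scaling. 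Since $\Phi_i(1) \ge 1$ (at least one coordinate, assuming row $i$ is nonzero; if row $i$ is identically zero the constraint is trivially always violated and the Problem returns a dual certificate — a degenerate case to dispatch separately) and $\Phi_i(t) \le \norm{\hat x^t}_1 \le n^{1/\epsilon}$ by Lemma~\ref{cor:bound:weight:dyn}, the number of rank-$\kappa$ enforcements is at most $\log_{1+\Omega(\epsilon)} n^{1/\epsilon} = O(\log n / \epsilon^2)$. I would also need to double-check the binary-search semantics in Figure~\ref{fig:MWU:Enforce} to confirm that ``$\delta-1$ whacks insufficient'' is genuinely guaranteed (it is, since $\delta$ is the \emph{smallest} integer making $(Cz^\delta/W)_i \ge 1$), which is what powers the exponential-growth lower bound.
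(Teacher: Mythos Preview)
Your overall plan --- a potential argument bounded via Lemma~\ref{cor:bound:weight:dyn} --- is right, and you have correctly identified all of the relevant ingredients: the factor-$(1+\Theta(\epsilon))$ growth of $(C\hat x/W)_i$ per enforcement, the step-size bounds from the rank, and the complication from restricting updates to $C_{ij}$. But neither of your two proposed potentials actually works, and you are missing the specific device the paper uses.

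The sum potential $\Phi_i(t)=\sum_{j\in S_i}(\hat x^t)_j$ is indeed monotone, but an enforcement need not multiply it by $1+\Omega(\epsilon)$: the growth factor of each $(\hat x)_j$ under $\delta$ whacks is $(1+\epsilon C_{ij}/\lambda)^\delta$, and if almost all the mass of $\Phi_i$ sits on coordinates with tiny $C_{ij}$, the sum barely moves even while $(C\hat x)_i$ grows by a constant factor. Your attempt to ``absorb the $2^{\kappa-1}\le\delta$ factor against the $1/\lambda$ scaling'' does not go through: nothing in the rank-$\kappa$ hypothesis ties $\lambda$ to $2^{\kappa}$. Your alternative --- pick, for \emph{each} enforcement, the coordinate $j^\star$ whose term grew the most --- does give a $(1+\Theta(\epsilon))$ growth factor for that particular $(\hat x)_{j^\star}$, but the identity of $j^\star$ changes from enforcement to enforcement, so these gains do not compound into a single bounded quantity.

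The paper's fix is to choose $j^\star$ once, retroactively, from the \emph{second-to-last} rank-$\kappa$ enforcement (at round $t_\gamma$). That enforcement raises $(C\hat x/W)_i$ from below $1-\epsilon/2$ to at least $1$ (since it is not the final one, Step-size lands in the Else-branch of Figure~\ref{fig:MWU:Enforce}), so some $(\hat x)_{j^\star}$ grows by $\ge 1+\Theta(\epsilon)$; because the step-size is $<2^\kappa$, this forces $(1+\epsilon C^{t_\gamma}_{ij^\star}/\lambda)^{2^\kappa}\ge 1+\Theta(\epsilon)$. Now the crucial point you are missing: since the updates are \emph{restricting}, $C^{t_{\gamma'}}_{ij^\star}\ge C^{t_\gamma}_{ij^\star}$ for every earlier $t_{\gamma'}$. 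Hence in each earlier rank-$\kappa$ enforcement (step-size $\ge 2^{\kappa-1}$), the \emph{same} coordinate $j^\star$ grows by at least $(1+\epsilon C^{t_{\gamma'}}_{ij^\star}/\lambda)^{2^{\kappa-1}}\ge \bigl[(1+\epsilon C^{t_\gamma}_{ij^\star}/\lambda)^{2^\kappa}\bigr]^{1/2}\ge 1+\Theta(\epsilon)$. Now a single monotone scalar $(\hat x)_{j^\star}$ absorbs all $\gamma$ enforcements, starts at $1$, and is capped by $n^{1/\epsilon}$ via Lemma~\ref{cor:bound:weight:dyn}, giving $\gamma=O(\log n/\epsilon^2)$. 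This backward choice of $j^\star$ and the exploitation of the monotonicity of $C_{ij^\star}$ under restricting updates is the missing piece in your argument.
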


\begin{proof}
Suppose that the constraint $i \in [m]$ encounters $\gamma+1$ enforcements of rank $\kappa$. Furthermore, suppose that these enforcements occurr in rounds $t_1 < t_2 < \cdots < t_{\gamma} < t_{\gamma+1}$. We will show that $\gamma = O(\log n/\epsilon^2)$.

Focus on the second-last of these enforcements, that occur in round $t_{\gamma}$. Just before this enforcement, we have $\left( C \cdot \frac{\hat{x}^t}{W} \right)_i < 1-\epsilon/2$. In contrast, just after this enforcement, we have $\left( C \cdot \frac{\hat{x}^t}{W} \right)_i \geq 1$.\footnote{Here, we rely on the fact that this is not the last enforcement of constraint $i$, for otherwise we might  execute step (3) in Figure~\ref{fig:MWU:Enforce}.} So the value of $\left( C \cdot \frac{\hat{x}^t}{W} \right)_i$ increases by at least a multiplicative factor of $1+\Theta(\epsilon)$ during this enforcement. Hence, there must exist some co-ordinate $j^* \in [n]$ such that $\left( \hat{x}^t \right)_{j^*}$ also increases by at least a multiplicative factor of $1+\Theta(\epsilon)$ during the same enforcement in round $t_{\gamma}$. On the other hand, since the concerned enforcement has rank $\kappa$, it has a step-size of at most $2^{\kappa}$. Thus, due to this enforcement $\left( \hat{x}^t \right)_{j^*}$ increases by at most a  factor of $\left( 1+ \epsilon \cdot \frac{C_{ij^*}^{t_{\gamma}}}{\lambda}\right)^{2^{\kappa}}$, where $C^{t_{\gamma}}_{ij^*}$ denotes the value of $C_{ij^*}$ just before round $t_{\gamma}$. Accordingly, we infer that:
\begin{equation}
\label{eq:key:inference}
\left( 1+ \epsilon \cdot \frac{C_{ij^*}^{t_{\gamma}}}{\lambda}\right)^{2^{\kappa}} \geq 1+\Theta(\epsilon).
\end{equation}

Next, consider any previous enforcement with rank $\kappa$ that  occurs in some round $t_{\gamma'}$ (where $1 \leq \gamma' \leq \gamma$), and focus on the same co-ordinate $j^*$. Since the concerned enforcement has step-size at least $2^{\kappa-1}$, we infer that due to this enforcement $\left( \hat{x}^t \right)_{j^*}$ increases by at least a multiplicative factor of:
$$\left( 1+ \epsilon \cdot \frac{C_{ij^*}^{t_{\gamma'}}}{\lambda}\right)^{2^{\kappa-1}} = \sqrt{\left( 1+ \epsilon \cdot \frac{C_{ij^*}^{t_{\gamma}}}{\lambda}\right)^{2^{\kappa}}} \geq \sqrt{1 + \Theta(\epsilon)} = 1+\Theta(\epsilon).$$
The inequality follows from~(\ref{eq:key:inference}). To summarize, we conclude that during each of the enforcements in rounds $t_1, \ldots, t_\gamma$, the value of $\left(\hat{x}^t\right)_{j^*}$ increases  by at least a  factor of $1+\Theta(\epsilon)$. 

Finally, observe that   $\left(\hat{x}^t\right)_{j^*} = 1$ at the start of the algorithm, and that the value of $\left( \hat{x}^t\right)_{j^*}$ increases monotonically with time. Thus, at the end of the enforcement in round $t_{\gamma}$, we have:
\begin{equation}
\label{eq:key:inference:2}
 \left( 1 + \Theta(\epsilon) \right)^\gamma \leq \left( \hat{x}^t \right)_{j^*} \leq \sum_{j \in [n]}  \left( \hat{x}^t \right)_{j}  \leq n^{(1/\epsilon)}.
\end{equation}
The last inequality holds due to Lemma~\ref{cor:bound:weight:dyn}. From~(\ref{eq:key:inference:2}), we infer that $\gamma = O(\log n/\epsilon^2)$. 
\end{proof}

Claim~\ref{cl:rank:bound} implies that a given constraint $i \in [m]$ is enforced at most $O\left(\frac{\log n}{\epsilon^2} \cdot \log T\right)$ times by our dynamic algorithm. Lemma~\ref{lm:key:dynamic} now follows from the fact that $T = \lambda \ln (n)/\epsilon^2$.

\subsection{Streaming Whack-a-Mole MWU Algorithm}
\label{sec:streaming_MWU}

In this section, we focus on designing a streaming algorithm for Problem~\ref{prob:basic} in the following setting. We get a matrix $C \in R_{\geq 0}^{m \times n}$ as input. The rows of  $C$ are stored one after the other in a read-only repository. We wish to design an algorithm which makes one or more passes through this repository, and then either outputs a vector $x \in \mathbb{R}_{\geq 0}^n$ with $\mathbb{1}^{\top} x = 1$ and $C x \geq (1-\epsilon) \cdot \mathbb{1}$, or outputs a vector $y \in \mathbb{R}_{\geq 0}^m$ with $\mathbb{1}^{\top} y = 1$ and $C^{\top} y \leq (1+4\epsilon) \cdot \mathbb{1}$.  The goal is to minimize the number of passes and the space complexity (excluding the space taken up by the repository to store the input) of the algorithm. 

We observe that the static whack-a-mole MWU algorithm, as described in Section~\ref{sec:whack:a:mole:implementation}, immediately extends to this streaming setting if we allow the algorithm to have a space complexity of $O(m+n)$. Specifically, we can implement the main {\sc For} loop in  Figure~\ref{fig:MWU:static} in a single pass as follows. The algorithm explicitly maintains a vector $\hat{x} = \hat{x}^t \in \mathbb{R}_{\geq 0}^n$, the values  $W, t \in \mathbb{R}_{\geq 0}$, and a vector $y^*  = (1/T) \cdot \sum_{t' = 1}^{t} y^{t'} \in \mathbb{R}_{\geq 0}^m$ in its memory. Note that when $t = T$, we have $y^* = y$. While making a pass through the read-only repository, suppose that the algorithm  encounters row $i \in [m]$ of the matrix $C$. The algorithm checks whether $\left(C \cdot \frac{\hat{x}}{W} \right)_i < 1- \epsilon/2$, and if yes, then it enforces the constraint $i \in [m]$. Using the vectors $\hat{x}$ and $y^*$, this enforcement step can be performed without incurring any extra overhead in the space complexity. 

Since we can implement the main {\sc For} loop in Figure~\ref{fig:MWU:static} in one pass, the total number of passes is equal to the number of phases of the whack-a-mole algorithm. The next theorem now follows from Corollary~\ref{lm:phase}. 

\begin{theorem}
\label{th:streaming:main}
Consider a streaming setting where the rows of the  matrix $C \in [0, \lambda]^{m \times n}$ arrive one after the other. Then there is a deterministic streaming algorithm with space complexity $O(m+n)$ that makes $O\left( \frac{\log n}{\epsilon^2}\right)$ many passes through this stream, and either returns a vector $x \in \mathbb{R}_{\geq 0}^n$ with $\mathbb{1}^{\top} x = 1$ and $C x \geq (1-\epsilon) \cdot \mathbb{1}$, or returns a vector $y \in \mathbb{R}_{\geq 0}^m$ with $\mathbb{1}^{\top} y = 1$ and $C^{\top} y \leq (1+4\epsilon) \cdot \mathbb{1}$. 
\end{theorem}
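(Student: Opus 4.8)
The plan is to observe that the near-linear-time implementation from Section~\ref{sec:whack:a:mole:implementation} is \emph{already} a streaming algorithm, once we agree to scan the constraints in the fixed order in which their rows appear in the read-only repository (the main {\sc For} loop in Figure~\ref{fig:MWU:static} processes the rows in an arbitrary order, and Lemma~\ref{lm:equivalence} and Theorem~\ref{th:MWU:static:approx} do not rely on this order, nor on it differing between phases). Recall that at any moment the algorithm's entire state consists of the weight vector $\hat{x} = \hat{x}^t \in \mathbb{R}_{\geq 0}^n$, the scalars $W$ and $t$, and the partial dual average $y^* := (1/T) \cdot \sum_{t' = 1}^{t} y^{t'} \in \mathbb{R}_{\geq 0}^m$. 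Since each $y^{t'}$ is an indicator vector, maintaining $y^*$ explicitly costs $O(m)$ space and maintaining $\hat{x}$ costs $O(n)$ space, so the total working memory is $O(m+n)$.

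First I would describe how one pass over the repository realizes one \emph{phase} of Figure~\ref{fig:MWU:static}. At the start of a phase we set $W \leftarrow \norm{\hat{x}}_1$, a quantity we keep updated incrementally as $\hat{x}$ changes. We then stream through the rows; upon reading row $i \in [m]$ we compute $\left( C \cdot \hat{x}/W \right)_i$, which uses only the entries of row $i$ (in hand) together with $\hat{x}$ and $W$ from memory, in $O(N_i)$ time and no extra space. If this value is $\geq 1 - \epsilon/2$ we move on; otherwise we \emph{enforce} constraint $i$ exactly as in Figure~\ref{fig:Enforce:new}: the binary search of Figure~\ref{fig:MWU:Enforce} that computes the step-size $\delta$, and the subsequent $\delta$ whacks, again touch only the entries of row $i$ and the vector $\hat{x}$, and the update to the dual average is just $y^*_i \leftarrow y^*_i + \delta/T$, so no space overhead is incurred. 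If $t$ reaches $T$ we halt and output $(\textsc{Null},\, y^*)$; if $\norm{\hat{x}}_1$ exceeds $(1-\epsilon/2)^{-1} W$ we abandon the current pass and begin a new phase, i.e.\ a fresh pass from the top of the repository, carrying over $\hat{x}$, $t$, $y^*$ unchanged and only resetting $W$. If a pass completes without either event, then by Lemma~\ref{lm:equivalence} the vector $\hat{x}/\norm{\hat{x}}_1$ is a primal $\epsilon$-approximation and we output it.

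Correctness is inherited wholesale from the static analysis: the streaming algorithm performs exactly the same sequence of whacks, and hence produces the same output, as Figure~\ref{fig:MWU:static} with the rows scanned in repository order, so Theorem~\ref{th:MWU:static:approx} gives the claimed approximation guarantees. For the pass count, each phase is realized by a single pass, and within a phase every constraint is enforced at most once, since the coordinates of $\hat{x}$ only increase (once $\left(C\hat{x}/W\right)_i \geq 1$ it remains $\geq 1 \geq 1-\epsilon/2$ for the rest of that phase); by Corollary~\ref{lm:phase} there are $O(\log n/\epsilon^2)$ phases, which is therefore also the number of passes. I do not expect a genuine obstacle. The one point deserving a sentence of care is that a phase may end in the middle of a pass, forcing a restart of the scan — but this merely mirrors the static behaviour (a constraint may be enforced once per phase across many phases), and the phase bound of Corollary~\ref{lm:phase}, which rests only on the weight bound $\norm{\hat{x}^t}_1 \leq n^{1/\epsilon}$ of Lemma~\ref{cor:bound:weight}, holds verbatim since the streaming algorithm runs an identical process.
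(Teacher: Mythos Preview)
Your proposal is correct and follows essentially the same approach as the paper: maintain $\hat{x}$, $W$, $t$, and the running dual average $y^*$ in $O(m+n)$ space, implement each phase of Figure~\ref{fig:MWU:static} as a single pass over the rows, and invoke Theorem~\ref{th:MWU:static:approx} for correctness and Corollary~\ref{lm:phase} for the pass bound. Your write-up is in fact more explicit than the paper's own argument, which simply asserts that the main {\sc For} loop can be implemented in one pass using the stored state and then appeals directly to Corollary~\ref{lm:phase}.
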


Next, note that if we are only required to return either the vector $x \in \mathbb{R}_{\geq 0}^n$ or a special symbol {\sc Null} (indicating that the dual packing LP has an approximately feasible solution with objective $= 1$), then we can further reduce the space complexity of our streaming algorithm. This holds because in such a scenario we only need to maintain the vector $\hat{x} \in \mathbb{R}_{\geq 0}^n$ and the values $W, t \in \mathbb{R}_{\geq 0}$. In particular, we no longer need to maintain the vector $y^* \in \mathbb{R}_{\geq 0}^m$ while making a pass through the read-only repository. Instead, when we observe that $t = T$, we simply return {\sc Null}. This leads to the following corollary.

\begin{corollary}
\label{cor:streaming:main}
Consider a streaming setting where the rows of the matrix $C \in [0,\lambda]^{m \times n}$ arrive one after the other.  Then there is a deterministic streaming algorithm with space complexity $O(n)$ that makes $O\left( \frac{\log n}{\epsilon^2}\right)$ many passes through this stream, and either returns a vector $x \in \mathbb{R}_{\geq 0}^n$ with $\mathbb{1}^{\top} x = 1$ and $C x \geq (1-\epsilon) \cdot \mathbb{1}$, or returns {\sc Null}. In the latter case, it is guaranteed that there exists a vector $y \in \mathbb{R}_{\geq 0}^m$ with $\mathbb{1}^{\top} y = 1$ and $C^{\top} y \leq (1+4\epsilon) \cdot \mathbb{1}$ (although the algorithm does not return $y$). 
\end{corollary}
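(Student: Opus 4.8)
The plan is to obtain this corollary as a space-optimization of the streaming algorithm from Theorem~\ref{th:streaming:main}. Recall that that algorithm implements the main {\sc For} loop of Figure~\ref{fig:MWU:static} in one pass per phase, and in doing so it keeps in memory only the vector $\hat{x} = \hat{x}^t \in \mathbb{R}_{\ge 0}^n$, the scalars $W, t \in \mathbb{R}_{\ge 0}$, and the running average $y^* = (1/T) \cdot \sum_{t' \le t} y^{t'} \in \mathbb{R}_{\ge 0}^m$ of the dual indicator vectors. The vector $y^*$ is the only piece of state of size $\omega(n)$. Moreover, $y^*$ is used for nothing other than producing the final output in the branch where $t$ reaches $T$: every test and update performed while streaming a row $i$ — checking $\left(C \cdot \hat{x}/W\right)_i < 1 - \epsilon/2$, running the binary search inside {\sc Step-size} on $\left(C \cdot z^\kappa/W\right)_i$, whacking constraint $i$ to update $\hat{x}$, and checking whether a new phase must begin — reads only $\hat{x}$, the $O(1)$ scalars, and the currently streamed row of $C$.

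First I would describe the modified algorithm. It runs exactly as the one in Theorem~\ref{th:streaming:main}, with two changes: (i) it never allocates or updates $y^*$; and (ii) in the branch where that algorithm would return $(\text{{\sc Null}}, y)$ — namely when the round counter reaches $t = T$ — it instead returns the symbol {\sc Null}. In the other branch it returns $x^{t} := \hat{x}^{t}/\norm{\hat{x}^{t}}_1$, which it can compute from the $\hat{x}^t$ it holds in memory. Because dropping $y^*$ does not alter the control flow in any way, this modified algorithm performs exactly the same sequence of phases, scans, enforcements, and whacks as the algorithm of Theorem~\ref{th:streaming:main}. In particular, it uses the same $O\!\left(\log n/\epsilon^2\right)$ passes (this bound ultimately comes from Corollary~\ref{lm:phase}), and now the only vector it retains is $\hat{x}^t \in \mathbb{R}_{\ge 0}^n$, so its space complexity is $O(n)$.

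It then remains to argue correctness. When the modified algorithm returns $x^{t}$, the guarantees $\mathbb{1}^{\top} x^{t} = 1$ and $C x^{t} \ge (1-\epsilon) \cdot \mathbb{1}$ are exactly those established in Theorem~\ref{th:streaming:main} (via Theorem~\ref{th:MWU:static:approx}). When it returns {\sc Null}, I would invoke Theorem~\ref{th:MWU:static:approx}: the execution implicitly defines the vector $y := (1/T) \cdot \sum_{t'=1}^{T} y^{t'}$ (even though the algorithm no longer materializes it), and that theorem certifies $\mathbb{1}^{\top} y = 1$ and $C^{\top} y \le (1+4\epsilon) \cdot \mathbb{1}$ — precisely the promised existential statement. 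This finishes the proof.

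I do not expect a genuine obstacle: the entire content is the bookkeeping observation that the $\mathbb{R}_{\ge 0}^m$ vector $y^*$ in the algorithm of Theorem~\ref{th:streaming:main} is needed only to \emph{report} the dual solution, not to \emph{run} the algorithm, so it can be discarded whenever the caller is content with the symbol {\sc Null}. The only point I would state explicitly is that one streaming pass still suffices to carry out one phase even though enforcing a constraint $i$ needs the entries of row $i$: this is fine because the enforcement happens at the moment row $i$ is read, so rows never have to be stored.
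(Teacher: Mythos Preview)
The proposal is correct and follows essentially the same approach as the paper: drop the maintenance of the $m$-dimensional running average $y^*$ from the algorithm of Theorem~\ref{th:streaming:main}, return {\sc Null} when $t=T$, and observe that the remaining state ($\hat{x}$, $W$, $t$) fits in $O(n)$ space while correctness still follows from Theorem~\ref{th:MWU:static:approx}. Your write-up is more explicit than the paper's, but the argument is identical.
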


\subsection{Online Whack-a-Mole MWU Algorithm}
\label{sec:online_MWU}
In this section, we focus on designing an online algorithm for Problem~\ref{prob:basic} in the following setting.  An adversary chooses a matrix $C \in [0, \lambda]^{m \times n}$ as input. In the beginning, the values of $n$ and $\lambda$ are public knowledge, whereas only the adversary knows the value of $m$ and the entries  of the matrix  $C$. Subsequently, the adversary  reveals to us the rows of this matrix one after another. We need to maintain a vector $x \in \mathbb{R}_{\geq 0}^{n}$ such that $\mathbb{1}^{\top} x \leq 1+\Theta(\epsilon)$ and $C x \geq (1-\Theta(\epsilon)) \cdot \mathbb{1}$ until a certain point in time, and after that we need to terminate our algorithm by returning a  $y \in \mathbb{R}_{\geq 0}^m$ such that $\mathbb{1}^{\top} y \geq 1 - \Theta(\epsilon)$ and $C^{\top} y \leq (1+\Theta(\epsilon)) \cdot \mathbb{1}$. We incur a recourse of one each time we decrease the value of some variable $\left(x\right)_j, j \in [n]$. Our goal is to design an algorithm in this online setting with as small total recourse as possible.

We observe that the dynamic whack-a-mole MWU algorithm, as described in Section~\ref{sec:dynamic_MWU}, seamlessly extends to this online setting. Thus, we maintain the vector $\tilde{x}^t := \hat{x}^t/W$. Whenever a new row $i \in [m]$ of the matrix $C$ arrives, we check whether $\left( C \cdot \frac{\hat{x}^t}{W} \right) < 1 - \epsilon/2$, and if the answer is yes, then we enforce the corresponding constraint $i \in [m]$.  The correctness of this algorithm follows from Theorem~\ref{th:MWU:static:approx:dyn}.

We now derive an upper bound on the number of phases. Corollary~\ref{cor:bound:weight:dyn} implies that $\norm{\hat{x}^t}_1 \leq n^{(1/\epsilon)}$ throughout the duration of the algorithm. Since  $\norm{\hat{x}^1}_1 = n$ at the start of the algorithm, and since we initiate a new phase whenever $\norm{\hat{x}^t}_1$ increases by a multiplicative factor of $1+\Theta(\epsilon)$, it follows that the total number of phases is at most $\log_{(1+\Theta(\epsilon))} n^{(1/\epsilon)} = O\left(\frac{\log n}{\epsilon^2}\right)$.

Next, note that within a given phase our algorithm incurs {\em zero} recourse. This holds because the values $\left(\hat{x}^t/W\right)_j$ can only increase when we enforce a constraint within the phase (as $W$ does not change). On the other hand,  we incur a  recourse of $n$ each time we initiate a new phase, since we have $n$ variables and  each of these variables decreases its value as we increase $W$ at the start of the phase. Hence, the total recourse of the algorithm is  $n$ times the number of phases. This leads  to the following theorem.

\begin{theorem}
\label{th:online:main}
Consider an online setting where the rows of the matrix $C \in [0, \lambda]^{m \times n}$ arrive one after another. Let $N$ denote the number of non-zero entries in $C$. There is a deterministic online algorithm with total recourse $O\left( \frac{n \log n}{\epsilon^2}\right)$
which has the following property. It maintains a vector $x \in \mathbb{R}_{\geq 0}^n$ with $\mathbb{1}^{\top} x \leq 1+\Theta(\epsilon)$ and $C x \geq (1-\Theta(\epsilon)) \cdot \mathbb{1}$ until a certain point in time, and after that it terminates and returns a vector $y \in \mathbb{R}_{\geq 0}^m$ with $\mathbb{1}^{\top} y \geq 1 - \Theta(\epsilon)$ and $C^{\top} y \leq (1+\Theta(\epsilon)) \cdot \mathbb{1}$.
\end{theorem}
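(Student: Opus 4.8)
The plan is to instantiate the dynamic whack-a-mole MWU algorithm of Section~\ref{sec:dynamic_MWU} in the online setting, treating the arrival of a new row of $C$ exactly like a ``restricting update'': when row $i \in [m]$ is revealed, we check whether the current scaled iterate $\tilde{x}^t := \hat{x}^t/W$ approximately satisfies it, i.e.\ whether $\left(C \tilde{x}^t\right)_i \geq 1-\epsilon/2$, and if not we invoke {\sc Enforce}$(i, t, \hat{x}^t, W)$ to repeatedly whack constraint $i$ until it is satisfied, starting a new phase (and resetting $W \leftarrow \norm{\hat{x}^t}_1$) whenever $\norm{\hat{x}^t}_1$ grows by a factor of $(1-\epsilon/2)^{-1}$. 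The vector we maintain at all times is $\tilde{x}^t$; as soon as the round counter $t$ reaches $T = \lambda \ln(n)/\epsilon^2$ we stop and return the averaged dual $y := (1/T) \cdot \sum_{t'=1}^{T} y^{t'}$, where $y^{t'} \in \Delta^m$ is the indicator of the constraint whacked in round $t'$.

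Correctness transfers almost verbatim from the dynamic analysis. The reduction to the experts setting in Section~\ref{sec:lm:whack:a:mole:key} uses the columns $j \in [n]$ as experts and one round per whack, so revealing new rows changes neither the expert set nor the round budget $T$; hence Lemma~\ref{lm:whack:a:mole:key} and its dynamic analogue Theorem~\ref{th:MWU:static:approx:dyn} apply unchanged. In particular, while $t < T$, between two phases every row revealed so far is approximately satisfied --- the coordinates of $\hat{x}$ only increase, so enforcing one constraint never un-satisfies another within a phase --- which yields $\mathbb{1}^{\top} \tilde{x}^t \leq 1+\Theta(\epsilon)$ and $C\tilde{x}^t \geq (1-\Theta(\epsilon)) \cdot \mathbb{1}$ over the revealed rows; and if $t$ hits $T$, the regret argument gives $\mathbb{1}^{\top} y = 1$ and $C^{\top} y \leq (1+4\epsilon) \cdot \mathbb{1}$. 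One point worth spelling out is that even though only a prefix of the rows has been revealed when we output $y$, the vector $y$ is supported only on the whacked rows, so $\left(C^{\top} y\right)_j = \sum_i C_{ij} y_i$ does not depend on the unrevealed rows and the bound $C^{\top} y \leq (1+4\epsilon) \cdot \mathbb{1}$ therefore certifies the full LP. (If all rows arrive before $t$ reaches $T$, then the maintained $\tilde{x}^t$ is simply a valid primal solution and no dual is output.)

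It remains to bound the recourse. Within a single phase the value of $W$ is frozen and every coordinate $\left(\hat{x}^t\right)_j$ is monotone nondecreasing, so every coordinate of $\tilde{x}^t = \hat{x}^t/W$ can only increase: the algorithm incurs \emph{zero} recourse inside a phase. A new phase begins only when $\norm{\hat{x}^t}_1$ increases by a multiplicative $(1-\epsilon/2)^{-1} = 1+\Theta(\epsilon)$ factor, and at that moment raising $W$ decreases all $n$ coordinates of $\tilde{x}^t$, costing recourse $n$. By Lemma~\ref{cor:bound:weight:dyn} we have $\norm{\hat{x}^t}_1 \leq n^{(1/\epsilon)}$ throughout, while $\norm{\hat{x}^1}_1 = n$ initially; hence the number of phases is at most $\log_{(1+\Theta(\epsilon))} n^{(1/\epsilon)} = O\left(\frac{\log n}{\epsilon^2}\right)$, and the total recourse is $n \cdot O\left(\frac{\log n}{\epsilon^2}\right) = O\left(\frac{n \log n}{\epsilon^2}\right)$, as claimed.

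I expect the only genuine subtlety --- and hence the place to be careful --- is confirming that the dynamic machinery transfers when ``updates'' are ``new constraints'' rather than ``restricting changes to existing entries'': one must check that (i) the round budget $T$ and the weight bound of Lemma~\ref{cor:bound:weight:dyn} are insensitive to this change, which holds since both are driven purely by the per-whack multiplicative growth of $\hat{x}$ and the number of experts is still $n$; and (ii) the dual $y$ remains a legitimate certificate for the whole matrix despite being computed from a prefix of rows, which is handled by the support argument above. Everything else is a direct translation of the statements and proofs in Section~\ref{sec:dynamic_MWU}.
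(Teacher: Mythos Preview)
Your proposal is correct and follows essentially the same approach as the paper: instantiate the dynamic whack-a-mole algorithm of Section~\ref{sec:dynamic_MWU}, observe that within a phase $W$ is fixed and $\hat{x}$ is coordinatewise nondecreasing so recourse is zero, and bound the number of phases by $O(\log n/\epsilon^2)$ via Lemma~\ref{cor:bound:weight:dyn}, yielding total recourse $n \cdot O(\log n/\epsilon^2)$. Your extra remarks---that the dual $y$ is supported only on already-whacked (hence already-revealed) rows, and that the round budget $T$ and weight bound depend only on $n$ and the per-whack growth of $\hat{x}$---are correct clarifications that the paper leaves implicit.
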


\subsection{Reductions to General Packing-Covering LPs}
\label{sec:reductions}

Consider a matrix $C \in \mathbb{R}_{\geq 0}^{m \times n}$, vectors $a \in \mathbb{R}_{> 0}^n$, $b \in \mathbb{R}_{> 0}^m$, and a generic covering LP defined below.
\begin{eqnarray}
\label{eq:covering:general}
\text{Minimise } a^{\top}  x  & \ \ & \text{ s.t. } C x \geq b \text{ and } x \in \mathbb{R}_{\geq 0}^n. 
\end{eqnarray}
The dual of the above LP is given by:
\begin{eqnarray}
\label{eq:packing:general}
\text{Maximise } b^{\top}  y  & \ \ & \text{ s.t. } C^{\top} y \leq a \text{ and } y \in \mathbb{R}_{\geq 0}^m. 
\end{eqnarray}

Let {\sc Opt} be the optimal objective value of this pair of primal and dual LPs. Say that  $x \in \mathbb{R}_{\geq 0}^n$ is an {\em $\epsilon$-approximate optimal solution} to LP~(\ref{eq:covering:general}) iff $a^{\top} x \leq (1 + \Theta(\epsilon)) \cdot \text{{\sc Opt}}$ and $Cx \geq (1-\Theta(\epsilon)) \cdot b$. Similarly, say that $y \in \mathbb{R}_{\geq 0}^m$ is an {\em $\epsilon$-approximate optimal solution} to LP~(\ref{eq:packing:general}) iff $b^{\top} y \geq (1 - \Theta(\epsilon)) \cdot \text{{\sc Opt}}$ and $C^{\top} y \leq (1+\Theta(\epsilon)) \cdot a$.\footnote{This notion of an $\epsilon$-approximation is equivalent to the one defined at the start of Section~\ref{sec:intro}, upto a multiplicative factor of $1+\Theta(\epsilon)$. To see why this is true, consider an $x$ such that $1^{\top} x \leq (1+\Theta(\epsilon)) \cdot \text{{\sc Opt}}$ and $Cx \geq (1-\Theta(\epsilon)) \cdot b$. Then it follows that $1^{\top} \tilde{x} \leq (1+\Theta(\epsilon)) \cdot \text{{\sc Opt}}$ and $C \tilde{x} \geq b$, where $\tilde{x} := \frac{x}{1-\Theta(\epsilon)}$. A similar argument holds for an $\epsilon$-approximation to LP~(\ref{eq:packing:general}).}  We will now explain how to use the whack-a-mole MWU algorithm to obtain  $\epsilon$-approximate optimal solutions to this pair of LPs in static, dynamic, streaming and online settings. 

Specifically, we describe how  an algorithm for Problem~\ref{prob:basic} can be used in a black-box manner to solve general packing-covering LPs. Such a reduction works by  first guessing the value of {\sc Opt}, and then showing an equivalence between: (a) finding a solution with objective $= \text{{\sc Opt}}$ and (b)  solving Problem~\ref{prob:basic}. The equivalence follows from standard scaling techniques.

\subsubsection{Static Setting}
\label{sec:reductions:static}

In this section, we show how to prove the following theorem.

\begin{theorem}
\label{th:main:static}
There is a deterministic $\epsilon$-approximation algorithm for solving LP~(\ref{eq:covering:general}) that runs in $O\left(N \cdot \frac{\log (n)}{\epsilon^3}  \cdot \log^2 \left( \frac{nU \log (n)}{\epsilon L}\right) \cdot \log (nU/L)\right)$ time. Here, $N$ denotes  the number of non-zero entries in $C$, whereas $L$ (resp.~$U$) respectively denotes a lower (resp.~upper) bound on the minimum (resp.~maximum) value of any non-zero entry in $C, a, b$.  The same guarantee also holds for solving  LP~(\ref{eq:packing:general}). 
\end{theorem}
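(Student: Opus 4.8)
An $\epsilon$-approximation algorithm for the general covering LP~(\ref{eq:covering:general}) (and its dual packing LP) running in near-linear time, where the overhead depends polylogarithmically on $n$, $U/L$, and $1/\epsilon$.

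The plan is to reduce the general LP to Problem~\ref{prob:basic} by a combination of (i) a normalization/scaling that puts the LP in the ``standard form'' $\min \mathbb{1}^{\top} x$ s.t. $Cx \geq \mathbb{1}$ handled by the whack-a-mole algorithm, and (ii) a guess for the optimal value $\text{{\sc Opt}}$. First I would reduce to the case $a = \mathbb{1}$ and $b = \mathbb{1}$: replace variable $x_j$ by $a_j x_j$ (so the objective becomes $\mathbb{1}^{\top} x$) and replace constraint $i$ by dividing it through by $b_i$ (so the RHS becomes $\mathbb{1}$). This only rescales the entries of $C$; the new matrix has entries in $[C_{\min}/(\,a_{\max} b_{\max}), C_{\max}/(a_{\min} b_{\min})]$, so its ``width'' (ratio of largest to smallest nonzero) is at most $(U/L)^{3}$, and the number of nonzeros is unchanged. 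So WLOG we have a covering LP $\min \mathbb{1}^{\top} x$ s.t. $\bar{C} x \geq \mathbb{1}$, $x \geq 0$, with $\text{{\sc Opt}}$ its optimal value.

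Next I would guess $\text{{\sc Opt}}$. Since every nonzero entry of $\bar C$ lies in a known range, one gets a priori bounds $\text{{\sc Opt}} \in [\text{{\sc lo}}, \text{{\sc hi}}]$ with $\text{{\sc hi}}/\text{{\sc lo}} = \poly(nU/L)$; e.g. a single variable covers one constraint, giving an upper bound, and duality/feasibility gives a lower bound. We try geometrically-spaced guesses $g = \text{{\sc lo}}, (1+\epsilon)\text{{\sc lo}}, (1+\epsilon)^2 \text{{\sc lo}}, \dots, \text{{\sc hi}}$, which is $O(\log(nU/L)/\epsilon)$ guesses. For a fixed guess $g$, scaling $x \mapsto x/g$ transforms ``find $x$ with $\mathbb{1}^{\top} x \leq g$ and $\bar C x \geq \mathbb{1}$'' into exactly Problem~\ref{prob:basic} with matrix $C := g \cdot \bar C$ (up to the $\Theta(\epsilon)$ slack already built into Problem~\ref{prob:basic}): a primal answer $x^t$ with $\mathbb{1}^{\top} x^t \leq 1+\Theta(\epsilon)$, $Cx^t \geq (1-\Theta(\epsilon))\mathbb{1}$ yields, after rescaling by $g/(1-\Theta(\epsilon))$, a feasible $x$ with objective $\leq (1+\Theta(\epsilon)) g$; a dual answer $y$ with $C^{\top} y \leq (1+\Theta(\epsilon))\mathbb{1}$, $\mathbb{1}^{\top} y = 1$ certifies $\text{{\sc Opt}} > g/(1+\Theta(\epsilon))$, i.e. the guess was too small. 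I would run binary search over the geometric grid of guesses: each guess invokes Theorem~\ref{th:MWU:static:runtime} with $\lambda = g \cdot \bar C_{\max} = \poly(nU/L)$, costing $O\!\left(N \cdot \frac{\log n}{\epsilon^2} \cdot \log^2\!\left(\frac{\lambda \log n}{\epsilon}\right)\right) = O\!\left(N \cdot \frac{\log n}{\epsilon^2} \cdot \log^2\!\left(\frac{nU\log n}{\epsilon L}\right)\right)$ time, and there are $O(\log(nU/L)/\epsilon)$ (or, with binary search, $O(\log(\log(nU/L)/\epsilon))$) guesses. Taking the smallest guess $g^\star$ that produces a primal solution, the returned $x$ satisfies $\mathbb{1}^{\top}x \leq (1+\Theta(\epsilon))g^\star \leq (1+\Theta(\epsilon))(1+\epsilon)\text{{\sc Opt}} = (1+\Theta(\epsilon))\text{{\sc Opt}}$, which is the claimed approximation; the running time is the product, matching the stated bound (using an outer linear sweep over the $O(\log(nU/L)/\epsilon)$ guesses to get the factor $\log(nU/L)$ rather than a binary search — this is what the statement records). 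The packing direction is symmetric: the smallest guess yielding a primal $x$ is the largest yielding a dual certificate $y$, and that $y$, rescaled, is an $\epsilon$-approximate optimal packing solution.

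The main obstacle I anticipate is handling the width parameter $\lambda$ carefully: Problem~\ref{prob:basic} and Theorem~\ref{th:MWU:static:runtime} assume $C \in [0,\lambda]^{m\times n}$ and the running time depends (polylogarithmically) on $\lambda$, but also the number of MWU iterations $T = \lambda \ln n / \epsilon^2$ depends \emph{linearly} on $\lambda$. Since I set $C = g\bar C$ and $g$ can be as large as $\poly(nU/L)$, I must be sure that $\lambda = g \cdot \bar C_{\max}$ only enters the final time bound through a $\log^2(\lambda \log n/\epsilon) = \log^2(nU\log n/(\epsilon L))$ factor, which is indeed the case since Theorem~\ref{th:MWU:static:runtime} charges $O(N \log n/\epsilon^2 \cdot \log^2(\lambda \log n/\epsilon))$ regardless of how big $\lambda$ is (the $\lambda$ inside $T$ cancels against the phase count, as in Corollary~\ref{lm:phase}). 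A secondary subtlety is making the a priori bounds $[\text{{\sc lo}},\text{{\sc hi}}]$ on $\text{{\sc Opt}}$ explicit in terms of $n, U, L$ so that the guess grid has the claimed $O(\log(nU/L)/\epsilon)$ size — this is routine but needs the strict positivity of $a,b$ (guaranteed by hypothesis $a \in \mathbb{R}_{>0}^n$, $b \in \mathbb{R}_{>0}^m$) and a lower bound $L$ on nonzero entries of $C$ so that no constraint is ``infinitely hard'' to cover.
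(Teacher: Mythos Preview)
Your proposal is correct and follows essentially the same approach as the paper: normalize the LP via the substitution $C'_{ij} := C_{ij}/(a_j b_i)$ to get objective $\mathbb{1}^{\top}x$ and right-hand side $\mathbb{1}$, then try $\Theta(\log(nU/L)/\epsilon)$ geometrically-spaced guesses $\mu$ for $\text{{\sc Opt}}$, each of which becomes an instance of Problem~\ref{prob:basic} with matrix $\mu C'$ and $\lambda \le nU^3/L^3$, solved via Theorem~\ref{th:MWU:static:runtime}. Your identification of the key subtlety---that $\lambda$ enters only logarithmically in the per-guess running time, so the width blow-up from scaling by $\mu$ is harmless---is exactly the point, and the paper makes the a~priori bounds you flagged explicit as $L^2/U \le \text{{\sc Opt}} \le nU^2/L$.
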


Define a matrix $C' \in \mathbb{R}_{\geq 0}^{m \times n}$ such that $C'_{ij} := C_{ij}/(a_j b_i)$ for all $i \in [m], j \in [n]$. It is easy to check that LP~(\ref{eq:covering:general}) and LP~(\ref{eq:packing:general}) can be equivalently written as:
\begin{eqnarray}
\label{eq:covering:general:1}
\text{Minimise } \mathbb{1}^{\top}  x  & \ \ & \text{ s.t. } C' x \geq \mathbb{1} \text{ and } x \in \mathbb{R}_{\geq 0}^n. \\
\label{eq:packing:general:1}
\text{Maximise } \mathbb{1}^{\top}  y  & \ \ & \text{ s.t. } (C')^{\top} y \leq \mathbb{1} \text{ and } y \in \mathbb{R}_{\geq 0}^m. 
\end{eqnarray}
We can compute the matrix $C'$ in $O(N)$ time. Furthermore, the maximum (resp.~minimum) value of any non-zero entry of $C'$ is upper (resp.~lower) bounded by $U/L^2$ (resp.~$L/U^2$).  Let {\sc Opt} denote the optimal objective value of LP~(\ref{eq:covering:general:1}) and LP~(\ref{eq:packing:general:1}). Note that  $L^2/U \leq \text{{\sc Opt}} \leq nU^2/L$. We discretize this range $[L^2/U, nU^2/L]$ in powers of $(1+\epsilon)$, to obtain $\log_{(1+\epsilon)} (n U^3/L^3) = \Theta(\log (n U/L)/\epsilon)$ many {\em guesses} for {\sc Opt}. For each such guess $\mu$, we consider the following problem.
\begin{problem}
\label{eq:covering:general:2}
Either return an  $x \in \mathbb{R}_{\geq 0}^n$ such that  $\mathbb{1}^{\top}  x \leq (1+\Theta(\epsilon)) \cdot \mu$ and  $C' x \geq (1-\Theta(\epsilon)) \cdot \mathbb{1}$, or return a $y \in \mathbb{R}_{\geq 0}^m$ such that $\mathbb{1}^{\top}  y \geq (1-\Theta(\epsilon)) \cdot \mu$ and  $(C')^{\top} y \leq (1+\Theta(\epsilon)) \cdot \mathbb{1}$.
\end{problem}
Next, define a matrix $C'' \in \mathbb{R}_{\geq 0}^{m \times n}$ where $C''_{ij} := \mu \cdot C_{ij}$ for all $i \in [m], j \in [n]$, and observe that Problem~\ref{eq:covering:general:2}  is equivalent to the following problem.
\begin{problem}
\label{eq:covering:general:3}
Either return an  $x \in \mathbb{R}_{\geq 0}^n$ such that  $\mathbb{1}^{\top}  x \leq 1+\Theta(\epsilon)$ and  $C'' x \geq (1-\Theta(\epsilon)) \cdot \mathbb{1}$, or return a $y \in \mathbb{R}_{\geq 0}^m$ such that $\mathbb{1}^{\top}  y \geq 1-\Theta(\epsilon)$ and  $(C'')^{\top} y \leq (1+\Theta(\epsilon)) \cdot \mathbb{1}$.
\end{problem}
Note that $C'' \in [0, \lambda]^{m\times n}$ where $\lambda = \mu \cdot (U/L^2) \leq n U^3/L^3$. Accordingly, we can solve Problem~\ref{eq:covering:general:3} by using our whack-a-mole MWU algorithm from Section~\ref{sec:dual_MWU}. According to Theorem~\ref{th:MWU:static:runtime}, this takes $O\left(N \cdot \frac{\log (n)}{\epsilon^2}  \cdot \log^2 \left( \frac{nU^3 \log (n)}{\epsilon L^3}\right) \right) = O\left(N \cdot \frac{\log (n)}{\epsilon^2}  \cdot \log^2 \left( \frac{nU \log (n)}{\epsilon L}\right) \right)$ time.

It is easy to check that we can recover $\epsilon$-approximate optimal solutions to LP~(\ref{eq:covering:general}) and LP~(\ref{eq:packing:general}) if we solve Problem~\ref{eq:covering:general:3} for each of the $\Theta(\log (n U/L)/\epsilon)$ many guesses $\mu$. This leads us to Theorem~\ref{th:main:static}.

\subsubsection{Dynamic Setting}
\label{sec:reductions:dynamic}

In this section, we show how to prove the following theorem.

\begin{theorem}
\label{th:main:dynamic:restricting}
Consider any sequence of $\tau$ restricting updates to LP~(\ref{eq:covering:general}), where each update either decreases an entry of  $C$, or increases an entry of $a, b$. Let $N$ denote the total number of non-zero entries in $C$ at preprocessing. Let $L$ (resp.~$U$) denote a lower (resp.~upper) bound on the minimum (resp.~maximum) value of a non-zero entry of $C, a, b$. We can deterministically maintain an $\epsilon$-approximate optimal solution to LP~(\ref{eq:covering:general}) in $O\left(\tau \cdot \frac{\log (nU/L)}{\epsilon} +  N \cdot \frac{\log (n)}{\epsilon^3} \cdot \log^3 \left( \frac{n U \log (n)}{\epsilon L} \right) \cdot \log (nU/L)\right)$ total  time. The same guarantee also holds for maintaining an $\epsilon$-approximate solution to LP~(\ref{eq:packing:general}), which undergoes relaxing updates.
\end{theorem}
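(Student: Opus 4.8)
The plan is to dynamize the static reduction of \Cref{sec:reductions:static}. Recall that there we rescaled LP~(\ref{eq:covering:general}) into standard form by setting $C'_{ij} := C_{ij}/(a_j b_i)$, discretized the interval $[L^2/U,\,nU^2/L]$ (which always contains $\opt$) in powers of $(1+\eps)$ to obtain $\Theta(\log(nU/L)/\eps)$ guesses $\mu$ for $\opt$, and for each guess $\mu$ solved Problem~\ref{eq:covering:general:3} on the scaled matrix $C'' := \mu C'$ by the static whack-a-mole MWU algorithm; an $\eps$-approximate optimal solution to LP~(\ref{eq:covering:general}) was then recovered, by unscaling, from the primal vector returned for the smallest $\mu$ that yields one. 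The key observation that makes this reduction dynamic is that a restricting update to LP~(\ref{eq:covering:general}) --- decreasing an entry of $C$, or increasing an entry of $a$ or $b$ --- can only \emph{decrease} entries of $C'$, and hence of every $C''$. So the plan is: for each guess $\mu$ run an independent copy of the dynamic whack-a-mole MWU algorithm of \Cref{sec:dynamic_MWU} on $C''$, feed it the induced entry-decreasing updates, and after each update report the solution recovered from the instance with the smallest $\mu$ that currently returns a primal vector. Since nonzero entries of $C,a,b$ stay in $[L,U]$, each entry of $C''$ stays in $[0,\lambda]$ with $\lambda := \mu U/L^2 \le nU^3/L^3$ fixed, the guess grid is fixed once at preprocessing, and $\opt$ never leaves its range; correctness of each instance is then immediate from \Cref{th:MWU:static:approx:dyn}, and the unscaling and the choice of $\mu$ are exactly as in the static proof, yielding $a^\top x \le (1+\Theta(\eps))\mu \le (1+\Theta(\eps))\opt$ and $Cx \ge (1-\Theta(\eps))\cdot b$.

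For the running time, there are $\Theta(\log(nU/L)/\eps)$ instances. An update to a single entry $C_{ij}$ induces exactly one entry-decreasing update of the corresponding entry of each $C''$, and re-checking and (amortized) enforcing the affected constraint $i$ costs $O(1)$ per instance; thus, by \Cref{th:MWU:static:runtime:dyn}, the total work over any one instance is $O\big(\tau + N\cdot\tfrac{\log n}{\eps^2}\cdot\log^3(\tfrac{\lambda\log n}{\eps})\big)$, and multiplying by the number of instances and substituting $\lambda \le nU^3/L^3$ gives $O\big(\tau\cdot\tfrac{\log(nU/L)}{\eps} + N\cdot\tfrac{\log n}{\eps^3}\cdot\log^3(\tfrac{nU\log n}{\eps L})\cdot\log(nU/L)\big)$, the claimed bound; this also dominates the preprocessing cost (running the algorithm of \Cref{th:main:static}).

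The main obstacle is that an update increasing $a_j$ (resp.\ $b_i$) is \emph{not} a single-entry update: it rescales an entire column (resp.\ row) of $C''$, so a naive implementation would feed $\Omega(\min\{m,n\})$ entry-updates per step to each instance, far exceeding the budget --- this is the point hidden in the footnote of \Cref{sec:dynamic_MWU} asserting that all updates may be assumed to hit $C$. To handle it I would maintain, exactly as in the proof of \Cref{th:MWU:static:runtime:dyn}, $(1+\eps)$-power approximations $\hat a_j$ of $a_j$ and $\hat b_i$ of $b_i$, and run each instance on the still-restricting matrix whose $(i,j)$ entry is $\mu C_{ij}/(\hat a_j \hat b_i)$. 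Since $a_j,b_i \in [L,U]$ and only increase, each $\hat a_j$ and each $\hat b_i$ changes only $O(\log(U/L)/\eps)$ times over the whole sequence; a change of $\hat a_j$ costs $O(N^j)$ time (where $N^j$ is the number of nonzeros in column $j$) to refresh the maintained partial sums $R_i := \sum_j (C_{ij}/\hat a_j)\,\hat z_j$ used to test constraint satisfaction --- here $\hat z_j$ is the power-approximation of the weight $\hat x_j$ already maintained by the algorithm of \Cref{sec:dynamic_MWU} --- for a total of $O(N\log(U/L)/\eps)$ per instance, absorbed into the $N$-term above; a change of $\hat b_i$ costs $O(1)$ to re-test constraint $i$. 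The one point to re-verify is that \Cref{lm:key:dynamic}'s amortized bound on the number of enforcements of any given constraint survives these row/column rescalings; its proof only uses that the weights $\hat x_j$ are nondecreasing and at most $n^{1/\eps}$ (\Cref{cor:bound:weight:dyn}), which is insensitive to the \emph{source} of the entry-decreases, so it carries over verbatim. Hence each $a$- or $b$-update costs only $O(1)$ amortized per instance beyond the $N$-dependent bulk work, i.e.\ $O(\log(nU/L)/\eps)$ per update overall. Finally, the symmetric statement for LP~(\ref{eq:packing:general}) under relaxing updates follows by LP duality.
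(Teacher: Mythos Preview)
Your proposal is correct and takes essentially the same approach as the paper: discretize $\opt$ into $\Theta(\log(nU/L)/\eps)$ guesses, reduce each to Problem~\ref{eq:covering:general:3}, and handle $a,b$-updates by batching them at $(1+\eps)$-multiplicative granularity so that each entry of $a,b$ triggers only $O(\log(U/L)/\eps)$ ``meaningful'' changes over the whole run. The only cosmetic difference is that the paper simply counts the induced entry-updates to $C''$ (at most $\tau + O(N\log(U/L)/\eps)$ of them) and plugs directly into \Cref{th:MWU:static:runtime:dyn}, whereas you spell out the data structure via $\hat a_j,\hat b_i,R_i$; both lead to the same bound.
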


As in Section~\ref{sec:reductions:static}, we make $\Theta(\log (n U/L)/\epsilon)$ many {\em guesses} for {\sc Opt}. For each such guess $\mu$, we maintain a solution to Problem~\ref{eq:covering:general:2}, which, in turn, is equivalent to Problem~\ref{eq:covering:general:3}. This allows us to maintain $\epsilon$-approximate optimal solutions to LP~(\ref{eq:covering:general}) and LP~(\ref{eq:packing:general}) in the dynamic setting.

Fix any guess $\mu$, and focus on the corresponding instance of Problem~\ref{eq:covering:general:3}.  Since we are satisfied with an $\epsilon$-approximate optimal solution, we can safely assume that any update with a {\em meaningful} impact on the solution changes the corresponding entry by at least a multiplicative factor of $(1+\epsilon)$. Henceforth, we only focus on handling these meaningful updates. Note that each entry of $C, a, b$ participates in at most $\log_{(1+\epsilon)} (U/L) = O\left(\frac{\log (U/L)}{\epsilon} \right)$ many meaningful updates. Furthermore, each time there is a meaningful restricting update to some $b_i$, it leads to at most $O(N_i)$ many restricting entry updates to the matrix $C''$, where $N_i$ is the number of non-zero entries in the $i^{th}$ row of $C$ at preprocessing. Similarly, whenever there is a meaningful restricting update to some $a_j$, it leads to at most $O(N^j)$ many restricting entry updates to the matrix $C''$, where $N^j$ is the number of non-zero entries in the $j^{th}$ column of $C$ at preprocessing. 

Consider any sequence of $\tau$ restricting updates to LP~(\ref{eq:covering:general}). By the above discussion, this leads to at most $\tau + \left( \sum_{j \in [n]} N^j  + \sum_{i \in [m]} N_i \right) \cdot O\left(\frac{\log (U/L)}{\epsilon}\right)  = \tau + O\left(N \cdot \frac{\log (U/L)}{\epsilon} \right)$ many restricting entry updates in  Problem~\ref{eq:covering:general:3}, which we solve using the  algorithm from Section~\ref{sec:dynamic_MWU}. As $\lambda \leq nU^3/L^3$,   Theorem~\ref{th:MWU:static:runtime:dyn} implies a total update time of:
$$O\left(\tau + N \cdot \frac{\log (U/L)}{\epsilon} + N \cdot \frac{\log (n)}{\epsilon^2} \cdot \log^3 \left( \frac{\lambda \log (n)}{\epsilon} \right)\right) = O\left(\tau +  N \cdot \frac{\log (n)}{\epsilon^2} \cdot \log^3 \left( \frac{n U \log (n)}{\epsilon L} \right)\right).$$

Theorem~\ref{th:main:dynamic:restricting} follows since we run $\Theta\left(\frac{\log (n U/L)}{\epsilon}\right)$  copies of the dynamic algorithm, one for each $\mu$.

\subsubsection{Streaming Setting}
\label{sec:reductions:streaming}

In this section, we consider the streaming setting where the rows of the constraint matrix of a covering LP (or equivalently, the columns of the constraint matrix of the dual packing LP) are arriving one after another. We show how to prove the following two theorems.

\begin{theorem}
\label{th:main:streaming:packing}
Consider a streaming setting where the columns of the constraint matrix of LP~(\ref{eq:packing:general}) arrive one after another. There is a deterministic  algorithm with space complexity $O(m+n)$ that makes $O\left( 
\frac{\log n \cdot \log (nU/L)}{\epsilon^3} \right)$ passes through the stream, and returns an $\epsilon$-approximate optimal solution to LP~(\ref{eq:packing:general}). Here, $L$ (resp.~$U$) denotes a lower (resp.~upper) bound on the value of any non-zero entry of $C, a, b$.
\end{theorem}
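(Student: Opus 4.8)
The plan is to obtain Theorem~\ref{th:main:streaming:packing} by composing the streaming whack-a-mole algorithm of Theorem~\ref{th:streaming:main} with the scaling-and-guessing reduction already carried out in Section~\ref{sec:reductions:static}, after checking that this reduction is compatible with the streaming model. First I would observe that the columns of the constraint matrix $C^{\top}$ of LP~(\ref{eq:packing:general}) are exactly the rows of $C$: the column of $C^{\top}$ indexed by $i \in [m]$ is the transpose of the $i$-th row of $C$, and it carries the packing objective coefficient $b_i$ of the variable $y_i$ associated with it. Thus the stream presents the rows of $C$ one by one, each tagged with its $b_i$, and this is precisely the setting handled in Section~\ref{sec:streaming_MWU}. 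The vector $a \in \mathbb{R}_{>0}^n$ (the right-hand side of the $n$ packing constraints) has size $n$ and can be stored in memory throughout, using $O(n)$ space. With $a$ in memory and $b_i$ arriving alongside row $i$ of $C$, we can generate, on the fly during a pass, the entire $i$-th row of the scaled matrix $C'' \in [0,\lambda]^{m \times n}$ of Section~\ref{sec:reductions:static}, whose entries are $C''_{ij} = \mu \cdot C_{ij}/(a_j b_i)$ for a fixed guess $\mu$ of $\text{{\sc Opt}}$; here $\lambda \le nU^3/L^3$.

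Next I would set up the guesses exactly as in Section~\ref{sec:reductions:static}: after the normalization, the optimum of the resulting standard-form LP lies in a range of polynomial width, which we discretize in powers of $(1+\epsilon)$ into $\Theta(\log(nU/L)/\epsilon)$ guesses $\mu$. For each $\mu$, Problem~\ref{eq:covering:general:2} reduces to Problem~\ref{eq:covering:general:3} on $C''$, which is an instance of Problem~\ref{prob:basic}, and we solve it with the streaming whack-a-mole algorithm of Theorem~\ref{th:streaming:main}. That algorithm keeps only the weight vector $\hat{x} \in \mathbb{R}_{\geq 0}^n$ (together with its $O(n)$-space rounded estimate from Section~\ref{sec:streaming_MWU}), the scalars $W, t$, and the accumulated dual vector $y^* = (1/T) \sum_{t'} y^{t'} \in \mathbb{R}_{\geq 0}^m$, for $O(m+n)$ total space; since each pass implements one phase and there are $O(\log n/\epsilon^2)$ phases (Corollary~\ref{lm:phase}), each guess costs $O(\log n/\epsilon^2)$ passes. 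Running the guesses \emph{sequentially}, reusing the same memory across guesses, keeps the space at $O(m+n)$ and gives a total of $\Theta(\log(nU/L)/\epsilon) \cdot O(\log n/\epsilon^2) = O(\log n \cdot \log(nU/L)/\epsilon^3)$ passes, as claimed.

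Finally I would recover an $\epsilon$-approximate optimal solution to LP~(\ref{eq:packing:general}) exactly as in Section~\ref{sec:reductions:static}: process the guesses in decreasing order and retain the largest $\mu$ for which the call returns a dual (packing-feasible) vector; undoing the normalization and rescaling by a $(1 + \Theta(\epsilon))$ factor turns it into a feasible solution of LP~(\ref{eq:packing:general}) with objective $\geq (1-\Theta(\epsilon)) \cdot \text{{\sc Opt}}$, because the next larger guess returning a primal certificate bounds $\text{{\sc Opt}}$ from above. Since we store only the $y^*$ of the guess currently being processed, this does not blow up the space. The only point requiring slight care — and the closest thing to an obstacle — is confirming that the streaming model genuinely supplies $a$ and the $b_i$'s so that the rows of $C''$ can be produced within the $O(m+n)$ space budget; beyond that, the argument is a black-box invocation of Theorem~\ref{th:streaming:main} plus the already-established reduction, so no new analysis of the MWU dynamics is needed.
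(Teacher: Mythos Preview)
Your proposal is correct and follows essentially the same approach as the paper: reduce to Problem~\ref{eq:covering:general:3} via the scaling-and-guessing of Section~\ref{sec:reductions:static}, invoke Theorem~\ref{th:streaming:main} for each of the $\Theta(\log(nU/L)/\epsilon)$ guesses, and multiply out the pass count. You supply more detail than the paper does---in particular the observation that the arriving columns of $C^{\top}$ are exactly the rows of $C$ (so Theorem~\ref{th:streaming:main} applies directly), and the check that $a$ and the $b_i$'s are available so that rows of $C''$ can be formed on the fly---but the underlying argument is the same.
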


\begin{theorem}
\label{th:main:streaming:covering}
Consider a streaming setting where the rows of the constraint matrix of LP~(\ref{eq:covering:general}) arrive one after another. There is a deterministic algorithm with space complexity $O(n)$ that makes $O\left( 
\frac{\log n \cdot \log (nU/L)}{\epsilon^3} \right)$ passes through the stream, and returns an $\epsilon$-approximate optimal solution to LP~(\ref{eq:covering:general}). Here, $L$ (resp.~$U$) denotes a lower (resp.~upper) bound on the value of any non-zero entry of $C, a, b$.
\end{theorem}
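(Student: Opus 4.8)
The plan is to run the scaling-based reduction of \Cref{sec:reductions:static} essentially verbatim, but with the static LP solver replaced by the one-pass streaming primitive of \Cref{cor:streaming:main}, and with the $\Theta(\log(nU/L)/\epsilon)$ many ``guesses'' for $\opt$ processed \emph{sequentially} (reusing the same $O(n)$ working memory) rather than in parallel. Concretely, first I would rescale LP~(\ref{eq:covering:general}) to the normalized form~(\ref{eq:covering:general:1}) via $C'_{ij}:=C_{ij}/(a_jb_i)$, note that $L^2/U\le\opt\le nU^2/L$, and discretize this range in powers of $(1+\epsilon)$ to get $\Theta(\log(nU/L)/\epsilon)$ guesses $\mu$; for each $\mu$, solving Problem~\ref{eq:covering:general:2} is equivalent to solving Problem~\ref{eq:covering:general:3} for the matrix $C'':=\mu C'\in[0,\lambda]^{m\times n}$ with $\lambda\le nU^3/L^3$. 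The observation that makes this fit the row-streaming model is that none of these rescalings costs extra space: the objective vector $a\in\mathbb{R}_{>0}^n$ fits in $O(n)$ space and is stored once, and each RHS entry $b_i$ arrives together with row $i$ of $C$, so the row $C''_{i\cdot}=\mu\, C_{i\cdot}/(a_{\cdot}b_i)$ can be assembled on the fly exactly when row $i$ streams past; we never store $C$, $C'$, or $C''$.

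Next I would, for each guess $\mu$ in increasing order, run the streaming algorithm of \Cref{cor:streaming:main} on the (implicitly rescaled) matrix $C''$. That algorithm uses $O(n)$ space, makes $O(\log n/\epsilon^2)$ passes, and either returns a vector $x$ with $\mathbb{1}^{\top}x=1$ and $C''x\ge(1-\epsilon)\cdot\mathbb{1}$, or returns \textsc{Null}, in which case a dual vector $y$ with $\mathbb{1}^{\top}y=1$ and $(C'')^{\top}y\le(1+4\epsilon)\cdot\mathbb{1}$ is guaranteed to exist. I would stop at the first guess $\mu$ that returns a vector $x$ (one always exists, since the largest guess $\ge\opt$ admits a feasible primal). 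Unwinding the rescalings, this $x$ yields a feasible solution to LP~(\ref{eq:covering:general:1}) of objective $\le\mu/(1-\epsilon)$, while the fact that the previous guess $\mu/(1+\epsilon)$ returned \textsc{Null} gives, by weak LP duality applied to the certified dual solution, $\opt\ge\mu/\big((1+4\epsilon)(1+\epsilon)\big)$; together these show the recovered $x$ (mapped back to the original coordinates via $x_j\mapsto x_j/a_j$) is an $\epsilon$-approximate optimal solution to LP~(\ref{eq:covering:general}). This correctness argument is exactly the ``duality sandwich'' already used in \Cref{sec:reductions:static}, so I would just cite it. Reusing the $O(n)$ memory across guesses keeps the total space at $O(n)$; the total number of passes is at most the number of guesses times $O(\log n/\epsilon^2)$, i.e.\ $O\!\left(\frac{\log n\cdot\log(nU/L)}{\epsilon^3}\right)$, as claimed.

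The substantive parts — the power-of-$(1+\epsilon)$ discretization, the equivalence of Problems~\ref{eq:covering:general:2} and~\ref{eq:covering:general:3}, and the weak-duality sandwich — are all imported unchanged from \Cref{sec:reductions:static}, and the pass/approximation guarantees of the inner algorithm are imported from \Cref{cor:streaming:main}; so there is no genuinely hard step. The one point that needs care — the ``main obstacle'', such as it is — is checking that the whole reduction is implementable in $O(n)$ space within the \emph{row}-streaming model, which rests on three things that must be pointed out explicitly: (i) the objective $a$ is $n$-dimensional and can be held in memory, (ii) for each $i$ the data $(C_{i\cdot},b_i)$ arrive together, so the normalized/scaled row is computable with no look-ahead, and (iii) the guesses for $\opt$ are run one after another rather than simultaneously (a parallel implementation would multiply the space by the number of guesses). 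Once these are noted, \Cref{th:main:streaming:covering} follows directly.
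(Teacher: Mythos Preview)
Your proposal is correct and follows essentially the same approach as the paper: invoke the reduction of \Cref{sec:reductions:static} and replace the static solver by the $O(n)$-space streaming primitive of \Cref{cor:streaming:main}, processing the $\Theta(\log(nU/L)/\epsilon)$ guesses one at a time. You are in fact more explicit than the paper on the space-complexity details (points (i)--(iii)), which the paper leaves implicit; these observations are correct and worth stating.
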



As in Section~\ref{sec:reductions:static}, we make $\Theta(\log (n U/L)/\epsilon)$ many {\em guesses} for {\sc Opt}. For each such guess $\mu$, we solve Problem~\ref{eq:covering:general:2} in the streaming setting, which, in turn, is equivalent to Problem~\ref{eq:covering:general:3}. This allows us to return $\epsilon$-approximate optimal solutions to LP~(\ref{eq:covering:general}) and LP~(\ref{eq:packing:general}).

Fix any guess $\mu$, and focus on the corresponding instance of Problem~\ref{eq:covering:general:3}. Suppose that we are allowed to have a space complexity of $O(m+n)$. By Theorem~\ref{th:streaming:main}, we can solve Problem~\ref{eq:covering:general:3} by making $O(\frac{\log n}{\epsilon^2})$ many passes through the stream. Theorem~\ref{th:main:streaming:packing} now follows since we need to solve Problem~\ref{eq:covering:general:3} for each guess $\mu$, which increases the number of passes by a multiplicative factor of $\Theta(\log (n U/L)/\epsilon)$.

Finally, Theorem~\ref{th:main:streaming:covering} follows if we apply the same strategy described above and use Corollary~\ref{cor:streaming:main}.


\subsubsection{Online Setting}
\label{sec:reductions:online}

In this section, we explain how to prove the following theorem.

\begin{theorem}
\label{th:main:online}
Consider an online setting where the constraints of  LP~(\ref{eq:covering:general}) arrive one after another. There is a deterministic  algorithm which maintains an $\epsilon$-approximate optimal solution to this LP with total recourse $O\left( n \cdot \frac{\log n}{\epsilon^3} \cdot \log \left(\frac{nU}{L}\right) \right)$. Here, $L$ (resp.~$U$) is a lower (resp.~upper) bound on the value of any non-zero entry of $C, a, b$.
\end{theorem}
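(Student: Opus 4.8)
The plan is to mimic the reductions already used in the static, dynamic, and streaming settings (Sections~\ref{sec:reductions:static}--\ref{sec:reductions:streaming}), but now invoking the online whack-a-mole MWU algorithm of Section~\ref{sec:online_MWU} (Theorem~\ref{th:online:main}) as the black box. First I would normalize: given the generic covering LP~(\ref{eq:covering:general}), define the matrix $C'$ by $C'_{ij} := C_{ij}/(a_j b_i)$, so that LP~(\ref{eq:covering:general}) becomes LP~(\ref{eq:covering:general:1}) with all-ones objective and right-hand side. Crucially, this transformation is \emph{column-local}: when a new constraint (row $i$) of the original LP arrives online, the corresponding row of $C'$ is obtained by dividing by $a_j$ (already known, since the objective is revealed up front) and by $b_i$ (revealed together with the constraint). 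Hence the online arrival of constraints of LP~(\ref{eq:covering:general}) translates into the online arrival of rows of $C'$, exactly the input model of Theorem~\ref{th:online:main}. As in Section~\ref{sec:reductions:static}, the nonzero entries of $C'$ lie in $[L/U^2, U/L^2]$ and $\text{{\sc Opt}} \in [L^2/U,\, nU^2/L]$.

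Next I would handle the unknown value of $\text{{\sc Opt}}$ by the same guessing scheme: discretize $[L^2/U,\, nU^2/L]$ in powers of $(1+\epsilon)$, giving $\Theta(\log(nU/L)/\epsilon)$ guesses $\mu$, and for each guess scale to the matrix $C''$ with $C''_{ij} := \mu \cdot C'_{ij} \in [0,\lambda]^{m\times n}$ where $\lambda = \mu \cdot (U/L^2) \le nU^3/L^3$. For each guess we run an independent copy of the online algorithm of Theorem~\ref{th:online:main} on Problem~\ref{eq:covering:general:3}; by that theorem each copy incurs total recourse $O(n\log n/\epsilon^2)$ (after rescaling $\lambda$, $\log\lambda$ only affects running time, not recourse). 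The final $\epsilon$-approximate solution to LP~(\ref{eq:covering:general}) is recovered from whichever guess $\mu$ produced a feasible primal $x$ with objective $\le (1+\Theta(\epsilon))\mu$ (equivalently, the smallest such $\mu$), exactly as in the static reduction; recourse on the recovered solution is bounded by the recourse on the underlying copy, up to the $O(1/(1-\Theta(\epsilon)))$ rescaling factor $\tilde x := x/(1-\Theta(\epsilon))$ which only scales values, not the number of decreases. Summing over the $\Theta(\log(nU/L)/\epsilon)$ guesses gives total recourse
\[
O\!\left(\frac{\log(nU/L)}{\epsilon}\cdot \frac{n\log n}{\epsilon^2}\right) = O\!\left(n\cdot \frac{\log n}{\epsilon^3}\cdot \log\!\left(\frac{nU}{L}\right)\right),
\]
which is the claimed bound.

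The one point that needs genuine (if brief) care — and which I expect to be the main obstacle — is the interaction between the online model and the guessing scheme: running $\Theta(\log(nU/L)/\epsilon)$ copies in parallel is fine for recourse, but I must argue that a \emph{single} maintained output can be derived from these copies online without extra recourse beyond the sum. The clean way is to note that each copy runs its online algorithm \emph{monotonically through phases}, and at any moment we simply output the primal vector of the active copy corresponding to the current best valid guess; switching to a different guess only happens $O(\log(nU/L)/\epsilon)$ times total (guesses are processed in a fixed order of increasing $\mu$, and we only ever move forward when the current guess's copy terminates by returning a dual $y$), so each switch adds at most $n$ recourse, which is already absorbed in the bound above. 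A secondary, purely bookkeeping issue is that in the online setting the number of constraints $m$ is not known in advance; but Theorem~\ref{th:online:main} is already stated in precisely this model (only $n$ and $\lambda$ are public knowledge), so no extra work is required. Everything else — correctness of the $\epsilon$-approximation, the equivalence of Problems~\ref{eq:covering:general:2} and~\ref{eq:covering:general:3}, and the $[L/U^2, U/L^2]$ entry bounds — is inherited verbatim from Section~\ref{sec:reductions:static}.
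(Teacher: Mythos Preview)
Your proposal is correct and follows essentially the same approach as the paper: normalize via $C'_{ij}=C_{ij}/(a_jb_i)$, run $\Theta(\log(nU/L)/\epsilon)$ parallel copies of the online whack-a-mole algorithm (Theorem~\ref{th:online:main}) for the discretized guesses of $\text{\sc Opt}$, and multiply the per-copy recourse $O(n\log n/\epsilon^2)$ by the number of guesses. The paper's own proof is in fact a two-sentence sketch that omits the points you worried about (extracting a single maintained output from the parallel copies, and the forward-only switching between guesses as $\text{\sc Opt}$ increases); your treatment of these is more careful than the paper's, and correct. One cosmetic slip: you write ``column-local'' where you mean that each arriving \emph{row} of $C'$ can be computed from the arriving row of $C$ together with the globally known $a$---your explanation makes clear this is what you intend.
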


As in Section~\ref{sec:reductions:static}, we make $\Theta(\log (n U/L)/\epsilon)$ many {\em guesses} for {\sc Opt}. For each such guess $\mu$, we solve Problem~\ref{eq:covering:general:2} in the online setting, which, in turn, is equivalent to Problem~\ref{eq:covering:general:3}. This allows us to maintain an $\epsilon$-approximate optimal solutions to LP~(\ref{eq:covering:general}). Theorem~\ref{th:main:online} now follows from Theorem~\ref{th:online:main} as the total recourse increases by a multiplicative factor of $\Theta(\log (nU/L)/\epsilon)$.

\subsection{Challenges for Dynamic Whack-a-Mole MWU Algorithm for Packing LPs}

We have so far focussed on the whack-a-mole MWU algorithm for covering LPs. For packing LPs, there exists an analogous whack-a-mole MWU algorithm.  To see this, consider the following problem.

\begin{problem}
\label{prob:basic:packing}
Given a matrix $P \in [0, \lambda]^{m \times n}$ where $\lambda > 0$, either return an $x \in \mathbb{R}_{\geq 0}^n$ with $\mathbb{1}^{\top} x \geq 1-\Theta(\epsilon)$ and $P x \leq (1+\Theta(\epsilon)) \cdot \mathbb{1}$, or return a $y \in \mathbb{R}_{\geq 0}^m$ with $\mathbb{1}^{\top} y \leq 1+\Theta(\epsilon)$ and $P^{\top} y \geq (1-\Theta(\epsilon)) \cdot \mathbb{1}$. 
\end{problem}

The basic template for a whack-a-mole algorithm for Problem~\ref{prob:basic:packing} is described in Figure~\ref{fig:dual_algorithm:packing} and Figure~\ref{fig:enforce:packing}. Note that this is completely analogous to the template in Section~\ref{sec:basic:template}.

 \begin{figure*}[htbp]
                                                \centerline{
                                                \framebox{
                                                {
                                                                \begin{minipage}{5.5in}
                                                                        \begin{tabbing}                                                                            
                                                                                01. \ \ \=  Define $T \leftarrow \frac{\lambda \ln (n)}{\epsilon^2}$, and two vectors $\hat{x}^1, x^1 \in \mathbb{R}_{\geq 0}^n$ where $\hat{x}^1 \leftarrow \mathbb{1}$ and $x^1 \leftarrow \frac{\hat{x}^1}{\norm{\hat{x}^1}_1}$. \\
                                                                                02. \> \sc{For} $t = 1$ to $T$:\\
                                                                                03. \> \ \ \ \ \ \ \ \ \ \= \sc{Either}\\
                                                                                04. \> \> \ \ \ \ \ \ \ \= Conclude that  $\left(P \cdot x^t\right)_{i} \leq 1 + \epsilon$ for all $i \in [m]$. \\
                                                                                05. \> \> \> Terminate the {\sc For} loop, and {\sc Return} $(x^t, \text{{\sc Null}})$.\\
                                                                                06. \> \>  \sc{Or} \\
                                                                                07. \> \> \> Find a packing constraint $i_t \in [m]$ such that  $\left(P \cdot x^t\right)_{i_t} > 1$. \\
                                                                                08. \> \> \> $\hat{x}^{t+1} \leftarrow \text{\sc{Whack}}(i_t,\hat{x}^t)$.  \qquad // See Figure~\ref{fig:enforce:packing}. \\                                                                 
                                                                                09. \> \> \> $x^{t+1} \leftarrow \frac{\hat{x}^{t+1}}{\norm{\hat{x}^{t+1}}_1}$.\\
                                                                                10. \> \> \> Let $y^t \in \Delta^m$ be the vector where $\left(y^t\right)_{i_t} = 1$ and $\left( y^t \right)_i = 0$ for all $i \in [m] \setminus \{ i_t\}$. \\
                                                                                11. \> $y \leftarrow (1/T) \cdot \sum_{t = 1}^{T} y^t$. \\
                                                                                12. \> \sc{Return} $(\text{{\sc Null}}, y)$.
                                                                                \end{tabbing}
                                                                \end{minipage}
                                                        }
                                                }
                                                }
                                                        \caption{\label{fig:dual_algorithm:packing} The Whack-a-Mole MWU Algorithm for a Packing LP.}
                                                \end{figure*}
                                                
  \begin{figure*}[htbp]
                                                \centerline{
                                                \framebox{
                                                {
                                                                \begin{minipage}{5.5in}
                                                                        \begin{tabbing}                                                                            
                                                                                1.  \= {\sc For all} $j \in [n]$: \\
                                                                                2. \> \ \ \ \ \ \ \= $\hat{z}_j \leftarrow \left(1-\epsilon \cdot \frac{P_{ij}}{\lambda} \right) \cdot \hat{x}_j$. \\
                                                                                3. \> \sc{Return} $\hat{z}$ 
                                                                                \end{tabbing}
                                                                \end{minipage}
                                                        }
                                                }
                                                }
                                                        \caption{\label{fig:enforce:packing} {\sc Whack}$(i, \hat{x})$.}
                                                \end{figure*}

The key point to note is that unlike in Section~\ref{sec:basic:template}, here the weight $\left(\hat{x}^t\right)_j$ of an expert $j \in [n]$ (see the discussion in the beginning of Section~\ref{sec:lm:whack:a:mole:key}) can only {\em decrease} whenever we whack a constraint. Nevertheless, we can easily extend the analysis from Section~\ref{sec:basic:template} to obtain the following theorem.

\begin{theorem}
\label{th:basic:template:packing}
The algorithm in Figure~\ref{fig:dual_algorithm:packing} either returns a vector $x^t \in \mathbb{R}_{\geq 0}^n$ with $\mathbb{1}^{\top} x^t = 1$ and $P x^t \leq (1+\Theta(\epsilon)) \cdot \mathbb{1}$, or it returns a vector $y \in \mathbb{R}_{\geq 0}^m$ with $\mathbb{1}^{\top} y = 1$ and $P^{\top} y \geq (1-\Theta(\epsilon)) \cdot \mathbb{1}$.
\end{theorem}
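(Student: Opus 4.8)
The plan is to mirror the argument of Section~\ref{sec:lm:whack:a:mole:key}, reducing to the experts setting of Lemma~\ref{lm:MWU}, the only change being a sign flip in the payoffs. First I would dispose of the easy case: if the algorithm terminates in step~(05) of Figure~\ref{fig:dual_algorithm:packing}, then $\mathbb{1}^{\top} x^t = 1$ (by the normalization in step~(09), or step~(01) for $t = 1$, together with $x^t \geq 0$), and $P x^t \leq (1+\epsilon) \cdot \mathbb{1}$ by the termination condition, so $x^t$ is the desired primal solution. Hence it remains to treat the case where the algorithm reaches step~(12); in that case, in \emph{every} round $t \in [T]$ the algorithm whacked a packing constraint $i_t$ with $(P x^t)_{i_t} > 1$, and it returns $y = (1/T) \cdot \sum_{t=1}^T y^t$ with $\mathbb{1}^{\top} y = 1$ since each $y^t \in \Delta^m$.

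For this remaining case I would treat each column $j \in [n]$ as an expert, identify the weight vector $w^t$ with $\hat{x}^t$, and set the payoff of expert $j$ in round $t$ to $(p^t)_j := -(1/\lambda) \cdot P_{i_t, j}$. Since $P_{i_t, j} \in [0, \lambda]$, we have $(p^t)_j \in [-1, 0]$, so this is a legitimate instance of the experts setting; moreover the update rule $\hat{x}^{t+1}_j \leftarrow (1 - \epsilon \cdot P_{i_t j}/\lambda) \cdot \hat{x}^t_j$ of Figure~\ref{fig:enforce:packing} is exactly $w^{t+1}_j \leftarrow (1 + \epsilon \cdot (p^t)_j) \cdot w^t_j$, and because $\epsilon < 1$ all weights remain strictly positive, so the distributions $\mathcal{D}^t$ are well defined. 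Since $(p^t)_j \leq 0$, we have $|(p^t)_j| = -(p^t)_j$, and therefore the regret bound of Lemma~\ref{lm:MWU} simplifies to $\sum_{t=1}^T (p^t)^{\top} x^t \geq (1+\epsilon) \sum_{t=1}^T (p^t)_j - \ln(n)/\epsilon$ for every expert $j \in [n]$.

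Next I would bound the two sides. On the left, since $(P x^t)_{i_t} > 1$ in every round, $(p^t)^{\top} x^t = -(1/\lambda) (P x^t)_{i_t} < -1/\lambda$, so the per-round average of the left side is $< -1/\lambda$. On the right, $(1/T) \sum_{t=1}^T (p^t)_j = -(1/\lambda) (P^{\top} y)_j$. Dividing the regret inequality by $T$ and using $T = \lambda \ln(n)/\epsilon^2$ (so $\ln(n)/(\epsilon T) = \epsilon/\lambda$), I get $-1/\lambda > -(1+\epsilon)(1/\lambda)(P^{\top} y)_j - \epsilon/\lambda$, i.e.\ $(1+\epsilon)(P^{\top} y)_j > 1 - \epsilon$, hence $(P^{\top} y)_j > (1-\epsilon)/(1+\epsilon) \geq 1 - 2\epsilon$ for all $j \in [n]$. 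Combined with $\mathbb{1}^{\top} y = 1$, this yields the claimed dual packing solution, completing the proof.

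The computations are entirely routine; the only points that need care — and which I would flag as the ``main obstacle'', such as it is — are: (i) checking that the sign flip still leaves a valid experts instance, i.e.\ payoffs in $[-1,1]$ and weights staying strictly positive (so $\mathcal{D}^t$ is well defined even though the weights only decrease); and (ii) noticing that, unlike in the covering case, we never need an \emph{upper} bound on $(P x^t)_{i_t}$, because the left-hand side $(p^t)^{\top} x^t$ is automatically a convex combination of the $(p^t)_j \in [-1,0]$, and the only fact used about it is the lower bound $(p^t)^{\top} x^t < -1/\lambda$ coming from the violated constraint. Consequently $T = \lambda \ln(n)/\epsilon^2$ rounds suffice, exactly as in the covering case (cf.\ the footnote following Lemma~\ref{lm:whack:a:mole:key}).
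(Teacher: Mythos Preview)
Your proposal is correct and follows exactly the route the paper indicates (``we can easily extend the analysis from Section~\ref{sec:basic:template}''): you map to the experts setting with payoffs $(p^t)_j=-(1/\lambda)P_{i_t,j}$, use $|(p^t)_j|=-(p^t)_j$ so that Lemma~\ref{lm:MWU} becomes $\sum_t (p^t)^\top x^t \ge (1+\epsilon)\sum_t (p^t)_j - \ln(n)/\epsilon$, and combine this with $(p^t)^\top x^t<-1/\lambda$ to get $(P^\top y)_j>(1-\epsilon)/(1+\epsilon)\ge 1-2\epsilon$. The two points you flag (validity of the experts instance after the sign flip, and that only the lower bound $(Px^t)_{i_t}>1$ is needed) are precisely the minor checks that distinguish the packing from the covering analysis.
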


From this basic template, it is straightforward to  obtain a near-linear time static algorithm for Problem~\ref{prob:basic:packing}. The framework also seamlessly extends to give us whack-a-mole MWU algorithms for packing LPs in the streaming and online settings. Thus, every theorem derived in Sections~\ref{sec:reductions:static},~\ref{sec:reductions:streaming} and~\ref{sec:reductions:online}  holds even if we switch the occurrences of LP~(\ref{eq:covering:general}) with that of LP~(\ref{eq:packing:general}) in the concerned theorem statement. The only exception is the theorem derived in Section~\ref{sec:reductions:dynamic}. Below, we highlight the main challenge in obtaining a dynamic whack-a-mole MWU algorithm for packing LPs under restricting updates.

Recall that a key part of the analysis in Section~\ref{sec:dynamic_MWU} was deriving an upper bound on the number of times a given constraint can get enforced throughout the duration of the algorithm (see Lemma~\ref{lm:key:dynamic}). The proof of Lemma~\ref{lm:key:dynamic}, on the other hand, relied on the property that the weight $\left(\hat{x}^t\right)_j$ of every expert $j \in [n]$ is monotonically non-decreasing with time and always lies in the range $\left[1, n^{(1/\epsilon)}\right]$, and hence is polynomially bounded (see the discussion leading to~(\ref{eq:key:inference:2})). This property followed from three observations: (1) at the start of the algorithm we have $\left( \hat{x}^1 \right)_j = 1$, (2) the total weight $\sum_{j '\in [n]} \left( \hat{x}^t \right)_j$ of all the experts is at most $n^{(1/\epsilon)}$, and (3) the weight $\left( \hat{x}^t \right)_j$ of a given expert $j \in [n]$ is at most the total weight $\sum_{j '\in [n]} \left( \hat{x}^t \right)_j$.

Coming back to our current setting, recall that the weight $\left( \hat{x}^t \right)_j$ of an expert $j \in [n]$ can only decrease over time when we run the whack-a-mole MWU algorithm on a packing LP. Thus, in order to bound the maximum number of times a constraint can get enforced, we need to show that the weight $\left(\hat{x}^t\right)_j$ of every expert $j \in [n]$   lies in the range $\left[\frac{1}{n^{\text{poly}(1/\epsilon)}}, 1 \right]$ throughout the duration of the algorithm. A natural way to replicate the argument from the previous paragraph to our current setting would be to make the following sequence of claims: (a) at the start of the algorithm we have $\left( \hat{x}^1 \right)_j = 1$, (b) the total weight $\sum_{j '\in [n]} \left( \hat{x}^t \right)_j$ of all the experts is at least $n^{(1/\epsilon)}$, and (c) the weight $\left( \hat{x}^t \right)_j$ of a given expert $j \in [n]$ is {\em at least} the total weight $\sum_{j '\in [n]} \left( \hat{x}^t \right)_j$. Now, the crucial observation is that although we can extend the analysis of the whack-a-mole MWU algorithm for covering LP to prove claims (a) and (b),  claim (c) clearly does {\em not} hold for obvious reasons. Indeed,  if we consider a natural extension of the static whack-a-mole MWU algorithm for packing LPs to a setting with restricting updates, then we can {\em not} ensure that the weight $\left( \hat{x}^t \right)_j$ of an expert $j \in [n]$ remains polynomially bounded. This, in turn, implies that we cannot upper bound the maximum number of times a constraint gets enforced, and hence we cannot bound the total update time of the algorithm.

We address this issue in Section~\ref{sec:prob}, where we present an efficient dynamic algorithm for maintaining an $\epsilon$-approximate solution to a {\em positive LP} under relaxing updates. Section~\ref{sec:exception:packing} shows that as a simple corollary of this result, we obtain an efficient dynamic algorithm for maintaining an $\epsilon$-approximate optimal solution to a packing (resp.~covering) LP under restricting (resp.~relaxing) updates (see Theorem~\ref{th:main:dynamic:restricting:packing}).

\section{The Greedy MWU Algorithm}
\label{sec:prob}

In this section we focus on
\emph{positive} LPs, which are defined as follows:
\begin{equation}
\text{Find }x\in\mathbb{R}_{\geq0}^{n}\text{ such that }Px\leq a\text{ and }Cx\geq b,\text{ where }P\in\mathbb{R}_{\geq0}^{m_{p}\times n},a\in\mathbb{R}_{\geq0}^{m_{p}},C\in\mathbb{R}_{\geq0}^{m_{c}\times n},b\in\mathbb{R}_{\geq0}^{m_{c}}.\label{lp:positive}
\end{equation}
An $\epsilon$-approximate solution for the positive LP is
either an $x\in\mathbb{R}_{\geq0}^{n}$ satisfying $Px\leq(1+\epsilon)a\text{ and }Cx\geq(1-\epsilon)b$, 
or a symbol $\bot$ indicating that the LP is infeasible. In the dynamic setting, an update to the LP can change an entry of
$P,a,C$ or $b$. Observe that the update is relaxing if it increases
an entry of $C$ or $a$, or decreases an entry of $P$ or $b$. In this section, we wish to design a dynamic algorithm for maintaining an $\epsilon$-approximation solution to a positive LP undergoing relaxing updates.  Note that such an algorithm will work as follows. Initially,  it will return $\bot$ for a sequence of relaxing updates. After that, at a certain point in time it will return an  $x\in\mathbb{R}_{\geq0}^{n}$ satisfying $Px\leq(1+\epsilon)a\text{ and }Cx\geq(1-\epsilon)b$. From this point onward,  the same $x$ will continue to remain an $\epsilon$-approximate solution to the input positive LP after every future update.

\medskip
\noindent {\bf Notation and preliminaries.} We now introduce a few key notations and concepts that will be used throughout the rest of Section~\ref{sec:prob}.\footnote{We emphasize that some of these notations are different from the ones used in Section~\ref{sec:whack:a:mole}.}  We classify a relaxing update as an {\em entry update} if it changes an entry of $P$ or $C$, whereas we classify a relaxing update as a {\em translation update} if it changes an entry of $a$ or $b$.   Throughout the sequence of updates, let $L$ (resp.~$U$) respectively denote an upper (resp,~lower) bound on the value of any nonzero entry in $P, C, a, b$; and let $N$ denote an upper bound on the total number of nonzero entries in $P, C$. We assume that $U/L = O(\text{poly}(m_p+m_c+n))$, where $m_p, m_c$ and $n$ respectively denote the number of rows in $P$, the number of rows in $C$, and the number of variables. We refer to $P x \leq a$ as the {\em packing} constraints, and we refer to $C x \geq b$ as the {\em covering} constraints. For all $i \in [m_p]$ and $k \in [n]$, we let $P(i, k) \in \mathbb{R}_{\geq 0}$ denote the entry in the $i^{th}$ row and $k^{th}$ column of the matrix $P$. Furthermore, we let $P_i \in \mathbb{R}_{\geq 0}^{1 \times n}$ denote the $i^{th}$ row of the matrix $P$. We analogously define the notations $C(j, k) \in \mathbb{R}_{\geq 0}$ and $C_j \in \mathbb{R}_{\geq 0}^{1 \times n}$, for all $j \in [m_c]$ and $k \in [n]$.  Finally, we use the symbol $v_i$ to denote the $i^{th}$ co-ordinate of a vector $v = (v_1, \ldots, v_m) \in \mathbb{R}^m$.

\medskip
Our main result is summarised  below. In this section, we present a high level overview of the main ideas behind the proof of Theorem~\ref{th:relaxing:positiveLP:algo}. The full details are deferred to Appendix~\ref{sec:app:proof} and Appendix~\ref{sec:app:algo}.

\begin{theorem}
\label{th:relaxing:positiveLP:algo}
We can maintain an $\epsilon$-approximate solution to a positive LP undergoing a sequence of $t > 0$ relaxing updates in $O\left(N \cdot \frac{\log^2(m_c+ m_p + U/L)}{\epsilon^2} + t\right)$ total  update time.
\end{theorem}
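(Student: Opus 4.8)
The plan is to take as a starting point the width-independent greedy MWU algorithm for positive LPs of Young~\cite{Young2014nearly} and Quanrud~\cite{Quanrud20}, and first I would recall its shape. After rescaling we may assume the constraints read $Px \le \mathbb{1}$ and $Cx \ge \mathbb{1}$, so that $U/L$ enters only through the ratios of the rescaled entries. The algorithm maintains a monotonically non-decreasing $x \in \mathbb{R}_{\ge 0}^n$ together with a soft-max potential $\Phi_p(x) := \eta^{-1}\ln\sum_i e^{\eta(Px)_i}$ and a soft-min potential $\Phi_c(x) := -\eta^{-1}\ln\sum_j e^{-\eta(Cx)_j}$, where $\eta = \Theta(\epsilon^{-1}\log(m_c+m_p))$; the normalised gradients of $\Phi_p$ and $\Phi_c$ are exactly the expert weights $p$ over packing constraints and $q$ over covering constraints. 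It runs in \emph{phases}, each a single pass over the $n$ coordinates; when it reaches a coordinate $k$ it tests the oracle condition $(C^\top q)_k \ge (P^\top p)_k$, and if this holds it \emph{enforces} $k$ --- repeatedly increments $x_k$ by a multiplicative step chosen so that no $(Px)_i$ overshoots by more than $\epsilon$ --- until the condition for $k$ fails or a phase boundary is hit. The standard analysis then yields: (i) each enforced increment raises $\Phi_c$ by at least its increase of $\Phi_p$, up to a $(1+O(\epsilon))$ factor, so $\Phi_p(x) - \Phi_c(x) = O(\epsilon)$ is an invariant; (ii) if no coordinate satisfies the oracle condition then $(p,q)$ is a dual certificate that the current LP is $\epsilon$-infeasible, so we output $\bot$; (iii) if instead $\Phi_c(x) \ge 1-\epsilon$ then by (i) also $\Phi_p(x) \le 1+O(\epsilon)$, so $x$ is $\epsilon$-approximate; (iv) each $x_k$ is monotone and bounded by $\mathrm{poly}(m_c+m_p, U/L)$, so it is enforced $O(\epsilon^{-2}\log(m_c+m_p+U/L))$ times, which with the phase structure recovers the static bound $O(N\epsilon^{-2}\log^2(m_c+m_p+U/L))$, the analogue of Theorem~\ref{th:MWU:static:runtime}.

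\paragraph{The dynamic algorithm.} There are two regimes. Once the algorithm has output an $\epsilon$-approximate $x$, every relaxing update (raising an entry of $C$ or $a$, or lowering an entry of $P$ or $b$) keeps $Px \le (1+\epsilon)a$ and $Cx \ge (1-\epsilon)b$, so it outputs the same $x$ forever and does nothing more. Before that point the algorithm sits in the ``$\bot$'' regime, holding its current $x$, potentials and weights. On a relaxing update to an entry of $P, C, a$ or $b$, it re-runs the scan --- restricted to the $O(1)$ coordinates (for an update to $P$ or $C$), respectively the $O(N_i)$ or $O(N^j)$ coordinates (for an update to $a_i$ or $b_j$), whose oracle ratio changed --- enforcing any coordinate that is now cheap and opening new phases as needed, until either no cheap coordinate remains (output $\bot$) or $\Phi_c(x) \ge 1-\epsilon$ (output $x$ and halt). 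Because the positive-LP feasibility formulation is scale-free, no guessing of an objective value is needed here; that layer appears only in the reduction to packing/covering LPs with an objective.

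\paragraph{Correctness.} The conceptual crux, and in my view the main obstacle, is arguing that relaxing updates never invalidate the algorithm's state. The key point is monotonicity of the potentials: in the rescaled instance a relaxing update can only \emph{decrease} every $(Px)_i$ (raising $a_i$ or lowering $P(i,k)$) and only \emph{increase} every $(Cx)_j$ (lowering $b_j$ or raising $C(j,k)$), hence it only decreases $\Phi_p(x)$ and only increases $\Phi_c(x)$, i.e.\ it only decreases $\Phi_p(x)-\Phi_c(x)$. So the invariant (i) is preserved \emph{across updates} exactly as it is preserved across enforced increments, the telescoping argument behind (iii) --- the greedy analogue of Lemma~\ref{lm:whack:a:mole:key} --- stays valid at every moment, and whenever we output $x$ it really is $\epsilon$-approximate. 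The certificate in (ii) is a purely local statement about the \emph{current} weights and the \emph{current} LP, so outputting $\bot$ when the re-scan finds no cheap coordinate is always justified. And since relaxing updates never raise the value any $x_k$ must reach and the greedy player never retracts, the monotone $x$ stays a legitimate state of the static algorithm on the current LP. I expect making (i) airtight --- the potential gap staying controlled at \emph{every} intermediate moment, with both multiplicative step sizes and a changing instance --- to be the delicate part.

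\paragraph{Running time.} For the $+t$ term I would maintain, for each coordinate $k$, a $(1+\epsilon)$-accurate estimate of the oracle ratio $(C^\top q)_k/(P^\top p)_k$, refreshed whenever $x$ or the instance changes; this lets us charge $O(1)$ per update just to decide whether some coordinate became cheap. The remaining cost is the total enforcement cost over the \emph{whole} update sequence, and the lemma I need --- the analogue of Lemma~\ref{cor:bound:weight} and Lemma~\ref{lm:key:dynamic} --- is that each coordinate $k$ is enforced only $O(\epsilon^{-2}\log(m_c+m_p+U/L))$ times \emph{in total}, independently of the number of updates, because $x_k$ is monotone, bounded by $\mathrm{poly}(m_c+m_p, U/L)$ (a packing constraint touching $k$ caps it), each increment raises it by a $(1+\Theta(\epsilon))$ factor, and relaxing updates never increase the value $x_k$ must reach. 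Charging each enforcement of $k$ the $O(N^k \cdot \mathrm{polylog}(m_c+m_p+U/L))$ cost of the binary search for its step size, as in Lemma~\ref{lm:step-size}, and summing over $k$ gives $O(N\epsilon^{-2}\log^2(m_c+m_p+U/L))$ total, which with the $+t$ term proves Theorem~\ref{th:relaxing:positiveLP:algo}. Pinning down the precise logarithmic factors in the enforcement count (the analogue of the rank argument in the proof of Lemma~\ref{lm:key:dynamic}) is the remaining piece of bookkeeping.
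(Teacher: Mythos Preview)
Your overall plan --- dynamize the Young/Quanrud greedy MWU, use monotonicity of the potentials to argue that relaxing updates never break the invariant, and charge the running time to a per-coordinate enforcement bound --- is the paper's approach too. But there are two genuine gaps.

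\textbf{Relaxing updates to packing entries break weight monotonicity.} You claim that after a relaxing update to $P(i,k)$ you need only re-scan $O(1)$ coordinates. That is false: decreasing $P(i,k)$ decreases $(Px)_i$ and hence the packing weight $w_p(x,i)=e^{\eta(Px)_i}$, which enters the numerator of the cost $\lambda(x,k')$ for \emph{every} $k'$ with $P(i,k')>0$; many coordinates can become cheap, and moreover $w_p(x)$ is no longer monotone, which is exactly what the phase bound relies on. The paper resolves this with a device you do not mention: it works in an \emph{extended LP} with a dummy $(n{+}1)$-st coordinate and, after each relaxing update to $P(i,k)$, performs a \emph{pseudo-update} that raises $P^*(i,n{+}1)$ so that $P^*_i x^*$ (and hence $w_p(x^*,i)$) is unchanged. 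This restores weight monotonicity and ensures only the single coordinate $k$ can become cheap. Without this trick, neither your ``scan only the affected coordinates'' step nor the phase-count bound goes through.

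\textbf{The enforcement-count argument is wrong.} You argue that each boost raises $x_k$ by a $(1+\Theta(\epsilon))$ multiplicative factor and hence there are $O(\epsilon^{-2}\log(\cdot))$ boosts per coordinate. But in the greedy MWU the increments to $x_k$ are \emph{additive} (the step $\delta$ in \textsc{Boost} is chosen so that some $(Px)_i$ or $(Cx)_j$ grows by exactly $\epsilon/\eta$), and $x_k$ starts at $0$, so multiplicative growth is meaningless. A naive additive count gives a $U/L$ factor, not $\log(U/L)$. The paper's actual bound (Lemma~\ref{lm:boost:bound}) buckets calls to \textsc{Boost}$(x,k)$ by the \emph{pivot value} --- the entry $P(i,k)$ or $C(j,k)$ that determined $\delta$ --- into $\log(U/L)$ geometric ranges, and shows at most $O(\eta/\epsilon)$ boosts per bucket because otherwise the pivot constraint would exceed $2$, contradicting either the not-too-large condition (covering pivot) or the approximation invariant (packing pivot). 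This is not the rank argument of Lemma~\ref{lm:key:dynamic}; it is a different counting scheme specific to the additive step rule.
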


\noindent
{\bf A simplifying assumption:} To convey the main ideas behind our algorithm, in this section we focus on the setting where all the relaxing updates made to the input LP are entry updates, for the following reason.

Consider an input positive LP $(P, C, a, b)$, which asks us to find an $x$ such that $P x \leq a$ and $C x \geq b$. Since we are happy with a $\red{\Theta(\epsilon)}$-approximate solution to the input LP, we can keep ignoring the relaxing translation updates to a packing  constraint $(Px)_i \leq a_i$ (resp.~covering constraint $(Cx)_j \geq b_j$) as long as $a_i$ (resp.~$b_j$) does {\em not} increase (resp.~decrease) by more than a multiplicative factor of $(1+\epsilon)$. Thus, during the course of our algorithm, a given packing or covering constraint will essentially go through  at most $\kappa = O(\log_{(1+\epsilon)} (U/L))$ many relaxing translation updates.

Next, note that we can simulate a relaxing translation update to a packing constraint $(P x)_i \leq a_i$ (resp.~covering constraint $(Cx)_j \geq b_j$) by making $nzp_{(i)}$ (resp.~$nzc_{(j)}$) many relaxing entry updates, where $nzp_{(i)}$ (resp.~$nzc_{(j)}$) denotes the current number of nonzero entries in row $i$ (resp.~row $j$) of $P$ (resp.~$C$). This is because of the following reason: If the translation update asks us to scale up (resp.~scale down) the value of $a_i$ (resp.~$b_j$) by a multiplicative factor of $\alpha \geq 1$, then we can implement this update by scaling down (resp.~scaling up) every nonzero entry in row $i$ (resp.~row $j$) of $P$ (resp.~$C$) by the same factor $\alpha$.

It follows that  all the relaxing translation updates encountered by an algorithm for this problem can be simulated by  at most $\kappa N$ many relaxing entry updates. Hence, if we prove Theorem~\ref{th:relaxing:positiveLP:algo} for relaxing entry updates, then we can immediately convert it into an algorithm that handles {\em both} translation and entry updates with total update time $O\left(N \cdot  \frac{\log^2(m_c+ m_p + U/L)}{\epsilon^2} + t + \kappa N\right) = O\left( N \cdot \frac{\log^2(m_c+ m_p + U/L)}{\epsilon^2} + t\right)$.

Accordingly, for the rest of Section~\ref{sec:prob},  the phrase {\em relaxing update} will  refer to a {\em relaxing entry update}.

\medskip
\noindent {\bf Scaling the constraints:} Since we have to deal with only the entry updates in this section, for notational convenience we will scale the right hand sides of all the packing and covering constraints to $1$. Accordingly, throughout  the rest of Section~\ref{sec:prob} the input positive LP will be given by an ordered pair $(P, C)$, where $P \in \mathbb{R}_{\geq 0}^{m_p \times n}$ and $C \in \mathbb{R}_{\geq 0}^{m_c \times n}$ and it asks us to find an $x \in \mathbb{R}_{\geq 0}^n$ such that $P x \leq 1$ and $C x \geq 1$. 

\medskip
\noindent {\bf Organisation:} In Section~\ref{sec:static}, we recap a known {\em static} greedy MWU algorithm for  positive LPs~\cite{Quanrud20,Young2014nearly}. We present and analyse our dynamic algorithm in Section~\ref{sec:dynamic}. Finally, we show that  the analysis in Section~\ref{sec:dynamic} implies a dynamic algorithm for  solving a packing (resp.~covering) LP under restricting (resp.~relaxing) updates. This observation is summarized in Section~\ref{sec:exception:packing}.


\subsection{Static Greedy MWU Algorithm~\cite{Quanrud20,Young2014nearly}}
\label{sec:static}

Define two  functions $f_p(x)$ and $f_c(x)$, which respectively correspond to the {\em soft-max} of the packing constraints and the {\em soft-min} of the covering constraints, as stated below.
\begin{eqnarray}
f_p(x) = \frac{1}{\eta} \cdot \log \left( \sum_{i=1}^{m_p} \exp(\eta P_i x) \right) \text{ and } f_c(x) = -\frac{1}{\eta} \cdot \log \left(  \sum_{j=1}^{m_c} \exp(-\eta C_j x) \right), \nonumber \\ 
\text{ where } \eta = \frac{\log (m_p+m_c + U/L)}{\epsilon}. \label{eq:eta}
\end{eqnarray}
It can be shown that the function $f_p(x)$ (resp.~$f_c(x)$)  closely approximates the maximum (resp.~minimum) value among the left hand sides of the packing  (resp.~covering) constraints. Specifically, we have:
\begin{eqnarray}
\label{eq:soft:approx}
\max_{i\in [m_p]} P_i x \leq f_p(x) \leq \max_{i \in [m_p]} P_i x + \epsilon \text{ and } \min_{j\in [m_c]} C_j x -\epsilon \leq f_c(x) \leq \min_{j \in [m_c]} C_j x, \text{ for all } x \in \mathbb{R}_{\geq 0}^n.
\end{eqnarray}
The  algorithm attempts to find an $x \in \mathbb{R}_{\geq 0}^n$ where $f_p(x) \leq 1$ and $f_c(x) \geq 1$. By~(\ref{eq:soft:approx}),  such an $x$ is indeed  an $\epsilon$-approximate solution to the input positive LP. Note that both the functions $f_p(x)$ and $f_c(x)$ are continuous and differentiable, and hence their gradients exist at all points. 

\subsubsection{The main idea}
\label{subsec:template}
Say that a direction $z \in \mathbb{R}_{\geq 0}^n$ is {\em cheap} with respect to a point $x \in \mathbb{R}_{\geq 0}^n$ iff $\langle \nabla f_p(x), z \rangle \leq \langle \nabla f_c(x), z \rangle$. In other words, starting from the point $x$, if we take an infinitesimally small step towards a cheap direction w.r.t.~$x$, then the increase in $f_p(x)$ is at most the increase in $f_c(x)$. The high level idea  behind the algorithm can be summarised as follows: It starts at an $x = \vec{0} \in \mathbb{R}_{\geq 0}^n$, where we have $f_p(x) = f_c(x) = 0$. It then continuously keeps {\em moving} this point $x$ in a cheap direction. This leads to the following invariant: $f_p(x) \leq f_c(x)$. The algorithm stops once it reaches a point $x$ where $f_c(x) = 1$. Such an $x$ satisfies  $f_p(x) \leq 1$ and $f_c(x) \geq 1$, and hence is an $\epsilon$-approximate solution to the input positive LP, according to~(\ref{eq:soft:approx}).  On the other hand, if at some point in time during this continuous process, we reach a point $x$ that does {\em not} admit any cheap direction, then it can be shown  that the input LP is infeasible.

In order to actually implement the scheme described above, we first need to {\em discretise} the process  which  keeps moving the point $x$. This discretised process  will consist of a sequence of steps. As before, initially we have $x = \vec{0}$. At the start of each step, we identify an (approximately) cheap direction $z \in \mathbb{R}_{\geq 0}^n$ w.r.t.~the current $x$, which satisfies  $\langle \nabla f_p(x), z \rangle \leq (1+\red{\Theta(\epsilon)}) \cdot \langle \nabla f_c(x), z \rangle$, and then we take a small discrete {\em jump} along that direction by setting $x \leftarrow x + \delta  z$ for some sufficiently small $\delta > 0$. It can be shown that due to each of these steps, the increase in the value of $f_p(x)$ is at most $(1+\red{\Theta(\epsilon)})$ times the increase in the value of $f_c(x)$, which leads to the invariant: $f_p(x) \leq (1+\red{\Theta(\epsilon)}) \cdot f_c(x)$. Furthermore, during each of these  steps, the value of $f_c(x)$ increases by at most $\red{\Theta(\epsilon)}$, for sufficiently small $\delta$. The process terminates {\em immediately} after we reach a stage where $f_c(x) \geq 1$. Thus, at termination we have $1 \leq f_c(x) \leq (1+\red{\Theta(\epsilon)})$ and $f_p(x) \leq (1+\red{\Theta(\epsilon)}) f_c(x) \leq (1+\red{\Theta(\epsilon)})$, and hence $x$ is a $\red{\Theta(\epsilon)}$-approximate solution to the input positive LP.  In contrast, if at any  step during this process, we end up with an $x$ which does not admit any (approximately) cheap direction, then we can certify that the input LP is infeasible. 

We will work with a concrete version of the  framework described above. Say that a {\em coordinate} $k \in [n]$ is  (approximately) {\em cheap} with respect to a point $x \in \mathbb{R}_{\geq 0}^n$ iff $\langle \nabla f_p(x), \vec{e}_k \rangle \leq (1+\red{\Theta(\epsilon)}) \cdot \langle \nabla f_c(x), \vec{e}_k \rangle$, where $e_k \in \{0,1\}^n$ is the unit vector along coordinate $k$. We will use the following crucial observation: Any given point $x$ admits a cheap direction iff it admits a cheap coordinate. Thus, all we need to do is start at $x = \vec{0}$, and then move the point $x$ in a sequence of steps, where at each step we move $x$ by a small amount along a cheap coordinate. We now present this algorithmic template  more formally in Section~\ref{subsec:static}. 

\subsubsection{The basic algorithmic template}
\label{subsec:static}

We start by introducing some crucial notations.  Given any $x \in \mathbb{R}_{\geq 0}^n$, we  associate a {\em weight} with each LP constraint. To be specific, for all $i \in [m_p]$ and $j \in [m_c]$, we have $w_p(x, i) := \exp\left(\eta \cdot P_i x \right)$ and $w_c(x, j) := \exp\left(- \eta \cdot C_j x \right)$. Let $w_p(x) := \sum_{i=1}^{m_p} w_p(x, i)$ and $w_c(x) := \sum_{j=1}^{m_c} w_c(x, j)$.  

\begin{definition}
\label{def:coordinate}
Consider any $k \in [n]$. The {\em cost} of coordinate $k$ at a given $x \in \mathbb{R}^n_{\geq 0}$ is defined as:
$$\lambda(x, k) := \frac{\sum_{i = 1}^{m_p} w_p(x, i) \cdot P(i, k)}{\sum_{j = 1}^{m_c} w_c(x, j) \cdot C(j, k)}.$$
The coordinate $k \in [n]$ is said to be {\em cheap} w.r.t.~a given $x \in \mathbb{R}_{\geq 0}^n$ iff $\lambda(x, k) \leq (1+\red{\Theta(\epsilon)}) \cdot w_p(x)/w_c(x)$. 
\end{definition}

It turns out that the notion of a cheap coordinate, as per Definition~\ref{def:coordinate}, is identical to the one used in the last paragraph of Section~\ref{subsec:template} (see Property~\ref{prop:infeasible} and Property~\ref{prop:stop}). The algorithm is described in Figure~\ref{fig:static}.

  \begin{figure*}[htbp]
                                                \centerline{\framebox{
                                                                \begin{minipage}{5.5in}
                                                                        \begin{tabbing}                                                                            
                                                                                1. \ \ \=  {\sc Initialise} $x \leftarrow \vec{0} \in \mathbb{R}_{\geq 0}^n$. \\ 
                                                                                2. \> {\sc While} $\min_{j \in [m_c]} C_j x < 1$: \\
                                                                                3. \> \ \ \ \  \ \ \ \ \= {\sc If} there is no cheap coordinate at $x$, {\sc Then} \\
                                                                                4. \> \> \ \ \ \ \ \ \ \ \ \= {\sc Return} that the LP is infeasible.   \\
                                                                                5. \> \> {\sc Else} \\
                                                                                6. \> \> \> Let $k \in [n]$ be a cheap coordinate.  \\
                                                                                7. \> \> \> $x \leftarrow \text{{\sc Boost}}(x, k)$.   \qquad \qquad \qquad \qquad //  See Figure~\ref{fig:boost}.  \\
                                                                                8. \> {\sc Return} $x$.                             										                                                                                                         
                                                                                \end{tabbing}
                                                                \end{minipage}
                                                        }}
                                                        \caption{\label{fig:static} A static algorithm for solving a positive LP $(P,C)$.}
                                                \end{figure*}

  \begin{figure*}[htbp]
                                                \centerline{\framebox{
                                                                \begin{minipage}{5.5in}
                                                                        \begin{tabbing}                                                                            
                                                                                1. \ \ \=  Find the maximum $\delta$ such that $\max_{i \in [m_p]} P(i, k) \cdot \delta \leq \epsilon/\eta$ 	and 	$\max_{j \in [m_c] : C_j x < 2} C(j, k) \cdot \delta \leq \epsilon/\eta$.	\\		                                                                        2. \> {\sc Return} $x + \delta \vec{e}_k$.                                  
                                                                                \end{tabbing}
                                                                \end{minipage}
                                                        }}
                                                        \caption{\label{fig:boost} {\sc Boost}($x, k$).}
                                                \end{figure*}

It is easy to check that this algorithm follows the  framework outlined in Section~\ref{subsec:template}. To summarise, the algorithm starts at $x = \vec{0}$. Subsequently, each iteration of the {\sc While} loop in Figure~\ref{fig:static} either declares that the input LP is infeasible, or identifies a cheap coordinate w.r.t.~the current $x$ and calls the subroutine in Figure~\ref{fig:boost} to {\em boost} (i.e., change) $x$ by a small amount along that direction. It stops when $\min_{j \in [m_c]} C_j x \geq 1$.
  
 We now state two key properties of the algorithm, whose proofs are deferred to  Appendix~\ref{sec:app:proof}.  Property~\ref{prop:infeasible} justifies the algorithm's decision to declare that the input LP is infeasible if it cannot find any cheap coordinate w.r.t.~the current $x \in \mathbb{R}_{\geq 0}^n$. Property~\ref{prop:stop} ensures that whenever the algorithm decides to {\em boost} the current $x$ along a cheap coordinate $k \in [n]$, the increase in $f_p(x)$ is at most $(1+\red{\Theta(\epsilon)})$ times the increase in $f_c(x)$, which, in turn, is itself at most $\red{\Theta(\epsilon)}$. 
      
 \begin{property}
 \label{prop:infeasible}
 Consider any $x \in \mathbb{R}_{\geq 0}^n$. If there is no cheap coordinate at $x$, then the  LP $(P, C)$ is infeasible.
 \end{property}    
 
 \begin{property}
 \label{prop:stop}
 Assume $\epsilon < 1/10$ and let $k \in [n]$ be a cheap coordinate at   $y \in \mathbb{R}_{\geq 0}^n$, and let $z \leftarrow \text{{\sc Boost}}(y, k)$. Then we have: 
$$f_p(z) - f_p(y) \leq (1+\red{\Theta(\epsilon)}) \cdot \left( f_c(z) - f_c(y)\right) \leq \red{\Theta(\epsilon)}.$$
 \end{property}  
 
Recall that initially $x = \vec{0}$, and so $f_p(x) = f_c(x) = 0$.  Each subsequent iteration of the {\sc While} loop in Figure~\ref{fig:static} boosts the current $x$ along a cheap coordinate. Hence, Property~\ref{prop:stop} implies the following invariant:  
 \begin{invariant}
 \label{inv:static}
 During the course of this algorithm, we always have $f_p(x) \leq (1+\red{\Theta(\epsilon)}) \cdot f_c(x)$. Furthermore, each call to the subroutine {\sc Boost}$(x,k)$ increases the value of $f_c(x)$ by at most an additive $\red{\Theta(\epsilon)}$. 
 \end{invariant}                     
                          
 \begin{corollary}
 \label{cor:static}
 If the algorithm in Figure~\ref{fig:static} returns an $x \in \mathbb{R}_{\geq 0}^n$ in step 8, then $x$ is indeed a $\red{\Theta(\epsilon)}$-approximate solution to the input LP. Otherwise, it correctly declares that the input LP is infeasible.
 \end{corollary}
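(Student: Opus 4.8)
The plan is to split on which branch the algorithm halts in. If it reaches step~4, it has found a point $x \in \mathbb{R}_{\geq 0}^n$ admitting no cheap coordinate, so Property~\ref{prop:infeasible} immediately certifies that the LP $(P,C)$ is infeasible; returning ``infeasible'' is therefore correct. Hence the only real content is to check that, whenever the algorithm returns some $x$ at step~8, this $x$ is a $\red{\Theta(\epsilon)}$-approximate solution, i.e.\ (recalling that in this section the right-hand sides have been scaled to $\mathbb{1}$) that $Cx \geq (1-\red{\Theta(\epsilon)})\cdot\mathbb{1}$ and $Px \leq (1+\red{\Theta(\epsilon)})\cdot\mathbb{1}$.

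The covering side is immediate: step~8 is reached only when the {\sc While} loop in Figure~\ref{fig:static} exits, which happens exactly when $\min_{j \in [m_c]} C_j x \geq 1$, so $Cx \geq \mathbb{1}$. For the packing side I would chain the soft-max/soft-min sandwich~(\ref{eq:soft:approx}) with Invariant~\ref{inv:static}: $\max_{i} P_i x \leq f_p(x) \leq (1+\red{\Theta(\epsilon)})\cdot f_c(x)$, so it suffices to bound $f_c(x) \leq 1+\red{\Theta(\epsilon)}$ at termination. The subtle point — and the one place care is needed — is that the loop's exit test does \emph{not} bound $f_c(x)$ directly, because the final {\sc Boost} can push $\min_j C_j x$, and hence $f_c(x)$, above $1$ by an uncontrolled amount. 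The remedy is to look one iteration back: assuming $m_c \geq 1$ (the case $m_c=0$ being trivial since then $x=\vec 0$ works), the point $\vec 0$ fails the exit test, so at least one {\sc Boost} is performed; at the start of the iteration containing the last {\sc Boost} the loop guard still held, so the point $x'$ just before that {\sc Boost} satisfies $f_c(x') \leq \min_j C_j x' < 1$ by~(\ref{eq:soft:approx}). Since within an iteration $x$ is changed only through the single call {\sc Boost}$(x,k)$, and Invariant~\ref{inv:static} says each such call raises $f_c$ by at most an additive $\red{\Theta(\epsilon)}$, we get $f_c(x) \leq f_c(x') + \red{\Theta(\epsilon)} < 1 + \red{\Theta(\epsilon)}$. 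Combining, $\max_i P_i x \leq (1+\red{\Theta(\epsilon)})(1+\red{\Theta(\epsilon)}) = 1+\red{\Theta(\epsilon)}$, which yields $Px \leq (1+\red{\Theta(\epsilon)})\cdot\mathbb{1}$, as required.

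I expect the only genuine obstacle to be this last-{\sc Boost} bookkeeping: stating precisely that the $x$ returned at step~8 is the state immediately after the final {\sc Boost}, that the loop guard held at the start of that iteration (so $\min_j C_j x' < 1$), and that $x$ moves only via that one {\sc Boost} call within the iteration, so the per-call additive bound of Invariant~\ref{inv:static} applies. Everything else — the infeasibility branch and the covering inequality — reads off directly from Property~\ref{prop:infeasible} and the loop's exit condition. Termination of the algorithm itself is not needed for this conditional statement and is handled separately in the detailed analysis.
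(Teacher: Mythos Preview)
Your proposal is correct and matches the paper's proof essentially step for step: both split on the branch taken, invoke Property~\ref{prop:infeasible} for the infeasibility case, read off $Cx \geq \mathbb{1}$ from the loop's exit condition, and bound the packing side by noting that just before the final {\sc Boost} the guard gave $f_c(x') < 1$, so after that {\sc Boost} Invariant~\ref{inv:static} yields $f_c(x) \leq 1+\red{\Theta(\epsilon)}$ and hence $\max_i P_i x \leq f_p(x) \leq (1+\red{\Theta(\epsilon)})^2$. Your treatment is slightly more explicit about the edge case $m_c=0$ and about why the final iteration necessarily contains a {\sc Boost}, but the argument is the same.
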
  
 
 \begin{proof} {\em Case (i): The algorithm terminates in step $8$ of Figure~\ref{fig:static}.} All the claims made in the next paragraph hold because of~(\ref{eq:soft:approx}) and Invariant~\ref{inv:static}.
  
 Just before the last iteration of the {\sc While} loop in Figure~\ref{fig:static},  we had $f_p(x) \leq (1+\red{\Theta(\epsilon)}) \cdot f_c(x)$ and $f_c(x) \leq \min_{j \in [m_c]} C_j x < 1$. Accordingly, at the end of the last iteration of the concerned {\sc While} loop, we have $\max_{i \in [m_p]} P_i x \leq f_p(x) \leq (1+\red{\Theta(\epsilon)}) \cdot f_c(x) \leq  (1+\red{\Theta(\epsilon)}) \cdot (1+\red{\Theta(\epsilon)}) \leq 1+\red{\Theta(\epsilon)}$, and $\min_{j \in [m_c]} C_j x \geq 1$.  Hence, the $x$ that is returned  in step $8$ of Figure~\ref{fig:static} is  a $\red{\Theta(\epsilon)}$-approximate solution to the input LP.
 
\medskip
\noindent 
{\em Case (ii): The algorithm terminates because during some iteration of the {\sc While} loop in Figure~\ref{fig:static} it cannot find any cheap coordinate w.r.t.~the current $x$.} In this case,  by Property~\ref{prop:infeasible},  the input LP is infeasible. 
\end{proof}

 \subsection{An Overview of Our Dynamic Greedy MWU Algorithm}     
 \label{sec:dynamic}
 
 We describe the basic template that will be followed by our dynamic algorithm in Section~\ref{subsec:dynamic:template}. For ease of exposition, in Section~\ref{subsec:dynamic:covering} we describe and analyse our dynamic algorithm in a setting where all the relaxing entry updates occur to the covering constraints, whereas the packing constraints remain unchanged. Finally, in Section~\ref{subsec:dynamic:full} we present an overview of the complete algorithm which can handle relaxing entry updates to {\em both} the packing and covering constraints.

 \subsubsection{The basic template for our dynamic algorithm}
 \label{subsec:dynamic:template}

 \noindent {\bf Initialisation:}  At preprocessing, we start by implementing the procedure described in Figure~\ref{fig:static}. At this stage, if  we end up with a $\red{\Theta(\epsilon)}$-approximate solution $x$ to the positive LP, then we are done. This is because such an $x$ will continue to remain a $\red{
 \Theta(\epsilon)}$-approximate solution to the input LP even after it undergoes any sequence of relaxing updates in future. Accordingly, henceforth we assume that the procedure in Figure~\ref{fig:static} ends up with an $x$ at which there is no cheap coordinate.
 
 \medskip
 \noindent {\bf Handling a relaxing update:} After the update,  we keep on boosting $x$ along cheap coordinates until either (a) the set of cheap coordinates becomes empty again, or (b) we have $\min_{j \in [m_c]} C_j \geq 1$. In the former case, we declare that the input LP remains infeasible even after the relaxing update. In the latter case, we declare the current $x$ as a  $\red{
 \Theta(\epsilon)}$-approximate solution to the input LP from this point onward.

 \medskip
 Clearly, as per the discussion in Section~\ref{sec:static},  any dynamic algorithm following the above template solves the problem of maintaining a  $\red{\Theta(\epsilon)}$-approximate solution to a positive LP under relaxing updates. 
  
 \subsubsection{Handling a sequence of relaxing entry updates to the covering constraints}
 \label{subsec:dynamic:covering}
 
We now present an overview of our dynamic algorithm, under the assumption that every relaxing update increases the value of some entry in the matrix $C$ (whereas the matrix $P$ remains unchanged). We follow the template outlined  in Section~\ref{subsec:dynamic:template}. The main challenge  is to define the appropriate data structures, and to describe  how to {\em choose} a cheap coordinate for boosting  $x$ at any given point in  time. 

Before proceeding any further, note that there are two types of {\em events} which influence the outcome of any dynamic algorithm following the template from Section~\ref{subsec:dynamic:template}: (1) a relaxing update which increases the value of some entry in the matrix $C$, and (2) a call to the subroutine $\text{{\sc Boost}}(x, k)$ which {\em boosts} the current $x$ along the coordinate $k$. We start with the key invariant that drives the analysis of our dynamic algorithm.

\begin{invariant}
\label{inv:dynamic}
During the course of our dynamic algorithm, we always have $f_p(x) \leq (1+\red{\Theta(\epsilon)}) \cdot f_c(x)$. Furthermore, every call to the subroutine {\sc Boost}$(x, k)$ increases  $f_c(x)$ by at most an additive $\red{\Theta(\epsilon)}$. 
\end{invariant}

\begin{proof}
The algorithm starts at $x = \vec{0}$, where $f_p(x) = f_c(x) = 0$ and so the invariant holds. Subsequently, the values of $f_p(x)$ and $f_c(x)$ can change because of one of the two following types of events.

\smallskip
\noindent (1) A relaxing update increases the value of an entry  $C(j, k)$ of the matrix $C$, for $j \in [m_c]$, $k \in [n]$. This increases the value of $C_j x$; whereas for all $j' 
\in [m_c] \setminus \{j\}$ and $i \in [m_p]$, the values of $C_{j'} x$ and $P_i x$ remain unchanged. We accordingly infer that $f_c(x)$ increases, whereas  $f_p(x)$ remains unchanged. Thus, if the invariant was true just before this event, then it continues to remain true just after the event. 

\smallskip
\noindent (2) The algorithm make a call to {\sc Boost}$(x, k)$, for some $k \in [n]$.  By Property~\ref{prop:stop}, due to this event the increase in $f_p(x)$ is at most $(1+\red{\Theta(\epsilon)})$ times the increase in $f_c(x)$, and  this latter quantity is at most $\red{\Theta(\epsilon)}$. Thus, if the invariant was true just before this event, then it continues to remain true just after the event. 
\end{proof}

\begin{corollary}
\label{cor:inv:dynamic}
During the course of our dynamic algorithm, we always have $\max_{i \in [m_p]} P_i x \leq 1+\red{\Theta(\epsilon)}$. 
\end{corollary}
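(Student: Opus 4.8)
The plan is to read the corollary off directly from Invariant~\ref{inv:dynamic} together with the soft-max/soft-min estimates in~(\ref{eq:soft:approx}), using crucially that in this section every relaxing update touches only the matrix $C$. First I would note that the vector $Px$, and hence $\max_{i \in [m_p]} P_i x$, changes only when the algorithm executes a call to {\sc Boost}$(x,k)$: a relaxing entry update to a covering constraint modifies an entry of $C$ but leaves $P$ and $x$ untouched. So it suffices to verify the claimed bound immediately after each {\sc Boost}, while at the initial point $x = \vec{0}$ it is trivial.

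For a typical call to {\sc Boost}$(x,k)$ I would argue as follows. By the template in Section~\ref{subsec:dynamic:template} (equivalently, by Figure~\ref{fig:static}), such a call is made only while $\min_{j \in [m_c]} C_j x < 1$, and by the upper bound $f_c(x) \le \min_{j} C_j x$ in~(\ref{eq:soft:approx}) this forces $f_c(x) < 1$ just before the call. Invariant~\ref{inv:dynamic} then says the call raises $f_c(x)$ by at most an additive $\Theta(\epsilon)$, so $f_c(x) \le 1 + \Theta(\epsilon)$ immediately afterwards. Combining this with the other half of Invariant~\ref{inv:dynamic}, namely $f_p(x) \le (1+\Theta(\epsilon)) f_c(x)$, and with $\max_{i} P_i x \le f_p(x)$ from~(\ref{eq:soft:approx}), we get $\max_{i} P_i x \le (1+\Theta(\epsilon))^2 \le 1 + \Theta(\epsilon)$ just after the {\sc Boost}. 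Since $\max_{i} P_i x$ is unchanged between consecutive {\sc Boost} calls (relaxing updates affect only $C$), the bound persists throughout the run, which is exactly the statement of the corollary.

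I do not expect a genuine obstacle: the corollary is essentially a one-line unwinding of Invariant~\ref{inv:dynamic}. The only point needing care is timing — evaluating $f_c(x)$ at the instant just before the {\sc Boost} that could push it past $1$, and observing that updates (which move $f_c$ but not $f_p$ or $Px$) cannot disturb the conclusion. A minor conceptual check is that the statement is being made in the covering-only regime of Section~\ref{subsec:dynamic:covering}, where $Px$ really is frozen outside of {\sc Boost}; when we later pass to the full algorithm in Section~\ref{subsec:dynamic:full}, relaxing updates to $P$ only decrease entries of $P$ and hence can only decrease $\max_{i} P_i x$, so the same argument carries over without change.
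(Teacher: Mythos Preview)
Your proposal is correct and takes essentially the same approach as the paper: the paper's one-line proof cites exactly the three ingredients you use, namely~(\ref{eq:soft:approx}), Invariant~\ref{inv:dynamic}, and the termination condition $\min_{j \in [m_c]} C_j x \ge 1$. You have simply unpacked the timing argument (before vs.\ after a {\sc Boost}, and noting that $Px$ is frozen between {\sc Boost} calls in the covering-only regime) that the paper leaves implicit.
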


\begin{proof}
Follows from~(\ref{eq:soft:approx}), Invariant~\ref{inv:dynamic}, and the fact that we return $x$ the moment  $\min_{j \in [m_c]} C_j x$ is $\geq 1$. 
\end{proof}

Next, we  derive a few important observations which show that the weights $w_p(x, i), w_c(x, j)$ of the constraints and the costs $\lambda(x, k)$ of the coordinates change (almost) monotonically over time.

 \begin{observation}
 \label{ob:weight:monotone} Consider an event which consists of either a relaxing update to the matrix $C$ or a call to the subroutine {\sc Boost}$(x, k)$. Such an event can only increase the weight $w_p(x, i)$ of a packing constraint $i \in [m_p]$, and it can only decrease the weight $w_c(x, j)$ of a covering constraint $j \in [m_c]$. 
 \end{observation}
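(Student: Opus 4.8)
The plan is to verify the claim separately for the two kinds of events, in each case reducing everything to the monotonicity of $\exp(\cdot)$ together with the non-negativity of all entries of $P$, $C$ and of the current iterate $x$. Recall that $w_p(x,i) = \exp(\eta \cdot P_i x)$ and $w_c(x,j) = \exp(-\eta \cdot C_j x)$, so it suffices to track how the inner products $P_i x$ and $C_j x$ move.

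First I would handle a relaxing entry update to $C$, say one that increases the entry $C(j_0, k_0)$ for some $j_0 \in [m_c]$, $k_0 \in [n]$. Since $x \in \mathbb{R}_{\geq 0}^n$, increasing $C(j_0,k_0)$ can only increase $C_{j_0} x = \sum_{k} C(j_0,k)\, x_k$, while it leaves $C_j x$ unchanged for every $j \neq j_0$ and leaves $P_i x$ unchanged for every $i \in [m_p]$ (neither $P$ nor $x$ changes during this event). Since $\exp$ is increasing, $w_c(x,j_0) = \exp(-\eta\, C_{j_0} x)$ can only decrease, the remaining $w_c(x,j)$ stay the same, and every $w_p(x,i)$ stays the same; in particular no $w_p(x,i)$ decreases and no $w_c(x,j)$ increases, as required.

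Next I would handle a call to $\text{{\sc Boost}}(x, k_0)$, which replaces $x$ by $x + \delta \vec{e}_{k_0}$ for the value $\delta$ computed in Figure~\ref{fig:boost}. The point to note is that this $\delta$ is non-negative, since it is defined as the maximum over a collection of inequalities all of which hold at $\delta = 0$; hence boosting never moves $x$ backwards along coordinate $k_0$, and by induction (starting from $\vec{0}$) the iterate $x$ stays coordinate-wise non-negative throughout. Consequently, for every packing constraint $i$ we have $P_i(x + \delta \vec{e}_{k_0}) = P_i x + \delta \cdot P(i,k_0) \geq P_i x$ because $\delta \geq 0$ and $P(i,k_0) \geq 0$, so applying $\exp(\eta \cdot)$ shows $w_p(x,i)$ can only increase; symmetrically $C_j(x + \delta \vec{e}_{k_0}) = C_j x + \delta \cdot C(j,k_0) \geq C_j x$, so $w_c(x,j) = \exp(-\eta\, C_j x)$ can only decrease. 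This completes both cases, and there is no real obstacle; the only things worth double-checking are that $\delta \geq 0$ in the {\sc Boost} subroutine and that $x$ remains non-negative, both of which are immediate.
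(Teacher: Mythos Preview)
Your proof is correct and follows essentially the same approach as the paper: both split into the two event types and reduce to monotonicity of the inner products $P_i x$ and $C_j x$ via the non-negativity of all data. You are slightly more explicit than the paper in verifying that $\delta \geq 0$ and that $x$ stays coordinate-wise non-negative, but these are implicit in the paper's argument and do not constitute a different route.
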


\begin{proof}
Consider a relaxing update which increases the value of some entry $C(j, k)$ of the matrix $C$. This increases the value of $C_j x$; whereas for all $j' \in [m_c] \setminus \{j\}$ and $i \in [m_p]$, the values of $C_{j'} x$ and $P_i x$ remain unchanged. Accordingly, this event decreases the weight $w_c(x, j)$; whereas for all $j' \in [m_c] \setminus \{j\}$ and $i \in [m_p]$, the weights $w_c(x, j')$ and $w_p(x, i)$ remain unchanged due to this event. 

Next, consider a call to  {\sc Boost}$(x, k)$. For all $i \in [m_p]$ with $P(i, k) > 0$ this increases the value of $P_i x$, whereas for all $i \in [m_p]$ with $P(i, k) = 0$ the value of $P_i x$ remains unchanged. Similarly, for all $j \in [m_c]$ with $C(j, k) > 0$ this increases the value of $C_j x$, whereas for all $j \in [m_c]$ with $C(j, k) = 0$ the value of $C_j x$ remains unchanged. In effect, this means that for all $i \in [m_p]$ the weight $w_p(x, i)$ can only increase due to this event, whereas for all $j \in [m_p]$ the weight $w_c(x, j)$ can only decrease due to this event.
\end{proof}

\begin{observation}
\label{ob:cost:monotone:1}
Consider an event which consists of a call to the subroutine {\sc Boost}$(x, k)$ for some $k \in [n]$. Because of this event, the cost $\lambda(x, k')$ of any coordinate $k' \in [n]$ can only increase. 
\end{observation}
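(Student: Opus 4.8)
The plan is to analyze how a call to $\text{{\sc Boost}}(x,k)$ affects the cost
$$\lambda(x,k') = \frac{\sum_{i=1}^{m_p} w_p(x,i)\cdot P(i,k')}{\sum_{j=1}^{m_c} w_c(x,j)\cdot C(j,k')}$$
of an arbitrary coordinate $k'\in[n]$. The key point is that a call to $\text{{\sc Boost}}(x,k)$ leaves the matrices $P$ and $C$ untouched (it only changes $x$ to $x+\delta\vec{e}_k$), so the quantities $P(i,k')$ and $C(j,k')$ in the numerator and denominator of $\lambda(x,k')$ do not change; only the constraint weights $w_p(x,i)$ and $w_c(x,j)$ change. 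Therefore it suffices to show that the numerator can only increase and the denominator can only decrease.

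First I would invoke Observation~\ref{ob:weight:monotone}, which already tells us that a call to $\text{{\sc Boost}}(x,k)$ can only increase each packing weight $w_p(x,i)$ and can only decrease each covering weight $w_c(x,j)$. Since all the factors $P(i,k')\ge 0$ and $C(j,k')\ge 0$ are non-negative, this immediately gives that the numerator $\sum_i w_p(x,i)\cdot P(i,k')$ is non-decreasing and the denominator $\sum_j w_c(x,j)\cdot C(j,k')$ is non-increasing over the course of this event. A non-decreasing non-negative numerator divided by a non-increasing positive denominator yields a non-decreasing ratio, which is exactly the claim $\lambda(x,k')$ can only increase. (One should note that the denominator stays strictly positive throughout the algorithm: if it were zero, coordinate $k'$ would be ``free'' in the covering sense and would trivially be cheap, and more to the point the algorithm only ever boosts along coordinates with a well-defined finite cost; this is implicitly maintained by the invariants, so I would just remark on it briefly rather than belabor it.)

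I do not expect any real obstacle here — the statement is essentially an immediate corollary of Observation~\ref{ob:weight:monotone} together with the fact that $\text{{\sc Boost}}$ does not alter the entries of $P$ or $C$. The only mild subtlety worth a sentence is the monotonicity-of-ratios step: making explicit that because $P(i,k'),C(j,k')\ge 0$, increasing the $w_p$'s weakly increases the numerator and decreasing the $w_c$'s weakly decreases the denominator, and hence the quotient weakly increases. This is the kind of one-line argument that I would state cleanly but not expand into case analysis.
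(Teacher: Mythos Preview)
Your proposal is correct and follows essentially the same approach as the paper: the paper's own proof is the one-liner ``Follows from Observation~\ref{ob:weight:monotone} and Definition~\ref{def:coordinate},'' and you have simply unpacked that reference by noting that the numerator of $\lambda(x,k')$ weakly increases and the denominator weakly decreases under a {\sc Boost} call.
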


\begin{proof}
Follows from Observation~\ref{ob:weight:monotone} and Definition~\ref{def:coordinate}. 
\end{proof}

\begin{observation}
\label{ob:cost:monotone:2}
Consider a relaxing update to the entry $C(j, k)$ of the matrix $C$, where $j \in [m_c]$, $k \in [n]$. Because of this event, the cost $\lambda(x, k')$ of a coordinate $k' \in [n] \setminus \{ k\}$ can only increase, whereas the cost of the coordinate $k$ can change in either direction (i.e., it can increase or decrease).
\end{observation}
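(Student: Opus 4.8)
The plan is to unfold Definition~\ref{def:coordinate} and simply track which quantities are affected when a relaxing update increases an entry $C(j,k)$ while the point $x \in \mathbb{R}_{\geq 0}^n$ is held fixed. First I would note that increasing $C(j,k)$ cannot decrease $C_j x = \sum_{k'' \in [n]} C(j, k'') \, x_{k''}$, since every $x_{k''} \geq 0$ (it strictly increases if $x_k > 0$, and stays put if $x_k = 0$). Hence the weight $w_c(x, j) = \exp(-\eta \cdot C_j x)$ can only decrease, by monotonicity of $t \mapsto \exp(-\eta t)$. On the other hand, for every $j' \in [m_c] \setminus \{j\}$ the value $C_{j'} x$ does not involve the entry $C(j,k)$, so $w_c(x, j')$ is unchanged, and for every $i \in [m_p]$ the value $P_i x$ is unchanged, so $w_p(x, i)$ is unchanged. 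This is the only input the rest of the argument needs.

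Next I would fix a coordinate $k' \in [n] \setminus \{k\}$ and examine $\lambda(x, k')$. Its numerator $\sum_{i \in [m_p]} w_p(x, i) \cdot P(i, k')$ is unchanged, since neither the $w_p(x,i)$ nor the entries $P(i,k')$ were touched. Its denominator $\sum_{j'' \in [m_c]} w_c(x, j'') \cdot C(j'', k')$ involves only the entries $C(j'', k')$, all of which are unchanged because $k' \neq k$; the sole coefficient that moved is $w_c(x, j)$, which decreased. Therefore the denominator can only decrease, and so $\lambda(x, k')$ can only increase, which is exactly the first claim. For the coordinate $k$ itself, the numerator $\sum_{i \in [m_p]} w_p(x, i) \cdot P(i, k)$ is again unchanged, but in the denominator $\sum_{j'' \in [m_c]} w_c(x, j'') \cdot C(j'', k)$ the $j^{th}$ term equals $w_c(x, j) \cdot C(j, k)$, a product of a factor that decreased ($w_c(x,j)$) and a factor that increased ($C(j,k)$), while all other terms are unchanged. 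Depending on the magnitudes, this product — hence the whole denominator, hence $\lambda(x, k)$ — can move in either direction, and a trivial two-constraint instance (or simply comparing the extreme cases of a tiny versus a huge increase) exhibits both behaviours.

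There is essentially no obstacle here: the statement is an immediate consequence of Definition~\ref{def:coordinate} together with the monotonicity of the exponential, in the same spirit as Observation~\ref{ob:weight:monotone} and Observation~\ref{ob:cost:monotone:1}. The only mild care needed is the degenerate case in which a denominator vanishes, which I would handle exactly as in the surrounding text (interpreting such a cost as $+\infty$), for which the monotonicity assertions continue to hold.
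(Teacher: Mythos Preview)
Your proposal is correct and follows precisely the paper's approach: the paper's proof simply observes that the update increases $C(j,k)$ while leaving all other entries unchanged, and then invokes Observation~\ref{ob:weight:monotone} and Definition~\ref{def:coordinate}. Your write-up is exactly the unfolding of that invocation, with the same case split on $k'\neq k$ versus $k'=k$ and the same observation that in the latter case the $j^{th}$ denominator term is a product of a decreasing and an increasing factor.
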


\begin{proof}
Because of this relaxing update, the value of $C(j, k)$ increases, whereas the values of all other entries in the matrices $C$ and $P$ remain unchanged. The proof  follows from Observation~\ref{ob:weight:monotone} and Definition~\ref{def:coordinate}. 
\end{proof}
 
 \noindent {\bf Key data structures:} We will explicitly maintain the following quantities: (1) The point $x \in \mathbb{R}_{\geq 0}^n$. (2) The weights $w_p(x, i), w_c(x, j)$ for all $i \in [m_p], j \in [m_c]$. (3) The total weights $w_p(x)$ and  $w_c(x)$.  (4) The values of $P_i x$ and $C_j x$, for all $i \in [m_p]$ and $j \in [m_c]$. 

\medskip
We are now ready to describe the exact implementation of our dynamic algorithm in more details.

  \medskip
 \noindent {\bf Phases:} Observation~\ref{ob:weight:monotone} implies that the ratio $\gamma(x) := w_p(x)/w_c(x)$ increases monotonically over time during the course of our algorithm. Armed with this observation, we split the working of our  algorithm into {\em phases}. In the beginning, we have $x = \vec{0}$ and hence $\gamma(x) = w_p(x)/w_c(x) = 1/1 = 1$. We initiate a new phase whenever the value of $\gamma(x)$ increases by a multiplicative factor of $(1+\red{\Theta(\epsilon)})$.  Thus, a given phase can span a sequence of calls to the {\sc Boost}$(x, k)$ subroutine at preprocessing, or it can also span a sequence of calls to the {\sc Boost}$(x, k)$ subroutine interspersed with a sequence of relaxing updates to the matrix $C$.

 \medskip
 \noindent {\bf Implementing a given phase:} Let $\gamma^{0}(x)$ be the value of $\gamma(x) := w_p(x)/w_c(x)$ at the start of the phase. During the phase, $\gamma(x)$ does not change by more than an  $(1+\red{\Theta(\epsilon)})$ factor. Accordingly, at any point in time within the phase, we classify a coordinate $k \in [n]$ as being cheap iff $\lambda(x, k) \leq (1+\red{\Theta(\epsilon)}) \cdot \gamma^0(x)$. 
 
 We maintain a set $E \subseteq [n]$ that always contains all the  coordinates that are currently cheap. At the start of the phase, we initialise $E \leftarrow [n]$, and then call the subroutine {\sc Boost-All}$(x, E)$ which is described in  Figure~\ref{fig:boostall}. Note that at the end of any iteration of the inner {\sc While} loop (steps 4 -- 9 in Figure~\ref{fig:boostall}), the concerned coordinate $k$ is no longer cheap. Furthermore, the monotonicity of the costs (as captured by Observation~\ref{ob:cost:monotone:1}) imply that a coordinate $k$, once removed from the set $E$, does {\em not} become cheap again within the same phase due to some future iteration of the outer {\sc While} loop in Figure~\ref{fig:boostall}.
 
 \smallskip
 \noindent {\em Handling a relaxing update within the phase:} Consider a relaxing update to an entry $C(j, k)$ of the matrix $C$. Just before this update, we had $E = \emptyset$. If the value of $\gamma(x)$ becomes more than $(1+\red{\Theta(\epsilon)}) \gamma^0(x)$ because of this update, then we terminate the current phase and initiate a new one. Otherwise, by Observation~\ref{ob:cost:monotone:2}, the only coordinate that can become cheap because of this update is $k$. Accordingly, we set $E \leftarrow \{ k\}$, and  call the subroutine {\sc Boost-All}$(x, E)$.   Observation~\ref{ob:cost:monotone:1} again implies   that as we  keep boosting $x$ along the coordinate $k$, it does not lead to any  other  coordinate $k' \in [n] \setminus \{ k \}$ becoming cheap. 
  
  \begin{figure*}[htbp]
                                                \centerline{\framebox{
                                                                \begin{minipage}{5.5in}
                                                                        \begin{tabbing}                                                                            
                                                                                1. \ \ \=   {\sc While} $E \neq \emptyset$: \\
                                                                                2. \> \ \ \ \  \ \ \ \ \= Consider any $k \in E$. \\
                                                                                3. \> \> $E \leftarrow E \setminus \{k\}$. 	\\
                                                                                4. \> \> {\sc While} $\lambda(x, k) \leq (1+\red{\Theta(\epsilon)}) \cdot \gamma^0(x)$ \\
                                                                                5. \> \> \ \ \ \ \ \ \ \ \ \= $x \leftarrow \text{{\sc Boost}}(x, k)$. \qquad \qquad	\qquad // See Figure~\ref{fig:boost}. \\ 
                                                                                6. \> \> \> {\sc If} $\min_{j \in [m_c]} C_j x \geq 1$, {\sc Then} \\
                                                                                7. \> \> \> \ \ \ \ \ \ \ \ \ \ \= {\sc Return} $x$.    \\
                                                                                8. \> \>  \> {\sc If} $\gamma(x) > (1+\red{\Theta(\epsilon)}) \cdot \gamma^0(x)$, {\sc Then} \\
                                                                                9. \>  \> \> \> Terminate the current phase.                                                                                                
                                                                                \end{tabbing}
                                                                \end{minipage}
                                                        }}
                                                        \caption{\label{fig:boostall} {\sc Boost-All}$(x, E)$.}
                                                \end{figure*}
  \medskip
The next two lemmas capture a couple of crucial properties of this dynamic algorithm.
 
  \begin{lemma}
 \label{lm:boost:bound}
 Fix any coordinate $k \in [n]$. The dynamic algorithm described above calls the subroutine {\sc Boost}$(x, k)$ at most $\red{O\left(\frac{\log^2(m_c + m_p + U/L)}{\epsilon^2}\right)}$ many times. 
 \end{lemma}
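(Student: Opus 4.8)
The plan is to bound the number of calls to $\text{{\sc Boost}}(x, k)$ by combining three facts: (i) the quantity $\gamma(x) = w_p(x)/w_c(x)$ is monotone (Observation~\ref{ob:weight:monotone}) and its overall range is polynomially bounded, so the number of phases is $O\!\left(\frac{\log(m_c+m_p+U/L)}{\epsilon}\right)$; (ii) within a single phase, once a coordinate $k$ is removed from the set $E$ it never becomes cheap again \emph{unless} a relaxing update touches $C(j,k)$ for some $j$ (Observations~\ref{ob:cost:monotone:1} and~\ref{ob:cost:monotone:2}), so the number of times $k$ re-enters $E$ within a phase is at most one plus the number of relaxing updates to column $k$ of $C$ in that phase; and (iii) each time $k$ is in $E$ and we run the inner {\sc While} loop, the number of consecutive calls to $\text{{\sc Boost}}(x,k)$ is controlled by how much $f_c(x)$ increases — by Invariant~\ref{inv:dynamic} (the analogue of Property~\ref{prop:stop}), each such call increases $f_c(x)$ by a definite amount, up to $\Theta(\epsilon)$, and $f_c(x)$ cannot exceed $1+\Theta(\epsilon)$ before we return.

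First I would make the per-call progress precise. A call to $\text{{\sc Boost}}(x,k)$ chooses the maximal $\delta$ with $\max_i P(i,k)\delta \le \epsilon/\eta$ and $\max_{j: C_jx<2} C(j,k)\delta \le \epsilon/\eta$; at the moment we call it, coordinate $k$ is cheap, so $\langle \nabla f_p(x), \vec e_k\rangle \le (1+\Theta(\epsilon))\langle \nabla f_c(x),\vec e_k\rangle$. The standard soft-min calculation (as in the proof of Property~\ref{prop:stop}, which I may assume) shows $f_c$ increases along this step by roughly $\delta \cdot \langle \nabla f_c(x),\vec e_k\rangle$, and this is $\Theta(\epsilon)$ unless $\delta$ was cut short by some covering constraint $C_j$ with $C_jx$ approaching $2$ or by a packing constraint. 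In the latter "short step" cases, the relevant $P_ix$ or $C_jx$ advanced by an additive $\Omega(\epsilon/\log(\cdot))$-type amount; since $P_ix \le 1+\Theta(\epsilon)$ always (Corollary~\ref{cor:inv:dynamic}) and the covering coordinates we care about have $C_jx$ bounded (the constraint $C_jx<2$), each packing/covering constraint can be the "bottleneck" only $O\!\left(\frac{\log(\cdot)}{\epsilon}\right)$ times. So within one contiguous run of the inner {\sc While} loop on coordinate $k$, the number of $\text{{\sc Boost}}(x,k)$ calls is $O\!\left(\frac{\log(\cdot)}{\epsilon}\right)$ "long steps" (each raising $f_c(x)$ by $\Theta(\epsilon)$, capped by $f_c(x)\le 1+\Theta(\epsilon)$, so $O(1/\epsilon)$ of them globally) plus a contribution charged to bottleneck constraints.

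Then I would assemble the count: over the whole run, $f_c(x)$ increases monotonically from $0$ to at most $1+\Theta(\epsilon)$, so the total number of "long" $\text{{\sc Boost}}$ calls (those raising $f_c$ by $\Theta(\epsilon)$) is $O(1/\epsilon)$ \emph{across all coordinates}, hence $O(1/\epsilon)$ for the fixed $k$ as well. The "short" $\text{{\sc Boost}}(x,k)$ calls are charged either to the packing constraint or to a covering constraint $j$ with $C(j,k)>0$ that hit its threshold; each such constraint absorbs at most $O\!\left(\frac{\log(m_c+m_p+U/L)}{\epsilon}\right)$ charges over the algorithm, and within a single inner-loop run at most one short call ends the run, so these contribute at most (number of phases) $\times$ (re-entries per phase) $\times O\!\left(\frac{\log(\cdot)}{\epsilon}\right)$. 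Multiplying the $O\!\left(\frac{\log(\cdot)}{\epsilon}\right)$ phases by the re-entry bound per phase and the $O\!\left(\frac{\log(\cdot)}{\epsilon}\right)$ short calls gives the claimed $O\!\left(\frac{\log^2(m_c+m_p+U/L)}{\epsilon^2}\right)$ bound.

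The main obstacle I anticipate is handling the interaction between \emph{relaxing updates to column $k$} and \emph{re-entries of $k$ into $E$} cleanly: a single relaxing update to $C(j,k)$ can make $k$ cheap again, triggering another inner-loop run, and I need the re-entry count per phase to be bounded independently of the number of updates (otherwise the bound would depend on $t$). The resolution is that a relaxing update to $C(j,k)$ that makes $k$ cheap again must have substantially \emph{decreased} $\lambda(x,k)$, and by the soft-min machinery each such "effective" decrease corresponds to $C(j,k)$ roughly doubling relative to its weighted contribution, of which there can be only $O\!\left(\frac{\log(U/L)}{\epsilon}\right)$ per entry; combined with the fact that ineffective updates don't change $E$, this keeps the re-entry count polylogarithmic. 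Making this charging argument airtight — in particular ruling out a cyclic interplay where boosting makes $\lambda(x,k)$ rise just above threshold, then an update pushes it just below, repeatedly — is the delicate point, and I expect it to require tracking $\lambda(x,k)$ against the phase threshold $(1+\Theta(\epsilon))\gamma^0(x)$ with a potential-function-style argument.
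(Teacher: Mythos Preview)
Your approach diverges substantially from the paper's, and the route you sketch has a genuine gap that you yourself flag but do not resolve.

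The paper's proof never mentions phases, re-entries into $E$, or the growth of $f_c$. Instead it tracks $x_k$ directly. Each call to \textsc{Boost}$(x,k)$ sets $\delta=\epsilon/\eta$ divided by some \emph{pivot value} --- the $P(i,k)$ or $C(j,k)$ (with $C_jx<2$) that attains the maximum in the definition of $\delta$. Bucketing pivot values into $\tau=\log(U/L)$ geometric intervals $\mathcal I_\ell=(L2^\ell,L2^{\ell+1}]$, each call with pivot value in $\mathcal I_\ell$ increases $x_k$ by at least $\epsilon/(\eta L 2^{\ell+1})$. But if the pivot is a covering constraint $j$, the condition $C_jx<2$ forces $x_k< 2/C(j,k)\le 2/(L2^\ell)$; if it is a packing constraint $i$, Corollary~\ref{cor:inv:dynamic} gives $P_ix\le 1+\Theta(\epsilon)$ and the same bound on $x_k$ follows. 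Either way there can be at most $O(\eta/\epsilon)$ calls per bucket, hence $O(\tau\cdot\eta/\epsilon)=O\bigl(\log^2(m_c+m_p+U/L)/\epsilon^2\bigr)$ calls total. This argument is oblivious to the number of relaxing updates $t$: all it uses is that $x_k$ is monotone and that the pivot constraint's value stays bounded.

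Your decomposition into ``long'' and ``short'' steps does not work as stated. By construction of $\delta$, \emph{every} \textsc{Boost} call is cut short by some pivot; there is no regime in which $f_c$ provably rises by $\Theta(\epsilon)$ per call (indeed, if the minimizing covering constraint $j^\star$ has $C(j^\star,k)=0$, a call to \textsc{Boost}$(x,k)$ need not increase $f_c$ at all). More seriously, your re-entry count within a phase genuinely can depend on $t$: an adversary can nudge $C(j,k)$ upward by an arbitrarily small amount, drop $\lambda(x,k)$ just below the phase threshold, wait for the inner loop to push it just above, and repeat. Your proposed fix --- that an ``effective'' update must roughly double $C(j,k)$ --- is not correct, because the threshold crossing of $\lambda(x,k)$ is governed by a weighted average, not by any single entry doubling. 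The paper's pivot-bucketing argument sidesteps this entirely by never needing to count re-entries.
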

 
 \begin{proof}
  Fix a coordinate $k \in [n]$ for the rest of the proof. 
  
 Consider any  call to the {\sc Boost}$(x, k)$ subroutine during the course of our dynamic algorithm, which increases coordinate $k$ of the vector $x \in \mathbb{R}_{\geq 0}^n$ by some amount $\delta > 0$. From  Figure~\ref{fig:boost}, observe that just before this specific call to the  subroutine, either there exists a packing constraint $i \in [m_p]$ with $P(i, k) \cdot \delta  = \epsilon/\eta$, or there exists a {\em not-too-large} covering constraint $j \in [m_c]$\footnote{We say that a covering constraint $j \in [m_c]$ is {\em not-too-large} iff $C_j x < 2$.} with $C(j, k) \cdot \delta = \epsilon/\eta$.  In the former (resp.~latter) case, we refer to the index $i$ (resp.~$j$) as the {\em pivot} and the current value of $P(i, k)$ (resp.~$C(j, k)$) as the {\em pivot-value} corresponding to this specific call to the subroutine {\sc Boost}$(x, k)$.\footnote{Note that the values of the entries in the constraint matrix change over time due to the sequence of updates. The {\em pivot-value} of a call to {\sc Boost}$(x, k)$ refers to the value of the concerned entry in the constraint matrix {\em just before}  the specific call to  {\sc Boost}$(x, k)$.}

Without loss of generality, assume that $U = 2^{\tau} \cdot L$ where $\tau = \log (U/L)$ is an integer, and that the value of any nonzero entry in the constraint matrix lies in the interval $(L, U]$ at all times. Partition the interval $(L, U]$ into $\tau$  segments $\mathcal{I}_0, \ldots, \mathcal{I}_{\tau-1}$, where $\mathcal{I}_{\ell} = \left(L \cdot 2^{\ell}, L \cdot 2^{\ell+1} \right]$ for all $\ell \in \{0, \ldots, \tau-1\}$.

\begin{claim}
\label{cl:bound:pivot:covering}
Consider any $\ell \in \{0, \ldots, \tau-1\}$.  Throughout the duration of our dynamic algorithm, at most $4 \eta/\epsilon$ calls  with pivot-values in $\mathcal{I}_{\ell}$ and pivots in $[m_c]$ are made to the {\sc Boost}$(x, k)$ subroutine (see~(\ref{eq:eta})).
\end{claim}

\begin{proof}
Let $x_k \in \mathbb{R}_{\geq 0}$ denote the $k^{th}$ coordinate of the vector $x \in \mathbb{R}_{\geq 0}^n$. For ease of exposition, we say that a call to  {\sc Boost}$(x, k)$ is  {\em covering-critical} iff its pivot is in $[m_c]$ and its pivot-value is in $\mathcal{I}_{\ell}$. We wish to upper bound the total number of covering-critical calls made during the course of our dynamic algorithm. 

Consider any specific covering-critical call to {\sc Boost}$(x, k)$  with pivot $j \in [m_c]$ and pivot-value  $\alpha \in \mathcal{I}_{\ell} = \left( L \cdot 2^{\ell-1}, L \cdot 2^{\ell}\right]$. From Figure~\ref{fig:boost}, it follows that this call increases  $x_k$ by $\delta = \epsilon/(\eta \alpha) \geq \epsilon/(\eta L 2^{\ell})$.

Suppose that   $T$ covering-critical calls have been made to the {\sc Boost}$(x, k)$ subroutine, where $T > 4\eta/\epsilon$. Let us denote these calls in increasing order of time by: $\Gamma_1, \Gamma_2, \ldots, \Gamma_T$. Thus, for all $t \in [T]$, we let $\Gamma_{t}$ denote the $t^{th}$ covering-critical call made during the course of our dynamic algorithm.  From the discussion in the preceding paragraph,  each of these critical calls $\Gamma_t$ increases $x_k$ by at least $\epsilon/(\eta L 2^{\ell})$. 

Initially $x_k = 0$,  and  it increases monotonically over time. Hence, just before the last critical call $\Gamma_{T}$ we have $x_k \geq (T-1) \cdot \epsilon/(\eta L 2^{\ell}) \geq (4 \eta/\epsilon) \cdot \epsilon/(\eta L 2^{\ell}) = 1/(L 2^{\ell-2})$. Let $j_T \in [m_c]$ be the pivot of the  call $\Gamma_T$. Since the pivot-value of $\Gamma_T$ lies in $\mathcal{I}_{\ell}$, it follows that $C(j_T, k) > L \cdot 2^{\ell-1}$ just before the  call $\Gamma_T$. Thus, just before the  call $\Gamma_T$, we have $C_{j_T} \cdot x \geq C(j_T, k) \cdot x_k > L  2^{\ell-1} \cdot 1/(L 2^{\ell-2}) = 2$. But if $C_{j_T} \cdot x > 2$ just before the call $\Gamma_T$, then $j_T$ cannot be the pivot of $\Gamma_T$ (see Figure~\ref{fig:boost}). This leads to a contradiction. Hence, we must have $T \leq 4\eta/\epsilon$, and this concludes the proof of the claim.
\end{proof}

\begin{claim}
\label{cl:bound:pivot:packing}
Consider any $\ell \in \{0, \ldots, \tau-1\}$.  Throughout the duration of our dynamic algorithm, at most $4 \eta/\epsilon$ calls with pivot-values in $\mathcal{I}_{\ell}$ and pivots $i \in [m_p]$ are made to the {\sc Boost}$(x, k)$ subroutine (see~(\ref{eq:eta})).
\end{claim}

\begin{proof}(Sketch) As in the proof of Claim~\ref{cl:bound:pivot:covering}, let $x_k \in \mathbb{R}_{\geq 0}$ denote the $k^{th}$ coordinate of the vector $x \in \mathbb{R}_{\geq 0}^n$. Say that a call to  {\sc Boost}$(x, k)$ is  {\em packing-critical} iff its pivot is in $[m_p]$ and its pivot-value is in $\mathcal{I}_{\ell}$.

Suppose that   $T$ covering-critical calls have been made to the {\sc Boost}$(x, k)$ subroutine, where $T > 4\eta/\epsilon$. Let us denote these calls in increasing order of time by: $\Gamma_1, \Gamma_2, \ldots, \Gamma_T$. Thus, for all $t \in [T]$, we let $\Gamma_{t}$ denote the $t^{th}$ packing-critical call made during the course of our dynamic algorithm. Let $i_T \in [m_p]$ be the pivot of the last packing-critical call $\Gamma_T$. Following the same argument as in the proof of Claim~\ref{cl:bound:pivot:covering}, we conclude that $P_{i_T} \cdot x > 2$ just before the call $\Gamma_T$.  But if $P_{i_T} \cdot x > 2$, then  $f_p(x) \geq \max_{i \in [m_p]} P_i x > 2$  according to~(\ref{eq:soft:approx}). Now, applying Invariant~\ref{inv:dynamic},  we conclude that just before the call $\Gamma_T$ we have: $\min_{j \in [m_c]} C_j x \geq f_c(x)  \geq (1+\epsilon)^{-1} \cdot f_p(x)  > (1+\epsilon)^{-1} \cdot 2  > 1$. This leads to a contradiction, since if $\min_{j \in [m_c]} C_j x > 1$, then our dynamic algorithm would immediately declare that it has found a $\red{\Theta(\epsilon)}$-approximate solution to the input LP from this point onward, without making any further calls to the {\sc Boost}$(x, k)$ subroutine (in particular, without making the call $\Gamma_T$).  Thus, it must be the case that $T \leq 4\eta/\epsilon$, and this concludes the proof of the claim.
\end{proof}

Any call to the  {\sc Boost}$(x, k)$ subroutine has its pivot in $[m_p] \cup [m_c]$ and pivot-value in $\mathcal{I}_{\ell}$, for some $\ell \in \{0, \ldots, \tau-1\}$.\footnote{This holds because the pivot-values lie in the range $(L, U]$, and this range has been partitioned into subintervals: $\mathcal{I}_0, \ldots, \mathcal{I}_{\tau-1}$. } Thus,  Claim~\ref{cl:bound:pivot:covering} and Claim~\ref{cl:bound:pivot:packing} imply that the total number of calls made to the {\sc Boost}$(x, k)$ subroutine is at most $2 \cdot \tau \cdot (4 \eta/\epsilon) = O\left(\red{\frac{\log^2(m_c + m_p + U/L)}{\epsilon^2}}\right)$. 
 \end{proof}
 
 \begin{lemma}
 \label{lm:phase:bound} The dynamic algorithm described above has at most $O\left(\red{\frac{\log(m_c + m_p + U/L)}{\epsilon^2}}\right)$ many phases. 
 \end{lemma}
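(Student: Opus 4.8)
The plan is to control the ratio $\gamma(x) := w_p(x)/w_c(x)$, which by Observation~\ref{ob:weight:monotone} is monotonically non-decreasing throughout the run of the dynamic algorithm (both across relaxing updates to $C$ and across calls to {\sc Boost}). Since a new phase is initiated only when $\gamma(x)$ has grown by a multiplicative factor of $(1+\red{\Theta(\epsilon)})$ since the start of the current phase, the number of phases is at most $\log_{1+\red{\Theta(\epsilon)}}\!\left(\gamma_{\max}/\gamma_{\mathrm{init}}\right) + O(1) = O\!\left(\epsilon^{-1}\log(\gamma_{\max}/\gamma_{\mathrm{init}})\right)$, where $\gamma_{\mathrm{init}}$ is the value of $\gamma(x)$ at $x=\vec{0}$ and $\gamma_{\max}$ is an upper bound on $\gamma(x)$ valid at every point in time other than (possibly) the final configuration at which the algorithm returns. (The additive $O(1)$ absorbs this single last event, which may blow $\gamma(x)$ up by an arbitrary factor while simultaneously triggering termination.) So it suffices to show $\log(\gamma_{\max}/\gamma_{\mathrm{init}}) = O(\eta)$, with $\eta = \epsilon^{-1}\log(m_c+m_p+U/L)$ as in~(\ref{eq:eta}); the claimed bound of $O\!\left(\epsilon^{-2}\log(m_c+m_p+U/L)\right)$ phases then follows.

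For the lower bound, at $x = \vec{0}$ we have $w_p(x) = m_p$ and $w_c(x) = m_c$, so $\gamma_{\mathrm{init}} = m_p/m_c$ and hence $\log(1/\gamma_{\mathrm{init}}) \le \log m_c$. For the upper bound, I would translate the weights back to the soft-max/soft-min potentials: by definition $w_p(x) = \exp(\eta f_p(x))$ and $w_c(x) = \exp(-\eta f_c(x))$, hence $\gamma(x) = \exp\!\big(\eta(f_p(x)+f_c(x))\big)$, and it is enough to show $f_p(x)+f_c(x) = O(1)$ away from the final configuration. The crucial observation is that $f_c(x) < 1$ at every non-terminal point: indeed by~(\ref{eq:soft:approx}) we always have $f_c(x) \le \min_{j\in[m_c]} C_j x$, and the algorithm returns the instant $\min_{j\in[m_c]} C_j x \ge 1$ (the stopping rule of the {\sc While} loop in Figure~\ref{fig:boostall}, and the analogous rule when handling a relaxing update), so at any non-terminal point $\min_{j\in[m_c]} C_j x < 1$. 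Consequently every call to {\sc Boost}$(x,k)$ is made from a state with $f_c(x) < 1$, and by Invariant~\ref{inv:dynamic} it leaves $f_c(x) < 1 + \red{\Theta(\epsilon)}$; a relaxing update to $C$ can only increase $f_c(x)$, but again as long as the algorithm has not returned we have $f_c(x) \le \min_{j} C_j x < 1$. Hence $f_c(x) < 1 + \red{\Theta(\epsilon)}$ at every non-terminal point, and by Invariant~\ref{inv:dynamic} (or Corollary~\ref{cor:inv:dynamic}) $f_p(x) \le (1+\red{\Theta(\epsilon)}) f_c(x) < 1 + \red{\Theta(\epsilon)}$. Therefore $f_p(x) + f_c(x) < 2 + \red{\Theta(\epsilon)}$ and $\gamma(x) < \exp\!\big(\eta(2+\red{\Theta(\epsilon)})\big) = \exp(O(\eta))$, i.e.\ $\gamma_{\max} \le \exp(O(\eta))$.

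Putting the pieces together, $\log(\gamma_{\max}/\gamma_{\mathrm{init}}) \le O(\eta) + \log m_c = O(\eta)$, since $\log m_c \le \log(m_c+m_p+U/L) \le \epsilon\,\eta$. Hence the number of phases is $O(\eta/\epsilon) = O\!\left(\frac{\log(m_c+m_p+U/L)}{\epsilon^2}\right)$, which proves the lemma.

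I expect the only subtle point to be the uniform bound $f_c(x) < 1$ at non-terminal configurations and, relatedly, the bookkeeping that a relaxing update to $C$ — which, per the proof of Invariant~\ref{inv:dynamic}, increases $f_c(x)$ — still cannot push $f_c(x)$ (hence $\gamma(x)$) past the stated threshold without simultaneously forcing the algorithm to terminate; this is what justifies isolating the last event into the additive $O(1)$ term. Everything else — the change of variables between constraint weights and potentials, and the geometric-series count of phases — is routine.
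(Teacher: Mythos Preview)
Your proof is correct and follows essentially the same approach as the paper: both bound $\gamma(x)$ by $\exp(O(\eta))$ at any non-terminal point and then count phases as $O(\epsilon^{-1}\log\gamma_{\max})$. The only cosmetic difference is that the paper bounds $w_p(x)$ and $w_c(x)$ separately (via $\max_i P_i x \le 1+\Theta(\epsilon)$ and the existence of some $j'$ with $C_{j'}x<1$), whereas you repackage the same information through the identity $\gamma(x)=\exp\!\big(\eta(f_p(x)+f_c(x))\big)$ and bound $f_p+f_c$; you also correctly note $\gamma_{\mathrm{init}}=m_p/m_c$ (the paper writes $1$, a harmless slip since $\log m_c\le\epsilon\eta$).
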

 
 \begin{proof}
 By Observation~\ref{ob:weight:monotone},  the ratio $\gamma(x) := w_p(x)/w_c(x)$ increases monotonically over time. Initially, we have $x = \vec{0}$, and hence $\gamma(x) = w_p(x)/w_c(x) = 1$.  Let $\kappa$ be the time-instant at which the last ever call to {\sc Boost}$(x, k)$ is made by our dynamic algorithm. We will  upper bound  $\gamma(x)$  just before the time-instant $\kappa$. 
 
Just before the time-instant $\kappa$, by Corollary~\ref{cor:inv:dynamic} we have $\max_{i \in [m_p]} P_i x \leq 1+\red{\Theta(\epsilon)}$ and hence $w_p(x, i) \leq \exp(\eta (1+\red{\Theta(\epsilon)}))$ for all $i \in [m_p]$, which implies that $w_p(x) = \sum_{i=1}^{m_p} w_p(x, i) \leq m_p \cdot \exp(\eta (1+\red{\Theta(\epsilon)}))$. 

Just before the time-instant $\kappa$, we also have $\min_{j \in [m_c]} C_j x < 1$.  Hence, at that moment there exists some $j' \in [m_c]$ with $C_{j'} x < 1$ and $w_c(x, j') \geq 1/\exp(\eta)$. This means that $w_c(x) \geq w_c(x, j') \geq 1/\exp(\eta)$. 

To summarise, we infer that $\gamma(x) = w_p(x)/w_c(x) \leq m_p \cdot \exp(\eta(1+\red{\Theta(\epsilon)})) \cdot \exp(\eta) = \red{O(m_p \cdot \exp(3\cdot \eta))}$ just before the time-instant $\kappa$. Thus, the value of $\gamma(x)$ increases from $1$ at the start of the algorithm to at most $\Lambda = \red{O(\exp(3 \cdot \eta) \cdot m_p)}$ just before the last call to the {\sc Boost}$(x, k)$ subroutine. 

Note that $\gamma(x)$ increases monotonically over time in the range $[1, \Lambda]$, and we  create a new phase whenever  $\gamma(x)$ increases by a $(1+\red{\Theta(\epsilon)})$  factor. Thus, from~(\ref{eq:eta}), we conclude that the total number of phases is at most $\log_{(1+\red{\Theta(\epsilon)})} \Lambda = O\left(\red{\frac{\log (m_p + m_c + U/L)}{\epsilon^2}}\right)$. 
\end{proof}

 \noindent {\bf Analysing the total update time:}  We first focus on bounding the total time spent on all the calls to the {\sc Boost}$(x, k)$ subroutine during the course of our algorithm. By Lemma~\ref{lm:boost:bound},  the point $x$ gets boosted along a given coordinate $k \in [n]$ at  most $\Gamma = \red{O\left(\frac{\log^2(m_c + m_p + U/L)}{\epsilon^2}\right)}$ times.  Whenever we boost $x$  along some coordinate $k$, we need to spend $O(nz(k))$ time to update all the relevant weights $\{ w_p(x,i)\}$, $\{ w_c(x, j) \}$, $w_p(x)$, $w_c(x)$ and the values $\{ C_j x\}, \{ P_i x\}$; where $nz(k)$ is the total number of nonzero entries in the $k^{th}$ column of the  matrices $P, C$ at the current moment. Hence, the total time spent in  all the calls to the {\sc Boost}$(x, k)$ subroutine  is  at most $\sum_{k=1}^n \Gamma \cdot O(nz(k)) = \Gamma \cdot O(N) = \red{O\left(N \cdot \frac{\log^2(m_c + m_p + U/L)}{\epsilon^2}\right)}$.

Next, note that whenever we create a new phase, we set $E \leftarrow [n]$ and call the subroutine {\sc Boost-All}$(x, E)$ as  in Figure~\ref{fig:boostall}. Also, whenever there is a relaxing update to an entry $C(j, k)$ in the covering matrix, we  set $E \leftarrow \{ k \}$ and then make a call to {\sc Boost-All}$(x, E)$. Thus, it might very well be the case that a coordinate $k \in [n]$   moves in and out of the set $E$ on multiple occasions, without being boosted at all. The time spent on these  {\em apparently futile} operations, which modify the set $E$ without leading to any call to {\sc Boost}$(x, k)$, is captured by  steps 1--3 of Figure~\ref{fig:boostall}. Lemma~\ref{lm:phase:bound} implies that the total time spent in this manner, throughout a sequence of  $t$ relaxing updates and across all the coordinates, is  at most:
\begin{eqnarray*} 
(\text{number of phases}) \cdot O(\text{number of coordinates}) + O(t) & = & \red{O\left(n \cdot \frac{\log (m_c + m_p + U/L)}{\epsilon^2} +t \right)} \\
& = & \red{O\left(N \cdot \frac{\log (m_c + m_p + U/L)}{\epsilon^2} +t \right)}.
\end{eqnarray*}

Finally, note that the preprocessing time is  at most $O\left(N  \cdot \frac{\log(m_c + m_p)}{\epsilon^2} \right)$, since the static algorithm from Section~\ref{sec:static} can be implemented in $O\left(N  \cdot \frac{\log(m_c + m_p)}{\epsilon^2} \right)$ time~\cite{Young2014nearly}. This leads us to Theorem~\ref{th:relaxing:positiveLP:algo} for relaxing entry updates to covering constraints.

 \subsubsection{The full algorithm: Handling relaxing entry updates to both packing and covering constraints}
 \label{subsec:dynamic:full}

 We start by identifying the main difficulty in extending the dynamic algorithm from Section~\ref{subsec:dynamic:covering} to the setting where relaxing updates occur to {\em both} packing and covering constraints of the input positive LP. Consider a relaxing update to the entry $P(i, k)$, for some $i \in [m_p], k \in [n]$. Due to this relaxing update, the value of $P(i, k)$ gets reduced, and accordingly the weight $w_p(x, i)$ also decreases. This contradicts the monotonicity of the weights as captured in Observation~\ref{ob:weight:monotone}. Thus, $w_p(x)$ no longer increases monotonically with time, which in turn invalidates the proof of Lemma~\ref{lm:phase:bound}. We circumvent this difficulty by working with an {\em extended} positive LP $(P^*, C^*)$, which is derived from the input  LP $(P, C)$ in the following manner. 
 
 \medskip
 \noindent {\bf The extended LP:}
Throughout the sequence of updates to the input LP, we ensure that we always have:
\begin{itemize}
\item  $P^* \in \mathbb{R}_{\geq 0}^{m_p \times (n+1)}$, where $P^*(i, k) = P(i, k)$ and $P^*(i, n+1) \geq 0$ for all $i \in [m_p], k \in [n]$. 
\item  $C^* \in \mathbb{R}_{\geq 0}^{m_c \times (n+1)}$, where $C^*(j, k) = C(j, k)$ and $C^*(j, n+1) = 0$ for all $j \in [m_c], k \in [n]$. 
\end{itemize}
It is easy to check that the input LP $(P, C)$ is feasible iff the extended LP $(P^*, C^*)$ is feasible. Furthermore, if $x^* = (x^*_1, \ldots, x^*_{n+1}) \in \mathbb{R}_{\geq 0}^{n+1}$ is a $\red{\Theta(\epsilon)}$-approximate solution to the extended LP $(P^*, C^*)$, then $x = (x^*_1, \ldots, x^*_n) \in \mathbb{R}_{\geq 0}^n$ is a $\red{\Theta(\epsilon)}$-approximate solution to the input LP $(P, C)$. Accordingly, our new dynamic algorithm will attempt to approximately solve the extended LP $(P^*, C^*)$. Note that  we are free to set the values $P^*(i, n+1)$ of the matrix $P^*$ in any way we chose (provided they remain nonnegative). We will use this to our advantage while dealing with relaxing updates to the packing constraints in the input LP. It is important to emphasise that except the entries $P^*(i, n+1)$, at any point in time the values of all other entries in the matrices $P^*$ and $C^*$ are determined by the current status of the input LP $(P, C)$. 

\medskip
\noindent {\bf Initialisation:}  At preprocessing, we set $P^*(i, n+1) \leftarrow 0$ for all $i \in [m_p]$. We  start with an $x^* = (x^*_1, \ldots, x^*_{n+1}) \in \mathbb{R}_{\geq 0}^{n+1}$ where $x^*_1 = \cdots = x^*_n = 0$ and $x^*_{n+1} = 1$. At this point in time, we have $f_p(x^*) = f_c(x^*) = 0$. From this point onward, we will keep modifying  $x^*$ until it becomes a $\red{\Theta(\epsilon)}$-approximate solution to the extended LP $(P^*, C^*)$, by following the same template as outlined in Section~\ref{subsec:dynamic:template} and Section~\ref{subsec:dynamic:covering}. In addition,  now after every relaxing update to an entry in the packing matrix (say) $P(i, k)$, we will change the value of $P^*(i, n+1)$ in such a way that works to our advantage. This is explained in more details below. For clarity of exposition, henceforth we assume that after an update to an entry $P(i, k)$ (resp.~$C(j, k)$) of the matrix $P$ (resp.~$C$), the corresponding entry $P^*(i, k)$ (resp.~$C^*(j, k)$) implicitly gets updated so as to ensure the equality $P^*(i, k) = P(i, k)$ (resp.~$C^*(j, k) = C(j, k)$). 

\medskip
\noindent {\bf Pseudo-updates to the extended LP:}  Consider any relaxing update to an entry $P(i, k)$ of the matrix $P$, where $i \in [m_p], k \in [n]$. Immediately after this relaxing update, we {\em increase} the value of $P^*(i, n+1)$ to such an extent that the value of $P^*_i x^*$ (and hence the weight $w_p(x^*)$) remains unchanged. We refer to this step as a {\em pseudo-update} to the extended LP. The only purpose behind the pseudo-update is to ensure that the monotonicity of the weights as captured by Observation~\ref{ob:weight:monotone} continues to hold. 

\medskip
 In summary, the overall algorithm follows  the same template as  in Section~\ref{subsec:dynamic:template} and Section~\ref{subsec:dynamic:covering}, with  the following caveat: Immediately after every relaxing update to a packing constraint in the input LP, we perform the corresponding pseudo-update to the extended LP so as to ensure the validity of Observation~\ref{ob:weight:monotone}. It is easy to check that all the observations, lemmas and inferences derived in Section~\ref{subsec:dynamic:covering} continue to hold here. This leads us to the  dynamic algorithm promised in Theorem~\ref{th:relaxing:positiveLP:algo}, for relaxing entry updates.

\subsection{Handling Restricting (resp.~Relaxing) Updates to a Packing (resp.~Covering) LP}
\label{sec:exception:packing}

Consider the setting where we wish to maintain a solution to Problem~\ref{prob:basic} when the matrix $C$ is undergoing relaxing entry updates (or equivalently, the matric $C^{\top}$ is undergoing restricting entry updates). Thus, each update increases the value of some entry of $C$. The idea is simple. We run the dynamic algorithm from Section~\ref{sec:dynamic} on the positive LP defined by~(\ref{eq:mixed:LP:1}) and~(\ref{eq:mixed:LP:2}), which is undergoing relaxing entry updates. 
\begin{eqnarray}
\mathbb{1}^{\top} x & \leq & 1 \label{eq:mixed:LP:1} \\
C x & \geq & \mathbb{1} \label{eq:mixed:LP:2}
\end{eqnarray}

Suppose that this  algorithm has just finished processing an update. Consider two possible cases.

\medskip
\noindent {\bf Case I:} The algorithm returns a solution $x \in \mathbb{R}_{\geq 0}^n$ which satisfies $\mathbb{1}^{\top} x \leq 1 +\Theta(\epsilon)$ and $C x \geq (1-\Theta(\epsilon)) \cdot \mathbb{1}$. In this case, we  simply return the vector $x$ and terminate our algorithm. The vector $x$ will continue to remain an $\epsilon$-approximate solution to the input LP after any future update.

\medskip
\noindent {\bf Case II:} The algorithm declares that the positive LP is infeasible as there is no cheap co-ordinate. From Section~\ref{subsec:static}, recall that $w_c(x, j)$ denotes the weight of a covering constraint $j \in [m]$, and that $w_c(x) = \sum_{j \in [m]} w_c(x, j)$ denotes the total weight of all the covering constraints. Define the vector $\hat{y} \in \mathbb{R}_{\geq 0}^m$, where $\hat{y}_j = w_c(x, j)/w_c(x)$ for all $j \in [m]$. We now claim that $\hat{y}$ is a feasible solution to the dual packing LP.
\begin{claim}
\label{cl:restricting}
We have $\mathbb{1}^{\top} \hat{y} = 1$ and $C^{\top} \hat{y} \leq \mathbb{1}$.
\end{claim}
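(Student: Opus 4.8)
The plan is to verify the two conditions separately, both essentially by unwinding definitions and then invoking Property~\ref{prop:infeasible} (or rather, its proof). The first condition, $\mathbb{1}^{\top} \hat{y} = 1$, is immediate: since $\hat{y}_j = w_c(x, j)/w_c(x)$ and $w_c(x) = \sum_{j \in [m]} w_c(x, j)$ by definition, we have $\sum_{j \in [m]} \hat{y}_j = \left(\sum_{j} w_c(x, j)\right)/w_c(x) = 1$, and every $\hat{y}_j \geq 0$ because each $w_c(x, j) = \exp(-\eta C_j x)$ is positive. So $\hat{y} \in \Delta^m$.

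The second condition, $C^{\top} \hat{y} \leq \mathbb{1}$, is where the work lies. Fix a coordinate $k \in [n]$; we need $(C^{\top} \hat{y})_k = \sum_{j \in [m]} \hat{y}_j \cdot C(j, k) \leq 1$. Substituting the definition of $\hat{y}$, this is equivalent to $\sum_{j} w_c(x, j) \cdot C(j, k) \leq w_c(x)$, i.e., $\sum_j w_c(x,j) C(j,k) \leq \sum_j w_c(x,j)$. Now I would bring in the hypothesis of Case~II: there is no cheap coordinate at $x$. In the setting of~(\ref{eq:mixed:LP:1})--(\ref{eq:mixed:LP:2}), the packing side is the single constraint $\mathbb{1}^{\top} x \leq 1$, so $P$ is the $1 \times n$ all-ones row, $m_p = 1$, and $w_p(x, 1) = \exp(\eta \cdot \mathbb{1}^{\top} x) = w_p(x)$, with $P(1, k) = 1$ for all $k$. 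Hence, by Definition~\ref{def:coordinate}, $\lambda(x, k) = \frac{w_p(x) \cdot 1}{\sum_j w_c(x, j) C(j, k)}$, and "$k$ is not cheap" means $\lambda(x, k) > (1+\Theta(\epsilon)) \cdot w_p(x)/w_c(x) \geq w_p(x)/w_c(x)$. Rearranging, $\frac{w_p(x)}{\sum_j w_c(x,j) C(j,k)} > \frac{w_p(x)}{w_c(x)}$, which (since $w_p(x) > 0$) gives exactly $\sum_j w_c(x, j) C(j, k) < w_c(x)$, hence $(C^{\top}\hat y)_k < 1$. Since this holds for every $k \in [n]$, we get $C^{\top} \hat{y} \leq \mathbb{1}$, completing the claim.

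I do not anticipate a genuine obstacle here — the claim is a direct translation of "no cheap coordinate" into the language of the dual packing LP, and it mirrors the duality-certificate argument already used in the proof of Property~\ref{prop:infeasible}. The only mild subtlety worth stating carefully is the specialization of the generic positive-LP weights to the particular instance~(\ref{eq:mixed:LP:1})--(\ref{eq:mixed:LP:2}): one must note that the packing matrix here is the single all-ones row, so that $w_p(x)$ and $P(i,k)=1$ drop out cleanly and the cost $\lambda(x,k)$ reduces to $w_p(x) / \left(\sum_j w_c(x,j) C(j,k)\right)$. Once that identification is made, the inequality chain is a two-line rearrangement. It may also be worth remarking, after the claim, that $\hat y$ being feasible for the dual packing LP (with objective $\mathbb{1}^{\top}\hat y = 1$) is precisely the certificate that Problem~\ref{prob:basic} should return in this case, and that it remains feasible under all future restricting updates to $C^{\top}$ (equivalently relaxing updates to $C$), so the dynamic algorithm can terminate.
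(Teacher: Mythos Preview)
Your proposal is correct and takes essentially the same approach as the paper: both verify $\mathbb{1}^\top \hat{y}=1$ from the normalization, then fix $k\in[n]$, invoke ``$k$ is not cheap'' via Definition~\ref{def:coordinate}, and rearrange to obtain $(C^\top \hat{y})_k \leq 1$. You are simply more explicit than the paper about the specialization to the instance~(\ref{eq:mixed:LP:1})--(\ref{eq:mixed:LP:2}) (single all-ones packing row, so the $w_p$ factors cancel), which the paper leaves implicit in its one-line display.
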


\begin{proof}
Since $\{\hat{y}_j\}$ denotes the normalized weights of the covering constraints, we have $\mathbb{1}^{\top} \hat{y} = 1$. Next, fix any $k \in [n]$. Since we are in Case II, the coordinate $k$ is {\em not} cheap. So, from Definition~\ref{def:coordinate}, we infer that:
$$\frac{1}{\sum_{j \in [m]} \hat{y}_j \cdot C(j,k)} \geq 1+\Theta(\epsilon), \text{ or equivalently, } \left( C^{\top} \hat{y}\right)_k =  \sum_{j \in [m]} \hat{y}_j \cdot C(j,k) \leq 1 - \Theta(\epsilon) \leq 1.$$
Accordingly, we get $C^{\top} \hat{y} \leq \mathbb{1}$. 
\end{proof}
Our algorithm maintains an estimate $w^*_c(x) \in \left[ w_c(x), (1+\epsilon) w_c(x) \right]$.  If after processing an update, we end up in Case II, then we return a $y \in \mathbb{R}_{\geq 0}^m$, where $y_j := w_c(x, j)/w^*_c(x)$ for all $j \in [m]$. As $\hat{y}_j \leq y_j \leq (1+\Theta(\epsilon)) \cdot \hat{y}_j$ for all $j \in [m]$,  Claim~\ref{cl:restricting} implies that $\mathbb{1}^{\top} y \geq 1$ and $C^{\top} y \leq (1+\Theta(\epsilon)) \cdot \mathbb{1}$.

\medskip
It is easy to maintain the vector $y$ without any significant overhead in the total update time,  because of  three reasons: (1) the algorithm from Section~\ref{sec:dynamic} explicitly maintains the weight $w_c(x, j)$ for each $j \in [m]$, (2) we do not need to change the estimate $w^*_c(x)$ as long as we remain in the same phase, and (3) by Lemma~\ref{lm:phase:bound} we have at most $O\left(\frac{\log(m + U/L)}{\epsilon^2}\right)$ many phases.  Note that whenever we start a new phase we have to spend $O(m)$ time to update the vector $y$. Theorem~\ref{th:packing:restricting:1} now follows from Theorem~\ref{th:relaxing:positiveLP:algo}.

\begin{theorem}
\label{th:packing:restricting:1}
We can deterministically maintain a solution to Problem~\ref{prob:basic} when the matrix $C$ undergoes $t$ relaxing entry updates in $O\left(t+ N \cdot \frac{\log^2(m + U/L)}{\epsilon^2} +  m \cdot \frac{\log(m + U/L)}{\epsilon^2}\right) = O\left(t+ N \cdot \frac{\log^2(m + U/L)}{\epsilon^2}\right)$ total  time, where $N$ denotes the maximum number of non-zero entries in $C$ throughout these updates.
\end{theorem}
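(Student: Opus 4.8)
The plan is to reduce the task to Theorem~\ref{th:relaxing:positiveLP:algo} by running the dynamic greedy MWU algorithm of Section~\ref{sec:dynamic} on the positive LP~(\ref{eq:mixed:LP:1})--(\ref{eq:mixed:LP:2}), i.e.\ on the pair $(P,C)$ with a single packing row $P = \mathbb{1}^\top$ (so $m_p = 1$) and $m_c = m$ covering rows. Each input update increases an entry $C_{ij}$, which is precisely a relaxing entry update to a covering constraint of this positive LP; moreover the lone packing constraint $\mathbb{1}^\top x \leq 1$ never changes, so we are in the special case of Section~\ref{subsec:dynamic:covering} (updates to covering constraints only) and do not even need the extended-LP machinery of Section~\ref{subsec:dynamic:full}. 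Hence Theorem~\ref{th:relaxing:positiveLP:algo} applies directly and processes the $t$ updates in $O\!\left(N \cdot \frac{\log^2(m+U/L)}{\epsilon^2} + t\right)$ total time after substituting $m_c + m_p = \Theta(m)$; we run it with accuracy a constant factor below $\epsilon$ so the resulting $\Theta(\epsilon)$-guarantees fit the tolerances of Problem~\ref{prob:basic}. It remains to (a) translate the algorithm's state into a correctly-formatted answer for Problem~\ref{prob:basic} after every update and (b) do so within this time budget.

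For (a), I case on the state of the positive-LP algorithm. If it currently holds an $x$ with $\mathbb{1}^\top x \leq 1+\Theta(\epsilon)$ and $Cx \geq (1-\Theta(\epsilon)) \cdot \mathbb{1}$, output $x$ and halt: future updates only increase entries of $C$, hence only increase the coordinates of $Cx$, while leaving $\mathbb{1}^\top x$ untouched, so $x$ stays a valid primal answer forever. If instead the algorithm is in the state where no cheap coordinate exists at the current $x$, I build a dual answer from the covering weights $w_c(x,j) = \exp(-\eta C_j x)$ already maintained by the greedy MWU algorithm: set $\hat y_j := w_c(x,j)/w_c(x)$, so $\mathbb{1}^\top \hat y = 1$, and observe that, since $P$ has the single all-ones row, the failure of the cheapness test of Definition~\ref{def:coordinate} at each coordinate $k$ rearranges exactly to $(C^\top \hat y)_k = \sum_j \hat y_j \, C(j,k) \leq 1$ --- this is Claim~\ref{cl:restricting}. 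Thus $\hat y$ is feasible for the dual packing LP and is an $\epsilon$-approximate dual answer for Problem~\ref{prob:basic}.

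For (b), recomputing $\hat y$ after every update would cost $O(m)$ per update, i.e.\ $O(mt)$, which is too much. Instead I output $y_j := w_c(x,j)/w_c^*(x)$, where $w_c^*(x)$ is an estimate of $w_c(x)$ that is held fixed within a phase of the greedy MWU algorithm and refreshed (together with the whole vector $y$) only when a new phase starts; keeping $w_c^*(x)$ within a $(1\pm\epsilon)$ factor of $w_c(x)$ forces at most one refresh per phase, and by Lemma~\ref{lm:phase:bound} there are only $O\!\left(\frac{\log(m+U/L)}{\epsilon^2}\right)$ phases, so the extra work is $O\!\left(m \cdot \frac{\log(m+U/L)}{\epsilon^2}\right)$. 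Assuming WLOG that $C$ has no all-zero row (such a row makes the covering LP trivially infeasible and is answered by a one-coordinate dual vector), we have $m \leq N$, so this term is swallowed by $O\!\left(N \cdot \frac{\log^2(m+U/L)}{\epsilon^2}\right)$, yielding the claimed bound; the multiplicative slop in $w_c^*(x)$ is harmless since the exact inequality $C^\top \hat y \leq \mathbb{1}$ leaves room to deduce $\mathbb{1}^\top y \geq 1-\Theta(\epsilon)$ and $C^\top y \leq (1+\Theta(\epsilon)) \cdot \mathbb{1}$.

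The one point that needs care --- and the only genuine subtlety --- is that, in contrast to the primal answer, the dual answer does \emph{not} persist: a subsequent relaxing update to $C$ increases $C^\top y$ and can break $C^\top y \leq (1+\Theta(\epsilon)) \cdot \mathbb{1}$. Hence in the ``no cheap coordinate'' state the algorithm cannot halt; it must keep running and re-emit a cheaply-updated dual vector after each update, invoking Claim~\ref{cl:restricting} afresh each time, until it eventually transitions to the primal state (at which point the answer becomes permanent). This is exactly why the $O(m)$ refresh must be amortized over phases rather than performed once, and it is the reason the phase bound of Lemma~\ref{lm:phase:bound} enters the running time. Everything else --- the parameter substitution, the monotonicity of $Cx$ under relaxing updates, and the arithmetic absorbing lower-order terms --- is routine.
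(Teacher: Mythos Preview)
Your proposal is correct and follows essentially the same approach as the paper: both run the dynamic greedy MWU algorithm on the positive LP $(\mathbb{1}^\top, C)$, return $x$ and halt once a primal solution is found, and in the infeasible state output the normalized covering-weight vector $y_j = w_c(x,j)/w_c^*(x)$ with the normalizer $w_c^*(x)$ refreshed only at phase boundaries, invoking Claim~\ref{cl:restricting} for correctness and Lemma~\ref{lm:phase:bound} to bound the number of $O(m)$ refreshes. Your extra remarks---that the dual answer does not persist under relaxing updates and the WLOG assumption that $C$ has no all-zero row to justify $m \leq N$---are useful clarifications the paper leaves implicit, but the argument is the same.
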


Finally, we consider the problem of maintaining an $\epsilon$-approximate optimal solution to a generic covering (resp.~packing) LP under relaxing (resp.~restricting) updates. It is easy to verify that if we start with Theorem~\ref{th:packing:restricting:1} and apply the same reduction outlined in Section~\ref{sec:reductions:dynamic}, then we obtain the following result.

\begin{theorem}
\label{th:main:dynamic:restricting:packing}
Consider any sequence of $t$ relaxing updates to LP~(\ref{eq:covering:general}), where each update either increases an entry of  $C$, or decreases an entry of $a, b$. Throughout these updates, let $N$ denote the maximum number of non-zero entries in $C$, and let $L$ (resp.~$U$) be a lower (resp.~upper) bound on the minimum (resp.~maximum) value of a non-zero entry of $C, a, b$. We can deterministically maintain an $\epsilon$-approximate optimal solution to LP~(\ref{eq:covering:general}) in $O\left(t \cdot \frac{\log (nU/L)}{\epsilon} +  N \cdot \frac{\log^2(m + U/L)}{\epsilon^3} \cdot \log (nU/L)\right)$ total  time. The same guarantee also  holds for maintaining an $\epsilon$-approximate solution to LP~(\ref{eq:packing:general}) under restricting updates.
\end{theorem}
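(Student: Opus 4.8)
The plan is to obtain Theorem~\ref{th:main:dynamic:restricting:packing} by running Theorem~\ref{th:packing:restricting:1} through exactly the scaling-and-guessing reduction of Section~\ref{sec:reductions:dynamic}, with the roles of ``restricting'' and ``relaxing'' swapped throughout. Recall that Theorem~\ref{th:packing:restricting:1} already solves Problem~\ref{prob:basic} when the matrix $C$ undergoes $t$ \emph{relaxing entry updates} (each update \emph{increasing} some entry of $C$), in total time $O\!\left(t + N \cdot \frac{\log^2(m+U/L)}{\epsilon^2}\right)$. So I would first repeat the scaling step from Section~\ref{sec:reductions:static}: replace $C$ by $C'$ with $C'_{ij} = C_{ij}/(a_j b_i)$, turning LP~(\ref{eq:covering:general}) and LP~(\ref{eq:packing:general}) into the standard forms~(\ref{eq:covering:general:1}) and~(\ref{eq:packing:general:1}) whose aspect ratio is $\poly(nU/L)$; then make $\Theta(\log(nU/L)/\epsilon)$ geometric guesses $\mu$ for {\sc Opt}, so that for each $\mu$ solving LP~(\ref{eq:covering:general}) to within $1+\Theta(\epsilon)$ against that guess is precisely Problem~\ref{eq:covering:general:3} for the matrix $\mu C'$, an instance of Problem~\ref{prob:basic} with $\lambda = O(nU^3/L^3)$. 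We keep one independent copy of the algorithm of Theorem~\ref{th:packing:restricting:1} for each guess.

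The one genuinely new point to verify is that update directions propagate correctly. A relaxing update to LP~(\ref{eq:covering:general}) either increases some $C_{ij}$ or decreases some $a_j$ or $b_i$; in all three cases $C'_{ij} = C_{ij}/(a_j b_i)$ can only \emph{increase}, so it induces relaxing entry updates (increases) to $C'$ and hence to $\mu C'$ --- exactly the update model Theorem~\ref{th:packing:restricting:1} handles. For the translation updates (touching $a$ or $b$) I would reuse the coarsening of Section~\ref{sec:reductions:dynamic}: since an $\epsilon$-approximation suffices, each entry of $C,a,b$ participates in at most $O(\log(U/L)/\epsilon)$ \emph{meaningful} updates, and a meaningful relaxing update to $b_i$ (resp.~$a_j$) is simulated by scaling \emph{up} every nonzero entry of row $i$ (resp.~column $j$) of the standard-form matrix, i.e.~by $O(N_i)$ (resp.~$O(N^j)$) relaxing entry updates. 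Hence $t$ relaxing updates to LP~(\ref{eq:covering:general}) become at most $\tau' = t + O\!\left(N \cdot \frac{\log(U/L)}{\epsilon}\right)$ relaxing entry updates fed to each copy.

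Plugging $\tau'$ and $\lambda = O(nU^3/L^3)$ into Theorem~\ref{th:packing:restricting:1} (so that, under the paper's standing assumption $U/L = \poly(m+n)$, the enlarged aspect ratio contributes only $O(\log(m+U/L))$ up to the explicit $\log(nU/L)$ factor) gives $O\!\left(t + N \cdot \frac{\log^2(m+U/L)}{\epsilon^2}\right)$ per copy; multiplying by the $\Theta(\log(nU/L)/\epsilon)$ copies yields the claimed $O\!\left(t \cdot \frac{\log(nU/L)}{\epsilon} + N \cdot \frac{\log^2(m+U/L)}{\epsilon^3} \cdot \log(nU/L)\right)$. For correctness at a query, I would argue exactly as in Section~\ref{sec:reductions:static}: return (after the standard $1/(1-\Theta(\epsilon))$ rescaling and unscaling to the original variables) the primal solution held by the smallest guess $\mu$ whose copy currently reports a primal $x$; geometricity of the guesses makes $a^{\top} x \leq (1+\Theta(\epsilon)) \cdot \text{{\sc Opt}}$ while $Cx \geq b$. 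The symmetric choice --- the largest guess whose copy reports a dual --- is an $\epsilon$-approximate solution to LP~(\ref{eq:packing:general}), which settles the restricting-update packing case by duality.

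The step I expect to require the most care is not the time accounting but confirming that the guessing scheme is robust to {\sc Opt} \emph{decreasing monotonically} under relaxing updates. A copy for a guess $\mu$ that has committed to a primal $x$ keeps $x$ feasible (relaxing updates preserve covering-feasibility), even though $x$ may stop being near-optimal once {\sc Opt} drops below $\mu/(1+\Theta(\epsilon))$; meanwhile the copy for a smaller guess $\mu'$ will, after more boosting, switch from reporting a dual to reporting a primal precisely when {\sc Opt} falls to roughly $\mu'$. Thus the ``smallest primal-reporting guess'' is always present and tracks {\sc Opt} to within a $(1+\Theta(\epsilon))$ factor at every point in time, and symmetrically for the dual; once this bookkeeping is fixed, all observations, lemmas, and inferences of Section~\ref{sec:reductions:dynamic} carry over mutatis mutandis.
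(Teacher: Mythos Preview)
Your proposal is correct and follows exactly the approach the paper takes: the paper's own proof is a single sentence stating that one starts from Theorem~\ref{th:packing:restricting:1} and applies the reduction of Section~\ref{sec:reductions:dynamic}. You have simply spelled out the details the paper leaves implicit---in particular, checking that relaxing updates to $C,a,b$ induce relaxing (increasing) entry updates to the scaled matrix $\mu C'$, and that the smallest primal-reporting guess tracks the monotonically decreasing {\sc Opt}---and the time accounting matches the claimed bound term for term.
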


\section{Conditional Lower Bounds}

\label{sec:lowerbounds:pure}

In previous sections, we presented partially dynamic algorithms for packing-covering LP-s with polylogarithmic amortized update times. This leads to a couple of natural questions:  (1) Can we solve the same problem in polylogarithmic {\em worst case} update time? (2) Can we design a {\em fully dynamic} algorithm for packing-covering LPs in polylogarithmic {\em amortized} update time? We will now provide conditional lower bounds that rule out the possibility of having these stronger guarantees. 

Our lower bounds hold under Strong Exponential Time Hypothesis (SETH), and the starting points of our reductions are  Problem~\ref{prob:inner-product} and \Cref{th:hardness} described below~\cite{AbboudRW17}.


\begin{problem}
\label{prob:inner-product} As input, we get four parameters $N, k, m, \beta^*$ such that $\Theta\left(N^{o(1)} \right) = \beta^* \leq k \leq m = \Theta\left(N^{o(1)}\right)$, and two collections of $N$ vectors $A = \left\{a^{(1)}, \ldots, a^{(N)} \right\}$ and $B = \left\{ b^{(1)}, \ldots, b^{(N)} \right\}$ such that $a, b \in \{0,1\}^m$ for all $a \in A, b \in B$. Furthermore, each vector $b \in B$ has exactly $k$ non-zero entries. We want to design an algorithm that can distinguish between the following two cases.
\begin{itemize}
    \item (1) There exists some $a \in A$ and $b \in B$ such that $a \geq b$.
    \item (2) For all $a \in A$ and $b \in B$, we have $a \cdot b \leq k^* := \frac{k}{\beta^*}$. 
\end{itemize}
\end{problem}

\begin{theorem}\cite{AbboudRW17}
\label{th:hardness}
Under SETH, any algorithm for Problem~\ref{prob:inner-product}  runs in $\Omega(N^{2-\epsilon})$ time for all $\epsilon > 0$.
\end{theorem}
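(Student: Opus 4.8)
The plan is to derive \Cref{th:hardness} by a reduction from the \emph{Orthogonal Vectors} problem ($\mathrm{OV}$), which is well known to be SETH-hard: given families $A', B' \subseteq \{0,1\}^d$ with $|A'| = |B'| = N$ and $d = \polylog(N)$, deciding whether there exist $a' \in A'$, $b' \in B'$ with $\langle a', b' \rangle = 0$ requires $N^{2-o(1)}$ time under SETH (via the sparsification lemma together with Williams' reduction from $k$-SAT). The goal is to post-process such an instance, in $\Otil(N)$ time, into an instance of \Cref{prob:inner-product} with parameters $N$, $k$, $m$, $\beta^*$ all equal to $N^{o(1)}$, so that a ``yes'' $\mathrm{OV}$-instance falls into Case~(1) and a ``no'' $\mathrm{OV}$-instance falls into Case~(2). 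Any $O(N^{2-\epsilon})$ algorithm for \Cref{prob:inner-product} would then decide $\mathrm{OV}$ in $O(N^{2-\epsilon})$ time, contradicting SETH, so the claim follows.

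The heart of the reduction is a \emph{gap-amplification gadget} of the distributed-PCP / algebraic-coding type used by \cite{AbboudRW17}. Observe first that $\langle a', b' \rangle = 0$ if and only if the block-restricted inner product vanishes on each block of a partition of the $d$ coordinates into $t$ blocks of size $\ell$ with $t\ell \approx d$. Encode, for each $a' \in A'$, the tuple of block behaviours by a low-degree polynomial over a finite field $\mathbb{F}$, and evaluate it on a fixed set of points to get a Reed--Muller/Reed--Solomon-type codeword (and similarly for each $b'$), arranged so that when $\langle a', b'\rangle = 0$ the two codewords are ``consistent'' at \emph{every} evaluation point, whereas when $\langle a', b'\rangle \neq 0$ they are consistent on at most a $1/\beta^*$ fraction of points (this is governed by the relative distance of the code). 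Choosing $|\mathbb{F}|$, the degree, and the block parameters appropriately pushes the gap $\beta^*$ up to $N^{o(1)}$ while keeping the number of evaluation points, and hence the final dimension $m$, at $N^{o(1)}$. Finally, convert the consistency gap into a Boolean inner-product gap: replace each evaluation point by a one-hot block over $\mathbb{F}$, where Bob's vector carries a single $1$ (the value his codeword takes at that point) and Alice's vector carries a $1$ at each value she would ``accept'' at that point. Then $a \cdot b$ counts exactly the consistent points: orthogonal pairs give $a \cdot b = k$ (the number of blocks) and in fact $a \geq b$ coordinatewise, matching Case~(1), while non-orthogonal pairs give $a \cdot b \leq k/\beta^*$, matching Case~(2); padding the $b$-vectors with extra $1$s in dummy coordinates makes each have exactly $k$ ones without affecting inner products with $a$-vectors.

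I would then verify the bookkeeping: that $k$ (number of evaluation points), $m$ (total Boolean dimension), and $\beta^*$ (relative-distance gap) are all simultaneously $N^{o(1)}$ and satisfy $\beta^* \leq k \leq m$, and that the whole transformation, including computing all $2N$ output vectors, runs in $\Otil(N)$ time, so that the $N^{2-o(1)}$ lower bound for $\mathrm{OV}$ transfers verbatim to \Cref{prob:inner-product}.

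The step I expect to be the main obstacle is the design and analysis of the gap gadget itself: exhibiting an algebraic encoding that simultaneously achieves \emph{completeness} (orthogonal pairs become coordinatewise-dominating pairs with inner product $k$) and \emph{soundness} (non-orthogonal pairs have inner product $\leq k/\beta^*$) with a \emph{super-constant}, indeed $N^{o(1)}$, multiplicative gap, while the dimension stays $N^{o(1)}$. A naive block-by-block product construction only yields a constant-factor gap; obtaining the $\omega(1)$ gap is exactly where the low-degree-extension machinery and the careful balancing of field size against block count are needed, and this is precisely the content of the distributed-PCP theorem of \cite{AbboudRW17}. Everything surrounding it — the SETH-hardness of $\mathrm{OV}$, the one-hot reduction, and the padding — is routine.
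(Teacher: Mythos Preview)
The paper does not prove \Cref{th:hardness} at all: it is stated with the citation \cite{AbboudRW17} and used as a black box. There is no ``paper's own proof'' to compare against.

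Your sketch is a reasonable high-level outline of the distributed-PCP construction in \cite{AbboudRW17}, and you correctly identify the gap-amplification gadget as the nontrivial step. But for the purposes of this paper, no proof is required; the theorem is imported wholesale from prior work, and the paper's contribution in \Cref{sec:lowerbounds:pure} is only the reduction \emph{from} \Cref{prob:inner-product} to the dynamic LP problem (the content of \Cref{fig:lower-bound} and Claims~\ref{cl:lowerbound:1}--\ref{cl:lowerbound:3}).
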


Now, we are ready to prove \Cref{thm:main lb}, and we state a more detailed version of it below.

\begin{theorem}

\label{thm:lowerbound:fully:dynamic}

Assume SETH holds. Let $N$ and $\text{{\sc Opt}}$ respectively denote the number of non-zero entries in the input LP and its optimal objective value. Then there exists some $\beta = \Theta\left( N^{o(1)} \right)$ such that:
\begin{itemize}
\item (i) Any algorithm which maintains an estimate $\frac{\text{{\sc Opt}}}{\beta} \leq \nu \leq \text{{\sc Opt}}$ for a packing or covering LP under fully dynamic entry updates must have an amortized update time of $\Omega(N^{1-o(1)})$. 
\item (ii) Any algorithm which maintains an estimate $\frac{\text{{\sc Opt}}}{\beta} \leq \nu \leq \text{{\sc Opt}}$ for a packing or covering LP under partially dynamic entry updates must have a worst-case update time of $\Omega(N^{1-o(1)})$. 
\end{itemize}
Furthermore, the same lower bounds hold even under translation updates.
\end{theorem}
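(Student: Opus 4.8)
The plan is to reduce from the gap inner‑product problem (Problem~\ref{prob:inner-product}), using Theorem~\ref{th:hardness} as the source of hardness. The key ingredient is a static gadget. Given a vector $a\in\{0,1\}^m$ and the collection $B=\{b^{(1)},\dots,b^{(N)}\}$, consider the covering LP with a variable $x_j\ge 0$ for each $b^{(j)}$, objective $\min\sum_j x_j$, and one constraint per coordinate: $\sum_{j:\,b^{(j)}_\ell=1}x_j\ge a_\ell$ for every $\ell\in[m]$. Since every $b^{(j)}$ has exactly $k$ ones, the constraint for a coordinate with $a_\ell=0$ is vacuous, so this LP is exactly the fractional set cover of $\mathrm{supp}(a)$ by the sets $\{\mathrm{supp}(b^{(j)})\}$. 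If some $b^{(j)}$ satisfies $a\ge b^{(j)}$ then (since both supports have size $k$) $\mathrm{supp}(b^{(j)})=\mathrm{supp}(a)$ and the optimum is $1$; whereas in case (2) of Problem~\ref{prob:inner-product} every $b^{(j)}$ covers at most $k^{*}=k/\beta^{*}$ coordinates of $\mathrm{supp}(a)$, and summing the constraints over $\mathrm{supp}(a)$ forces the optimum to be at least $\beta^{*}$. Thus $\mathrm{OPT}$ always lies in $[1,k]$, is bounded away from $0$, and differs multiplicatively by a factor $\Omega(\beta^{*})$ between the two cases; hence any algorithm maintaining an estimate $\nu$ with $\mathrm{OPT}/\beta\le\nu\le\mathrm{OPT}$ for some $\beta=N^{o(1)}$ with $\beta<\beta^{*}$ distinguishes them. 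Encoding the bits of $a$ on the right‑hand side, or equivalently as row scalings, is how we obtain the translation‑update version of the lower bound for free.

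For part (i) (fully dynamic, amortized), we maintain the gadget above while the active vector is overwritten through $N$ phases: in phase $i$ we change the $O(m)=N^{o(1)}$ right‑hand sides (or entries) so that the active vector becomes $a^{(i)}$ — a batch of $N^{o(1)}$ fully dynamic updates, some increasing and some decreasing — and then we read off the maintained estimate $\nu_i$, which reveals whether $a^{(i)}\ge b^{(j)}$ for some $j$. From $\nu_1,\dots,\nu_N$ we solve Problem~\ref{prob:inner-product}. At all times the LP has $N_{LP}=\Theta(Nm)=N^{1+o(1)}$ nonzero entries, and the reduction performs $t=\Theta(Nm)=N^{1+o(1)}$ updates plus $N$ trivial queries. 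If the dynamic algorithm had amortized update time $O(N_{LP}^{1-\delta})$ for a constant $\delta>0$, the whole reduction would run in $N^{1+o(1)}\cdot N_{LP}^{1-\delta}+N^{1+o(1)}=N^{2-\delta+o(1)}$ time, contradicting Theorem~\ref{th:hardness} once we take the $\epsilon$ there below $\delta$. This yields statement (i), and it applies even to algorithms that only implicitly maintain $\mathrm{OPT}$, since the queries merely read the estimates.

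Part (ii) (partially dynamic, worst case) is the delicate part, and its design is forced by Theorem~\ref{thm:main ub}: because partially dynamic LPs admit polylogarithmic \emph{amortized} update time, a total‑time argument cannot possibly work here. Indeed, a relaxing‑only (resp.\ restricting‑only) sequence can only shrink (resp.\ grow) the active set monotonically, so processing several $a^{(i)}$'s makes the gadget reflect the AND (resp.\ OR) of them, which carries no gap inner‑product information. Hence the lower bound must force \emph{a single} update to perform an essentially from‑scratch computation. The plan is: during a long one‑directional prefix, build a disjoint union of many small, mutually non‑interacting copies of the gadget (each copy alone has a trivial, easily maintained optimum, so the algorithm cannot usefully precompute anything about the eventual answer); then a single further one‑directional update "turns on" the coupling between the copies — realized as a rescaling that touches only a bounded number of right‑hand sides or a single shared coordinate block — after which the maintained estimate encodes a full gap inner‑product instance on $N'=N_{LP}^{\Theta(1)}$ vectors, whose answer requires $\Omega((N')^{2-o(1)})=\Omega(N_{LP}^{1-o(1)})$ work by Theorem~\ref{th:hardness}. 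Since the prefix keeps the instance decomposable, no correct algorithm can have amortized this work away, so that one update must cost $\Omega(N_{LP}^{1-o(1)})$, proving (ii); the translation‑update variant follows by implementing the couplings via scalings.

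I expect the main obstacle to be precisely this last step: engineering a \emph{single} partially dynamic update that converts an easy‑to‑maintain, decomposable LP into a full gap inner‑product instance, while simultaneously (a) keeping the entire update sequence one‑directional, (b) balancing the internal parameters so that the per‑update cost comes out to $N_{LP}^{1-o(1)}$ and not something weaker, and (c) arguing rigorously that the decomposable prefix leaves the algorithm with nothing it could have legitimately precomputed. By comparison, the static gadget, the fully dynamic amortized reduction of part (i), the reduction from Problem~\ref{eq:covering:general:3}-style normalized instances to general packing–covering LPs, and the implicit‑maintenance and translation‑update strengthenings are all routine.
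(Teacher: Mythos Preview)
Your overall reduction framework (from Problem~\ref{prob:inner-product}) matches the paper's, but there are two concrete errors.

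\textbf{The gadget in part (i) is broken.} You assert that if $a\ge b^{(j)}$ then ``since both supports have size $k$'' we have $\mathrm{supp}(b^{(j)})=\mathrm{supp}(a)$, making $\mathrm{OPT}=1$. But Problem~\ref{prob:inner-product} only guarantees $|\mathrm{supp}(b)|=k$ for $b\in B$; the vectors in $A$ are unrestricted. So $a\ge b^{(j)}$ gives only $\mathrm{supp}(b^{(j)})\subseteq\mathrm{supp}(a)$, and $x=e_j$ leaves the constraints for $\ell\in\mathrm{supp}(a)\setminus\mathrm{supp}(b^{(j)})$ unsatisfied; your set-cover LP need not have a small optimum in case~(1). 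The fix is to swap the roles of $A$ and $B$: iterate over $b\in B$ (the vectors with exactly $k$ ones) and hard-wire $A$ into the LP. The paper does exactly this, via a packing LP with variables $x_j$ for $j\in[m]$, constraints $a\cdot x\le k^*$ for each $a\in A$, and box constraints $\lambda_j x_j\le 1$; in each phase it sets $\lambda_j$ large for $j\notin\mathrm{supp}(b)$. Then in case~(2) the point $x=\one_{\mathrm{supp}(b)}$ is feasible with value $k$, while in case~(1) the constraint for the witnessing $a$ together with the box constraints forces value $\le 2k^*$.

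\textbf{Part (ii): you are missing the rollback trick.} Your ``disjoint copies plus a single coupling update'' plan is vague, and you correctly flag it as the main obstacle; but no such construction is needed. The paper's proof of (ii) is a one-line modification of (i): run the same reduction, but after processing the $O(m)$ one-directional updates for a given $b$ and reading the estimate, \emph{roll back} the algorithm's internal state to where it was at preprocessing, instead of feeding reverse updates. Because the algorithm has worst-case update time $T$, each update touches at most $O(T)$ memory, so the rollback also costs $O(mT)$ per phase; over $N$ phases the total is $O(NmT)=N^{1+o(1)}\cdot T$, and $T=\Omega(N^{1-o(1)})$ follows exactly as in (i). Within each phase the algorithm only ever sees an all-restricting (or all-relaxing) sequence, so it is genuinely partially dynamic. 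This standard device converts any fully dynamic amortized lower bound into a partially dynamic worst-case one, and it sidesteps entirely the monotonicity objection you raise.
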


\subsection{Proof of Theorem~\ref{thm:lowerbound:fully:dynamic} - part (i)}
\label{sec:thm:lowerbound:fully:dynamic:i}

We  prove the theorem for packing LPs. The proof for covering LPs immediately follows via LP duality.

\medskip
Recall the parameters $N, k, m, \beta^*$ from the statement of Problem~\ref{prob:inner-product}. Set $\beta := \beta^*/3$. Suppose that there is a dynamic algorithm $\A$ that maintains an estimate $\frac{\text{{\sc Opt}}}{\beta} \leq \nu \leq \text{{\sc Opt}}$ for a packing LP undergoing fully dynamic entry updates in $\Theta(T)$ amortized update time. Using $\A$ as a subroutine, we  design a static algorithm $\A^*$ for Problem~\ref{prob:inner-product}, as described below.

Upon receiving  its input, the algorithm $\A^*$  defines the following packing LP, where initially we have $\lambda_j = 1$ for all $j \in [m]$. The algorithm $\A^*$ now feeds this packing LP as an input to  $\A$ at  pre-processing.
\begin{eqnarray}
\label{eq:lowerbound:packingLP}
\text{Maximize } \sum_{j=1}^m x_j \\
\text{s.t. } \ \  a \cdot x & \leq & k^*  \qquad  \text{ for all } a \in A. \\
\lambda_j \cdot x_j & \leq & 1 \qquad \ \  \text{ for all } j \in [m]. \\
x_j & \geq & 0 \qquad \ \ \text{ for all } j \in [m].
\end{eqnarray}
Subsequently, the algorithm $\A^*$ performs the steps outlined in \Cref{fig:lower-bound}.

  \begin{figure*}[htbp]
                                                \centerline{\framebox{
                                                                \begin{minipage}{5.5in}
                                                                        \begin{tabbing}
                                                                        1. \ \ \=  {\sc For All} $b \in B$: \\
                                                                                2. \> \ \ \ \  \ \ \ \ \=     Set $\lambda_j = L := m/k^*$ for all $j \in [m]$ with~$b_j = 0$, and feed these updates to $\A$. \\
                                                                                3. \> \> {\sc If} $\A$ returns an estimate $\nu \leq 2 k^*$, {\sc then}  \\
                                                                                4. \> \> \ \ \ \ \ \ \ \ \= {\sc Report} that we are in case (1) of Problem~\ref{prob:inner-product}, and {\sc terminate}. \\
                                                                                5. \> \> Return the LP to its initial state, by setting $\lambda_j = 1$ for all $j \in [m]$ with $b_j = 0$, \\
                                                                                \> \> and feeding these updates to the dynamic algorithm $\A$. \\
                                                                                6. \> {\sc Report} that we are in case (2) of Problem~\ref{prob:inner-product} and {\sc terminate}.
                                                                               \end{tabbing}
                                                                \end{minipage}
                                                        }}
                                                        \caption{Solving Problem~\ref{prob:inner-product} using the dynamic algorithm $\A$.} \label{fig:lower-bound}
                                                \end{figure*}

\begin{claim}
\label{cl:lowerbound:1}
Suppose that  $a \cdot b \leq k^*$ for all $a \in A, b \in B$. Then the algorithm $\A^*$ correctly reports that we are in case (2) of Problem~\ref{prob:inner-product}.
\end{claim}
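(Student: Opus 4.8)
The plan is to show that, under the hypothesis of case~(2), the test in step~3 of \Cref{fig:lower-bound} never succeeds, so that $\A^*$ falls through to step~6 and reports case~(2). Since step~5 always restores the LP to its initial state (all $\lambda_j = 1$) at the end of an iteration of the {\sc For} loop, the executions for distinct $b \in B$ are independent of one another, and it suffices to fix a single $b \in B$ and argue that after the updates of step~2 the dynamic algorithm $\A$ must output an estimate $\nu > 2k^*$ (here I only use the correctness guarantee $\text{{\sc Opt}}/\beta \le \nu \le \text{{\sc Opt}}$ of $\A$, not its running time).

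To this end, fix $b \in B$ and consider the packing LP held by $\A$ immediately after step~2 has been executed for this $b$: it is LP~(\ref{eq:lowerbound:packingLP}) with $\lambda_j = L = m/k^*$ for $j \notin \mathrm{supp}(b)$ and $\lambda_j = 1$ for $j \in \mathrm{supp}(b)$. I would exhibit the candidate point $x \in \mathbb{R}_{\geq 0}^m$ with $x_j = 1$ for $j \in \mathrm{supp}(b)$ and $x_j = 0$ otherwise, and verify feasibility in two steps: the box constraints $\lambda_j x_j \le 1$ hold because $\lambda_j = 1$ precisely on the coordinates where $x_j \ne 0$; and for every $a \in A$ we have $a \cdot x = \sum_{j \in \mathrm{supp}(b)} a_j = a \cdot b \le k^*$, which is exactly the hypothesis of case~(2). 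The objective value attained by this $x$ is $|\mathrm{supp}(b)| = k$, so the optimum of this LP satisfies $\text{{\sc Opt}} \ge k$.

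Finally, since $\A$ maintains an estimate with $\text{{\sc Opt}}/\beta \le \nu \le \text{{\sc Opt}}$, and since $\beta = \beta^*/3$ while $k^* = k/\beta^*$, we get $\nu \ge \text{{\sc Opt}}/\beta \ge k/\beta = 3k/\beta^* = 3k^* > 2k^*$. Hence the test in step~3 fails for this $b$, and as $b$ was arbitrary, $\A^*$ reaches step~6 and reports case~(2), as claimed. The only genuine care needed is bookkeeping: confirming that step~2 modifies only coordinates outside $\mathrm{supp}(b)$ and that, because of step~5 of the previous iteration, the coordinates in $\mathrm{supp}(b)$ indeed carry $\lambda_j = 1$ when $\A$ is queried (this is what makes the feasible point above legal), together with the short constant chase $3k^* > 2k^*$. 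There is no substantial analytical obstacle in this direction of the reduction; the real work, and the reason the parameter $L = m/k^*$ is introduced at all, lies in the converse claim treating case~(1).
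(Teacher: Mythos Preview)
Your proof is correct and follows essentially the same approach as the paper: exhibit the feasible point $x = b$ (i.e., $x_j = 1$ on $\mathrm{supp}(b)$ and $0$ elsewhere), verify it satisfies both the box constraints and the packing constraints $a \cdot x = a \cdot b \le k^*$, deduce $\text{{\sc Opt}} \ge k$, and then use $\nu \ge \text{{\sc Opt}}/\beta = 3k^* > 2k^*$ to conclude that the test in step~3 never fires. Your additional remarks about step~5 restoring state between iterations and about where the parameter $L$ actually matters are accurate side comments but not needed for this direction.
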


\begin{proof}
Consider any vector $b \in B$, and focus on the corresponding iteration of the {\sc For} loop in \Cref{fig:lower-bound}. After step (2) of this iteration, we have $\lambda_j = L$ for all $j \in [m]$ with $b_j = 0$, and $\lambda_j = 1$ for all $j \in [m]$ with $b_j = 1$. Now, define an $x$ where $x_j = 1$ for all $j \in [m]$ with $b_j = 1$, and $x_j = 0$ for all $j \in [m]$ with $b_j = 0$. Such an $x$ is a feasible solution to the LP and has an objective  of $\sum_{j=1}^m x_j = \left| \{ j \in [m] : b_j = 1\} \right| = k$. This implies that $\text{{\sc  Opt}} \geq k$. Accordingly, in step (3), the dynamic algorithm $\A$ returns a value $\nu \geq \frac{k}{\beta} = \frac{3k}{\beta^*} = 3 k^*$. Hence, the algorithm $\A^*$ does {\em not} terminate during this iteration.

The preceding discussion implies that  $\A^*$ does not terminate during any iteration of the {\sc For} loop in \Cref{fig:lower-bound}. Instead, at the end of the {\sc For} loop, it correctly reports that we are in case (2) of Problem~\ref{prob:inner-product}.
\end{proof}

\begin{claim}
\label{cl:lowerbound:2}
Suppose that there exists vectors $a \in A, b \in B$ such that 
 $a \geq b$. Then the algorithm $\A^*$ correctly reports that we are in case (1) of Problem~\ref{prob:inner-product}.
 \end{claim}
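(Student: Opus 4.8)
The plan is to show that if some $a \in A$ dominates some $b \in B$ coordinatewise, then during the iteration of the {\sc For} loop in \Cref{fig:lower-bound} corresponding to that particular $b$, the optimal LP value drops low enough that the dynamic algorithm $\A$ must return an estimate $\nu \le 2k^*$, triggering the report in step~(4). First I would fix the witnessing pair $a \in A$, $b \in B$ with $a \ge b$, and consider the moment inside that iteration just after step~(2) has run, so that $\lambda_j = L = m/k^*$ for all $j$ with $b_j = 0$ and $\lambda_j = 1$ for all $j$ with $b_j = 1$. The goal is to upper bound $\text{{\sc Opt}}$ of the current LP~(\ref{eq:lowerbound:packingLP}) by $2k^*$, because then $\A$ returns $\nu \le \text{{\sc Opt}} \le 2k^*$ and $\A^*$ correctly reports case~(1).

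The key step is the bound on $\text{{\sc Opt}}$. Split the objective $\sum_{j=1}^m x_j$ into the contribution from coordinates with $b_j = 1$ and those with $b_j = 0$. For the coordinates with $b_j = 0$: the box constraint $\lambda_j x_j \le 1$ with $\lambda_j = L = m/k^*$ forces $x_j \le k^*/m$, so $\sum_{j : b_j = 0} x_j \le m \cdot (k^*/m) = k^*$. For the coordinates with $b_j = 1$: since $a \ge b$, every such coordinate has $a_j = 1$ as well, so the packing constraint $a \cdot x \le k^*$ gives $\sum_{j : b_j = 1} x_j \le \sum_{j : a_j = 1} x_j = a \cdot x \le k^*$ (using $x \ge 0$). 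Adding the two bounds yields $\sum_{j=1}^m x_j \le 2k^*$ for every feasible $x$, hence $\text{{\sc Opt}} \le 2k^*$. Consequently $\A$ returns some $\nu$ with $\nu \le \text{{\sc Opt}} \le 2k^*$, the condition in step~(3) of \Cref{fig:lower-bound} is met, and $\A^*$ reports case~(1) and terminates, as required.

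I don't expect a genuine obstacle here — the argument is a short two-part LP feasibility bound — but the one point that needs care is making sure the reduction's guarantee for $\A$ is being used in the right direction: $\A$ maintains $\frac{\text{{\sc Opt}}}{\beta} \le \nu \le \text{{\sc Opt}}$, so here we only need the upper half $\nu \le \text{{\sc Opt}}$, and the lower half $\nu \ge \text{{\sc Opt}}/\beta$ was what drove the soundness direction in Claim~\ref{cl:lowerbound:1}. One should also note that step~(3) might be reached in an earlier iteration of the {\sc For} loop (for a different $b' \in B$) and trigger termination even sooner; but that is fine — by Claim~\ref{cl:lowerbound:1}'s argument, $\A$ only ever returns a value $\le 2k^*$ when the current LP actually has $\text{{\sc Opt}} < 3k^*$, which (tracing the same box-plus-packing computation) can only happen when the active $b'$ is dominated by some $a' \in A$, so any such early termination is still a correct report of case~(1). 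Hence in all cases $\A^*$ outputs the correct answer.
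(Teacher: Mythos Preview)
Your proof is correct and follows essentially the same two-part bound as the paper: split the objective into the $b_j=1$ and $b_j=0$ coordinates, use the packing constraint $a\cdot x\le k^*$ (via $a\ge b$) for the former and the box constraints $\lambda_j x_j\le 1$ with $\lambda_j=L=m/k^*$ for the latter, concluding $\text{\sc Opt}\le 2k^*$ and hence $\nu\le 2k^*$.

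One minor remark: your final paragraph about early termination is more elaborate than needed. Since the hypothesis of the claim is precisely that case~(1) holds, \emph{any} termination via step~(4) is automatically a correct report; you only need to argue that the algorithm does not survive past the iteration for the witnessing $b$ (which your main argument already shows). You do not need to argue that an earlier trigger at some $b'$ independently certifies domination of $b'$ --- and indeed that implication does not follow cleanly, since cases~(1) and~(2) of Problem~\ref{prob:inner-product} are not exhaustive.
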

 
 \begin{proof}
Focus on the iteration of the {\sc For} loop in \Cref{fig:lower-bound} which deals with the vector $b \in B$. After step (2) of this iteration, we have $\lambda_j = L$ for all $j \in [m]$ with $b_j = 0$, and $\lambda_j = 1$ for all $j \in [m]$ with $b_j = 1$. Partition the set $[m]$ into two subsets: $P_0 = \{ j \in [m]: b_j = 0\}$ and $P_1 = \{ j \in [m] : b_j = 1\}$. Since $a \geq b$, we have $a_j = 1$ for all $j \in P_1$. Let $x$ be any feasible solution to LP~(\ref{eq:lowerbound:packingLP}). Observe that:
\begin{equation}
    \label{eq:lowerbound:10}
\sum_{j \in P_1}  x_j = \sum_{j \in P_1} a_j x_j \leq a \cdot x \leq k^*.
\end{equation}
Next, recall that $\lambda_j = L$ for all $j \in P_0$. Thus, we have $x_j \leq 1/L$ for all $j \in P_0$, and hence:
\begin{equation}
    \label{eq:lowerbound:11}
    \sum_{j \in P_0} x_j \leq \frac{m}{L} = \kappa^*.
\end{equation}
Summing~(\ref{eq:lowerbound:10}) and~(\ref{eq:lowerbound:11}), we now get: $\sum_{j=1}^m x_j \leq 2 k^*$. Since this inequality holds for {\em every} feasible solution $x$ to the LP, we get $\text{{\sc Opt}} \leq 2k^*$. Accordingly, during step (3) of the concerned iteration of the {\sc For} loop in \Cref{fig:lower-bound}, the dynamic algorithm $\A$ returns a value $\nu \leq 2 k^*$. Thus, the algorithm $\A^*$ terminates at this point, before correctly reporting that we are in case (1) of Problem~\ref{prob:inner-product}.
\end{proof}

The correctness of algorithm $\A^*$ follows from Claim~\ref{cl:lowerbound:1} and Claim~\ref{cl:lowerbound:2}. The claim below bounds its run time in terms of the  update time of the dynamic algorithm $\A$.

\begin{claim}
\label{cl:lowerbound:3}
The algorithm $\A^*$ runs in $\hat{O}(N \cdot T)$ time, where $T$ is the update time of  $\A$.
\end{claim}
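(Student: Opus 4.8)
The plan is to split the running time of $\A^*$ into (a) the time spent inside the calls to the dynamic algorithm $\A$, and (b) the bookkeeping that $\A^*$ performs on its own, and then to bound (a) via the standard amortized-update-time convention. (Here $\hat{O}(\cdot)$ suppresses $N^{o(1)}$ factors, $N$ being the number of vectors in Problem~\ref{prob:inner-product}.)

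First I would count the entry updates that $\A^*$ feeds to $\A$. The {\sc For} loop in \Cref{fig:lower-bound} has one iteration per vector $b \in B$, hence at most $N$ iterations. Within a single iteration, step (2) modifies $\lambda_j$ for each coordinate $j \in [m]$ with $b_j = 0$; since $b$ has exactly $k$ nonzero entries this amounts to $m - k \leq m$ entry updates, and step (5) reverts precisely these same coordinates, another $m - k \leq m$ entry updates. (If $\A^*$ halts in step (4) it performs fewer updates.) Therefore $\A^*$ issues at most $2Nm$ updates to $\A$ over its entire execution.

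Next I would bound $N_{\mathrm{LP}}$, the number of nonzero entries in the packing LP~(\ref{eq:lowerbound:packingLP}) with which $\A$ is initialised: the $N$ constraints $a \cdot x \leq k^*$ with $a \in \{0,1\}^m$ contribute $O(Nm)$ nonzeros, while the $m$ box constraints $\lambda_j x_j \leq 1$ and the objective $\sum_j x_j$ contribute $O(m)$ more, so $N_{\mathrm{LP}} = O(Nm)$. By the definition of amortized update time (see the footnote following \Cref{thm:main ub}), an algorithm $\A$ with amortized update time $\Theta(T)$ handles an instance with $N_{\mathrm{LP}}$ nonzeros followed by $t$ updates in total time $O\big((N_{\mathrm{LP}} + t) \cdot T\big)$, preprocessing included. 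Substituting $N_{\mathrm{LP}} = O(Nm)$ and $t = O(Nm)$, the total time consumed inside $\A$ is $O(Nm \cdot T)$. The work $\A^*$ does outside of $\A$ — reading each $b$, scanning its support to decide which $\lambda_j$'s to change, emitting the updates, and inspecting the estimate $\nu$ returned by $\A$ — costs $O(m)$ per iteration, hence $O(Nm)$ in total, which is dominated by $O(Nm \cdot T)$ since $T \geq 1$. Summing, $\A^*$ runs in $O(Nm \cdot T)$ time, and because Problem~\ref{prob:inner-product} guarantees $m = \Theta(N^{o(1)})$, this is $N^{1+o(1)} \cdot T = \hat{O}(N \cdot T)$.

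I do not anticipate a genuine obstacle here; the only point that needs care is to apply the amortized-update-time convention correctly — remembering that it already accounts for the $O(N_{\mathrm{LP}} \cdot T)$ preprocessing cost, so that preprocessing need not be argued separately — and to observe that the factor $m = N^{o(1)}$ is absorbed by the $\hat{O}(\cdot)$ notation, which is exactly what lets the subsequent comparison with \Cref{th:hardness} rule out any update time $T = N^{1-\Omega(1)}$.
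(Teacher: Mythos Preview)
Your proposal is correct and follows essentially the same approach as the paper's own proof: both bound the LP size by $\hat{O}(N)$, observe that each of the at most $N$ iterations issues $O(m)=\hat{O}(1)$ updates, and conclude via the amortized update-time convention that the time spent inside $\A$ (which dominates the bookkeeping) is $\hat{O}(N\cdot T)$. Your write-up is simply more explicit about the constants and the preprocessing accounting.
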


\begin{proof}
The algorithm $\A^*$ starts by feeding LP~(\ref{eq:lowerbound:packingLP}), which is of $\hat{O}(N)$ size, to $\A$ as input at pre-processing. Subsequently, during each iteration $b \in B$ of the {\sc For} loop, it feeds $\Theta(m) = \hat{O}(1)$  updates to $\A$. Since the {\sc For} loop runs for at most $N$ iterations, overall the dynamic algorithm $\A$ needs to handle at most $\hat{O}(N)$  updates. Hence, the total time spent by the dynamic algorithm $\A$ is at most $\hat{O}(N \cdot T)$. The claims follows since the runtime of $\A^*$ is dominated by the total update time of $\A$.
\end{proof}

\Cref{th:hardness} implies that assuming SETH, the run time of algorithm $\A^*$ must be $\Omega(N^{2-\epsilon})$ for every constant $\epsilon > 0$. Hence, from Claim~\ref{cl:lowerbound:3} it follows that the amortized update time of the dynamic algorithm $\A$ is at least $T = \Omega(N^{1-o(1)})$. This concludes the proof of part-(i) of \Cref{thm:lowerbound:fully:dynamic}.

\medskip
If we wish to prove the same lower bound for translation updates, then the reduction remains almost the same. The only difference now is that in step (2) of Figure~\ref{fig:lower-bound}, instead of changing the value of the coefficient $\lambda_j$ we now change the constraints $x_j \leq 1$ to $x_j \leq 0$ for the concerned co-ordinates $j$, and un-do these changes in step (5) as before.

\subsection{Proof of Theorem~\ref{thm:lowerbound:fully:dynamic} - part (ii)}
\label{sec:thm:lowerbound:fully:dynamic:ii}

We  prove the theorem for packing LPs. The proof for covering LPs immediately follows via LP duality.

\medskip
As in \Cref{sec:thm:lowerbound:fully:dynamic:ii}, set $\beta := \beta^*/3$. Suppose that there is a dynamic algorithm $\A$ that maintains an estimate $\frac{\text{{\sc Opt}}}{\beta} \leq \nu \leq \text{{\sc Opt}}$ for a packing LP undergoing restricting entry updates in $\Theta(T)$ worst-case update time. Using $\A$ as a subroutine, we  design a static algorithm $\A^*$ for Problem~\ref{prob:inner-product}. To achieve this goal, we follow almost exactly the same strategy as in \Cref{sec:thm:lowerbound:fully:dynamic:ii}. The only difference being that in step (5) of Figure~\ref{fig:lower-bound}, instead of feeding a new set of updates to $\A$, we {\em roll back} the state of $\A$ to where it was at pre-processing. This allows us to deal with the fact that $\A$ can only handle restricting ({\em not} fully dynamic) updates. The rest of the arguments in the proof remain the same as in \Cref{sec:thm:lowerbound:fully:dynamic:ii}.

\medskip
The preceding discussion implies the desired lower-bound when the input LP undergoes restricting updates. To derive the analogous lower-bound for relaxing updates, we need to make only the following minor tweak in our design of algorithm $\A^*$: Initially, we set $\lambda_j = L$ for all $j \in [m]$. Subsequently, in step (2) of Figure~\ref{fig:lower-bound}, we set $\lambda_j = 1$ for all $j \in [m]$ with $b_j = 1$, and feed these updates to $\A$. Next, in step (5) of Figure~\ref{fig:lower-bound}, we {\em roll back} the state of $\A$ to where it was at pre-processing.

\medskip
Finally, we can easily extend our lower bounds to hold in settings where we only have translation updates, using the idea outlined in the last paragraph of \Cref{sec:thm:lowerbound:fully:dynamic:ii}.

\appendix{}

\section{Proofs of Theorem~\ref{th:MWU:static:approx:dyn} and Lemma~\ref{cor:bound:weight:dyn}}
\label{sec:dyn:missing:proofs}

\paragraph{Proof of Lemma~\ref{cor:bound:weight:dyn}:} To conclude Lemma~\ref{cor:bound:weight:dyn} notice that Lemma~\ref{cor:bound:weight} seamlessly extends to the dynamic setting. The only property of the dynamic input matrix $C$ the proof of Lemma~\ref{cor:bound:weight} exploits is that at the time of the $t$-th \textsc{Whack} sub-routine it holds that $(Cx^t)_{i_t}<1$ which is similarly guaranteed in the dynamic setting. To argue it formally let $C^t$ refer to the state of constraint matrix $C$ at the time of the $t$-th \textsc{Whack} call. Fix some $t \in [T]$. The following line of inequalities in analogous with the proof of Lemma~\ref{cor:bound:weight:dyn}.

\begin{eqnarray*}
\norm{\hat{x}^{t+1}}_1 - \norm{\hat{x}^t}_1  =  \sum_{j \in [n]} \left( \left( \hat{x}^{t+1}\right)_j - \left( \hat{x}^{t}\right)_j \right) = \sum_{j\in [n]} \left( \hat{x}^t \right)_j \cdot \left( \epsilon \cdot \frac{C^t_{i_t, j}}{\lambda}\right)   =  \frac{\epsilon}{\lambda} \cdot \left( C^t \hat{x}^t \right)_{i_t} < \frac{\epsilon}{\lambda} \cdot \norm{\hat{x}^t}_1
\end{eqnarray*}
The last inequality follows since $\left( C^t x^t \right)_{i_t} < 1$ and $x^t := \hat{x}^t/\norm{\hat{x}^t}_1$. Rearranging the terms, we get:
\begin{equation}
\nonumber
\norm{\hat{x}^{t+1}}_1 \leq \left( 1 + \frac{\epsilon}{\lambda} \right) \cdot \norm{\hat{x}^t}_1
\end{equation}
As $T = \frac{\lambda \ln (n)}{\epsilon^2}$ and $\norm{\hat{x}^0}_1 = n$ we get the following:
$\norm{\hat{x}^t}_1 \leq \left(1 + \frac{\epsilon}{\lambda}\right)^{T} \cdot \norm{\hat{x}^0}_1 \leq n^{(1/\epsilon)}$ for all $t \in [T]$.

\paragraph{Proof of Theorem~\ref{th:MWU:static:approx:dyn}:} The extension of Theorem~\ref{th:MWU:static:approx} to the dynamic setting is also relatively straightforward. Observe, that at the initialization of our dynamic algorithm we simply run the static implementation. Therefore before any update occurs Theorem~\ref{th:MWU:static:approx:dyn} can be concluded from Theorem~\ref{th:MWU:static:approx}. We will first argue that if the dynamic algorithm returns vector $\tilde{x}^t$ then it must satisfy that $C \cdot \tilde{x}^t \geq \mathbb{1} \cdot (1 - \epsilon)$ and $|\tilde{x}^t|_1 \leq 1 + \epsilon$ even after some restricting updates have occurred.

Consider the starting point of a given phase at time $\tau$. At this point $\mathbb{1}^\top \tilde{x}^\tau = \mathbb{1}^\top \hat{x}^\tau /\norm{\hat{x}^\tau}_1= 1$ and we fix $W = \norm{\hat{x}^\tau}_1$. During the phase $\tilde{x}^t$ may only monotonously increase. However, if at any point $\norm{\hat{x}^t}_1 > (1 - \epsilon/2)^{-1} \cdot W$ a new phase begins. As $\tilde{x}^t = \hat{x}^t/W$ during the phase $\norm{\tilde{x}^t}_1 \leq (1-\epsilon/2)^{-1} \leq 1 + \epsilon$.

Fix some $i \in [m]$. At the start of each phase the algorithm ensures that $(C\cdot \tilde{x}^t)_i \geq 1$ through enforcing all currently violating constraints. As during a phase $\tilde{x}^t$ may only monotonously increase until further restricting updates occur to constraint $i$ it will remain satisfied. At all times the algorithm keeps an $\epsilon$-approximate estimate of $(C \cdot \tilde{x}^t)$. Whenever a restricting update occurs to some constraint element $C_{i,j}$ and we observe that out estimate of the constraint value falls bellow $1$ we enforce constraint $i$ until $(C \cdot \tilde{x}^t)_i \geq 1$. Due to us keeping an $\epsilon$-approximate estimate of $(C \cdot \tilde{x}^t)_i$ this will never allow it to fall bellow $1 - \epsilon$. Note that the value of $(C \cdot \tilde{x}^t)_i$ only depends on the $i$-th row of $C$ and $\tilde{x}^t$ therefore if a restricting update occurs to constraint $i$ and we decide not to enforce the constraint itself then the values of all other constraints remain unaffected.

It remains to argue that whenever the dynamic algorithm returns vector $y$ following a series of updates it holds that $|y|_1 = 1$ and $C^\top y \leq \mathbb{1} \cdot (1 + 4 \cdot \epsilon)$. Once the algorithm has returned such a vector $y$ it will no longer change its output. As $C$-s elements may only get reduced over time $y$ will maintain these properties. Here we refer to the proof of Lemma~\ref{lm:whack:a:mole:key} with slight deviations.  Recall the {\em experts} setting from \cite{AroraHK12} as presented in the proof Lemma~\ref{lm:whack:a:mole:key}. Further recall the statement of Lemma~\ref{lm:MWU}. 

We map the whack-a-mole algorithm to the experts setting in the same manner however now accounting for the evolving constraint matrix. Let $C^t$ stand for the state of $C$ at the time of the $t$-th call to \textsc{Whack}. Each covering constraint $j \in [n]$ corresponds to an expert, each iteration of the {\sc For} loop in Figure~\ref{fig:dual_algorithm} corresponds to a round,  the vector $\hat{x}^t$ corresponds to the weight vector $w^t$ at the start of round $t \in [T]$, and finally, the payoff for an expert $j \in [n]$ in round $t \in [T]$ is given by $\left(p^t\right)_j := (1/\lambda) \cdot C^t_{i_t, j}$.  

Since $C^t_{i_t, j} \in [0, \lambda]$, we have $\left| \left( p^t \right)_j \right| = \left( p^t \right)_j$ for all $j \in [n], t \in [T]$. Thus, Lemma~\ref{lm:MWU} implies that:
\begin{eqnarray}
\sum^{T}_{t=1} \left(p^{t}\right)^{\top} \cdot x^t  \geq \sum_{t=1}^T (1-\epsilon) \cdot \left(p^t\right)_j - \frac{\ln(n)}{\epsilon}, \text{ for all experts } j \in [n]. \label{eq:experts:1:dynamic}
\end{eqnarray}
Diving both sides of the above inequality by $T$, and then rearranging the terms, we get:
\begin{eqnarray}
\frac{(1-\epsilon)}{T} \cdot \sum_{t=1}^T  \left(p^t\right)_j \leq \frac{1}{T} \cdot \sum^{T}_{t=1} \left(p^{t}\right)^{\top} \cdot x^t  + \frac{\ln (n)}{\epsilon \cdot T}, \text{ for all experts } j \in [n]. \label{eq:experts:10:dynamic}
\end{eqnarray}
We next upper bound the right hand side (RHS) of~(\ref{eq:experts:10:dynamic}). Since the algorithm picks a violated covering constraint to whack in each round, we have: $\left(p^{t}\right)^{\top} \cdot x^t = (1/\lambda) \cdot  \left(C^t x^t\right)_{i_t} \leq (1/\lambda)$. Taking the average of this inequality across all the $T$ rounds, we get:
$(1/T) \cdot \sum^{T}_{t=1} \left(p^{t}\right)^{\top} \cdot x^t \leq (1/\lambda)$.
Since $T = \lambda \ln (n)/\epsilon^2$, we derive the following upper bound on the RHS of~(\ref{eq:experts:10:dynamic}).
\begin{eqnarray}
\frac{1}{T} \cdot \sum^{T}_{t=1} \left(p^{t}\right)^{\top} \cdot x^t  + \frac{\ln (n)}{\epsilon \cdot T} \leq \frac{1}{\lambda} \cdot (1+\epsilon). \label{eq:experts:11:dynamic}
\end{eqnarray}
We now focus our attention on the left hand side (LHS) of~(\ref{eq:experts:10:dynamic}). Fix any expert $j \in [n]$. We first express the payoff obtained by this expert at round $t \in [T]$ in terms of the vector $y^t$, and get: $\left( p^t \right)_j = \frac{1}{\lambda} \cdot C_{i_t, j} = \frac{1}{\lambda} \cdot \left( (C^t)^{\top} y^t\right)_j$.
Since $y = \frac{1}{T} \cdot \sum_{t=1}^T y^t$, the average payoff for the expert $j$ across all the $T$ rounds is given by:
\begin{eqnarray}
\frac{1}{T} \cdot \sum_{t=1}^T  \left(p^t\right)_j = \frac{1}{T} \cdot \frac{1}{\lambda} \cdot \sum_{t=1}^T \left( (C^t)^{\top} y^t\right)_j \geq \frac{1}{T} \cdot \frac{1}{\lambda} \cdot \sum_{t=1}^T \left( (C^T)^{\top} y^t\right)_j = \frac{1}{\lambda} \cdot \left( (C^T)^{\top} y\right)_j. \label{eq:experts:12:dynamic}
\end{eqnarray} 

In \ref{eq:experts:12:dynamic} the inequality holds as the elements of $C$ may only reduce over time. The inequality is the most significant deviation from the proof in the static setting. From~(\ref{eq:experts:10:dynamic}),~(\ref{eq:experts:11:dynamic}) and~(\ref{eq:experts:12:dynamic}), we get:
$$\frac{(1-\epsilon)}{\lambda} \cdot \left( (C^T)^{\top} y\right)_j \leq \frac{1}{\lambda} \cdot (1+\epsilon), \text{ and hence } \left( (C^T)^{\top} y\right)_j \leq (1+4\epsilon) \text{ for all } j \in [n].$$
The last inequality holds since $\epsilon < 1/2$. This concludes the proof of Theorem~\ref{th:MWU:static:approx:dyn}.

\section{Missing Proofs from Section~\ref{sec:static}}
\label{sec:app:proof}
\begin{claim}

A coordinate $k$ is cheap if and only if $\langle \nabla f_p(x), e_k \rangle \leq (1 + \epsilon) \langle \nabla f_c(x), e_k \rangle$.

\end{claim}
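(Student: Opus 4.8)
The plan is to compute the two directional derivatives $\langle \nabla f_p(x), e_k \rangle$ and $\langle \nabla f_c(x), e_k \rangle$ explicitly by applying the chain rule to the log-sum-exp forms of $f_p$ and $f_c$ from~(\ref{eq:eta}), rewrite each of them in terms of the constraint weights $w_p(x,i), w_c(x,j)$ defined just before Definition~\ref{def:coordinate}, and then observe that the inequality in the claim rearranges algebraically into exactly the defining inequality $\lambda(x,k) \le (1+\epsilon)\cdot w_p(x)/w_c(x)$ for a cheap coordinate (with the same $(1+\Theta(\epsilon))$ factor as in Definition~\ref{def:coordinate}).

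First I would differentiate $f_p$. Since $f_p(x) = \frac{1}{\eta}\log\!\left(\sum_{i=1}^{m_p} \exp(\eta P_i x)\right)$ and $\frac{\partial}{\partial x_k}(P_i x) = P(i,k)$, the chain rule gives
\[
\langle \nabla f_p(x), e_k\rangle = \frac{\partial f_p}{\partial x_k} = \frac{\sum_{i=1}^{m_p} \exp(\eta P_i x)\, P(i,k)}{\sum_{i=1}^{m_p} \exp(\eta P_i x)} = \frac{\sum_{i=1}^{m_p} w_p(x,i)\, P(i,k)}{w_p(x)}.
\]
An identical computation for $f_c(x) = -\frac{1}{\eta}\log\!\left(\sum_{j=1}^{m_c}\exp(-\eta C_j x)\right)$, now using $\frac{\partial}{\partial x_k}(-\eta C_j x) = -\eta C(j,k)$ so that the two minus signs cancel, yields
\[
\langle \nabla f_c(x), e_k\rangle = \frac{\partial f_c}{\partial x_k} = \frac{\sum_{j=1}^{m_c}\exp(-\eta C_j x)\, C(j,k)}{\sum_{j=1}^{m_c}\exp(-\eta C_j x)} = \frac{\sum_{j=1}^{m_c} w_c(x,j)\, C(j,k)}{w_c(x)}.
\]
Both quantities are nonnegative, since all the entries $P(i,k), C(j,k)$ and all the weights are nonnegative.

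Assuming first that $\sum_j w_c(x,j) C(j,k) > 0$ (so that $\lambda(x,k)$ is well defined), substituting the two displayed expressions shows that $\langle \nabla f_p(x), e_k\rangle \le (1+\epsilon)\langle \nabla f_c(x), e_k\rangle$ is equivalent to $\frac{\sum_i w_p(x,i) P(i,k)}{w_p(x)} \le (1+\epsilon)\cdot\frac{\sum_j w_c(x,j) C(j,k)}{w_c(x)}$; multiplying both sides by the positive quantity $\frac{w_p(x)}{\sum_j w_c(x,j)C(j,k)}$ turns this into $\lambda(x,k) \le (1+\epsilon)\cdot w_p(x)/w_c(x)$, which is precisely the condition that $k$ is cheap. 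Conversely, the same chain of equivalences read backwards proves the other direction. In the degenerate case $\sum_j w_c(x,j)C(j,k) = 0$ (column $k$ of $C$ is identically zero) we have $\langle \nabla f_c(x), e_k\rangle = 0$, so the inequality in the claim holds iff $\langle \nabla f_p(x), e_k\rangle = 0$, i.e.\ iff $\sum_i w_p(x,i)P(i,k) = 0$, which is consistent with the natural reading of $\lambda(x,k) \le (1+\epsilon)\cdot w_p(x)/w_c(x)$ in that case. There is no real obstacle: the only point needing a little care is the sign of the denominator $\sum_j w_c(x,j)C(j,k)$ when cross-multiplying, together with the convention for the all-zero column; everything else is a one-line application of the chain rule.
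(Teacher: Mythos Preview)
Your proposal is correct and follows essentially the same approach as the paper: compute $\langle \nabla f_p(x), e_k\rangle$ and $\langle \nabla f_c(x), e_k\rangle$ explicitly via the chain rule, identify them with the weight expressions, and rearrange to obtain the defining inequality $\lambda(x,k) \le (1+\epsilon)\,w_p(x)/w_c(x)$. Your version is in fact slightly more careful than the paper's, as you explicitly address the nonnegativity needed for cross-multiplication and the degenerate all-zero-column case.
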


\begin{proof}

Follows from definitions: $\langle \nabla f_p(x), e_k \rangle = \frac{\sum_{i \in [m_p]} e^{\eta \cdot P_i x} \cdot P(i,k)}{\sum_{i \in [m_p]} e^{\eta \cdot P_i x}}$, $\langle \nabla f_c(x), e_k \rangle = \frac{\sum_{i \in [m_c]} e^{\eta \cdot C_i x} \cdot C(i,k)}{\sum_{i \in [m_c]} e^{\eta \cdot C_i x}}$, hence if $\frac{\langle \nabla f_p(x), e_k \rangle}{\langle \nabla f_c(x), e_k \rangle} \leq 1 + \epsilon$ we have that $\frac{\sum_{i \in [m_p]} e^{\eta \cdot P_i x} \cdot P(i,k)}{\sum_{i \in [m_c]} e^{\eta \cdot C_i x} \cdot C(i,k)} \cdot \frac{\sum_{i \in [m_c]} e^{\eta \cdot C_i x}}{\sum_{i \in [m_p]} e^{\eta \cdot P_i x}} = \lambda(x,k) \cdot \frac{w_c(x)}{w_p(x)} \leq 1 + \epsilon$.

\end{proof}

\subsection{Proof of Property~\ref{prop:infeasible}}

\label{sec:proof:prop:infeasible}

\begin{proof}

The proof is analogous to the argument of Young \cite{Young2014nearly}. Let $y$ be a feasible solution to the linear program. Then we have that $\frac{\sum_{i = 1}^{m_p}(Cy)_i \cdot w_c(x,i)}{w_c(x)} \geq 1 \geq \frac{\sum_{i = 1}^{m_p} (Py)_i \cdot w_p(x,i)}{w_p(x)}$ as $w_c(x,i)/w_c(x)$ and $w_p(x,i)/w_p(x)$ just are describing distributions over the the constraints. This could be phrased as:

$$\red{y \cdot (C^T \cdot \frac{w_c(x)}{|w_c(x)|} - P^T \cdot \frac{w_p(x)}{|w_p(x)|})}$$

\red{Which as $y$ is non-negative implies that there is an $i$ such that $C_i^T \cdot \frac{w_c(x)}{|w_c(x)|} - P_i^T \cdot \frac{w_p(x)}{|w_p(x)|} \geq 0$ which is equivalent to saying $\lambda(x,i) \leq \frac{w_p(x)}{w_c(x)}$ meaning there is a cheap coordinate.}\thatchaphol{Not type-checked? $P$ is a matrix. $w_p(x,i)$ is a number.}\thatchaphol{I think it would be better to conclude why this proves Property 2.3.}

\end{proof}

\subsection{Proof of Property~\ref{prop:stop}}

\label{sec:proof:prop:stop}

For the convenience of the reader we will restate the statement precisely: given $\epsilon$ is sufficiently small such that $\epsilon \leq 1/10$ we have that:

$$f_p(x + \delta \cdot e_k) - f_p(x) \leq (1 + 10 \cdot \epsilon) \cdot (f_c(x + \delta \cdot e_k) - f_c(x)) \leq (1 + 10 \cdot \epsilon) \cdot \epsilon \leq 2 \cdot \epsilon$$

\begin{proof}

Note, that if $\epsilon \leq 1/10$ then $exp(-\log(m_c + U/L)/\epsilon) \leq exp(-2 \cdot \ln(m_c + U/L) - 1/\epsilon + 2) \leq \frac{\epsilon \cdot L}{U \cdot m_c}$.

Let $k$ be a cheap coordinate which the algorithms boosts by $\delta$ and let $x$ stand for the value of the output vector before boosting. If coordinate $k$ is boosted we have that $\langle \nabla f_p(x) , e_k \rangle \leq (1 + \epsilon) \cdot \langle \nabla f_c(x) , e_k \rangle$. We will show that $\langle \nabla f_p(x + \delta \cdot e_k) , e_k \rangle \leq \langle \nabla f_p(x) , e_k \rangle \cdot (1 + O(\epsilon))$ and $\langle \nabla f_c(x + \delta \cdot e_k) , e_k \rangle \geq \langle \nabla f_p(x) , e_k \rangle \cdot (1 - O(\epsilon))$. As $f_c(x)$ and $f_p(x)$ monotonously reduces and increases respectively with $x$ this implies that during the boost it must hold that $\langle \nabla f_p(x) , e_k \rangle \leq (1 + O(\epsilon)) \cdot \langle \nabla f_c(x) , e_k \rangle$ which in turn implies that $f_p(x + \delta \cdot e_k) - f_p(x) \leq (f_c(x + \delta \cdot e_k) - f_c(x)) \cdot (1 - O(\epsilon))$ implying one half of the claim.

\begin{eqnarray}
\langle \nabla f_p(x + \delta \cdot e_k) , e_k \rangle  & = &  \frac{\sum_{i \in [m_p]} e^{\eta \cdot P_i (x + \delta \cdot e_k)} \cdot P(i,k)}{\sum_{i \in [m_p]} e^{\eta \cdot P_i (x + \delta \cdot e_k)}}\nonumber\\
& \leq & \frac{\sum_{i \in [m_p]} e^{\epsilon + \eta \cdot P_i x} \cdot P(i,k)}{\sum_{i \in [m_p]} e^{\eta \cdot P_i x}} \label{eq:prop:stop:1}\\
& = & e^\epsilon \cdot \langle \nabla f_p(x) , e_k \rangle \nonumber\\
& \leq & (1 + 2 \cdot \epsilon) \cdot \langle \nabla f_p(x) , e_k \rangle \label{eq:prop:stop:2}
\end{eqnarray} 

Inequality~\ref{eq:prop:stop:1} follows from the definition of $\delta$ and Inequality~\ref{eq:prop:stop:2} applies for any $\epsilon \leq 1$.

Define $I = \{ i \in [m_c] : C_i x > 2$ and $A = \{ i \in [m_c] : C_i x \leq 2 \}$ representing the inactive and active constraints of $C$.\thatchaphol{P is used for packing already. How about $I$ for inactive?} 

\begin{eqnarray}
\langle \nabla f_c(x) , e_k \rangle  & = &  \frac{\sum_{i \in [m_c]} e^{-\eta \cdot C_i x} \cdot C(i,k)}{\sum_{i \in [m_c]} e^{-\eta \cdot C_i x}}\nonumber\\
& = & \frac{\sum_{i \in A} e^{-\eta \cdot C_i x} \cdot C(i,k) + \sum_{i \in I} e^{-\eta \cdot C_i x} \cdot C(i,k)}{\sum_{i \in A} e^{-\eta \cdot C_i x} + \sum_{i \in I} e^{-\eta \cdot C_i x}} \nonumber\\
& \leq & \frac{\sum_{i \in A} e^{-\eta \cdot C_i x} \cdot C(i,k) + \sum_{i \in I} e^{-\eta \cdot C_i x} \cdot C(i,k)}{\sum_{i \in A} e^{-\eta \cdot C_i x}} \nonumber\\
& \leq & \frac{\sum_{i \in A} e^{-\eta \cdot C_i x} \cdot C(i,k) + \sum_{i \in I} e^{- 2 \cdot \eta} \cdot C(i,k)}{\sum_{i \in A} e^{-\eta \cdot C_i x}} \nonumber\\
& \leq & \frac{\sum_{i \in A} e^{-\eta \cdot C_i x} \cdot C(i,k) + e^{-\eta} \cdot m_c \cdot e^{-\frac{\log (m_c + U/L)}{\epsilon}} \cdot \max_{i \in [m_c]} \{C(i,k)\}}{\sum_{i \in A} e^{-\eta \cdot C_i x}} \nonumber\\
& \leq & \frac{\sum_{i \in A} e^{-\eta \cdot C_i x} \cdot C(i,k) + e^{-\eta} \cdot L \cdot \epsilon}{\sum_{i \in A} e^{-\eta \cdot C_i x}} \label{eq:prop:stop:3} \\
& \leq & \frac{\sum_{i \in A} e^{-\eta \cdot C_i x} \cdot C(i,k)}{\sum_{i \in A} e^{-\eta \cdot C_i x}} \cdot (1 + \epsilon) \label{eq:prop:stop:4}
\end{eqnarray}

Equation~\ref{eq:prop:stop:3} follows if $\epsilon$ is selected small enough such that $e^{-\frac{\log(m_c + U/L)}{\epsilon}}\leq \frac{\epsilon \cdot L}{U \cdot m_c}$. Equation~\ref{eq:prop:stop:4} follows from the fact that in order for a boost to occur there must be at least one unsatisfied covering constraint, hence $\sum_{i \in A} e^{-\eta \cdot C_i x} \cdot C(i,k) \geq e^{-\eta} \cdot L$.

\begin{eqnarray}
\langle \nabla f_c(x + \delta \cdot e_i) , e_k \rangle  & = &  \frac{\sum_{i \in [m_c]} e^{-\eta \cdot C_i (x + \delta \cdot e_i)} \cdot C(i,k)}{\sum_{i \in [m_c]} e^{-\eta \cdot C_i (x + \delta \cdot e_i)}} \nonumber \\
& = & \frac{\sum_{i \in A} e^{-\eta \cdot C_i (x + \delta \cdot e_i)} \cdot C(i,k) + \sum_{i \in I} e^{-\eta \cdot C_i (x + \delta \cdot e_i)} \cdot C(i,k)}{\sum_{i \in A} e^{-\eta \cdot C_i (x + \delta \cdot e_i)} + \sum_{i \in I} e^{-\eta \cdot C_i (x + \delta \cdot e_i)}} \nonumber \\
& \geq & \frac{\sum_{i \in A} e^{-\eta \cdot C_i (x + \delta \cdot e_i)} \cdot C(i,k)}{\sum_{i \in A} e^{-\eta \cdot C_i (x + \delta \cdot e_i)} + \sum_{i \in P} e^{-\eta \cdot C_i (x + \delta \cdot e_i)}} \nonumber \\
& \geq & \frac{\sum_{i \in A} e^{-\eta \cdot C_i (x + \delta \cdot e_i)} \cdot C(i,k)}{\sum_{i \in A} e^{-\eta \cdot C_i (x + \delta \cdot e_i)} + m_c \cdot e^{- 2 \cdot \eta}} \nonumber \\
& \geq & \frac{\sum_{i \in A} e^{-\eta \cdot C_i (x + \delta \cdot e_i)} \cdot C(i,k)}{\sum_{i \in A} e^{-\eta \cdot C_i (x + \delta \cdot e_i)}} \cdot \frac{1}{1 + \epsilon} \label{eq:prop:stop:5} \\
& \geq & \frac{\sum_{i \in A} e^{-\eta \cdot C_i x} \cdot C(i,k)}{\sum_{i \in A} e^{-\eta \cdot C_i x}} \cdot \frac{e^{-\epsilon}}{1 + \epsilon} \label{eq:prop:stop:6} 
\end{eqnarray}

Here Equation~\ref{eq:prop:stop:5} similarly follows from the fact that $\sum_{i \in A} e^{-\eta \cdot C_i (x + \delta \cdot e_i)} \geq e^{-\eta - \epsilon}$ (by the choice of $\delta$) and by the definition of $\eta$. Equation~\ref{eq:prop:stop:6} follows from the definition of $\delta$. Therefore, we get that $\langle \nabla f_c(x + \delta \cdot e_i) , e_k \rangle \cdot (1 + 4 \cdot \epsilon) \geq \langle \nabla f_c(x + \delta \cdot e_i) , e_k \rangle \cdot {\frac{(1+\epsilon)^2}{e^{-\epsilon}}} \geq \langle \nabla f_c(x) , e_k \rangle$ (given $\epsilon < 1/5$).

As $\langle \nabla f_p(x + \delta \cdot e_i) , e_k \rangle \leq (1 + 2 \cdot \epsilon) \langle \nabla f_p(x) , e_k \rangle$ and $\langle \nabla f_p(x) , e_k \rangle \leq \langle \nabla f_x(x) , e_k \rangle \cdot (1 + \epsilon)$ we have that $\langle \nabla f_p(x + \delta \cdot e_i) , e_k \rangle \leq \langle \nabla f_c(x + \delta \cdot e_i) , e_k \rangle \cdot (1 + \epsilon)\cdot (1 + 2\cdot \epsilon)\cdot(1 + 4\epsilon) \leq \langle \nabla f_c(x + \delta \cdot e_i) , e_k \rangle \cdot (1 + 10 \cdot \epsilon)$ (given $\epsilon \leq 1/10$).

As both $f_c(x)$ and $f_p(x)$ are monotone in $x$ we get that $f_p(x + \delta \cdot e_k) - f_p(x) \leq (f_c(x + \delta \cdot e_k) - f_c(x)) \cdot (1 + 10 \cdot \epsilon)$. Note that $f_c(x) \geq \min_{i \in [m_c]} C_i x$ and $f_c(x + \delta \cdot e_k) + \epsilon \leq \min_{i \in [m_c]} C_i (x + e_k \cdot \epsilon) \leq \min_{i \in [m_c]} C_i x + \epsilon$ therefore $f_c(x + \delta \cdot e_k) - f_c(x) \leq \epsilon$. Therefore, we can conclude the original inequality.

\end{proof}
\section{Complete Algorithm from Section~\ref{sec:dynamic}}
\label{sec:app:algo}

\thatchaphol{Can we add explanation about the complication that we hid under the rug in the main body? This is so that the readers know that they missed exactly and/or understand why this section is long.}

In Section~\ref{sec:prob}, we presented an overview of our dynamic algorithm for positive LPs under relaxing updates. But we glossed over some implementation details to highlight the key ideas. For instance, we did not specify the dynamic data structure that will be needed to find the appropriate $\delta$ during a call to the {\sc Boost}$(x, k)$ subroutine (see step 1 in Figure~\ref{fig:boost}). Furthermore, in Section~\ref{subsec:dynamic:covering} we focused only on relaxing updates to the covering constraints, and we  sketched in Section~\ref{subsec:dynamic:full} how to deal with the relaxing updates to the packing constraints. Accordingly, to address these gaps, in this section we present a full, comprehensive version of our dynamic algorithm for positive LPs under relaxing updates.

\paragraph{Preliminaries:} Let $P \in \Re_{\geq 0}^{m_p \times n}$ and $C \in \Re_{\geq 0}^{m_c \times n}$ be input matrices to the linear program. Let $N$ stand for the number of entries in $P$ and $C$ which are non-zero at any point during the run of the algorithm. Let $U$ and $L$ stand for upper and lower bounds on the elements of the constraint matrices $P$, $C$ at all times, respectively. Let $t$ stands for the total number of relaxing updates progressed by the dynamic algorithm.

Let $R^P_i : i \in [m_p]$, $R^C_j : j \in [m_c]$ refer to the number of elements in the $i$-th row of $P$ and the $j$-th row of $C$ respectively which take a non-zero value at any time. Similarly, define $N^P_k : k \in [n]$ and $N^C_k : k \in [n]$ to represent the number of elements in the $k$-th column of $P$ and $C$ respectively taking a non-zero value at any point.

\begin{theorem}

\label{thm:mixed}

There is an algorithm for maintaining a certificate of infeasability or returning an $\epsilon$-approximate solution for a relaxing dynamic positive linear program in $O(\frac{N \cdot \log^2(m_p + m_c + U/L)}{\epsilon^2} + t)$ deterministic time (for $\epsilon < 1/200$).

\end{theorem}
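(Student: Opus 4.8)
The plan is to treat this as the fully-detailed version of Theorem~\ref{th:relaxing:positiveLP:algo}, filling in the data structures and the running-time accounting that Section~\ref{sec:prob} glossed over. First I would reduce to relaxing \emph{entry} updates: a translation update to a packing constraint $(Px)_i\le a_i$ (resp.\ covering constraint $(Cx)_j\ge b_j$) is simulated by rescaling every nonzero entry of row $i$ of $P$ (resp.\ row $j$ of $C$), which costs $nzp_{(i)}$ (resp.\ $nzc_{(j)}$) entry updates, and since each constraint undergoes only $\kappa=O(\log_{(1+\epsilon)}(U/L))$ meaningful translation updates this adds at most $\kappa N=O(N\log(U/L)/\epsilon)$ entry updates, absorbed into the final bound. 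After this I scale all right-hand sides to $1$, and I pass to the \emph{extended} LP $(P^*,C^*)$ with one extra coordinate $n+1$ whose sole purpose is to absorb relaxing updates to the packing matrix via \emph{pseudo-updates}: after $P(i,k)$ decreases, increase $P^*(i,n+1)$ so that $P^*_i x^*$, and hence $w_p(x^*,i)$ and $w_p(x^*)$, are unchanged. This is what restores the monotonicity of all constraint weights (Observation~\ref{ob:weight:monotone}), which the entire analysis rests on.

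Second, I would run the algorithm exactly as described in Sections~\ref{subsec:dynamic:covering}--\ref{subsec:dynamic:full}: explicitly maintain $x$, the weights $w_p(x,i),w_c(x,j)$, the totals $w_p(x),w_c(x)$, and the values $P_i x, C_j x$; split time into \emph{phases} delimited by the ratio $\gamma(x)=w_p(x)/w_c(x)$ increasing by a $(1+\Theta(\epsilon))$ factor; within a phase freeze the threshold $\gamma^0(x)$ and repeatedly call {\sc Boost-All}$(x,E)$, where $E$ is initialised to $[n]$ at the start of a phase and to $\{k\}$ after a relaxing update to column $k$ (legitimate by Observations~\ref{ob:cost:monotone:1} and~\ref{ob:cost:monotone:2}, which say that a coordinate removed from $E$ cannot become cheap again within the phase, and that an update to column $k$ can make only coordinate $k$ cheap). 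Correctness then carries over verbatim: Invariant~\ref{inv:dynamic} (via Property~\ref{prop:stop}) gives $f_p(x)\le(1+\Theta(\epsilon))f_c(x)$ and that each {\sc Boost} raises $f_c(x)$ by at most $\Theta(\epsilon)$, so at termination ($\min_j C_j x\ge 1$) the vector $x$ is $\Theta(\epsilon)$-approximate by~(\ref{eq:soft:approx}) and Corollary~\ref{cor:inv:dynamic}; and if $E$ empties without $\min_j C_j x$ reaching $1$, then no cheap coordinate exists and Property~\ref{prop:infeasible} certifies infeasibility, the normalised covering weights $\{w_c(x,j)/w_c(x)\}$ being the certificate.

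Third comes the running-time accounting. Lemma~\ref{lm:phase:bound} bounds the number of phases by $O\!\left(\log(m_p+m_c+U/L)/\epsilon^2\right)$, using $w_p(x)\le m_p e^{\eta(1+\Theta(\epsilon))}$ from Corollary~\ref{cor:inv:dynamic} and $w_c(x)\ge e^{-\eta}$ while $\min_j C_j x<1$. Lemma~\ref{lm:boost:bound} — the heart of the argument — bounds the number of {\sc Boost}$(x,k)$ calls for a fixed $k$ by $O\!\left(\log^2(m_p+m_c+U/L)/\epsilon^2\right)$: partition the value range $(L,U]$ into $\tau=\log(U/L)$ geometric segments, and show that at most $O(\eta/\epsilon)$ {\sc Boost}$(x,k)$ calls can have their pivot-value in a given segment, since more would push $x_k$ large enough to force $C_j x>2$ on the covering pivot $j$ (contradicting the choice of pivot in Figure~\ref{fig:boost}) or $P_i x>2$ on the packing pivot $i$ (contradicting Invariant~\ref{inv:dynamic} together with the termination test $\min_j C_j x\ge1$). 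Each {\sc Boost}$(x,k)$ refreshes only the $O(nz(k))$ weights touched by column $k$, so summing gives $\sum_k O(\log^2(\cdot)/\epsilon^2)\cdot nz(k)=O(N\log^2(\cdot)/\epsilon^2)$; the $O(n)$-time set resets at phase boundaries contribute $O(N\log(\cdot)/\epsilon^2)$; the per-update $E\leftarrow\{k\}$ operations contribute $O(t)$; and preprocessing is $O(N\log(m_p+m_c)/\epsilon^2)$ by the static bound of~\cite{Young2014nearly}. Together with the $\kappa N$ from the translation simulation this is $O\!\left(N\log^2(m_p+m_c+U/L)/\epsilon^2+t\right)$.

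The step I expect to be the fiddliest is the data structure inside {\sc Boost}$(x,k)$ that computes the maximal feasible $\delta$ in step~1 of Figure~\ref{fig:boost}: this needs, per column $k$, $\max_i P(i,k)$ over rows with $P(i,k)>0$ and $\max_{j:C_j x<2} C(j,k)$ over the \emph{active} covering rows. Since $x$ and the $C$-entries only increase, $C_j x$ is monotone, so each covering row leaves the active set at most once; I would keep a heap keyed on $C_j x$ to detect such departures ($O(m_c\log m_c)$ total) and per-column max-heaps for the $C(j,k)$ and $P(i,k)$ values, with every push/pop charged to a {\sc Boost} call, a phase boundary, or an entry update, so the extra cost stays within $O(N\log(\cdot))$. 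A heap keyed on the (bucketed) costs $\lambda(x,k)$ lets us find the next cheap coordinate. Assembling all of the above gives the theorem; the constant $\epsilon<1/200$ is exactly what is needed for the slack in Property~\ref{prop:stop} and the repeated $(1+\Theta(\epsilon))$ compoundings to close.
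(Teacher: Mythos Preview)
Your proposal follows the paper's approach and gets the architecture right: the reduction to entry updates, the extended LP with pseudo-updates to preserve weight monotonicity, the phase decomposition, Lemmas~\ref{lm:boost:bound} and~\ref{lm:phase:bound}, and the per-column heaps for the step size $\delta$ all match Appendix~\ref{sec:app:algo}.

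There is one genuine gap. You claim ``the per-update $E\leftarrow\{k\}$ operations contribute $O(t)$'', which requires testing cheapness of coordinate $k$ in $O(1)$ time, yet among the quantities you say you explicitly maintain there is nothing that lets you evaluate $\lambda(x,k)$ faster than $\Theta(nz(k))$. Your one-line reference to ``a heap keyed on the (bucketed) costs $\lambda(x,k)$'' is the right instinct, but you skip the hard part: after a single {\sc Boost}$(x,k)$, the \emph{exact} $\lambda(x,k')$ changes for every $k'$ that shares a nonzero row with $k$ in $P$ or $C$, so maintaining exact costs eagerly would blow up the per-boost work. The paper's fix is Invariant~\ref{inv:mixed} together with the subroutines {\sc UpdatePWeights}/{\sc UpdateCWeights} (Algorithms~\ref{alg:mixed:updatePweights}--\ref{alg:mixed:updateCweights}): keep \emph{approximate} weights $\hat w_p(x,i),\hat w_c(x,j)$ that are refreshed only when the true weight has drifted by a $(1\pm\epsilon)$ factor, and define $\hat\lambda(x,k)$ (storing its numerator and denominator separately) from those. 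By Corollary~\ref{cor:mixed:weightbounds} each $\hat w$ is refreshed $O(\eta/\epsilon)$ times in total, and each refresh touches $R^P_i$ or $R^C_j$ many $\hat\lambda$ values, giving the $O(N\log^2(\cdot)/\epsilon^2)$ budget (Claim~\ref{cl:mixed:weightupdate}); and a single entry update to $P(i,k)$ or $C(j,k)$ now changes one term in one numerator or denominator, so the cheapness check is genuinely $O(1)$. This lazy-weight machinery is in fact the main technical content of the appendix, so it should be spelled out rather than left implicit. A smaller version of the same issue bites your $\delta$-oracle: charging every heap push/pop to ``an entry update'' gives $O(t\log m)$, not $O(t)$; the paper (Section~\ref{sec:mixed:findincrement}) again uses factor-of-two laziness on the heap entries to get $O(t)$ plus $O(N\log^2 N)$ amortised over all value halvings/doublings.
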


We will dedicate this entire section to proving Theorem~\ref{thm:mixed}. For sake of simplicity we will first show how we can handle a dynamic linear program undergoing solely entry wise updates. In Subsection~\ref{subsec:app:mixed:translation} we will show to extend the algorithm for translation updates. We will make the following assumption on the distribution of inputs:

\begin{assumption}

\label{as:mixed:range}

Non-zero elements of input matrices $P$ and $C$ lie in the range of $[1/poly(N), poly(N)]$ at all times.

\end{assumption}

\subsection{Variables}

Define $P^*$, $C^*$ and $X^*$ as described in Section~\ref{subsec:dynamic:full}:

\begin{itemize}
\item  $P^* \in \mathbb{R}_{\geq 0}^{m_p \times (n+1)}$, where $P^*(i, k) = P(i, k)$ and $P^*(i, n+1) \geq 0$ for all $i \in [m_p], k \in [n]$. 
\item  $C^* \in \mathbb{R}_{\geq 0}^{m_c \times (n+1)}$, where $C^*(j, k) = C(j, k)$ and $C^*(j, n+1) = 0$ for all $j \in [m_c], k \in [n]$.
\item $x^* \in \mathbb{R}_{\geq 0}^{n+1}$ and $x$ refers to the first $n$ coordinates of $x^*$
\end{itemize}

From now on we will refer to the linear program defined by $Px \leq \mathbb{1}$, $CX \geq \mathbb{1}$ as the original linear program and to the linear program defined by $P^* x^* \leq \mathbb{1}$, $C^* x^* \geq \mathbb{1}$ as the extended linear program. Use variable $i$ and $j$ to refer to row indices of $P$ and $C$ respectively, and use $k$ for referring to column indices of $P$ or $C$.

\begin{observation}

\label{obs:mixed:equivalance}

If $x$ is a feasible solution to the original linear program then $\{x, 1\}$ ($x$ extended with a $1$) is a valid solution to the extended linear program. If $x^*$ is a feasible solution to the extended linear program then the first $n$ coordinates of $x^*$ form a feasible solution to the original linear program. Hence, the original linear program is feasible if and only if the extended linear program is feasible.

\end{observation}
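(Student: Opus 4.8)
The plan is to prove \Cref{obs:mixed:equivalance} by two direct substitutions that exploit the block structure of $P^*$ and $C^*$, and then to combine them into the claimed equivalence. Recall that $P^*$ agrees with $P$ on its first $n$ columns and has a nonnegative extra column $P^*(\cdot, n+1) \ge 0$, whereas $C^*$ agrees with $C$ on its first $n$ columns and has a zero extra column $C^*(\cdot, n+1) = 0$. Everything will hinge on these two sign facts together with $x^* \ge \vec{0}$.

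First I would handle the direction ``original feasible $\Rightarrow$ extended feasible''. Take any $x \in \mathbb{R}_{\ge 0}^n$ with $Px \le \mathbb{1}$ and $Cx \ge \mathbb{1}$, and pad it to $x^* = (x_1, \ldots, x_n, \xi)$ with an appropriate scalar $\xi \ge 0$. Since $C^*(\cdot, n+1) = 0$, every covering constraint $j$ satisfies $(C^* x^*)_j = (Cx)_j + C^*(j, n+1)\,\xi = (Cx)_j \ge 1$, irrespective of $\xi$. For the packing constraints, $(P^* x^*)_i = (Px)_i + P^*(i, n+1)\,\xi$, so it suffices to pick $\xi \ge 0$ small enough to keep this at most $1$; choosing $\xi = 0$ always works, and (as the statement records) $\xi = 1$ works in the regime maintained by the pseudo-updates, where $P^*(\cdot, n+1)$ is kept small. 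In either case $x^* \ge \vec 0$, so $x^*$ is feasible for $(P^*, C^*)$.

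Next I would handle the converse ``extended feasible $\Rightarrow$ original feasible''. Take any $x^* = (x^*_1, \ldots, x^*_{n+1}) \in \mathbb{R}_{\ge 0}^{n+1}$ with $P^* x^* \le \mathbb{1}$ and $C^* x^* \ge \mathbb{1}$, and let $x := (x^*_1, \ldots, x^*_n) \in \mathbb{R}_{\ge 0}^n$ be its restriction to the first $n$ coordinates. For each packing constraint $i \in [m_p]$, using $P^*(i, n+1) \ge 0$ and $x^*_{n+1} \ge 0$,
\[
(Px)_i = \sum_{k=1}^n P(i,k)\, x^*_k = (P^* x^*)_i - P^*(i, n+1)\, x^*_{n+1} \le (P^* x^*)_i \le 1 ,
\]
and for each covering constraint $j \in [m_c]$, using $C^*(j, n+1) = 0$,
\[
(Cx)_j = \sum_{k=1}^n C(j,k)\, x^*_k = (C^* x^*)_j - C^*(j, n+1)\, x^*_{n+1} = (C^* x^*)_j \ge 1 .
\]
Hence $x$ is a nonnegative feasible solution to the original LP $(P,C)$.

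Finally, the two implications give ``$(P,C)$ feasible $\iff$ $(P^*,C^*)$ feasible'', completing \Cref{obs:mixed:equivalance}. I do not expect a genuine obstacle here: the argument is pure substitution. The only point needing a line of care is the choice of the $(n+1)$-th coordinate in the forward direction, because $P^*(\cdot, n+1)$ can be strictly positive once pseudo-updates have occurred; padding with $\xi = 0$ sidesteps this entirely, so even that direction is immediate.
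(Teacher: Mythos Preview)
The paper states this observation without proof, so there is no paper argument to compare against. Your substitution proof is correct, and you are right to flag that the forward direction as literally phrased---extending $x$ by the constant $1$---needs $P^*(\cdot,n{+}1)=0$ (true at initialization, not after pseudo-updates); padding with $\xi=0$ recovers the feasibility equivalence in full generality, which is all the paper actually uses.
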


Due to Observation~\ref{obs:mixed:equivalance} we will focus only on the extended linear program in this section.

\paragraph{Data-Structure:} We will initially define the following variables:

\begin{equation}
  \label{eq:mixed:eta}
  \eta = \frac{\log(m_c + m_p + U/L)}{\epsilon}
\end{equation}

\begin{itemize}
    \item $w_c(x^*,i) = 1$ for $i \in [m_c]$ representing covering constraint weights
    \item $w_p(x^*,j) = 1$ for $j \in [m_p]$ representing packing constraint weights
    \item $w_c(x^*) = \sum_{i \in [m_c]} w_c(x^*, i)$ and $w_p(x^*) = \sum_{j \in [m_p]} w_p(x^*,j)$
    \item $\lambda(x^*)_0 = \frac{w_p(x^*)}{w_c(x^*)}$
\end{itemize}

For the purposes of the analysis we will define $\lambda(x^*,k) = \forall k \in [n]: \lambda(x^*,k) = \frac{\sum_{j \in [m_p]} w_p(x^*,j) \cdot P^*(j,k)}{\sum_{i \in [m_c]} w_c(x^*,i) \cdot C^*(i,k)}$ representing the price of each coordinate. Note that after initialization the algorithm will not keep track of exact $\lambda(x^*,k)$ values. Similarly, for the purposes of the algorithm description and analysis define but do not maintain 

$$\forall k \in [n] : \delta_k = \max_\delta \{\max\{\max_{i \in [m_p]} P^*(i,k) \cdot \delta, \max_{j \in [m_c] | C^*_j x^* < 2} C^*(j,k) \cdot \delta\} = \epsilon/\eta\}$$

\paragraph{Approximate Variables:} The algorithm will initialize the following approximate variables: 

\begin{itemize}
    \item $\forall i \in [m_p] : \hat{w_p}(x^*,i) = w_p(x^*,i)$
    \item $\forall j \in [m_c] : \hat{w_c}(x^*,j) = w_c(x^*,j)$
    \item $\hat{w}_p(x^*) = w_p(x^*)$
    \item $\hat{w}_c(x^*) = w_c(x^*)$
    \item $\forall k \in [n]: \hat{\lambda}(x^*,k) = \lambda(x^*,k)$
    \item $\hat{\lambda}_0(x^*) = \lambda_0(x^*)$
\end{itemize}

\begin{invariant}

\label{inv:mixed}

The following inequalities are simultaneously satisfied:

\begin{itemize}
    \item $\forall i \in [m_c]: w_c(x^*,i) \leq \hat{w_c}(x^*,i) \leq w_c(x^*,i) \cdot (1 + \epsilon)$
    \item $\forall i \in [m_p]: w_p(x^*,i) \geq \hat{w_p}(x^*,i) \geq w_p(x^*,i) \cdot (1 - \epsilon)$
    \item $\forall k \in [n]: \hat{\lambda}(x^*,k) = \frac{\sum_{j \in [m_p]} \hat{w_p}(x^*,j) \cdot P^*(j,k)}{\sum_{i \in [m_c]} \hat{w_c}(x^*,i) \cdot C^*(i,k)}$ and $\hat{\lambda}_0(x^*) \geq \lambda_0(x^*) \cdot (1 - \epsilon)$
\end{itemize}

\end{invariant}

\subsection{Algorithm}

\paragraph{\textit{Iterate} subroutine:} Initially the algorithm calls the \textit{Iterate()} sub-routine which initializes $\hat{\lambda}_0(x^*) = \lambda_0(x^*)$ and iterates over all coordinates $k \in [n]$. \textit{Iterate()} boosts any coordinate $k$ satisfying $\hat{\lambda}(x^*,k) \leq \lambda_0(x^*) \cdot (1 + 5 \cdot \epsilon)$ until $\hat{\lambda}(x^*,k) > \lambda_0(x^*) \cdot (1 + 5 \cdot \epsilon)$. If due to the boosts $\lambda_0(x^*)$ has increased enough such that $\lambda_0(x^*) \cdot (1 - \epsilon) > \hat{\lambda}_0(x^*)$ \textit{Iterate()} is called recursively.

\paragraph{\textit{Boost} subroutine:} A call to \textit{Boost($k$)} increases the $k$-th coordinate of $x^*$. The algorithm selects an increment of $\delta$ such that all packing constraints and all not 'well' satisfied covering constraints are increased by at most $\epsilon/\eta$, that is $\delta = \max_\delta \{ \max \{\max_i P*(i,k) \cdot \delta, \max_{j | C^*_j < 2}C^*(j,k) \cdot \delta \} = \epsilon/\eta\}$. The algorithm increments $x^*$ by $e_k \cdot \delta$ and afterwards proceeds to restore the affected variables $w_c(x^*,i),w_p(x^*,j),w_c(x^*),w_p(x^*)$ and $\lambda_0(x^*)$ precisely. Afterwards \textit{UpdatePWeights, UpdateCWeights} subroutines are called to ensure the correctness of Invariant~\ref{inv:mixed}.

\paragraph{\textit{UpdatePweights, UpdateCWeights} subroutines:} These subroutines are used to update $\hat{w_p}, \hat{w_c}$ and $\hat{\lambda}$ values respectively. Without loss of generality assume that \textit{UpdateCWeights($k$)} was called. The algorithm checks for all $i \in [m_P]$ such that the $i$-th element of the $k$-th column of $P^*$ is not empty ($P*(i,k)>0$) weather $\hat{w}_p(x^*,i) < w_p(x^*,i) \cdot (1 - \epsilon)$ as these are the $w_p(x^*,i)$ values which have increased due to the boosting coordinate $k$. For rows where the slack between the approximate and actual values became too significant the algorithm readjusts the approximate parameter ($\hat{w}_p(x^*,i) \leftarrow w_p(x^*,i)$) and adjusts all values of $\hat{\lambda}(x^*,k')$ which might have been affected by this change (that is the ones corresponding to columns of $P^*$ where the $i$-th row is non empty, that is $P^*(i,k')>0$).

\paragraph{\textit{UpdateP} subroutine:} A call \textit{UpdateP($i,k,\Delta$)} handles a reduction of $\Delta$ to $P^*(i,k)$. After updating $P*(i,k)$ the algorithm increments $P*(i,n+1)$ with $\Delta \cdot x^*_i$ to guarantee that $w_p(x^*, i)$ remains the same after increasing $P^*(i,k)$. This way the only $\hat{\lambda}$ value which is affected is $\hat{\lambda}(x^*,k)$ which the subroutine updates to ensure the correctness of Invariant~\ref{inv:mixed}. Due to the update of $\hat{\lambda}(x^*,k)$ coordinate $k$ might become cheap (that is $\hat{\lambda}(x^*,k) \leq \lambda_0(x^*) \cdot (1 + 5 \cdot \epsilon)$) so the subroutine boosts $k$ until it is no longer cheap. If some boosts occur then $\lambda_0(x^*)$ might reduce enough such that $\hat{\lambda}_0(x^*) < \lambda_0(x^*) \cdot (1 - \epsilon)$ in which case a new call to \textit{Iterate} is made to ensure the correctness of Invariant~\ref{inv:mixed}.

\begin{algorithm}
    
	\SetKwInput{KwInput}{Input}                
	\SetKwInput{KwOutput}{Output}             
	\DontPrintSemicolon
	\caption{Main subroutine of the algorithm}
	
	\KwInput{$C^* \in \Re_{\geq 0}^{m_c \times n}, P^* \in \Re_{\geq 0}^{m_p \times n}$}
	\KwOutput{$x^* \in \Re_{\geq 0}^{n+1}|C^*x^* \geq \mathbb{1}, P^*x^* \leq \mathbb{1} \cdot (1 + 200 \cdot \epsilon)$}
	
	\SetKwFunction{FMain}{Main}
	\SetKwFunction{FIterate}{Iterate}
	
	\label{alg:mixed:main}
	
	\SetKwProg{Fn}{Function}{:}{\KwRet}
	\Fn{\FIterate{}}{
	    $\hat{\lambda}_0(x^*) \leftarrow \lambda_0(x^*)$ \;
		\For{$k \in [n]$}
		{
		    \While{$\hat{\lambda}(x^*,k) \leq \lambda_0(x^*) \cdot (1 + 5\cdot \epsilon)$}
		    {
		        Boost(k) \atcp{See Algorithm~\ref{alg:mixed:boost}}
		    }
		}
		\If{$\hat{\lambda}_0(x^*) < \lambda_0(x^*) \cdot (1 - \epsilon)$}
		{
		    Iterate()
		}
	}
	
	\SetKwProg{Fn}{Function}{:}{\KwRet}
	\Fn{\FMain{}}{
		Iterate() \;
	}

\end{algorithm}

\begin{algorithm}
    
	\SetKwInput{KwInput}{Input}                
	\SetKwInput{KwOutput}{Output}             
	\DontPrintSemicolon
	\caption{Boosting of coordinate $k$}
	
	\SetKwFunction{FBoost}{Boost}

	\label{alg:mixed:boost}
	
	\SetKwProg{Fn}{Function}{:}{\KwRet}
	\Fn{\FBoost{k}}{
	$\delta \leftarrow$ sample in $[\delta_k/4, \delta_k]$ \atcp{See Section~\ref{sec:mixed:findincrement}}
	$x^* \leftarrow x^* + e_k \cdot \delta$ \atcp{$e_k$ is the unit $k$-th coordinate vector}
	\If{$C^* x^* \geq \mathbb{1}$}
	{
	    Return $x^*$ \;
	}
	\For{$\forall i \in [m_p] | P^*(i,k) > 0$}
	{
	    $w_p(x^*, i) \leftarrow \exp(\eta \cdot P^*_i x^*)$ \;
	}
	$w_p(x^*) \leftarrow \sum_{i \in [m_p]} w_p(x^*,i)$ \atcp{Updated in $O(N^P_k)$ time}
	\For{$\forall j \in [m_c] | C^*(j,k) > 0$}
	{
	    $w_c(x^*, j) \leftarrow \exp(- \eta \cdot C^*_j x^*)$ \;
	}
	$w_c(x^*) \leftarrow \sum_{j \in [m_c]} w_c(x^*,j)$ \atcp{Updated in $O(N^C_k)$ time}
	$\lambda_0(x^*) = \frac{w_p(x^*)}{w_c(x^*)}$ \;
	UpdatePWeights(k) \atcp{See Algorithm~\ref{alg:mixed:updatePweights}}
	UpdateCWeights(k) \atcp{See Algorithm~\ref{alg:mixed:updateCweights}}
	}

\end{algorithm}

\begin{algorithm}
    
	\SetKwInput{KwInput}{Input}                
	\SetKwInput{KwOutput}{Output}             
	\DontPrintSemicolon
	\caption{The subroutine updating packing weights}
	
	\SetKwFunction{FUpdatePWeights}{UpdatePWeights}
	
	\label{alg:mixed:updatePweights}
	
	\SetKwProg{Fn}{Function}{:}{\KwRet}
	\Fn{\FUpdatePWeights{k}}{
	    \For{$j \in [m_p]|P^*_{j,k} > 0$}
	    {
	        \If{$\hat{w_p}(x^*, j) < w_p(x^*,j) \cdot (1 - \epsilon)$}
	        {
	            $\hat{w_p}(x^*,j) = w_p(x^*,j)$ \;
	            \For{$k' \in [n]|P^*_{j,k'} > 0$}
	            {
	                $\hat{\lambda}(x^*,k') = \frac{\sum_{j' \in [m_p]} \hat{w_p}(x^*,j') \cdot P^*(j',k')}{\sum_{i \in [m_c]} \hat{w_c}(x^*,i) \cdot C^*(i,k')}$
	            }
	        }
	    }
	}

\end{algorithm}

\begin{algorithm}
    
	\SetKwInput{KwInput}{Input}                
	\SetKwInput{KwOutput}{Output}             
	\DontPrintSemicolon
	\caption{The subroutine updating covering weights}
	
	\SetKwFunction{FUpdateCWeights}{UpdateCWeights}
	
	\label{alg:mixed:updateCweights}
	
	\SetKwProg{Fn}{Function}{:}{\KwRet}
	\Fn{\FUpdateCWeights{i}}{
	    \For{$i \in [m_c]|C^*_{i,k} > 0$}
	    {
	        \If{$\hat{w_c}(x^*,i) > w_c(x^*,i) \cdot (1 + \epsilon)$}
	        {
	            $\hat{w_c}(x^*,i) = w_c(x^*,i)$ \;
	            \For{$k' \in [n]|C^*_{i,k'} > 0$}
	            {
	                $\hat{\lambda}(x^*,k') = \frac{\sum_{j \in [m_p]} \hat{w_p}(x^*,j) \cdot P^*(j,k')}{\sum_{i' \in [m_c]} \hat{w_c}(x^*,i') \cdot C^*(i',k')}$
	            }
	        }
	    }
	}

\end{algorithm}

\begin{algorithm}
	\SetKwInput{KwInput}{Input}               
	\DontPrintSemicolon
	\label{alg:mixed:packingupdate}
	\KwInput{$(j,k,\Delta)$: $P^*(j,k)$ decreased by $\Delta$}
	\caption{Subroutine handling updates to the packing constraint matrix $P^*$}
	\SetKwFunction{FUpdateP}{UpdateP}
	
	\SetKwProg{Fn}{Function}{:}{\KwRet}
	\Fn{\FUpdateP{$j,k,\Delta$}}{
	$P^*(i,k) \leftarrow P^*(i,k) - \Delta$ \;
	$P^*(i,n+1) = P^*(i,n+1) + \Delta \cdot x^*_i$ \;
	$\hat{\lambda}(x^*,k) = \frac{\sum_{i' \in [m_p]} \hat{w_p}(x^*,i') \cdot P^*(i',k)}{\sum_{j \in [m_c]} \hat{w_c}(x^*,j) \cdot C^*(j,k)}$ \atcp{Updated in $O(1)$ time}
	\While{$\hat{\lambda}(x^*,k) \leq \lambda_0(x^*) \cdot ( 1 + 5 \cdot \epsilon)$}{
        Boost($k$) \atcp{See Algorithm~\ref{alg:mixed:boost}}
    }
    
    \If{$\hat{\lambda}_0(x^*) < \lambda_0(x^*) \cdot (1 - \epsilon)$}{
        Iterate() \atcp{See Algorithm~\ref{alg:mixed:main}}
    }

}

\end{algorithm}

\begin{algorithm}
	\SetKwInput{KwInput}{Input}               
	\DontPrintSemicolon
	
	\label{alg:mixed:coveringupdate}
	\caption{Subroutine handling updates to the covering constraint matrix $C^*$}
	\KwInput{$(j,k, \Delta)$: $C^*(j,k)$ increased by $\Delta$}
	
	\SetKwFunction{FUpdateC}{UpdateC}
	
	\SetKwProg{Fn}{Function}{:}{\KwRet}
	\Fn{\FUpdateC{$j,k,\Delta$}}{
	
	$C^*(j,k) \leftarrow C^*(j,k) + \Delta$ \;
	
	\If{$C^*x^* \geq \mathbb{1}$}
	{
	    Return $x^*$ \;
	}
	
	$w_c(x^*,j) \leftarrow \exp(-\eta \cdot C^*_j x^*)$ \;
	$w_c(x^*) \leftarrow \sum_{j \in [m_c]} w_c(x^*,j)$ \atcp{Updated in $O(1)$ time}
	$\lambda_0(x^*) = \frac{w_p(x^*)}{w_c(x^*)}$ \;
	\If{$w_c(x^*,j) < \hat{w_c}(x^*, j) \cdot (1 - \epsilon)$}{
	    $\hat{w_c}(x^*, j) = w_c(x^*,j)$ \;
        \For{$k' \in [n]|C^*_{j,k'}>0$}{
            $\hat{\lambda}(x^*,k') = \frac{\sum_{i \in [m_p]} \hat{w_p}(x^*,i) \cdot P^*(i,k')}{\sum_{j' \in [m_c]} \hat{w_c}(x^*,j') \cdot C^*(j',k')}$ \;
        }
        \For{$k' \in [n]|C^*_{j,k'}>0$}{
            \While{$\hat{\lambda}(x^*,k') \leq \lambda_0(x^*) \cdot ( 1 + 5 \cdot \epsilon)$}{
                Boost(k') \atcp{See Algorithm~\ref{alg:mixed:boost}}
            }
        }
	}
	$\hat{\lambda}(x^*,k) = \frac{\sum_{i \in [m_p]} \hat{w_p}(x^*,i) \cdot P^*(i,k)}{\sum_{j' \in [m_c]} \hat{w_c}(x^*,j') \cdot C^*(j',k)}$\;
	\While{$\hat{\lambda}(x^*,k) \leq \lambda_0(x^*) \cdot ( 1 + 5 \cdot \epsilon)$}{
        Boost($k$) \atcp{See Algorithm~\ref{alg:mixed:boost}}
    }
    \If{$\hat{\lambda}_0(x^*) < \lambda_0(x^*) \cdot (1 - \epsilon)$}{
        Iterate() \atcp{See Algorithm~\ref{alg:mixed:main}}
    }
}

\end{algorithm}

\subsection{Correctness}

\begin{claim}

\label{cl:mixed:inv}

Invariant~\ref{inv:mixed} is always satisfied by Algorithm~\ref{alg:mixed:main}.

\end{claim}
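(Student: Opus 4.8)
The plan is to prove that Invariant~\ref{inv:mixed} holds by induction over the sequence of primitive operations performed by Algorithm~\ref{alg:mixed:main} and its subroutines, namely: (i) initialization, (ii) a call to \textsc{Boost}$(k)$, (iii) a call to \textsc{UpdateP}$(j,k,\Delta)$, and (iv) a call to \textsc{UpdateC}$(j,k,\Delta)$. The base case is immediate: at initialization every $\hat{w}_p(x^*,i)$, $\hat{w}_c(x^*,j)$, $\hat\lambda(x^*,k)$, $\hat\lambda_0(x^*)$ is set exactly equal to its true counterpart, so all three bullet points of Invariant~\ref{inv:mixed} hold with equality. For the inductive step, I would assume the invariant holds just before a primitive operation and verify it still holds just after. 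Note that the third bullet (the algebraic identity defining $\hat\lambda(x^*,k')$ in terms of the $\hat{w}$'s, together with $\hat\lambda_0(x^*) \geq (1-\epsilon)\lambda_0(x^*)$) is maintained syntactically: every subroutine that touches a $\hat{w}_p(x^*,i)$ or $\hat{w}_c(x^*,j)$ immediately recomputes $\hat\lambda(x^*,k')$ for exactly those columns $k'$ whose value could have changed (the columns where row $i$ of $P^*$, resp. row $j$ of $C^*$, is nonzero), and $\hat\lambda_0(x^*)$ is reset to $\lambda_0(x^*)$ at the top of every call to \textsc{Iterate}. So the real content is the first two bullets.

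First I would handle \textsc{Boost}$(k)$. Boosting increases $x^*$ along coordinate $k$ by $\delta \le \delta_k$, which by the choice of $\delta_k$ guarantees that for every $i \in [m_p]$ the quantity $P^*_i x^*$ increases by at most $\epsilon/\eta$, and for every \emph{not-too-large} covering constraint $j$ (those with $C^*_j x^* < 2$) the quantity $C^*_j x^*$ increases by at most $\epsilon/\eta$. Hence each true weight $w_p(x^*,i)$ grows by a factor at most $e^{\epsilon}$ and each true weight $w_c(x^*,j)$ shrinks by a factor at most $e^{-\epsilon}$ (for not-too-large $j$; for the already-large covering constraints the weight only shrinks further, which is harmless for the lower-bound side of the invariant). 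Since \textsc{Boost} recomputes $w_p(x^*,i)$, $w_c(x^*,j)$, $w_p(x^*)$, $w_c(x^*)$, $\lambda_0(x^*)$ \emph{exactly}, the only thing that can drift is the relation between the $\hat{w}$'s and the $w$'s. After the exact recomputation, \textsc{UpdatePWeights}$(k)$ (resp. \textsc{UpdateCWeights}$(k)$) scans precisely the rows $i$ with $P^*(i,k)>0$ (resp. $j$ with $C^*(j,k)>0$) — which are exactly the constraints whose weight changed during this boost — and whenever $\hat{w}_p(x^*,i) < (1-\epsilon)w_p(x^*,i)$ (resp. $\hat{w}_c(x^*,j) > (1+\epsilon)w_c(x^*,j)$) it resets the estimate to the true value. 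Thus at the end of \textsc{Boost} we have $\hat{w}_p(x^*,i) \ge (1-\epsilon)w_p(x^*,i)$ for all $i$ and $\hat{w}_c(x^*,j) \le (1+\epsilon)w_c(x^*,j)$ for all $j$; combined with the fact that $w_p(x^*,i)$ never decreases and $w_c(x^*,j)$ never increases throughout the whole algorithm (Observation~\ref{ob:weight:monotone}, which also holds here thanks to the pseudo-updates of Section~\ref{subsec:dynamic:full}), we also get $\hat{w}_p(x^*,i) \le w_p(x^*,i)$ and $\hat{w}_c(x^*,j) \ge w_c(x^*,j)$ — because the estimate was last \emph{set} to a true value that was at least as large (resp. at most as small) as the current one. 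This establishes bullets 1 and 2.

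Next I would handle \textsc{UpdateP}$(j,k,\Delta)$ and \textsc{UpdateC}$(j,k,\Delta)$. The key point for \textsc{UpdateP} is that decreasing $P^*(j,k)$ by $\Delta$ and simultaneously increasing $P^*(j,n+1)$ by $\Delta \cdot x^*_{n+1}$ leaves $P^*_j x^*$ — and hence $w_p(x^*,j)$, $w_p(x^*)$, $\lambda_0(x^*)$ — completely unchanged (this is the pseudo-update trick from Section~\ref{subsec:dynamic:full}); so the only estimate that could go stale is $\hat\lambda(x^*,k)$, which is recomputed exactly in the subroutine, and the $\hat{w}$-relations are untouched. For \textsc{UpdateC}, increasing $C^*(j,k)$ by $\Delta$ increases $C^*_j x^*$, hence decreases $w_c(x^*,j)$; the subroutine recomputes $w_c(x^*,j)$, $w_c(x^*)$, $\lambda_0(x^*)$ exactly, and then, exactly as in the boosting case, checks whether $\hat{w}_c(x^*,j)$ has drifted above $(1+\epsilon)w_c(x^*,j)$ and if so resets it and refreshes $\hat\lambda(x^*,k')$ for all affected columns. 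In both cases the subsequent \textsc{Boost} calls preserve the invariant by the argument of the previous paragraph, and the guard ``if $\hat\lambda_0(x^*) < (1-\epsilon)\lambda_0(x^*)$ then \textsc{Iterate}'' re-establishes the $\hat\lambda_0$ part of bullet 3 whenever a cascade of boosts has pushed $\lambda_0(x^*)$ up too far. I expect the main obstacle to be the bookkeeping in bullet 2 for the covering weights of \emph{large} constraints: once $C^*_j x^* \ge 2$, \textsc{Boost} may shrink $w_c(x^*,j)$ by more than an $e^{-\epsilon}$ factor in a single step, so one has to argue carefully that the ``reset-on-drift'' mechanism in \textsc{UpdateCWeights}/\textsc{UpdateC} still fires whenever needed (it does, because those rows have $C^*(j,k)>0$ and are therefore scanned), and that this does not interfere with the sandwich $w_c(x^*,j) \le \hat{w}_c(x^*,j) \le (1+\epsilon)w_c(x^*,j)$. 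Making precise exactly which rows/columns are scanned after each primitive, and confirming the scanned set always contains every row/column whose true value moved, is the crux of the argument.
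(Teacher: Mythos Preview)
Your proposal is correct and follows essentially the same approach as the paper: an inductive case analysis over initialization, \textsc{Boost}, \textsc{UpdateP}, and \textsc{UpdateC}, checking in each case that exactly the right approximate quantities get refreshed. Your write-up is in fact more careful than the paper's terse proof (e.g., you explicitly invoke weight monotonicity to get the one-sided bounds $\hat{w}_p \le w_p$ and $\hat{w}_c \ge w_c$, and you flag the large-covering-constraint subtlety); one small slip is the amount of the pseudo-update increment to $P^*(j,n+1)$, but you correctly state the only property that matters for the invariant, namely that $P^*_j x^*$ and hence $w_p(x^*,j)$ are left unchanged.
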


\begin{proof}

At initialization the invariant clearly holds by the definition of the approximate variables as their exact counterparts. To prove the correctness of the invariant we will consider the three instances where variables are updated and argue inductively that they preserve Invariant~\ref{inv:mixed}.

\paragraph{\textit{Boost} subroutine (Algorithm~\ref{alg:mixed:boost}):} when \textit{Boost($k$)} is called then elements of $w_p(x^*,i) | P^*(i,k) > 0$ and $w_p(x^*,j) | C^*(j,k) > 0$, $w_c(x^*)$, $w_p(x^*)$ and $\lambda_0(x^*)$ are updated. Afterwards \textit{UpdatePWeights($k$)} and \textit{UpdateCWeights($k$)} are called. These subroutines check variables $\hat{w}_p(x^*,i) | P^*(i,k) > 0$ and $\hat{w}_p(x^*,j) | C^*(j,k) > 0$ and update them to enforce the requirements of Invariant~\ref{inv:mixed}. \textit{Boost($k$)} might be called by subroutines \textit{Iterate, UpdateP} and \textit{UpdateC}. Each of these subroutines call \textit{Iterate} if the last requirement of Invariant~\ref{inv:mixed} ($\hat{\lambda}_0(x^*) < \lambda_0 \cdot (1 - \epsilon)$ is violated. As Invariant~\ref{inv:mixed} can't compile (as it calls itself) until $\hat{\lambda}_0(x^*) \geq \lambda_0 \cdot (1 - \epsilon)$ all of Invariant~\ref{inv:mixed} will be satisfied after a boost operation.

\paragraph{\textit{UpdateP} subroutine (Algorithm~\ref{alg:mixed:updatePweights}):} \textit{UpdateP($i,k,\Delta$)} doesn't affect constraint weights by define. The only variable it needs to update is $\hat{\lambda}(x^*,k)$ because of the change in $P^*(i,k)$. As argued previously, Invariant~\ref{inv:mixed} is restored by the \textit{Boost} calls inside of \textit{UpdateP}.

\paragraph{\textit{UpdateC} subroutine (Algorithm~\ref{alg:mixed:updateCweights}):} \textit{UpdateC} makes changes to $w_c(x^*,j),w_c(x^*)$ and $\lambda_0(x^*)$ and updates the approximate variables accordingly to fit Invariant~\ref{inv:mixed}. As argued previously the invariant is restored after proceeding \textit{Boost} calls.

\end{proof}

\begin{claim}

\label{cl:mixed:infeasability}

Whenever the algorithm halts without returning an answer the extended LP is infeasible.

\end{claim}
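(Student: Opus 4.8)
The plan is to reduce the claim to the extended-LP analogue of Property~\ref{prop:infeasible}. Observe first that Algorithm~\ref{alg:mixed:main} (and the handlers of Algorithm~\ref{alg:mixed:packingupdate} and Algorithm~\ref{alg:mixed:coveringupdate}) produces an output $x^*$ \emph{only} when some call to \textsc{Boost} detects $C^* x^* \geq \mathbb{1}$. Hence ``halting without an answer'' means that a handler --- the preprocessing \textsc{Iterate}, or an \textsc{UpdateP}/\textsc{UpdateC} call --- returns without any such detection, and that at that moment the recursion guard $\hat{\lambda}_0(x^*) < \lambda_0(x^*)\cdot(1-\epsilon)$ is false. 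I would show that at such a moment there is no cheap coordinate at the current $x^*$ for the extended LP $(P^*,C^*)$ in the exact sense of Definition~\ref{def:coordinate}, i.e.\ $\lambda(x^*,k) > w_p(x^*)/w_c(x^*)$ for every $k\in[n+1]$; infeasibility of $(P^*,C^*)$ then follows exactly as in the proof of Property~\ref{prop:infeasible}.

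The key step, for the coordinates $k\in[n]$, is to establish the invariant: \emph{at the end of every handler that does not recurse, $\hat{\lambda}(x^*,k) > \hat{\lambda}_0(x^*)\cdot(1+5\epsilon)$ for all $k\in[n]$.} To prove this, note that $\hat{\lambda}_0(x^*)$ is reset to $\lambda_0(x^*)$ at the start of each \textsc{Iterate} and then each coordinate is boosted until $\hat{\lambda}(x^*,k)>\lambda_0(x^*)\cdot(1+5\epsilon)\ge \hat{\lambda}_0(x^*)\cdot(1+5\epsilon)$ (since $\lambda_0$ only grows during \textsc{Iterate}); by Observation~\ref{ob:weight:monotone} and the resynchronisation rules of \textsc{UpdatePWeights}/\textsc{UpdateCWeights}, boosts of later coordinates only raise the numerator $\sum_i \hat{w}_p(x^*,i)P^*(i,k)$ and lower the denominator $\sum_j \hat{w}_c(x^*,j)C^*(j,k)$ of $\hat\lambda(x^*,k)$ (which by Claim~\ref{cl:mixed:inv} is always equal to this ratio), so the inequality survives to the end of that \textsc{Iterate}. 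Between two \textsc{Iterate} calls, a relaxing update either touches a column $k'\neq k$ --- whose handling only triggers boosts, again raising $\hat\lambda(x^*,k)$ --- or touches column $k$ itself, temporarily lowering $\hat\lambda(x^*,k)$, after which \textsc{UpdateP}/\textsc{UpdateC} re-boosts $k$ until $\hat\lambda(x^*,k)>\lambda_0(x^*)\cdot(1+5\epsilon)\ge\hat\lambda_0(x^*)\cdot(1+5\epsilon)$ again; the pseudo-update that raises $P^*(i,n+1)$ is exactly what keeps $w_p(x^*)$ --- and hence the monotonicity used above --- intact across relaxing updates to $P$. This establishes the invariant.

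Combining things: since the recursion guard is false at the halt, Invariant~\ref{inv:mixed} gives $\lambda_0(x^*)=w_p(x^*)/w_c(x^*)\le \hat{\lambda}_0(x^*)/(1-\epsilon)$, so the invariant yields $\hat\lambda(x^*,k) > (1-\epsilon)(1+5\epsilon)\,\lambda_0(x^*) > \lambda_0(x^*)$ for $\epsilon<1/200$; and Invariant~\ref{inv:mixed} (with $\hat{w}_p\le w_p$ in the numerator and $\hat{w}_c\ge w_c$ in the denominator) gives $\lambda(x^*,k)\ge\hat\lambda(x^*,k)$, hence $\lambda(x^*,k) > w_p(x^*)/w_c(x^*)$ for all $k\in[n]$. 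For coordinate $n+1$: since $C^*(j,n+1)=0$ for all $j$, either some $P^*(i,n+1)>0$, in which case $\lambda(x^*,n+1)=+\infty$ and $n+1$ is not cheap, or $P^*(\cdot,n+1)=\vec{0}$, in which case $n+1$ lies in no constraint and a feasible $z\ge\vec{0}$ of $(P^*,C^*)$ restricts to a feasible, nonzero $z'=(z_1,\dots,z_n)$ of $(P,C)$, so it suffices to rule out cheap coordinates in $[n]$. In both cases there is no cheap coordinate at $x^*$. Finally, I would run the averaging argument from the proof of Property~\ref{prop:infeasible} on $(P^*,C^*)$ with the strictly positive weights $\{w_p(x^*,i)\}$ and $\{w_c(x^*,j)\}$: a feasible $z\ge\vec{0}$ (necessarily $z\neq\vec{0}$, since $C^*z\ge\mathbb{1}$) satisfies $\sum_k z_k\big(\frac{\sum_j w_c(x^*,j)C^*(j,k)}{w_c(x^*)}-\frac{\sum_i w_p(x^*,i)P^*(i,k)}{w_p(x^*)}\big)\ge 0$, forcing some $k$ with $z_k>0$ and $\lambda(x^*,k)\le w_p(x^*)/w_c(x^*)$ --- a contradiction. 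Hence $(P^*,C^*)$ is infeasible.

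The main obstacle will be the invariant in the second paragraph: one must carefully track, through the interleaving of \textsc{Boost} calls, weight resynchronisations, pseudo-updates and further relaxing updates, that no coordinate in $[n]$ ever becomes cheap ``behind the algorithm's back'' between two \textsc{Iterate} passes --- in particular that a single relaxing update can only lower the cost of the one column it touches, and that this column is re-boosted before the handler returns. Everything else is a routine transfer of Property~\ref{prop:infeasible} through the multiplicative slacks guaranteed by Invariant~\ref{inv:mixed} and the monotonicity of the weights from Observation~\ref{ob:weight:monotone}.
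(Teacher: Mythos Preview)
Your proposal is correct and follows essentially the same route as the paper's proof: both reduce to the ``no cheap coordinate'' criterion of Property~\ref{prop:infeasible}, and both show that at the moment the algorithm halts one has $\lambda(x^*,k) > \lambda_0(x^*)$ for all $k$ by combining (a) the post-check inequality $\hat{\lambda}(x^*,k) > \lambda_0(x^*)(1+5\epsilon)$, (b) the relation $\lambda(x^*,k)\ge\hat{\lambda}(x^*,k)$ from Invariant~\ref{inv:mixed}, and (c) the bound $\lambda_0^{\text{halt}}(x^*)\le \lambda_0^{\text{check}}(x^*)/(1-\epsilon)$ via $\hat{\lambda}_0$. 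Your write-up is in fact somewhat more careful than the paper's: you orient the inequality $\lambda(x^*,k)\ge\hat{\lambda}(x^*,k)$ correctly (the paper states it once in the wrong direction before using it the right way), and you handle coordinate $n{+}1$ explicitly, whereas the paper's proof of this claim restricts attention to $k\in[n]$ and relies on the observation made elsewhere that coordinate $n{+}1$ is never cheap.
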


\begin{proof}

By Claim~\ref{cl:mixed:inv} we can assume that Invariant~\ref{inv:mixed} is satisfied whenever the algorithm halts without returning an answer. Following the proof of Property~\ref{prop:infeasible} from Section~\ref{sec:proof:prop:infeasible} if at any time for all $k \in [n]$ we have that $\lambda(x^*,k) < \lambda_0(x^*)$ then the linear program is infeasible. We will argue that Invariant~\ref{inv:mixed} implies $\forall k \in [n] : \lambda(x^*,k) < \lambda_0(x^*)$. Fix $k \in [n]$ for this proof. 

\medskip

By Invariant~\ref{inv:mixed} we always have that $\hat{\lambda}(x^*,k) \geq \lambda(x^*,k)$. Say whenever the algorithm checked if $\hat{\lambda}(x^*,k) \leq \lambda_0(x^*) \cdot (1 + 5 \cdot \epsilon)$ that the algorithm has checked the price of coordinate $k$. After a coordinates price is checked it is boosted until $\lambda(x^*,k) \geq \hat{\lambda}(x^*,k) > \lambda_0(x^*) \cdot (1 + 5 \cdot \epsilon)$. Each coordinates price is checked at least once at initialization. Let $\lambda_0^-(x^*)$ and $\lambda_0^+(x^*)$ represent the value of $\lambda_0(x^*)$ after the last time coordinate $k$-s price was checked and at compilation respectively. Let $\hat{\lambda}_0(x^*)^*$ represent the value of $\hat{\lambda}_0(x^*)$ at both of these time points (note that if $\hat{\lambda}_0(x^*)$ would have been updated between the last price checking of coordinate $k$ and compilation a new call to \textit{Iterate()} would have been made, hence $k$-s price would have been checked again). By Invariant~\ref{inv:mixed} we have that $\hat{\lambda}_0(x^*)^* \leq \lambda_0^-(x^*)$ and $\hat{\lambda}_0(x^*)^* \geq \lambda_0^+(x^*) \cdot (1 - \epsilon)$. Therefore, $\lambda_0^+(x^*) \leq \frac{\lambda_0^+(x^*)}{1 - \epsilon}$.

\medskip

$\lambda(x^*,k)$-s value might only have reduced since the last time $k$-s price was checked as at every instance it's price might increase it is checked. Therefore, we have that at that whenever the algorithm halts $\lambda(x^*,k) \geq \lambda_0(x^*) \cdot (1 + 5 \epsilon) \cdot (1 - \epsilon) \geq \lambda_0(x^*)$ if $\epsilon \leq 4/5$ which concludes the claim.

\end{proof}

\begin{claim}

\label{cl:mixed:approximation}

At all times the algorithm satisfies that $P^*x^* \leq \mathbb{1} \cdot (1 + 200 \cdot \epsilon)$ (given $\epsilon < 1/10$).

\end{claim}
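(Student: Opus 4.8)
The plan is to reduce the claim to a statement about the soft-max potential $f_p(x^*)$, exactly in the spirit of the proof of Corollary~\ref{cor:inv:dynamic}. By the first inequality in~(\ref{eq:soft:approx}) applied to the extended LP $(P^*,C^*)$, we have $\max_{i\in[m_p]} P^*_i x^* \le f_p(x^*)$, so it suffices to show that $f_p(x^*) \le 1+200\epsilon$ holds at every point during the execution of Algorithm~\ref{alg:mixed:main}.

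The first step is to establish the analogue of Invariant~\ref{inv:dynamic} for the comprehensive algorithm: $f_p(x^*) \le (1+c\epsilon)\cdot f_c(x^*)$ at all times, for some absolute constant $c$. This is proved by induction over the events that can change $x^*$ or the constraint matrices. The base case holds since initially $x^*=(0,\dots,0,1)$ and $f_p(x^*)=f_c(x^*)=0$. A relaxing update to an entry of $C$ can only increase some $C^*_j x^*$ and leaves every $P^*_i x^*$ unchanged, so it increases $f_c(x^*)$ and leaves $f_p(x^*)$ fixed. A relaxing update to an entry $P(i,k)$ together with the accompanying pseudo-update increases $P^*(i,n+1)$ by exactly the amount needed to keep $P^*_i x^*$ (and every other $P^*_{i'}x^*$) unchanged, so both $f_p(x^*)$ and $f_c(x^*)$ stay fixed; the calls to {\sc UpdatePWeights}, {\sc UpdateCWeights} only touch the approximate bookkeeping variables and change neither $x^*$ nor the true constraint values. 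The one remaining event is a call to {\sc Boost}$(k)$, which is the heart of the argument.

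The key step is a version of Property~\ref{prop:stop} adapted to two mild differences: {\sc Boost}$(k)$ is invoked under the \emph{approximate} cost test $\hat\lambda(x^*,k)\le\lambda_0(x^*)\cdot(1+5\epsilon)$ rather than the exact one, and the step size $\delta$ is sampled from $[\delta_k/4,\delta_k]$ rather than being equal to $\delta_k$. Using Invariant~\ref{inv:mixed} (established in Claim~\ref{cl:mixed:inv}), which gives $\hat w_p(x^*,j)\ge(1-\epsilon)w_p(x^*,j)$ and $\hat w_c(x^*,i)\le(1+\epsilon)w_c(x^*,i)$, one obtains $\lambda(x^*,k)\le\frac{1+\epsilon}{1-\epsilon}\hat\lambda(x^*,k)\le\frac{1+\epsilon}{1-\epsilon}(1+5\epsilon)\lambda_0(x^*)$, so coordinate $k$ is cheap up to a $(1+O(\epsilon))$ factor; equivalently $\langle\nabla f_p(x^*),e_k\rangle\le(1+O(\epsilon))\langle\nabla f_c(x^*),e_k\rangle$ by the gradient identity in Appendix~\ref{sec:app:proof}. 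Re-running the argument in the proof of Property~\ref{prop:stop} with this slightly weaker starting hypothesis (the cheapness constant enters only multiplicatively, and the bound $f_c(z)-f_c(y)\le\epsilon$ uses only $\delta\le\delta_k$, hence survives the sampling) yields $f_p(z)-f_p(y)\le(1+O(\epsilon))\cdot(f_c(z)-f_c(y))\le O(\epsilon)$, where $y$ and $z$ denote the vectors just before and just after the boost. This preserves the invariant and simultaneously shows that every boost raises $f_c(x^*)$ by at most an additive $O(\epsilon)$.

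Finally I would close the loop exactly as in Corollary~\ref{cor:inv:dynamic}: at every stable point of the algorithm $\min_{j\in[m_c]} C^*_j x^* < 1$, since otherwise {\sc Boost} or {\sc UpdateC} would already have returned $x^*$; hence by the second inequality in~(\ref{eq:soft:approx}) we have $f_c(x^*)\le\min_j C^*_j x^*<1$ at such points, and during a boost $f_c(x^*)$ can exceed this by at most the additive $O(\epsilon)$ term just established, so $f_c(x^*)\le 1+O(\epsilon)$ at all times (once $x^*$ is returned it never changes and $f_p(x^*)$ does not increase under further relaxing updates). Combining with the invariant, $\max_{i\in[m_p]} P^*_i x^*\le f_p(x^*)\le(1+c\epsilon)(1+O(\epsilon))\le 1+200\epsilon$ for $\epsilon<1/10$, which proves the claim. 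The main obstacle is purely the bookkeeping of constants: the factor $200$ is the composition of the $(1+5\epsilon)$ slack in the approximate cost test, the $\frac{1+\epsilon}{1-\epsilon}$ loss from the approximate weights, the $(1+10\epsilon)$ loss internal to the proof of Property~\ref{prop:stop}, and the final $(1+O(\epsilon))$ from $f_c(x^*)\le 1+O(\epsilon)$; one must verify that these multiply out to at most $1+200\epsilon$ in the regime $\epsilon<1/10$, which is tedious but entirely routine.
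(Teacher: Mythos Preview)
Your proposal is correct and follows essentially the same approach as the paper's proof. The paper phrases its running invariant as $\max_i P^*_i x^* \le (1+50\epsilon)\min_j C^*_j x^* + 100\epsilon$ rather than directly as $f_p(x^*)\le(1+c\epsilon)f_c(x^*)$, but its argument for the boost step goes through $f_p,f_c$ and Property~\ref{prop:stop} exactly as you describe, and it closes with the same $\min_j C^*_j x^* \le 1+\epsilon$ bound; the two formulations differ only by the $\pm\epsilon$ slack in~(\ref{eq:soft:approx}). Your treatment of the pseudo-update (it leaves $P^*_i x^*$ unchanged, hence $f_p$ unchanged) is in fact cleaner than the paper's, which at this point invokes $x^*_{n+1}=0$ in tension with the initialization in Section~\ref{subsec:dynamic:full}.
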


\begin{proof}

We will show that the following invariant is maintained throughout the run of algorithm:

$$\max_{i \in [m_p]}\{P^*_i x^*\} \leq \min_{i \in [m_c]}\{C^*_i x^*\} \cdot (1 + 50\epsilon) + 100 \epsilon$$

Initially the inequality holds and the inequality will be affected by 3 kinds of changes: updates to $P^*$, $C^*$ and $x^*$. Notice, that whenever $C^*$ is updated only the right hand size of the inequality may change and it may only increase. Similarly, a decrease of $P^*$ may only decrease the left hand side. However, the algorithm does increase the last column of $P^*$ whenever an decrease of $P^*$ occurs. Note however, that the last coordinate of $x^*$ is is always $0$ as coordinate $n+1$ is never boosted (due to the last column of $C*$ being $0$ $\lambda(x^*,n+1)$ is unbounded hence coordinate $n+1$ is never cheap).

\medskip

It remains to argue that during Boost calls the inequality is maintained. Assume coordinate $i$ is boosted by $\delta$ and at the start of the boosting process the inequality holds. Define $f_p(x^*) = \eta \cdot \log (\sum_{i \in [m_p]} \exp (\eta P^*_i x^*))$ and $f_c(x^*) = -\eta \cdot \log (\sum_{i \in [m_c]} \exp (-\eta C^*_i x^*))$ analogously to Section~\ref{sec:lowerbound}. Following the proof of Property~\ref{prop:stop} we get that $f_p(x^* + \delta \cdot e_i) - f_p(x^*) \leq (f_c(x^* + \delta \cdot e_i) - f_c(x^*)) \cdot (1 + 50 \cdot \epsilon)$ (note that a different constant $50\epsilon$ appears as the amount of slack the algorithm uses is different). As both function are monotonous in $x^*$ and relaxing updates may only increase $f_c(x^*)$ and reduce $f_p(x^*)$ we have that at all times $f_c(x^*) \cdot (1 + 50 \cdot \epsilon) \geq f_p(x^*)$. 

\medskip

By the properties of $f_c(x^*)$ and $f_p(x^*)$ (stated in Section~\ref{sec:static} as Equation~\ref{eq:eta} we can conclude that $\max_{i \in [m_p]}\{P^*_i x^*\} \leq \min_{i \in [m_c]}\{C^*_i x^*\} \cdot (1 + 50 \cdot \epsilon) + \epsilon \cdot (1 + 50 \cdot \epsilon)) \leq \min_{i \in [m_c]}\{C^*_i x^*\} \cdot (1 + 50 \cdot \epsilon) + 100 \cdot \epsilon$ at all times. 

\medskip

Let $\delta$ stand for the value of the last boost. By definition $C^* \delta \leq \mathbb{1} \cdot \epsilon$ hence at all times $\min_{j \in [m_c]} \{C^*_j x^*\} \leq 1+\epsilon$. Therefore at all times $\max_{i \in [m_p]}\{P^*_i x^*\} \leq (1 + 50 \epsilon) \cdot (1 + \epsilon)  + 100 \epsilon \leq 1 + 200 \epsilon$ that is $P^*x^* \leq \mathbb{1} \cdot (1 + 200 \epsilon)$.

\end{proof}

\begin{corollary}

\label{cor:mixed:weightbounds}

If $\epsilon \leq 1/200$ then at all times $w_p(x^*) \leq \exp(3 \cdot \eta)$ and $w_x(x^*) \geq \exp(2 \cdot \eta)$, hence $\lambda_0(x^*) \leq \exp(5 \cdot \eta)$.

\end{corollary}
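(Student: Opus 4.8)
The plan is to derive the corollary directly from Claim~\ref{cl:mixed:approximation} (approximate feasibility of the packing constraints) together with the algorithm's stopping rule, so essentially everything reduces to two short estimates plus a small amount of bookkeeping. First I would bound $w_p(x^*)$ from above. By Claim~\ref{cl:mixed:approximation} we have $P^*_i x^* \le 1 + 200\epsilon$ for every $i \in [m_p]$ at all times, hence $w_p(x^*,i) = \exp(\eta \cdot P^*_i x^*) \le \exp(\eta(1+200\epsilon))$, and summing over the $m_p$ packing constraints gives $w_p(x^*) \le m_p \cdot \exp(\eta(1+200\epsilon))$. Using the definition of $\eta$ in~(\ref{eq:mixed:eta}), which gives $m_p \le m_c + m_p + U/L = \exp(\epsilon \eta)$, this becomes $w_p(x^*) \le \exp(\eta(1 + 201\epsilon))$; since $\epsilon \le 1/200$ the exponent is at most $1 + 201/200 < 3$, so $w_p(x^*) \le \exp(3\eta)$.

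Next I would bound $w_c(x^*)$ from below. The key point is that the algorithm returns $x^*$ the instant $C^* x^* \ge \mathbb{1}$ (both inside \textsc{Boost} in Algorithm~\ref{alg:mixed:boost} and inside the covering-update routine), so at every moment before termination there exists an index $j' \in [m_c]$ with $C^*_{j'} x^* < 1$; therefore $w_c(x^*,j') = \exp(-\eta \cdot C^*_{j'} x^*) > \exp(-\eta)$, and $w_c(x^*) \ge w_c(x^*,j') > \exp(-\eta) \ge \exp(-2\eta)$. (I note in passing that the displayed bound ``$w_c(x^*) \ge \exp(2\eta)$'' should read $\exp(-2\eta)$, since each $w_c(x^*,j) \le 1$ forces $w_c(x^*) \le m_c$; the weaker estimate $w_c(x^*) > \exp(-\eta)$ already suffices for the conclusion.) Combining the two estimates yields $\lambda_0(x^*) = w_p(x^*)/w_c(x^*) \le \exp(3\eta)/\exp(-2\eta) = \exp(5\eta)$.

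The only step requiring genuine care — the main ``obstacle,'' though a mild one — is justifying that ``$\min_{j\in[m_c]} C^*_j x^* < 1$ holds at all active times.'' For this I would trace through every code path that modifies $x^*$ or an entry of $C^*$ and check that it is immediately followed by the test $C^* x^* \ge \mathbb{1}$ with a return: this is visible in \textsc{Boost} (Algorithm~\ref{alg:mixed:boost}) right after $x^* \leftarrow x^* + e_k\delta$, and in the covering-update routine right after $C^*(j,k)$ is increased. I would also record the elementary monotonicity facts used implicitly: boosting a coordinate and relaxing (covering) updates only increase each $C^*_j x^*$, while the pseudo-updates to $P^*(\cdot,n+1)$ keep each $P^*_i x^*$ unchanged under packing updates, so the bound of Claim~\ref{cl:mixed:approximation} applies uniformly throughout the run. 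Once this bookkeeping is in place, the corollary follows immediately from the two displayed estimates above.
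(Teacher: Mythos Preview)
Your proposal is correct and follows essentially the same approach as the paper: bound $w_p(x^*)$ via Claim~\ref{cl:mixed:approximation} together with $m_p \le \exp(\epsilon\eta)$, and bound $w_c(x^*)$ from below via the existence of an unsatisfied covering constraint. The only cosmetic difference is that for the lower bound the paper cites the slightly weaker fact $\min_j C^*_j x^* \le 1+\epsilon$ (established inside the proof of Claim~\ref{cl:mixed:approximation}) rather than the stopping rule directly; your observation about the sign typo in the statement is also correct.
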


\begin{proof}

$w_p(x^*) \leq m_p \cdot \exp(\eta \cdot \max_{i \in [m_p]} \{P^*_i x^*\}) \leq \exp(\eta (2 + 200 \cdot \epsilon)) \leq \exp(\eta \cdot 3)$. As argued in the proof of Claim~\ref{cl:mixed:approximation} we have that $\min_{j \in [m_c]} \{C^*_j x^* \} \leq 1 + \epsilon$ at all times, hence $w_c(x^*) \geq \exp(-\eta \cdot \min_{j \in [m_c]} \{C^*_j x^* \}) \geq \exp(- 2 \cdot \eta)$..

\end{proof}

\begin{lemma}

Whenever the algorithm halts without returning an answer the extended linear program is infeasible. If the algorithm may only return an $O(\epsilon)$ approximate solution to the extended linear program.

\end{lemma}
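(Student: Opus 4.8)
The plan is to assemble the statement from the claims already established for Algorithm~\ref{alg:mixed:main}, since the two halves of the lemma are, respectively, Claim~\ref{cl:mixed:infeasability} and a short consequence of Claim~\ref{cl:mixed:approximation}. First I would handle the infeasibility case. By inspecting Algorithm~\ref{alg:mixed:main}, Algorithm~\ref{alg:mixed:boost} and the two update subroutines, the only way the algorithm can halt without returning a vector is that every coordinate $k\in[n]$ has been tested most recently and found \emph{not} approximately cheap, i.e.\ $\hat{\lambda}(x^*,k)>\lambda_0(x^*)\cdot(1+5\epsilon)$ at termination. Using Invariant~\ref{inv:mixed} (Claim~\ref{cl:mixed:inv}) we have $\lambda(x^*,k)\le\hat{\lambda}(x^*,k)$ at the last test, and tracking the (almost) monotone evolution of $\lambda(x^*,k)$ and of $\lambda_0(x^*)$ between that test and termination yields $\lambda(x^*,k)\ge\lambda_0(x^*)$ for all $k$; this is exactly what Claim~\ref{cl:mixed:infeasability} proves. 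Then, invoking the extension of Property~\ref{prop:infeasible} (as in Section~\ref{sec:proof:prop:infeasible}) with the normalised weight vectors $w_c(x^*,\cdot)/w_c(x^*)$ and $w_p(x^*,\cdot)/w_p(x^*)$ as the witnessing distributions over constraints, we conclude that the extended LP $(P^*,C^*)$ has no feasible point.

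Next I would handle the case where the algorithm returns a vector $x^*$. Each of the few places where a vector is returned --- inside Algorithm~\ref{alg:mixed:boost}, and inside the covering-update subroutine --- does so only after the explicit test $C^*x^*\ge\mathbb{1}$ succeeds, so the covering constraints hold \emph{exactly} and in particular $C^*x^*\ge(1-\Theta(\epsilon))\cdot\mathbb{1}$. For the packing constraints, Claim~\ref{cl:mixed:approximation} gives $P^*x^*\le(1+200\epsilon)\cdot\mathbb{1}$ at all times, hence at the moment of termination. Together these two bounds say precisely that $x^*$ is an $O(\epsilon)$-approximate solution to the extended LP $(P^*,C^*)$; via Observation~\ref{obs:mixed:equivalance}, the first $n$ coordinates of $x^*$ then form an $O(\epsilon)$-approximate solution to the original LP.

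The hard part is not this assembly but the ingredient Claim~\ref{cl:mixed:approximation}, which I would prove by maintaining the invariant $\max_{i\in[m_p]}P^*_ix^*\le(1+50\epsilon)\cdot\min_{j\in[m_c]}C^*_jx^*+100\epsilon$ across the three event types. A relaxing update to $C^*$ only raises the right-hand side, so it is harmless. A relaxing update to $P^*(i,k)$ is paired with the pseudo-update raising $P^*(i,n+1)$ so that $P^*_ix^*$ does not change; here I need the observation that coordinate $n+1$ is never boosted --- its cost is infinite because the last column of $C^*$ is zero, so it is never (approximately) cheap --- hence $x^*_{n+1}$ stays at its initial value and the pseudo-update does not affect $x^*$. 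The delicate case is a {\sc Boost}$(k)$ step: I would rerun the gradient-ratio estimate behind Property~\ref{prop:stop}, but starting from the \emph{approximate} cheapness test $\hat{\lambda}(x^*,k)\le\lambda_0(x^*)(1+5\epsilon)$ instead of the exact one, which costs a larger but still $O(\epsilon)$ slack in the conclusion; this also needs Invariant~\ref{inv:mixed} to pass from $\hat{\lambda}$ to $\lambda$, and the bound $\min_{j}C^*_jx^*\le1+\epsilon$ (each boost changes every active covering constraint by at most $\epsilon/\eta$) to turn the resulting inequality on the soft-max/soft-min $f_p,f_c$ into the stated multiplicative-plus-additive bound on the raw constraint values. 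Finally I would record Corollary~\ref{cor:mixed:weightbounds} --- $w_p(x^*)\le\exp(3\eta)$, $w_c(x^*)\ge\exp(2\eta)$, hence $\lambda_0(x^*)\le\exp(5\eta)$ --- since this polynomial bound on $\lambda_0(x^*)$ is what the downstream phase-counting argument relies on.
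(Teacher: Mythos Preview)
Your proposal is correct and follows essentially the same approach as the paper, which simply cites the two ingredient claims (infeasibility when halting without output, and the packing bound $P^*x^*\le(1+200\epsilon)\mathbb{1}$) and observes that the covering side is checked explicitly before returning. You are in fact more careful than the paper's one-line proof: you correctly invoke Claim~\ref{cl:mixed:approximation} for the packing side, whereas the paper's proof as written cites Claim~\ref{cl:mixed:inv}, which is the invariant statement and does not by itself yield the packing bound --- this appears to be a typo in the paper. Your additional sketch of how Claim~\ref{cl:mixed:approximation} is proved is unnecessary here since that claim is already established separately, but it is accurate.
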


\begin{proof}

The lemma follows from Claim~\ref{cl:mixed:infeasability} and Claim~\ref{cl:mixed:inv}

\end{proof}

\subsection{Running Time}

\begin{claim}

\label{cl:mixed:whackingbound}

For any $k \in [n]$ \textit{Boost(k)} may be called at most $O(\frac{\log^2(m +U/L)}{\epsilon^2}) = O(\frac{\log(m +U/L) \cdot \eta}{\epsilon})$ number of times.

\end{claim}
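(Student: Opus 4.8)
The plan is to mimic, almost verbatim, the proof of Lemma~\ref{lm:boost:bound}, but carried out for the \emph{extended} LP $(P^*,C^*)$ and our full dynamic algorithm. Fix a coordinate $k\in[n]$ for the whole argument. As in Lemma~\ref{lm:boost:bound}, to each call to {\sc Boost}$(k)$ we associate a \emph{pivot} and a \emph{pivot-value}: by the choice of $\delta$ in Algorithm~\ref{alg:mixed:boost} (see Section~\ref{sec:mixed:findincrement}), just before the call either there is a packing constraint $i\in[m_p]$ with $P^*(i,k)\cdot\delta_k=\epsilon/\eta$, or there is a not-too-large covering constraint $j\in[m_c]$ (i.e.\ one with $C^*_j x^*<2$) with $C^*(j,k)\cdot\delta_k=\epsilon/\eta$; the corresponding index is the pivot and the current value of the corresponding entry is the pivot-value (break ties towards packing). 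Since $\delta$ is sampled in $[\delta_k/4,\delta_k]$, every call to {\sc Boost}$(k)$ increases $x^*_k$ by at least $\epsilon/(4\eta\cdot(\text{pivot-value}))$, and $x^*_k$ never decreases. Also note that a pseudo-update only touches column $n+1$ of $P^*$ and, by design, leaves $x^*$ and every value $P^*_i x^*$ unchanged, so pseudo-updates are irrelevant to everything below.

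Following Claim~\ref{cl:bound:pivot:covering} and Claim~\ref{cl:bound:pivot:packing}, I would partition the range $(L,U]$ of possible nonzero entry values into $\tau=\log(U/L)$ dyadic segments $\mathcal{I}_\ell=(L2^\ell,L2^{\ell+1}]$, $\ell\in\{0,\dots,\tau-1\}$, and bound, for each $\ell$, the number of {\sc Boost}$(k)$ calls whose pivot-value lies in $\mathcal{I}_\ell$. \textbf{Covering case:} if there were $T=\Omega(\eta/\epsilon)$ such calls with pivot in $[m_c]$, occurring in increasing-time order $\Gamma_1,\dots,\Gamma_T$, then since each increases $x^*_k$ by at least $\epsilon/(8\eta L2^\ell)$, just before $\Gamma_T$ we have $x^*_k> (T-1)\epsilon/(8\eta L2^\ell)\ge 2/(L2^\ell)$, and with $j_T$ the pivot of $\Gamma_T$ satisfying $C^*(j_T,k)>L2^\ell$ this forces $C^*_{j_T}x^*>2$, contradicting that $j_T$ is a valid not-too-large pivot. \textbf{Packing case:} the same counting gives $P^*_{i_T}x^*>2$ just before the last such call, hence $f_p(x^*)\ge\max_i P^*_i x^*>2$ by~(\ref{eq:soft:approx}); invoking the invariant $f_p(x^*)\le(1+\Theta(\epsilon))f_c(x^*)$ established in the proof of Claim~\ref{cl:mixed:approximation}, we get $\min_j C^*_j x^*\ge f_c(x^*)>(1+\Theta(\epsilon))^{-1}\cdot 2>1$, so the algorithm would already have returned $x^*$ and never made that call. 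Either way, at most $O(\eta/\epsilon)$ calls of each type fall in each segment.

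Since every pivot-value is a nonzero entry of $P^*$ or $C^*$ and hence lies in $(L,U]$, every call to {\sc Boost}$(k)$ is counted in exactly one (segment, type) pair, so the total number of calls to {\sc Boost}$(k)$ is at most $2\tau\cdot O(\eta/\epsilon)=O(\eta\log(U/L)/\epsilon)$. Substituting $\eta=\log(m_p+m_c+U/L)/\epsilon$ from~(\ref{eq:mixed:eta}) yields the claimed bound $O(\log^2(m+U/L)/\epsilon^2)$.

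The main obstacle is not the counting --- which is a direct transcription of Claims~\ref{cl:bound:pivot:covering}--\ref{cl:bound:pivot:packing} --- but verifying that their hypotheses genuinely survive in the full algorithm: (a) that $x^*_k$ is monotone non-decreasing despite updates arriving to \emph{both} $P$ and $C$ and despite the recursive interplay of {\sc Iterate}, {\sc UpdateP}, {\sc UpdateC} and {\sc Boost}; (b) that the packing invariant $f_p(x^*)\le(1+\Theta(\epsilon))f_c(x^*)$ holds at \emph{every} moment, which is precisely why the pseudo-update device was introduced and must be imported from Claim~\ref{cl:mixed:approximation}; and (c) that coefficients migrating between dyadic segments over time (covering entries grow, packing entries shrink under relaxing updates) do not cause over-counting --- this is fine because the per-segment bound is asserted over \emph{all} calls whose pivot-value is in that segment irrespective of which constraint is the pivot, and monotonicity of $x^*_k$ is all the argument actually consumes.
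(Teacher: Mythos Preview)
Your proposal is correct and follows essentially the same approach as the paper's proof: the same pivot/pivot-value assignment, the same dyadic partition of $(L,U]$, and the same per-segment counting argument for covering and packing pivots. The only cosmetic difference is in the packing case, where the paper invokes the conclusion of Claim~\ref{cl:mixed:approximation} directly (namely $P^*x^*\le(1+200\epsilon)\mathbb{1}$, which already contradicts $P^*_{i_T}x^*>2$), while you route through the softmax invariant $f_p\le(1+\Theta(\epsilon))f_c$ to force $\min_j C^*_j x^*>1$ --- both are equivalent ways of using the same claim.
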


\begin{proof}

Fix a coordinate $k \in [n]$ for the rest of the proof. 
  
Consider any call to the {\sc Boost}$(x^*)$ subroutine during the course of our dynamic algorithm, which increases coordinate $k$ of the vector $x^* \in \mathbb{R}_{\geq 0}^n$ by some amount $\delta > 0$. From  Algorithm~\ref{alg:mixed:boost}, observe that just before this specific call to the  subroutine, either there exists a packing constraint $i \in [m_p]$ with $P^*(i, k) \cdot \delta  \leq \epsilon/\eta$, or there exists a {\em not-too-large} covering constraint $j \in [m_c]$\footnote{We say that a covering constraint $j \in [m_c]$ is {\em not-too-large} iff $C^*_j x^* < 2$.} with $C(j, k) \cdot \delta \leq \epsilon/\eta$.  In the former (resp.~latter) case, we refer to the index $i$ (resp.~$j$) as the {\em pivot} and the current value of $P(i, k)$ (resp.~$C(j, k)$) as the {\em pivot-value} corresponding to this specific call to the subroutine {\sc Boost}$(k)$.\footnote{Note that the values of the entries in the constraint matrix change over time due to the sequence of updates. The {\em pivot-value} of a call to {\sc Boost}$(x, k)$ refers to the value of the concerned entry in the constraint matrix {\em just before}  the specific call to  {\sc Boost}$(k)$.}
 
Due to Assumption~\ref{as:mixed:range} we can assume that $U = 2^{\tau} \cdot L$ for some integer $\tau = O(\log n)$. Partition the interval $(L, U]$ into $\tau$  segments $\mathcal{I}_0, \ldots, \mathcal{I}_{\tau-1}$, where $\mathcal{I}_{\ell} = \left(L \cdot 2^{\ell}, L \cdot 2^{\ell+1} \right]$ for all $\ell \in \{0, \ldots, \tau-1\}$. 

\begin{claim}

\label{cl:mixed:bound:pivot:covering}

Consider any $\ell \in \{0, \ldots, \tau-1\}$.  Throughout the duration of our dynamic algorithm, at most $16 \eta/\epsilon$ calls with pivot-values in $\mathcal{I}_{\ell}$ and pivots in $[m_c]$ are made to the {\sc Boost}$(k)$ subroutine.

\end{claim}

\begin{proof}

Let $x_k \in \mathbb{R}_{\geq 0}$ denote the $k^{th}$ coordinate of the vector $x \in \mathbb{R}_{\geq 0}^n$. For ease of exposition, we say that a call to  {\sc Boost}$(k)$ is  {\em covering-critical} iff its pivot is in $[m_c]$ and its pivot-value is in $\mathcal{I}_{\ell}$. We wish to upper bound the total number of covering-critical calls made during the course of our dynamic algorithm. 

Consider any specific covering-critical call to {\sc Boost}$(k)$  with pivot $j \in [m_c]$ and pivot-value  $\alpha \in \mathcal{I}_{\ell} = \left( L \cdot 2^{\ell-1}, L \cdot 2^{\ell}\right]$. From the definition of Algorithm~\ref{alg:mixed:boost}, it follows that this call increases $x_k$ by $\delta = \epsilon/(\eta \alpha \cdot 4) \geq \epsilon/(\eta L 2^{\ell+2})$.

Suppose that $T$ covering-critical calls have been made to the {\sc Boost}$(x, k)$ subroutine, where $T > 16 \cdot \eta/\epsilon$. Let us denote these calls in increasing order of time by: $\Gamma_1, \Gamma_2, \ldots, \Gamma_T$. Thus, for all $t \in [T]$, we let $\Gamma_{t}$ denote the $t^{th}$ covering-critical call made during the course of our dynamic algorithm.  From the discussion in the preceding paragraph,  each of these critical calls $\Gamma_t$ increases $x_k$ by at least $\epsilon/(\eta L 2^{\ell + 2})$. 

Since $x_k = 0$ initially and it increases monotonically over time, just before the last critical call $\Gamma_{T}$ we have $x_k \geq (T-1) \cdot \epsilon/(\eta L 2^{\ell + 2}) \geq (16 \eta/\epsilon) \cdot \epsilon/(\eta L 2^{\ell + 2}) = 1/(L 2^{\ell-2})$. Let $j_T \in [m_c]$ be the pivot of the  call $\Gamma_T$. Since the pivot-value of $\Gamma_T$ lies in $\mathcal{I}_{\ell}$, it follows that $C(j_T, k) > L \cdot 2^{\ell-1}$ just before the  call $\Gamma_T$. Thus, just before the  call $\Gamma_T$, we have $C_{j_T} \cdot x \geq C(j_T, k) \cdot x_k > L  2^{\ell-1} \cdot 1/(L 2^{\ell-2}) = 2$. But if $C_{j_T} \cdot x > 2$ just before the call $\Gamma_T$, then $j_T$ cannot be the pivot of $\Gamma_T$. This leads to a contradiction. Hence, we must have $T \leq 16 \cdot \eta/\epsilon$, and this concludes the proof of the claim.

\end{proof}

\begin{claim}

\label{cl:mixed:bound:pivot:packing}
Consider any $\ell \in \{0, \ldots, \tau-1\}$.  Throughout the duration of our dynamic algorithm, at most $16 \cdot \eta/\epsilon$ calls with pivot-values in $\mathcal{I}_{\ell}$ and pivots $i \in [m_p]$ are made to the {\sc Boost}$(k)$ subroutine.

\end{claim}

\begin{proof}(Sketch) As in the proof of Claim~\ref{cl:mixed:bound:pivot:covering}, let $x_k \in \mathbb{R}_{\geq 0}$ denote the $k^{th}$ coordinate of the vector $x \in \mathbb{R}_{\geq 0}^n$. Say that a call to  {\sc Boost}$(k)$ is  {\em packing-critical} iff its pivot is in $[m_p]$ and its pivot-value is in $\mathcal{I}_{\ell}$.

Suppose that   $T$ covering-critical calls have been made to the {\sc Boost}$(k)$ subroutine, where $T > 16 \cdot \eta/\epsilon$. Let us denote these calls in increasing order of time by: $\Gamma_1, \Gamma_2, \ldots, \Gamma_T$. Thus, for all $t \in [T]$, we let $\Gamma_{t}$ denote the $t^{th}$ packing-critical call made during the course of our dynamic algorithm. Let $i_T \in [m_p]$ be the pivot of the last packing-critical call $\Gamma_T$. Following the same argument as in the proof of Claim~\ref{cl:mixed:bound:pivot:covering}, we conclude that $P_{i_T} \cdot x > 2$ just before the call $\Gamma_T$. By Claim~\ref{cl:mixed:approximation} this leads to a contradiction (assuming $\epsilon < 1/200$). Thus, it must be the case that $T \leq 16\eta/\epsilon$, and this concludes the proof of the claim.

\end{proof}

Any call to the subroutine {\sc Boost}$(k)$ has pivot in $[m_p] \cup [m_c]$ and pivot-value in $\mathcal{I}_{\ell}$ for some $\ell \in \{0, \ldots, \tau-1\}$.\footnote{This holds because the pivot-values lie in the range $(L, U]$, and this range has been partitioned into subintervals: $\mathcal{I}_0, \ldots, \mathcal{I}_{\tau-1}$. } Accordingly, from Claim~\ref{cl:bound:pivot:covering} and Claim~\ref{cl:bound:pivot:packing}, it follows that the total number of calls made to the {\sc Boost}$(k)$ during the course of  our dynamic algorithm is at most $2 \cdot \tau \cdot (16 \eta/\epsilon) = O(\frac{\log^2(m + U/L)}{\epsilon^2})$. 

\end{proof}

\begin{claim}

\label{cl:mixed:weightupdate}

All calls to UpdatePWeights and UpdateCWeights be handled in $O(\frac{\log^2(m + U/L) \cdot N}{\epsilon^2})$ total time.

\end{claim}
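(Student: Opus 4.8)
The plan is to charge the total cost of all invocations of \textsc{UpdatePWeights} and \textsc{UpdateCWeights} to two disjoint sources: a \emph{scanning} cost, incurred once per call, and a \emph{readjustment} cost, incurred only when some approximate weight $\hat w_p(x^*,i)$ or $\hat w_c(x^*,j)$ is actually reset. To make the readjustment step efficient I would additionally maintain, for every coordinate $k\in[n]$, the running sums $\mathrm{num}(k):=\sum_{i\in[m_p]}\hat w_p(x^*,i)\,P^*(i,k)$ and $\mathrm{den}(k):=\sum_{j\in[m_c]}\hat w_c(x^*,j)\,C^*(j,k)$, so that $\hat\lambda(x^*,k)=\mathrm{num}(k)/\mathrm{den}(k)$ is read off in $O(1)$ time; with this bookkeeping, a single change to one value $\hat w_p(x^*,i)$ (resp.\ $\hat w_c(x^*,j)$) propagates to all incident $\mathrm{num}(k),\hat\lambda(x^*,k)$ (resp.\ $\mathrm{den}(k),\hat\lambda(x^*,k)$) in $O(R^P_i)$ (resp.\ $O(R^C_j)$) time rather than recomputing the fractions from scratch. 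Recall also that \textsc{UpdatePWeights} and \textsc{UpdateCWeights} are invoked only from within a call to \textsc{Boost}$(k)$ (Algorithm~\ref{alg:mixed:boost}), exactly once each per such call.

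For the scanning cost: ignoring the inner conditional resets, \textsc{UpdatePWeights}$(k)$ and \textsc{UpdateCWeights}$(k)$ merely iterate over the nonzero entries in column $k$ of $P^*$ and $C^*$, taking $O(N^P_k)$ and $O(N^C_k)$ time. By Claim~\ref{cl:mixed:whackingbound}, for every $k$ the subroutine \textsc{Boost}$(k)$ is called at most $O\!\left(\tfrac{\log^2(m_p+m_c+U/L)}{\epsilon^2}\right)$ times; summing over all coordinates and using $\sum_k N^P_k\le N$ and $\sum_k N^C_k\le N$ gives a total scanning cost of $O\!\left(N\cdot\tfrac{\log^2(m_p+m_c+U/L)}{\epsilon^2}\right)$.

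For the readjustment cost on the packing side: a reset of $\hat w_p(x^*,i)$ happens only when $w_p(x^*,i)$ has grown by at least a factor $(1+\epsilon)$ since the previous reset. The pseudo-update device (Section~\ref{subsec:dynamic:full}) ensures the analogue of Observation~\ref{ob:weight:monotone} holds, so $w_p(x^*,i)$ is non-decreasing; moreover $w_p(x^*,i)=\exp(\eta P^*_i x^*)$ equals $1$ initially and stays $\le\exp(2\eta)$ at all times since $P^*_i x^*\le 1+200\epsilon\le 2$ by Claim~\ref{cl:mixed:approximation}. Hence $\hat w_p(x^*,i)$ is reset at most $\log_{1+\epsilon}\exp(2\eta)=O(\eta/\epsilon)$ times, each reset costing $O(R^P_i)$, for a total of $O\!\left(\sum_i R^P_i\cdot\tfrac{\eta}{\epsilon}\right)=O\!\left(N\cdot\tfrac{\log(m_p+m_c+U/L)}{\epsilon^2}\right)$ by~\eqref{eq:mixed:eta}.

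The \emph{main obstacle} is the covering side. By monotonicity $w_c(x^*,j)=\exp(-\eta C^*_j x^*)$ is non-increasing, and while $C^*_j x^*<2$ it stays in $[\exp(-2\eta),1]$, so $\hat w_c(x^*,j)$ is reset only $O(\eta/\epsilon)$ times in that regime. The difficulty is that once $C^*_j x^*\ge 2$ a single \textsc{Boost}$(k)$ can increase $C^*_j x^*$ by a large amount — the step size $\delta$ in Algorithm~\ref{alg:mixed:boost} deliberately ignores covering constraints with $C^*_j x^*\ge 2$ — so a priori $w_c(x^*,j)$ could keep shrinking, incurring as many as $\Omega\!\left(R^C_j\cdot\tfrac{U}{L}\cdot\tfrac{\log^2(m_p+m_c+U/L)}{\epsilon^2}\right)$ resets of cost $O(R^C_j)$ each, which would be far too much. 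I would resolve this by exploiting that, again by monotonicity, as soon as $C^*_j x^*\ge 1$ the $j$-th covering constraint is satisfied permanently: the first time $C^*_j x^*$ exceeds $2$ we perform one final reset $\hat w_c(x^*,j)\leftarrow 0$ (cost $O(R^C_j)$) and thereafter leave $j$ untouched. This perturbs $w_c(x^*)$ and every $\mathrm{den}(k)$ by an additive amount $\le m_c\exp(-2\eta)$, which is negligible compared with the quantities actually compared against — exactly the kind of slack already absorbed in the proofs of Property~\ref{prop:stop} and Corollary~\ref{cor:mixed:weightbounds} — so Invariant~\ref{inv:mixed} and correctness are unaffected. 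Consequently each covering row $j$ undergoes $O(\eta/\epsilon)$ resets plus one cleanup, giving total covering readjustment cost $O\!\left(\sum_j R^C_j\cdot\left(\tfrac{\eta}{\epsilon}+1\right)\right)=O\!\left(N\cdot\tfrac{\log(m_p+m_c+U/L)}{\epsilon^2}\right)$. Adding the scanning and readjustment costs yields the claimed bound $O\!\left(N\cdot\tfrac{\log^2(m_p+m_c+U/L)}{\epsilon^2}\right)$.
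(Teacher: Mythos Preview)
Your decomposition into scanning cost versus readjustment cost, the running-sum bookkeeping for $\mathrm{num}(k),\mathrm{den}(k)$, and the packing-side argument are exactly what the paper does (the paper simply asserts the $O(R^P_i)$-per-reset cost without spelling out the running sums, but clearly intends the same implementation). So on those parts you match the paper.

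On the covering side you have in fact gone further than the paper. The paper's entire treatment is the sentence ``the proof proceeds similarly,'' relying on Corollary~\ref{cor:mixed:weightbounds}. But that corollary only lower-bounds the \emph{total} weight $w_c(x^*)$, not any individual $w_c(x^*,j)$; once $C^*_j x^*\ge 2$ the step size $\delta$ no longer caps the increment to $C^*_j x^*$, and $w_c(x^*,j)$ can shrink without any uniform floor. You are right to flag this; the paper's argument, taken literally, does not close the bound.

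Your patch---freezing $\hat w_c(x^*,j)\leftarrow 0$ the first time $C^*_j x^*\ge 2$---is the natural fix and is consistent with how the step-size oracle (Section~\ref{sec:mixed:findincrement}) already discards inactive covering rows. Two points to tighten: first, this is a change to the algorithm, not just the analysis, so you should say explicitly that you are proving the bound for the amended procedure. Second, the change breaks Invariant~\ref{inv:mixed} as literally stated (you now have $\hat w_c(x^*,j)=0<w_c(x^*,j)$), so you must revisit the one place the invariant is used downstream, namely Claim~\ref{cl:mixed:infeasability}. There the role of the invariant is just that $\hat\lambda(x^*,k)\le\lambda(x^*,k)\cdot(1+O(\epsilon))$; with inactive rows zeroed out, $\hat\lambda(x^*,k)$ becomes larger than before, and you need the computation in~\eqref{eq:prop:stop:3}--\eqref{eq:prop:stop:4} (inactive rows contribute at most a $(1+\epsilon)$ multiplicative factor to the denominator) to certify that the extra slack is still absorbed by the $(1+5\epsilon)$ threshold. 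You gesture at this, but it deserves one explicit line.
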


\begin{proof}

As each call to \textit{Boost} calls the weight update functions once by Claim~\ref{cl:mixed:whackingbound} they will be called a total of $ O(\frac{\log(m +U/L) \cdot \eta}{\epsilon} \cdot n)$ times. Fix $k \in [n]$. \textit{UpdatePWeights($k$)} and \textit{UpdateCWeights($k$)} will be called $O(\frac{\log(m +U/L) \cdot \eta}{\epsilon})$ times. Hence, the outer loops of both function will take $O(N \cdot \frac{\log(m +U/L) \cdot \eta}{\epsilon})$ total time to handle. 

\medskip

Without loss of generality focus on \textit{UpdatePWeights}. For any $i \in [m_p]$ between any two updates to $\hat{w}_p(x^*,i)$ $w_p(x^*,i)$ must increase by a factor of $1 + \Omega(\epsilon)$. Due to Corollary~\ref{cor:mixed:weightbounds} this may happen at most $O(\log^2(m + U/L)/\epsilon^2)$ times. After updating $\hat{w}_p(x^*,i)$ the algorithm will spend $O(R^P_i)$ time updating $\hat{\lambda}_0(x^*)(x^*,k)$ values. Therefore, over all $i \in [m_p]$ the total work to maintain $\hat{\lambda}_0(x^*)(x^*,k)$ values will be in $O(\frac{\log^2(m + U/L)}{\epsilon^2})$. The proof proceeds similarly for the work \textit{UpdateCWeights} does on maintaining $\hat{\lambda}_0(x^*)(x^*,k)$ values.

\end{proof}

\begin{claim}

\label{cl:mixed:boost:totaltime}

The total time spent by the algorithm through calls to the \textit{Boost} subroutine is in $O(\frac{\log^2(m + U/L) \cdot N}{\epsilon^2})$

\end{claim}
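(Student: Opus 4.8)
Since each invocation of \textit{Boost}$(k)$, $k\in[n]$, is the only place where $x^*$ and the associated weights get modified, the plan is to bound the cost of a \emph{single} such invocation and then multiply by the invocation count supplied by Claim~\ref{cl:mixed:whackingbound}. First I would decompose the work inside one call to \textit{Boost}$(k)$ (Algorithm~\ref{alg:mixed:boost}) into three pieces: (a)~computing the step-size $\delta$, which needs $\max_{i\in[m_p]}P^*(i,k)$ together with $\max_{j\in[m_c]:\,C^*_jx^*<2}C^*(j,k)$; (b)~the incremental refresh of $x^*_k$, of the affected entries $\{w_p(x^*,i):P^*(i,k)>0\}$ and $\{w_c(x^*,j):C^*(j,k)>0\}$, and of the aggregates $w_p(x^*),w_c(x^*),\lambda_0(x^*)$; and (c)~the two nested calls \textit{UpdatePWeights}$(k)$ and \textit{UpdateCWeights}$(k)$. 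Piece (c) is already charged, in aggregate, by Claim~\ref{cl:mixed:weightupdate}, which gives a total of $O\!\left(\tfrac{\log^2(m+U/L)}{\epsilon^2}\cdot N\right)$; so it remains to handle (a) and (b).

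For piece (b): because only coordinate $k$ of $x^*$ changes, each dot product $P^*_i x^*$ with $P^*(i,k)>0$ and each $C^*_j x^*$ with $C^*(j,k)>0$ is updated in $O(1)$ time, and the sums $w_p(x^*),w_c(x^*),\lambda_0(x^*)$ are maintained by subtracting the old summands and adding the new ones; hence a single call costs $O(N^P_k+N^C_k)$. By Claim~\ref{cl:mixed:whackingbound}, \textit{Boost}$(k)$ is invoked at most $O\!\left(\tfrac{\log^2(m+U/L)}{\epsilon^2}\right)$ times, so the total cost of (b) is
\[
\sum_{k\in[n]} O\!\left(\tfrac{\log^2(m+U/L)}{\epsilon^2}\right)\cdot O(N^P_k+N^C_k)
\;=\;O\!\left(\tfrac{\log^2(m+U/L)}{\epsilon^2}\cdot N\right),
\]
using $\sum_{k\in[n]}(N^P_k+N^C_k)=O(N)$.

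For piece (a) I would appeal to the data structure of Section~\ref{sec:mixed:findincrement}: for each column $k$ keep a max-heap (or balanced search tree) over its nonzero packing entries and a second one over the nonzero entries of its \emph{not-too-large} covering rows, so that a step-size query costs $O(\log(m+U/L))$ per call and is dominated by (b). Its only maintenance events are a relaxing entry update (one $O(\log(m+U/L))$-time heap change) and a covering row $j$ \emph{leaving} the not-too-large set, i.e.\ $C^*_jx^*$ crossing the threshold $2$, which deletes $C^*(j,k)$ from the column-$k$ heap for every $k$ with $C^*(j,k)>0$. The step I expect to require the most care is arguing that the latter event fires at most once per covering row: this is exactly the monotonicity already established in Observation~\ref{ob:weight:monotone} — every \textit{Boost} only increases $x^*$ and every relaxing update only increases entries of $C$, so each $C^*_jx^*$ is non-decreasing — whence the total deletion cost is $\sum_{j}O(R^C_j\log(m+U/L))=O(N\log(m+U/L))$, which is absorbed into the claimed bound. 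Adding (a), (b) and (c) yields the stated $O\!\left(\tfrac{\log^2(m+U/L)\cdot N}{\epsilon^2}\right)$ total; a minor point to check along the way is that no work is double-counted, since (c) is billed to Claim~\ref{cl:mixed:weightupdate} and (a)–(b) are billed to the per-call analysis here.
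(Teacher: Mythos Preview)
Your proposal is correct and follows essentially the same three-part decomposition as the paper: step-size sampling, weight refresh, and the \textit{UpdatePWeights}/\textit{UpdateCWeights} calls, with pieces (b) and (c) handled identically via Claim~\ref{cl:mixed:whackingbound} and Claim~\ref{cl:mixed:weightupdate}. The only difference is stylistic: for piece (a) the paper simply invokes the separate Claim~\ref{cl:mixed:oracle:runningtime} bounding the oracle's total cost, whereas you inline that argument (heap queries plus the once-per-row threshold-crossing deletions), arriving at the same bound.
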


\begin{proof}

Each call to \textit{Boost($k$)} requires 3 non-constant time tasks. Firstly an estimate of $\delta_k$ has to be sampled. By Claim~\ref{cl:mixed:oracle:runningtime} and Claim~\ref{cl:mixed:whackingbound} these samplings will take $O(N \cdot \frac{\log(m + U/L)}{\epsilon^2})$ total time. 

Afterwards the algorithm updates all $w_p(x^*,i)$ for $i \in [m_p] | P^*(i,k) > 0$ and $w_c(x^*,j)$ for $j \in [m_c] | C^*(j,k) > 0$ in $O(N^P_k + N^C_k)$ time. Over all calls to boost this will take total $O(N \cdot \frac{\log^2(m + U/L)}{\epsilon^2})$ time by Claim~\ref{cl:mixed:oracle:runningtime}.

Finally, the algorithm makes calls to \textit{UpdatePWeights($k$)} and \textit{UpdateCWeights($k$)} (see Algorithms~\ref{alg:mixed:updateCweights} and \ref{alg:mixed:updatePweights}). This will take total time $O(N \cdot \frac{\log^2(m + U/L) \cdot N}{\epsilon^2})$ by Claim~\ref{cl:mixed:weightupdate}.

\end{proof}

\begin{claim}

\label{cl:mixed:update:totaltime}

The total time by the \textit{UpdateP} and \textit{UpdateC} subroutines (outside of calls to \textit{Iterate} and \textit{Boost}) is bounded by $O(\frac{\log^2(m + U/L) \cdot N}{\epsilon^2} + t)$.

\end{claim}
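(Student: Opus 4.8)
The total time spent by the \textit{UpdateP} and \textit{UpdateC} subroutines, excluding the time charged to the nested calls to \textit{Iterate} and \textit{Boost} (which are accounted for separately in Claim~\ref{cl:mixed:boost:totaltime} and in the analysis of \textit{Iterate}), is $O\!\left(\frac{\log^2(m + U/L) \cdot N}{\epsilon^2} + t\right)$.

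\medskip
\noindent\textbf{Proof plan.} The plan is to charge the work of each update call to one of two buckets: a constant-per-update bucket that sums to $O(t)$, and a ``weight/price refresh'' bucket that sums to $O\!\left(\frac{\log^2(m+U/L)\cdot N}{\epsilon^2}\right)$. First I would enumerate exactly what \textit{UpdateP}$(j,k,\Delta)$ and \textit{UpdateC}$(j,k,\Delta)$ do outside of their \textit{Boost}/\textit{Iterate} subcalls. For \textit{UpdateP}: it rewrites $P^*(j,k)$ and $P^*(j,n+1)$, recomputes the single value $\hat\lambda(x^*,k)$ in $O(1)$ time (this is $O(1)$ precisely because the pseudo-update keeps $w_p(x^*,j)$, and hence $w_p(x^*)$ and every $\hat w_p(x^*,\cdot)$, unchanged), and then only performs the \textit{While}-loop and the conditional \textit{Iterate} call, both of which are charged elsewhere. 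Hence each \textit{UpdateP} call costs $O(1)$ outside \textit{Boost}/\textit{Iterate}, contributing $O(t)$ in total. For \textit{UpdateC}$(j,k,\Delta)$: it rewrites $C^*(j,k)$, checks feasibility in $O(1)$, recomputes $w_c(x^*,j)$, updates $w_c(x^*)$ and $\lambda_0(x^*)$ in $O(1)$; the non-constant part is the branch taken when $w_c(x^*,j)$ has dropped below $\hat w_c(x^*,j)\cdot(1-\epsilon)$, in which case it resets $\hat w_c(x^*,j)$ and refreshes $\hat\lambda(x^*,k')$ for every $k'$ with $C^*(j,k')>0$, costing $O(R^C_j)$ time.

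\medskip
\noindent The main point to establish, then, is that across the whole run this $O(R^C_j)$ refresh is triggered at most $O\!\left(\frac{\log^2(m+U/L)}{\epsilon^2}\right)$ times for each fixed $j\in[m_c]$, which gives a total of $\sum_{j\in[m_c]} O(R^C_j)\cdot O\!\left(\frac{\log^2(m+U/L)}{\epsilon^2}\right) = O\!\left(\frac{N\log^2(m+U/L)}{\epsilon^2}\right)$. I would argue this exactly as in the proof of Claim~\ref{cl:mixed:weightupdate}: fix $j$; between two consecutive resets of $\hat w_c(x^*,j)$, the true weight $w_c(x^*,j)$ must have decreased by a multiplicative $(1-\epsilon)$ factor; by Observation~\ref{ob:weight:monotone} the weights $w_c(x^*,j)$ are monotonically non-increasing over time; and $w_c(x^*,j)$ starts at $1$ and, by Corollary~\ref{cor:mixed:weightbounds} (and the fact that $C^*_j x^* \le 2 + O(\epsilon)$ up to the not-too-large threshold, equivalently $w_c(x^*,j)\ge \exp(-O(\eta))$), stays bounded below by $n^{-\mathrm{poly}(1/\epsilon)}$. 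Hence the number of resets is $O(\eta/\epsilon) = O(\log(m+U/L)/\epsilon^2)$ per $j$ — in fact this is already tighter than we need. The same reasoning applies to the packing-side refreshes that happen inside \textit{UpdateP} only via \textit{Boost}$\to$\textit{UpdatePWeights}, so they are not double counted here.

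\medskip
\noindent The remaining subtlety — and the step I expect to be the real obstacle — is making sure the bookkeeping of ``charged elsewhere'' is airtight: every \textit{While}-loop iteration inside \textit{UpdateP}/\textit{UpdateC} performs exactly one \textit{Boost} call, whose cost is subsumed by Claim~\ref{cl:mixed:boost:totaltime} (which already bounds the total over \emph{all} \textit{Boost} calls, regardless of who issues them, via Claim~\ref{cl:mixed:whackingbound}); and each conditional \textit{Iterate} call is triggered only when $\hat\lambda_0(x^*) < \lambda_0(x^*)\cdot(1-\epsilon)$, i.e.\ when $\lambda_0(x^*)$ has grown by a $(1+\Omega(\epsilon))$ factor since the last refresh of $\hat\lambda_0(x^*)$, which by Observation~\ref{ob:weight:monotone} (monotonicity of $\gamma(x^*)=\lambda_0(x^*)$) and Corollary~\ref{cor:mixed:weightbounds} can happen only $O(\eta/\epsilon)=O(\log^2(m+U/L)/\epsilon^2)$ times in total; each such \textit{Iterate} call is itself bounded in the (separately handled) running-time analysis of \textit{Iterate}, where its cost is again reduced to a bounded number of per-coordinate scans plus \textit{Boost} calls. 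Combining the $O(t)$ constant-per-update term, the $O\!\left(\frac{N\log^2(m+U/L)}{\epsilon^2}\right)$ refresh term, and noting the \textit{Boost}/\textit{Iterate} contributions are absorbed by Claim~\ref{cl:mixed:boost:totaltime}, we obtain the stated bound $O\!\left(\frac{N\log^2(m+U/L)}{\epsilon^2} + t\right)$, which together with the preprocessing bound and Claim~\ref{cl:mixed:boost:totaltime} yields the total update time promised in Theorem~\ref{thm:mixed} and hence Theorem~\ref{th:relaxing:positiveLP:algo}.
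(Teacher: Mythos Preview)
Your proof is correct and follows essentially the same approach as the paper: the $O(1)$-per-call bookkeeping gives the $O(t)$ term, and the $O(R^C_j)$ refresh branch in \textit{UpdateC} is bounded by counting how many times $\hat w_c(x^*,j)$ can be reset (each reset requires a multiplicative $(1-\epsilon)$ drop in $w_c(x^*,j)$), which the paper also argues via the weight-range bound. Your charging of the \textit{While}-loop iterations directly to \textit{Boost} is slightly cleaner than the paper's explicit count of ``price checks'' via Claim~\ref{cl:mixed:whackingbound}, and you share with the paper the same hand-wave on the lower bound for $w_c(x^*,j)$ once $C^*_j x^* > 2$.
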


\begin{proof}

The only action \textit{UpdateP} does which is not in $O(1)$ is the while loop. Due to Claim~\ref{cl:mixed:whackingbound} the while loops will test a coordinates price at most $O(n \cdot \frac{\log^2(m + U/L}{\epsilon^2} + t)$ time over the total run of the algorithm.

Similarly, the while loops of \textit{UpdateC} may also only have $O(n \cdot \frac{\log^2(m + U/L}{\epsilon^2} + t)$ iterations due to Claim~\ref{cl:mixed:whackingbound}. The only other non-constant time operation of \textit{UpdateC} is the process of updating $\hat{w}_c(x^*,j)$ and affected $\hat{\lambda}(x^*,k')$ values. Whenever $\hat{w}_c(x^*,j)$ is updated $R^C_j$ values of $\hat{\lambda}(x^*,k')$ will be updated. As an update to $\hat{w}_c(x^*,j)$ increases its value by a multiplicative factor of at least $1 + \epsilon$ due to Assumption~\ref{as:mixed:range} there can be at most $O(\log(N)/\epsilon)$ updates of $\hat{w}_c(x^*,j)$. Therefore, the maintenance of the slack parameters will take at most $O(\frac{N \cdot \log(N)}{\epsilon} + t)$ total time for \textit{UpdateC}. This finishes the proof. 

\end{proof}

\begin{lemma}

\label{lm:mixed:totalrunningtime}

Algorithm~\ref{alg:mixed:main} runs in $O(\frac{N \cdot \log^2(m + U/L)}{\epsilon^2} + t)$ deterministic time.

\end{lemma}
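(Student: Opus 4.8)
The plan is to establish Lemma~\ref{lm:mixed:totalrunningtime} by a routine amortization that splits the total running time of Algorithm~\ref{alg:mixed:main} into three disjoint pools and bounds each: (i)~the preprocessing cost (reading $P^*,C^*$ and initialising all exact and approximate weights, prices, and the aggregates $w_p(x^*),w_c(x^*),\lambda_0(x^*),\hat\lambda_0(x^*)$) together with the $O(1)$-per-coordinate scans performed inside \textit{Iterate}; (ii)~all work occurring during calls to \textit{Boost}, which by construction includes estimating $\delta_k$, recomputing the affected constraint weights, and the \textit{UpdatePWeights}/\textit{UpdateCWeights} calls these \textit{Boost} calls trigger; and (iii)~the residual work of \textit{UpdateP} and \textit{UpdateC} once the time spent inside nested \textit{Boost} and \textit{Iterate} calls has been stripped away. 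I would then bound each pool and add.

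Pool (ii) is already in hand: Claim~\ref{cl:mixed:boost:totaltime} bounds it by $O(N\log^2(m+U/L)/\epsilon^2)$, resting on Claim~\ref{cl:mixed:whackingbound} (every coordinate is boosted $O(\log^2(m+U/L)/\epsilon^2)$ times), Claim~\ref{cl:mixed:weightupdate} (all weight- and price-maintenance is $O(N\log^2(m+U/L)/\epsilon^2)$ total), and the running-time bound for the $\delta_k$ oracle. Pool (iii) is exactly Claim~\ref{cl:mixed:update:totaltime}, contributing $O(N\log^2(m+U/L)/\epsilon^2 + t)$; this term also absorbs the $O(1)$ bookkeeping per update and the at most $t$ ``failing'' tests that close the while-loops inside \textit{UpdateP} and \textit{UpdateC}. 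For pool (i), the observation to spell out is that \textit{Iterate} runs once at preprocessing and afterwards only recursively, and a recursion happens precisely when $\hat\lambda_0(x^*)<\lambda_0(x^*)(1-\epsilon)$; since $\hat\lambda_0(x^*)$ was set to the value of $\lambda_0(x^*) = w_p(x^*)/w_c(x^*)$ at the start of the current invocation, this means $\lambda_0(x^*)$ grew by a $(1+\Theta(\epsilon))$ factor during that invocation. Because $\lambda_0(x^*)$ is monotone non-decreasing (the pseudo-updates preserve the monotonicity of Observation~\ref{ob:weight:monotone}) and stays at most $\exp(5\eta)$ (Corollary~\ref{cor:mixed:weightbounds}), the number of \textit{Iterate} invocations is $O(\log_{1+\Theta(\epsilon)}\exp(5\eta)) = O(\eta/\epsilon) = O(\log(m+U/L)/\epsilon^2)$, matching the phase count of Lemma~\ref{lm:phase:bound}. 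Each invocation costs an $O(n)$ scan (its boosts being charged to pool (ii)) and $n = O(N)$ (we may assume every variable occurs in some constraint), so pool (i), together with the $O(N+m)$ initialisation, is $O(N\log(m+U/L)/\epsilon^2)$. Summing (i)--(iii) gives $O(N\log^2(m+U/L)/\epsilon^2 + t)$, as claimed.

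The main obstacle is not any individual estimate but the accounting discipline needed to make the three pools a genuine partition of the work: I must check that the weight updates done inside \textit{Boost} (counted by Claim~\ref{cl:mixed:boost:totaltime}) stay disjoint from the inline weight- and price-updates inside \textit{UpdateC} (counted by Claim~\ref{cl:mixed:update:totaltime}), that no line of \textit{Main}, \textit{Iterate}, \textit{Boost}, \textit{UpdatePWeights}, \textit{UpdateCWeights}, \textit{UpdateP}, or \textit{UpdateC} is either uncounted or double-counted, and that the self-recursion of \textit{Iterate} actually terminates --- which it does, since each recursion level consumes a $(1+\Theta(\epsilon))$-factor of headroom in $\lambda_0(x^*)$ and that headroom is $O(\eta/\epsilon)$ by Corollary~\ref{cor:mixed:weightbounds}. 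Establishing the clean correspondence ``a recursive \textit{Iterate} call $\iff$ the start of a new phase'' and then quoting Lemma~\ref{lm:phase:bound} (or, equivalently, Corollary~\ref{cor:mixed:weightbounds}) is the linchpin that makes pool (i), and hence the whole lemma, go through.
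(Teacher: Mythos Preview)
Your proposal is correct and follows essentially the same decomposition as the paper: bound the \textit{Boost} time via Claim~\ref{cl:mixed:boost:totaltime}, the residual \textit{UpdateP}/\textit{UpdateC} time via Claim~\ref{cl:mixed:update:totaltime}, and the number of \textit{Iterate} invocations by $O(\eta/\epsilon)$ using Corollary~\ref{cor:mixed:weightbounds}. One small imprecision: \textit{Iterate} is not only called recursively after preprocessing---it is also invoked from \textit{UpdateP} and \textit{UpdateC}---but since every such call is guarded by the same condition $\hat\lambda_0(x^*)<\lambda_0(x^*)(1-\epsilon)$, your phase-counting argument goes through unchanged.
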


\begin{proof}

The time spent by calls to \textit{Boost} or updates is handled in $O(\frac{N \cdot \log^2(m + U/L)}{\epsilon^2} + t)$ total time by Claim~\ref{cl:mixed:boost:totaltime} and Claim~\ref{cl:mixed:update:totaltime}. Between each call to iterate $\lambda_0(x^*)$ increases by a factor of $1/(1-\epsilon)$. As $\lambda_0(x^*) \leq \exp(5 \cdot \eta)$ by Corollary~\ref{cor:mixed:weightbounds} there will be at most $O(\eta/\epsilon)=O(\log(m + U/L)/\epsilon^2)$ calls to \textit{Iterate()} which will take $O(\frac{N \cdot \log^2(m + U/L)}{\epsilon^2})$ to complete (apart from \textit{Boost} calls). This completes the lemma.

\end{proof}

\subsection{Maintenance of an estimate of $\delta_k$ for Algorithm~\ref{alg:mixed:boost}}

\label{sec:mixed:findincrement}

To imitate the increments of the static algorithm described by Figure~\ref{fig:boost} when \textit{Boost($k$)} is called we would like to find $\delta = \max_\delta\{\max\{\max_i P^*(i,k) \cdot \delta, \max_{j|C^*_j x^* < 2} C^*(j,k) \cdot \delta \} = \epsilon/\eta$. Unfortunately, under dynamic relaxing updates returning the exact $\delta_k$ is too slow. Therefore, we will implement an efficient oracle which returns $\delta \in [\delta_k \cdot (1 - \epsilon), \delta_k]$ when queried in $O(1)$ time. 

\paragraph{Max-Heap:} Using folklore data structures we will assume that a max-heap on at most $m = m_c + m_p$ objects is a black-box tool which can handle the following actions:

\begin{itemize}
    \item In $O(1)$ the max-heap returns the maximum-element stored in the heap
    \item In $O(\log(m))$ time an element can be removed from the heap
    \item In $O(\log(m))$ time an element can be added to the heap
\end{itemize}

\paragraph{Initialization:} At initialization the oracle implements $n$ max heaps $H_k : k \in [n]$, one for each coordinate of $x^*$. We will now describe the implementation of a specific heap $H_k$. $H_k$ will consist of at most $N^C_k + N^P_k$ objects each representing a non-zero element of the $k$-th column of $C^*$ or $P^*$. Each non-zero element of these two columns will be inserted into the heap with it's initial value and an identifier. Hence, the top element of $H_k$ will represent the largest value in the $k$-th column of $C^*$ and $P^*$.

\paragraph{Update to $P^*$:} Assume that due to a relaxing update $P^*(i,k)$ reduced by $\Delta$ to some $P_+^*(i,k)$. The oracle will look-up the value $P^*(i,k)$ currently represented under in $H_k$, say this value is $P_-^*(i,k)$. If $P_-^*(i,k) > 2 \cdot P_+^*(i,k)$ then the oracle will remove $P_-^*(i,k)$ from $H_k$ and re-enter it as $P_+^*(i,k)$. If $P_-^*(i,k) \leq 2 \cdot P_+^*(i,k)$ then the oracle doesn't change anything in the heap $H_k$.

\paragraph{Update to $C^*$:} Assume that due to a relaxing update $C^*(j,k)$ increased by $\Delta$ to some $C_+^*(j,k)$. If previously to the update $(C^*)_i x^* < 2$ however after the update $(C^+_*)_i x^* \geq 2$ the oracle proceeds over all non-zero entries of the $j$-th row of $C^*$ (iterate over all $k' \in [m_c] | C^*(j,k')>0$) of the form $C^*(j,k') > 0$ and removes them from their corresponding heaps. Otherwise the oracle proceeds similarly to its behaviour under a packing update: say that the value of $C^*(j,k)$ stored in $H_k$ is $C^*_-(j,k)$. If $C^*_-(j,k) < 2 \cdot C^*_+(j,k)$ then the oracle removes the $C^*_-(j,k)$ node from the heap and re-enters it as the current value $C^*_+(j,k)$. If $C^*_-(j,k) \geq 2 \cdot C^*_+(j,k)$ the oracle rests.

\paragraph{Query of $\delta_k$}: Let the top element of heap $H_k$ when the query is made be $\kappa$. The oracle will return value $\delta = \frac{\eta}{\epsilon \cdot \kappa/2}$. Afterwards, the oracle proceeds to adjust it's max-heaps to the boost of coordinate $k$. Let $x^*_+$ and $x^*_-$ represent the state of $x^*$ after and before having been incremented by $\delta \cdot e_k$ through \textit{Boost($k$)} respectively. The oracle iterates through all $j \in [m_c]$ such that $C^*(j,k) > 0$. For any such $j$ if $(C^*_-)_j x^* < 2$ however $(C^*_+)_j x^* > 2$ the oracle proceeds as follows: for all $k' \in [n] | C^*(j,k')>0$ the oracle removes $C^*(j,k')$ (or the value stored as the representative of $C^*(j,k')$ which might be somewhat smaller) from heap $H_{k'}$.

\begin{claim}

\label{cl:mixed:oracle:approximation}

If the oracle is queried to return an estimate of $\delta_k$ it returns a value in $[\delta_k/4, \delta_k]$

\end{claim}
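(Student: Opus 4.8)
The plan is to show that the top element $\kappa$ of the heap $H_k$ at query time is always a $2$-approximation of the true ``governing coefficient'' $\kappa_k := \max\{\max_{i\in[m_p]} P^*(i,k),\ \max_{j\in[m_c]:\, C^*_j x^* < 2} C^*(j,k)\}$, and then observe that $\delta_k = \frac{\epsilon/\eta}{\kappa_k}$ while the oracle returns $\delta = \frac{\epsilon/\eta}{\kappa/2}$, so that $\kappa_k \leq \kappa \leq 2\kappa_k$ immediately gives $\delta_k/4 \leq \delta \leq \delta_k$ (actually $\delta_k/2 \le \delta \le \delta_k$, which suffices). So the entire content is the two-sided bound $\kappa_k \leq \kappa \leq 2\kappa_k$ on the heap's maximum.

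First I would establish the \emph{upper} bound $\kappa \le 2\kappa_k$, i.e.\ every value currently stored in $H_k$ is at most $2\kappa_k$. The invariant to maintain is: for each non-zero entry $P^*(i,k)$ the value stored in $H_k$ lies in $[P^*(i,k), 2\,P^*(i,k))$ (the update rule only re-inserts the true value when the stored value exceeds twice the true value, and since relaxing updates only \emph{decrease} entries of $P^*$, the stored value always dominates the true one); similarly for entries $C^*(j,k)$ with $C^*_j x^* < 2$, the stored value lies in $[C^*(j,k), 2\,C^*(j,k))$, while entries $C^*(j,k)$ with $C^*_j x^* \ge 2$ have been removed from $H_k$ (this removal happens both when a covering update pushes $C^*_j x^*$ past $2$, and inside the query routine when the boost of coordinate $k$ pushes $C^*_j x^*$ past $2$). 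Since every surviving stored value is within a factor $2$ of a term appearing in the definition of $\kappa_k$, we get $\kappa \le 2\kappa_k$.

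Next I would establish the \emph{lower} bound $\kappa \ge \kappa_k$: the heap must contain a value at least as large as every term defining $\kappa_k$. For a packing entry $P^*(i,k)$, the stored value is $\ge P^*(i,k)$ by the invariant above, so the heap maximum is $\ge \max_i P^*(i,k)$. For a covering entry $C^*(j,k)$ with $C^*_j x^* < 2$: I need to check that such an entry has \emph{not} been prematurely evicted. Evictions of covering entries from $H_k$ happen only when the corresponding row's constraint value $C^*_j x^*$ crosses the threshold $2$; since $C^*_j x^*$ is monotone non-decreasing (relaxing updates increase $C^*$, and \textsc{Boost} calls only increase $x^*$), once $C^*_j x^*$ reaches $2$ it stays there, so a row with $C^*_j x^* < 2$ has never triggered an eviction and its entry is still present with stored value $\ge C^*(j,k)$. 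Hence $\kappa \ge \kappa_k$, completing the sandwich.

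The main obstacle I anticipate is the careful bookkeeping around the covering-entry evictions \emph{inside the query routine}: after computing $\delta$ from the pre-boost state, the oracle must simulate the effect of the boost $x^* \gets x^* + \delta e_k$ and evict exactly those covering entries whose row now has $C^*_j x^* \ge 2$, so that the invariant is restored \emph{before} the next query. I would need to argue (i) that this eviction pass is consistent with the definition of $\kappa_k$ at the next query — in particular that no entry that should still be present gets evicted, which again uses monotonicity of $C^*_j x^*$ — and (ii) that evicted entries are never mistakenly consulted later; the rows whose entries were removed all satisfy $C^*_j x^* \ge 2$ permanently, so they never re-enter any definition of $\kappa_{k'}$. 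A secondary subtlety is that the value the oracle removes on eviction may be a stale representative slightly smaller than the current $C^*(j,k')$, but since we are only removing (not reading) it, correctness is unaffected. Once these invariants are in place, the factor-$2$ approximation of $\kappa_k$ by $\kappa$, and hence the claimed $[\delta_k/4,\delta_k]$ bound on the returned $\delta$, follow immediately.
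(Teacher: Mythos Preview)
Your overall strategy is the same as the paper's, but there is a genuine directional error in your invariant for the covering entries that breaks the argument.

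You assert that for a covering entry $C^*(j,k)$ the stored heap value lies in $[C^*(j,k),\,2C^*(j,k))$, i.e.\ that the stored value dominates the true one. This is the wrong direction. Relaxing updates \emph{increase} entries of $C^*$, so the stale value sitting in $H_k$ is always \emph{at most} the current $C^*(j,k)$; the lazy update rule only guarantees it is at least $C^*(j,k)/2$. Hence the correct invariant is that the stored value lies in $[C^*(j,k)/2,\,C^*(j,k)]$. Your lower bound $\kappa \geq \kappa_k$ therefore fails precisely when the governing coefficient $\kappa_k$ is attained by a covering entry: the heap may only contain $\kappa_k/2$ for that entry. The actual sandwich one obtains is $\kappa_k/2 \leq \kappa \leq 2\kappa_k$, a two-sided factor-$2$ gap, which is exactly why the claim needs the factor $4$ and not the factor $2$ you end up deriving. (The fact that you get a ``tighter'' bound $\delta_k/2 \leq \delta \leq \delta_k$ than what is claimed is a red flag here.)

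A related symptom: with the correct sandwich $\kappa_k/2 \leq \kappa \leq 2\kappa_k$, the formula $\delta = \frac{\epsilon/\eta}{\kappa/2}$ that you (and the paper, via a typo) quote would yield $\delta \in [\delta_k,\,4\delta_k]$, overshooting $\delta_k$. The intended return value is $\delta = \frac{\epsilon/\eta}{2\kappa}$, which then lands in $[\delta_k/4,\,\delta_k]$ as required. Once you fix the direction of the covering-entry invariant and use this corrected return formula, the rest of your bookkeeping (presence of active covering entries in $H_k$, permanence of evictions by monotonicity of $C^*_j x^*$) is fine and matches the paper.
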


\begin{proof}

Assume \textit{Boost($k$)} was called and the oracle is queried to return an estimate of $\delta_k$. Observe, that at all time points the heap $H_k$ contains an object corresponding to each non-zero elements of the $k$-th column of $P^*$. Furthermore, $H_k$ contains an element for each $C^*(j,k) : j \in [m_c] | C^*_j < 2$. Fix an element of $P^*$ represented in $H_k$, say $P^*(i,k)$ and let $P^*_-(i,k)$ stand for the value it is represented under in the heap $H_k$. Observe that by construction $P^*_-(i,k) \leq P^*(i,k) \leq P^*_-(i,k)$. Fix an element of $C^*$ represented in $H_k$, say $C^*(j,k)$ and let it's representation in $H_k$ have value $C^*_-(j,k)$. Similarly, observe by construction that $C^*_-(j,k) \cdot 2 \geq C^*(j,k) \geq C^*_-(j,k)$. 

Define $\kappa = \max \{\max_{i \in [m_p]} P^*(i,k), \max_{j \in [m_c] | C^*_j x^* < 2} C^*(j,k)\}$. By the observations above we can conclude that the maximum element of $H_k$ is in $[\kappa/2, \kappa \cdot 2]$. Therefore, the oracle returns a value in $[\frac{\epsilon}{4 \cdot \kappa \cdot \eta}, \frac{\epsilon}{\kappa \cdot \eta}] = [\delta_k/4, \delta_k]$.

\end{proof}

\begin{claim}

\label{cl:mixed:oracle:runningtime}

Let $B_k : k \in [n]$ represent the number of times Algorithm~\ref{alg:mixed:main} calls subroutine \textit{Boost($k$)}. The total time it takes to maintain the oracle and return it's queries is in $O(\sum_{k \in [n]} B_k N^C_k + N \cdot \log^2(N) + t)$.

\end{claim}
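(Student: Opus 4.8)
The plan is to split the total cost into three parts --- the one-time cost of building the heaps at initialization, the cost of maintaining the heaps under the $t$ relaxing updates to $P^*$ and $C^*$, and the cost of answering the $\sum_{k\in[n]} B_k$ queries made by calls to \textit{Boost} --- and to bound each part separately. The guiding principle is that almost all of the (non-constant-time) heap operations are of one of two kinds: a \emph{re-entry} of a single matrix entry into its heap, which I will bound by a geometric argument using Assumption~\ref{as:mixed:range}; or a \emph{deactivation} of a covering constraint, i.e. the removal of an entire row of $C^*$ from all the heaps it appears in, which I will bound using the monotonicity of $C^*_j x^*$. Whatever work is left over is only a constant amount per update or a linear-in-$N^C_k$ scan per \textit{Boost}$(k)$, which lands in the $O(t)$ and $O(\sum_{k\in[n]} B_k N^C_k)$ terms, respectively.

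\textbf{Initialization and update handling.} Building all $n$ heaps inserts each of the $O(N)$ nonzero entries of columns $1,\dots,n$ of $P^*$ and $C^*$ exactly once, at $O(\log(m_c+m_p)) = O(\log N)$ per insertion, for a total of $O(N\log N)$. For each such entry I would keep an auxiliary handle into the heap that currently stores it, so that ``the value of $P^*(i,k)$ (resp.~$C^*(j,k)$) currently represented in $H_k$'' can be read in $O(1)$ time. Now consider the work triggered by a single relaxing update to an entry $P^*(i,k)$ or $C^*(j,k)$. Apart from the deactivation step (treated below), it consists of an $O(1)$ lookup-and-comparison and, possibly, one re-entry (a removal followed by an insertion, costing $O(\log N)$). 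A re-entry of a fixed entry $P^*(i,k)$ happens only when its current value has dropped below half of the value stored for it, and immediately after the re-entry the stored value equals the current value; since $P^*(i,k)$ only decreases over time (a relaxing update to a packing constraint lowers the entry) and always lies in $(L,U]$, this entry is re-entered at most $O(\log(U/L)) = O(\log N)$ times over the whole run, using Assumption~\ref{as:mixed:range}. The symmetric statement holds for each $C^*(j,k)$, which only increases. Hence the total re-entry cost, summed over all entries, is $O(N\log^2 N)$, and the remaining per-update work sums to $O(t)$.

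\textbf{Deactivations and query handling.} The oracle removes an entire row $j$ of $C^*$ from the heaps (at cost $O(R^C_j\log N)$) exactly when $C^*_j x^*$ first reaches a value $\ge 2$ --- and this can be caused either by an update that raises some $C^*(j,k)$, or by a call to \textit{Boost}$(k)$ that raises $x^*_k$. The key observation is that $C^*_j x^*$ is monotonically nondecreasing throughout the algorithm: relaxing updates only raise entries of $C^*$, the pseudo-updates only touch $P^*$, and calls to \textit{Boost} only raise coordinates of $x^*$. Consequently each covering constraint $j$ is deactivated at most once over the entire run, so the total cost of all deactivations is $\sum_{j\in[m_c]} O(R^C_j\log N) = O(N\log N)$, and this bound absorbs both the update-triggered and the \textit{Boost}-triggered deactivations at once. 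It remains to account for the non-deactivation work inside a query: when \textit{Boost}$(k)$ queries the oracle, returning the estimate $\delta = \eta/(\epsilon\cdot\kappa/2)$ from the top of $H_k$ is $O(1)$, and then the oracle scans the nonzero covering entries in column $k$ to detect which constraints just crossed the threshold, which is $O(N^C_k)$. Summed over all $B_k$ calls to \textit{Boost}$(k)$ and over all $k\in[n]$, this contributes $O\!\left(\sum_{k\in[n]} B_k N^C_k\right)$.

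Adding the three parts --- $O(N\log N)$ for initialization, $O(N\log^2 N + t)$ for update maintenance (including update-triggered deactivations), and $O(\sum_{k\in[n]} B_k N^C_k + N\log N)$ for queries (including \textit{Boost}-triggered deactivations) --- yields the claimed bound $O\!\left(\sum_{k\in[n]} B_k N^C_k + N\log^2 N + t\right)$. The step I expect to require the most care is the deactivation accounting: one has to verify that the monotonicity of $C^*_j x^*$ genuinely holds in the presence of both the pseudo-updates and the boosts, so that a row of $C^*$ is never re-inserted into any heap once it has been deactivated, and that as a result the heap tops stay within the factor-$2$ accuracy that Claim~\ref{cl:mixed:oracle:approximation} needs. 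Everything else is a routine geometric-series amortization.
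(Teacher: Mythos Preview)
Your proposal is correct and follows essentially the same approach as the paper: both bound the re-entry cost of individual entries by a geometric argument using Assumption~\ref{as:mixed:range} (giving $O(N\log^2 N)$), bound the deactivation cost by noting each covering row is removed at most once (giving $O(N\log N)$), and attribute the per-\textit{Boost} scan to the $O(\sum_k B_k N^C_k)$ term. Your organization (initialization / updates / queries) and your explicit monotonicity argument for $C^*_j x^*$ are slightly more detailed than the paper's, but the substance is the same.
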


\begin{proof}

Say that the oracle re-adjust a constraint parameter $P^*(j,k)$ or $C^*(i,k)$ whenever it removes their outdated value for heap $H_k$ and inserts their current value in the constraint matrix. Whenever a replacement of a constraint parameter occurs it's value in $H_k$ has to double if it's a member of $C^*$ or halve if its a member of $P^*$ respectively. By Assumption~\ref{as:mixed:range} non-zero parameters of constraint matrices always lie in $[1/poly(N), poly(N)]$ therefore each constraint element may has its value re-adjusted at most $O(\log(N))$ times. This means that handling re-adjustments of all elements of the heaps maintained by the oracle can be completed in $O(N \cdot \log^2(N))$ total time.

When an update is made to $P^*$ the oracle checks the value of a single element in the heap, therefore it's work outside of possibly re-adjusting said value can be completed in $O(1)$ time. This means $O(N \cdot \log^2(N) + t)$ upper bounds the work of the oracle handling packing constraint updates.

When an update is made to $C^*$, say to $C^*(j,k)$, the oracle first has to check weather $C^*_j x^*$ has exceeded $2$ for the first time. Checking the value of $C^*_j x^*$ can be completed in $O(1)$ time (assuming the algorithm keeps track of each constraints progress at all times). The value of $C^*(j,k)$ may only exceed $2$ once over the total run of the iteration. At this point the oracle removes $R^C_j$ elements from it's heaps. All of these removals therefore will take at most $O(N \cdot \log(N))$ total time. Therefore, $O(N \cdot \log^2(N) + t)$ upper bounds the work of the oracle handling covering constraint updates.

When a coordinate $k$ is boosted by Algorithm~\ref{alg:mixed:boost} the oracle has to iterate over all non-zero elements of the $k$-th column of $C^*$ which takes $O(N^C_k)$ time. For all $j \in [m_c] | C^*(j,k) > 0$ the oracle calculates $C^*_j x^*$ in $O(1)$. Afterwards, the oracle may proceed to remove all elements in any such line of $C^*$, however each element may only be removed at most once. Therefore, all work of the oracle due to boosting queries can be upper bounded by $O(\sum_{k \in [n]} B_k N^C_k + N \cdot \log(N) + t)$. This finishes the proof.

\end{proof}

\subsection{The Handling of Translation Updates}

\label{subsec:app:mixed:translation}

Using Theorem~\ref{thm:mixed} as a black-box statement we will show how the algorithm can handle translation updates in addition to element-wise updates without incurring any blow-up in the total running time. To re-iterate a mixed packing covering linear program defined by constraint matrices $(P,C)$ is the problem of finding $x$ such that $Px \leq \mathbb{1}, Cx \geq \mathbb{1}$. A relaxing translation update of the packing constraint could formalized as follows: for some coordinate $i : i \in [m_p]$ and $\gamma>0$ we update the linear program such that we are looking for $x$ satisfying $Px \leq \mathbb{1} + \gamma \cdot e_i, Cx \geq \mathbb{1}$. Similarly, a relaxing translation update to the covering matrix is an updated defined by some $j \in [m_c]$ and $\gamma > 0$ such that we update the linear program to $Px \leq \mathbb{1}, Cx \geq \mathbb{1} - \gamma \cdot e_j$.

\begin{observation}

\label{obs:app:mixed:translation:slack}

Let $\Delta_P \in \Re_{\geq 0}^{m_p}$ and $\Delta_C \in \Re_{\geq 0}^{m_c}$ where $0 \leq \Delta_P \leq \mathbb{1} \cdot \epsilon$ and $0 \leq \Delta_C \leq \mathbb{1} \cdot \epsilon$. If $x$ is an $\epsilon$-approximate solution to the linear program defined by constraint matrices $(P,C)$ then $x$ is a $2 \cdot \epsilon$ approximate solution to the linear program defined by $(P + \Delta_P, C - \Delta_C)$.

\end{observation}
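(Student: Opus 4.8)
## Proof Plan for Observation~\ref{obs:app:mixed:translation:slack}

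The plan is to unfold both definitions of an $\epsilon$-approximate solution and verify, coordinate by coordinate, that the extra additive slack $\Delta_P, \Delta_C \leq \mathbb{1}\cdot\epsilon$ only costs us a further multiplicative $(1+O(\epsilon))$ factor. Recall that after rescaling the right-hand sides to $\mathbb{1}$ (as done throughout Section~\ref{sec:prob}), an $\epsilon$-approximate solution $x$ to the positive LP defined by $(P,C)$ satisfies $Px \leq (1+\Theta(\epsilon))\cdot\mathbb{1}$ and $Cx \geq (1-\Theta(\epsilon))\cdot\mathbb{1}$. The LP defined by $(P+\Delta_P, C-\Delta_C)$ asks for an $x'$ with $(P+\Delta_P)x' \leq \mathbb{1}$ and $(C-\Delta_C)x' \geq \mathbb{1}$; equivalently (moving the translation to the RHS and renormalizing), with $Px' \leq \mathbb{1} + \Delta_P$ and $Cx' \geq \mathbb{1} - \Delta_C$. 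So an $\epsilon$-approximate solution to this latter LP is an $x'$ with $Px' \leq (1+\Theta(\epsilon))(\mathbb{1}+\Delta_P)$ and $Cx' \geq (1-\Theta(\epsilon))(\mathbb{1}-\Delta_C)$; and we want to show that any $\epsilon$-approximate solution for $(P,C)$ meets the $2\epsilon$-approximation bar for $(P+\Delta_P, C-\Delta_C)$.

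First I would handle the packing side. Fix any row $i\in[m_p]$. Since $x$ is $\epsilon$-approximate for $(P,C)$, we have $(Px)_i \leq 1+\Theta(\epsilon)$, and since $0 \leq (\Delta_P)_i \leq \epsilon$ we get
\[
(Px)_i \;\leq\; 1+\Theta(\epsilon) \;\leq\; (1+2\Theta(\epsilon))\cdot(1+(\Delta_P)_i),
\]
because $(1+2\Theta(\epsilon))(1+(\Delta_P)_i) \geq 1 + 2\Theta(\epsilon) \geq 1+\Theta(\epsilon)$. Hence $(P+\Delta_P \text{ on RHS})$ is satisfied up to a $(1+2\Theta(\epsilon))$ multiplicative factor, which is exactly the $2\epsilon$-approximation requirement for the packing constraints of $(P+\Delta_P, C-\Delta_C)$. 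The covering side is symmetric: for any row $j\in[m_c]$, we have $(Cx)_j \geq 1-\Theta(\epsilon)$ and $0 \leq (\Delta_C)_j \leq \epsilon$, so
\[
(Cx)_j \;\geq\; 1-\Theta(\epsilon) \;\geq\; (1-2\Theta(\epsilon))\cdot(1-(\Delta_C)_j),
\]
since $(1-2\Theta(\epsilon))(1-(\Delta_C)_j) \leq 1-2\Theta(\epsilon) \leq 1-\Theta(\epsilon)$ (using $\epsilon < 1/2$ so that $1-2\Theta(\epsilon) > 0$ and the inequality $(1-a)(1-b)\leq 1-a$ for $a,b\in[0,1)$). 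This gives the covering side of the $2\epsilon$-approximation. Combining the two sides, $x$ is a $2\epsilon$-approximate solution to the LP defined by $(P+\Delta_P, C-\Delta_C)$, as claimed.

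There is no real obstacle here — it is a routine bookkeeping argument about how additive slack of magnitude $\epsilon$ in the right-hand side interacts with a multiplicative $(1+\Theta(\epsilon))$ error. The only thing to be slightly careful about is the direction of the inequalities on the covering side (where both the approximation factor and the translation shift the threshold downward, and they compound favorably rather than fighting each other) and the requirement $\epsilon < 1/2$ (indeed the surrounding text assumes $\epsilon < 1/200$) so that quantities like $1-2\Theta(\epsilon)$ stay positive and the elementwise inequalities I invoked are valid. I would close by noting that this observation is precisely what licenses the ``ignore a translation update until the RHS has moved by a $(1+\epsilon)$ factor'' reduction used at the start of Section~\ref{sec:prob}: repeatedly absorbing such sub-$(1+\epsilon)$ translation updates only degrades the approximation ratio by a constant factor, which can be folded back into the choice of $\epsilon$.
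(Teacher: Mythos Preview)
Your proposal is correct. The paper states this as an observation without proof, so your direct coordinate-by-coordinate verification is exactly the intended (and only reasonable) argument. One minor remark: your proof actually shows that the upper bounds $\Delta_P,\Delta_C \le \epsilon\cdot\mathbb{1}$ are not even needed for this direction---nonnegativity alone gives $(1+2\epsilon)(1+(\Delta_P)_i) \ge 1+\epsilon \ge (Px)_i$ and $(1-2\epsilon)(1-(\Delta_C)_j) \le 1-\epsilon \le (Cx)_j$---but this does no harm to the argument.
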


As demonstrated by Observation~\ref{obs:app:mixed:translation:slack} if the algorithm ignores translation updates until any constraint is translated by more than an $\epsilon$ factor the solution returned is still $O(\epsilon)$ approximate to the linear program. Once a constraint (say the $i$-th pacing constraint) is translated by at least an $\epsilon$ factor the algorithm can simulate this relaxation through $R^P_i$ entry updates of the packing matrix by relaxing all elements in the $i$-th row of $P$ by the same multiplicative factor. Assuming similarly as in Assumption~\ref{as:mixed:range} that all input parameters lie in the $[1/poly(N), poly(N)]$ range we can be certain that each constraint will need to be updated at most $O(\log_{1 + \epsilon}(poly(N)) = O(\frac{\log(N)}{\epsilon})$ times due to being translated by more than an $\epsilon$ factor. Hence, the total number of entry-wise updates required to handle all translation updates will be bounded by $O(N \cdot \frac{\log(N)}{\epsilon})$. 

Therefore, the algorithm can handle translation updates with $O(N \cdot \frac{\log(N)}{\epsilon} + t)$ additional time (where $t$ here stands for the number of translation updates) as the algorithm of Theorem~\ref{thm:mixed} runs in linear time with respect to the number of entry wise updates. This implies that Theorem~\ref{thm:mixed} can be extended to work for translation updates without incurring any loss in running time.

\bibliographystyle{alpha}
\bibliography{references}

\end{document}